\newif\ifsubmission
\newif\ifnotes
\newcommand{\authnote}[3]{\textcolor{#3}{[{\footnotesize {\bf #1:} { {#2}}}]}}
\newcommand{\andrea}[1]{\ifnotes \authnote{A}{#1}{magenta} \fi}
\newcommand{\dakshita}[1]{\ifnotes \authnote{D}{#1}{blue} \fi}
\newcommand{\fermi}[1]{\ifnotes \authnote{F}{#1}{orange} \fi}
\newtheorem{theorem}{Theorem}[section]
\newtheorem{lemma}[theorem]{Lemma}
\newtheorem{conjecture}{Conjecture}
\newtheorem{definition}[theorem]{Definition}
\Crefname{importedtheorem}{Imported Theorem}{Imported Theorems}
\Crefname{theorem}{Theorem}{Theorems}
\Crefname{proposition}{Proposition}{Propositions}
\Crefname{claim}{Claim}{Claims}
\Crefname{lemma}{Lemma}{Lemmas}
\Crefname{conjecture}{Conjecture}{Conjectures}
\Crefname{corollary}{Corollary}{Corollaries}
\Crefname{construction}{Construction}{Constructions}
\Crefname{property}{Property}{Properties}
\Crefname{definition}{Definition}{Definitions}
\Crefname{assumption}{Assumption}{Assumptions}
\Crefname{notation}{Notation}{Notations}
\theoremstyle{remark}
\Crefname{question}{Question}{Questions}
\Crefname{remark}{Remark}{Remarks}
\Crefname{comment}{Comment}{Comments}
\Crefname{fact}{Fact}{Facts}
\newcommand{\secp}{\lambda}
\def\cA{{\mathsf{Adv}}}
\def\cB{{\cal B}}
\def\cD{{\cal D}}
\def\cF{{\cal F}}
\def\cG{{\cal G}}
\def\cH{{\cal H}}
\def\cI{{\cal I}}
\def\cL{{\cal L}}
\def\cM{{\cal M}}
\def\cO{{\cal O}}
\def\cP{{\cal P}}
\def\cQ{{\cal Q}}
\def\cR{{\cal R}}
\def\cS{{\cal S}}
\def\cV{{\cal V}}
\def\bbF{{\mathbb F}}
\def\bbI{{\mathbb I}}
\def\bbN{{\mathbb N}}
\newcommand{\bx}{\mathbf{x}}
\newcommand{\by}{\mathbf{y}}
\newcommand{\bz}{\mathbf{z}}
\newcommand{\bct}{\mathbf{ct}}
\newcommand{\bd}{\mathbf{d}}
\newcommand{\be}{\mathbf{e}}
\newcommand{\bt}{\mathbf{t}}
\newcommand{\bm}{\mathbf{m}}
\newcommand{\bw}{\mathbf{w}}
\newcommand{\bp}{\mathbf{p}}
\newcommand{\bn}{\mathbf{n}}
\newcommand{\brho}{\boldsymbol{\rho}}
\newcommand{\bpi}{\boldsymbol{\pi}}
\newcommand{\bst}{\mathbf{st}}
\def\poly{{\rm poly}}
\def\negl{{\rm negl}}
\newcommand{\pk}{\mathsf{pk}}
\newcommand{\sk}{\mathsf{sk}}
\newcommand{\KeyGen}{\mathsf{KeyGen}}
\newcommand{\Setup}{\mathsf{Setup}}
\newcommand{\VSetup}{\mathsf{VSetup}}
\newcommand{\Prove}{\mathsf{Prove}}
\newcommand{\pvk}{\mathsf{pvk}}
\newcommand{\svk}{\mathsf{svk}}
\newcommand{\QMA}{\mathsf{QMA}}
\newcommand{\yes}{\mathsf{yes}}
\newcommand{\no}{\mathsf{no}}
\newcommand{\Com}{\mathsf{Com}}
\newcommand{\Sim}{\mathsf{Sim}}
\newcommand{\Enc}{\mathsf{Enc}}
\newcommand{\Dec}{\mathsf{Dec}}
\newcommand{\Verify}{\mathsf{Verify}}
\newcommand{\ct}{\mathsf{ct}}
\newcommand{\Eval}{\mathsf{Eval}}
\DeclareMathOperator*{\expectation}{\mathbb{E}}
\newcommand{\E}{\expectation}
\newcommand{\PRF}{\mathsf{PRF}}
\newcommand{\Gen}{\mathsf{Gen}}
\newcommand{\Tstate}{\mathbf{T}}
\newcommand{\Zstate}{\mathbf{0}}
\newcommand{\chck}{\mathsf{check}}
\newcommand{\decchck}{\mathsf{dec}-\mathsf{check}}
\newcommand{\decchckenc}{\mathsf{dec}-\mathsf{check}-\mathsf{enc}}
\newcommand{\decchckrerand}{\mathsf{dec}-\mathsf{check}-\mathsf{rerand}}
\newcommand{\rerandenc}{\mathsf{rerand}-\mathsf{enc}}
\newcommand{\rerand}{\mathsf{rerand}}
\newcommand{\Rerand}{\mathsf{Rerand}}
\newcommand{\Trap}{\mathsf{Trap}}
\newcommand{\trap}{\mathsf{trap}}
\newcommand{\baux}{\mathbf{aux}}
\newcommand{\Q}{\mathsf{Q}}
\newcommand{\C}{\mathsf{C}}
\newcommand{\Full}{\mathsf{Full}}
\newcommand{\Half}{\mathsf{Half}}
\newcommand{\secplev}{{\lambda_{\mathsf{lev}}}}
\newcommand{\inpcor}{\mathsf{inp}-\mathsf{cor}}
\newcommand{\comb}{\mathsf{comb}}
\newcommand{\extract}{\mathsf{extract}}
\newcommand{\trace}{\mathrm{Tr}}
\newcommand{\CM}{\mathsf{C+M}}
\newcommand{\dst}{\mathsf{dist}}
\newcommand{\proref}[1]{Protocol~\protect\ref{#1}}
\newenvironment{boxfig}[2]{\begin{figure}[#1]\fbox{
    \begin{minipage}{\linewidth}
    \vspace{0.2em}\makebox[0.025\linewidth]{}    \begin{minipage}{0.95\linewidth}{{#2 }}
    \end{minipage}\vspace{0.2em}\end{minipage}}}{\end{figure}}
\newcommand{\pprotocol}[4]{
\begin{boxfig}{h!}{
\begin{center}
\textbf{#1}
\end{center}
    #4
\vspace{0.2em} } \caption{\label{#3} #2}
\end{boxfig}
}
\newcommand{\protocol}[4]{
\pprotocol{#1}{#2}{#3}{#4} }
\newcommand{\gray}[1]{\textcolor{gray}{\mathsf{#1}}}
\newcommand{\Garble}{\mathsf{Garble}}
\newcommand{\GEval}{\mathsf{GEval}}
\newcommand{\GSim}{\mathsf{GSim}}
\newcommand{\QGarble}{\mathsf{QGarble}}
\newcommand{\QGEval}{\mathsf{QGEval}}
\newcommand{\QGSim}{\mathsf{QGSim}}
\newcommand{\labset}{\overline{\mathsf{lab}}}
\newcommand{\labsimset}{\widetilde{\mathsf{lab}}}
\newcommand{\labenc}{\mathsf{LabEnc}}
\newcommand{\lab}{\mathsf{lab}}
\newcommand{\crs}{\mathsf{crs}}
\newcommand{\inp}{\mathsf{inp}}
\newcommand{\twopc}{\mathsf{2PC}}
\newcommand{\gen}{\mathsf{Gen}}
\newcommand{\st}{\mathsf{st}}
\newcommand{\out}{\mathsf{out}}
\newcommand{\Real}{\mathsf{REAL}}
\newcommand{\Ideal}{\mathsf{IDEAL}}
\newcommand{\cmt}{\mathsf{cmt}}
\newcommand{\Ver}{\mathsf{Ver}}
\newcommand{\Open}{\mathsf{Open}}
\newcommand{\cmp}{\mathsf{cmp}}
\newcommand{\abort}{\mathsf{abort}}
\newcommand{\qmfhe}{\mathsf{QMFHE}}
\newcommand{\mfhe}{\mathsf{MFHE}}
\newcommand{\CEnc}{\mathsf{CEnc}}
\newcommand{\eval}{\mathsf{Eval}}
\newcommand{\MPC}{\mathsf{MPC}}
\newcommand{\Circle}{\mathsf{circle}}
\newcommand{\test}{\mathsf{test}}
\newcommand{\QGC}{\mathsf{QGC}}
\newcommand{\enc}{\mathsf{enc}}
\begin{document}

\title{On the Round Complexity of Secure Quantum Computation\footnote{This paper subsumes an earlier version \cite{cryptoeprint:2020:1471} that only considers secure two-party quantum computation.}}
\author{James Bartusek\thanks{UC Berkeley. Email: \texttt{bartusek.james@gmail.com}} \and Andrea Coladangelo\thanks{UC Berkeley. Email: \texttt{andrea.coladangelo@gmail.com}} \and Dakshita Khurana\thanks{UIUC. Email: \texttt{dakshita@illinois.edu}} \and Fermi Ma\thanks{Princeton University and NTT Research. Email: \texttt{fermima@alum.mit.edu}}}
\date{}
\maketitle

\begin{abstract}

We construct the first \emph{constant-round} protocols for secure quantum computation in the two-party (2PQC) and multi-party (MPQC) settings with security against \emph{malicious} adversaries. Our protocols are in the common random string (CRS) model.

\begin{itemize}
    \item Assuming two-message oblivious transfer (OT), we obtain ($i$) three-message 2PQC, and ($ii$) five-round MPQC with only three rounds of \emph{online} (input-dependent) communication; such OT is known from quantum-hard Learning with Errors (QLWE).
    \item Assuming sub-exponential hardness of QLWE, we obtain ($i$) three-round 2PQC with two online rounds and ($ii$) four-round MPQC with two online rounds.
    \item When only one (out of two) parties receives output, we achieve \emph{minimal interaction} (two messages) from two-message OT; classically, such protocols are known as non-interactive secure computation (NISC), and our result constitutes the first maliciously-secure quantum NISC. 
    
     Additionally assuming reusable malicious designated-verifier NIZK arguments for $\mathsf{NP}$ (MDV-NIZKs), we give the first MDV-NIZK for $\mathsf{QMA}$ that only requires one copy of the quantum witness.
        \end{itemize}

    Finally, we perform a preliminary investigation into \emph{two-round} secure quantum computation where each party must obtain output. On the negative side, we identify a broad class of simulation strategies that suffice for \emph{classical} two-round secure computation that are \emph{unlikely} to work in the quantum setting. Next, as a proof-of-concept, we show that two-round secure quantum computation exists with respect to a quantum oracle.

\end{abstract}
\newpage

{
  \hypersetup{linkcolor=Violet}
  \setcounter{tocdepth}{2}
  \tableofcontents
}
\newpage

\section{Introduction}


Secure computation allows mutually distrusting parties to compute arbitrary functions on their private inputs, revealing only the outputs of the computation while hiding all other private information~\cite{FOCS:Yao86,STOC:GolMicWig87,STOC:BenGolWig88,C:ChaCreDam87}.
With the emergence of quantum computers, it becomes important to understand the landscape of secure \emph{quantum} computation over distributed, private quantum (or classical) states. In the most general setting, $n$ parties hold (possibly entangled) quantum inputs $\bx_1,\dots,\bx_n$, and would like to evaluate a quantum circuit $Q(\bx_1,\dots,\bx_n)$. The output is of the form $(\by_1,\dots,\by_n)$, so at the end of the protocol party $i$ holds state $\by_i$. 

Secure computation with classical inputs and circuits forms a centerpiece of classical cryptography. Solutions to this problem in the classical setting were first obtained nearly 35 years ago, when~\cite{FOCS:Yao86} built garbled circuits to enable secure two-party computation, and~\cite{STOC:GolMicWig87,STOC:BenGolWig88,C:ChaCreDam87} obtained the first secure multi-party computation protocols. Since then, there has been an extensive body of work in this area, of which a large chunk focuses on understanding the amount of back-and-forth interaction required to implement these protocols. Notably, the work of Beaver, Micali and Rogaway~\cite{STOC:BeaMicRog90} obtained the first constant-round classical multi-party computation protocols in the dishonest majority setting. There have been several subsequent works including but not limited to~\cite{C:KatOst04,EC:GMPP16,C:AnaChoJai17,TCC:BraHalPol17,C:BGJKKS18,TCC:CCG0O20} that have nearly completely characterized the {\em exact} round complexity of classical secure computation.

The problem of secure {\em quantum} computation on distributed quantum states is not nearly as well-understood as its classical counterpart. The quantum setting was first studied by~\cite{STOC:CreGotSmi02,FOCS:BCGHS06}, who obtained unconditional maliciously-secure multi-party quantum computation with honest majority. 
Just like the classical setting, when half (or more) of the players are malicious, secure quantum computation also requires computational assumptions due to the impossibility of unconditionally secure quantum bit commitment~\cite{mayers1997unconditionally,lo1998quantum,d2006quantum}. 

In the dishonest majority setting,~\cite{C:DupNieSal10} gave a two-party quantum computation (2PQC) protocol secure against the quantum analogue of semi-honest adversaries (specious adversaries); this was later extended to the malicious setting by~\cite{DNS12}. A work of~\cite{EC:DGJMS20} constructed maliciously-secure \emph{multi-party} quantum computation (MPQC) with dishonest majority from any maliciously-secure post-quantum classical MPC, where the round complexity grows with the size of the circuit {\em and} the number of participants. Very recently,~\cite{cryptoeprint:2020:1464} constructed MPQC with identifiable abort, and with round complexity that does not grow with the circuit size but grows with the number of participants. 

However, the feasibility of maliciously-secure MPQC with \emph{constant} rounds has remained open until this work. In addition to settling this question, we also make several headways in understanding the {\em exact} round complexity of secure quantum computation with up to all-but-one malicious corruptions.

\subsection{Our Results}

We assume that parties have access to a common random string (CRS), and obtain answers to a range of fundamental questions, as we discuss next\footnote{We point out that the post-quantum MPC protocol of~\cite{Agarwal2020PostQuantumMC} can be used to generate a CRS in constant rounds. This, combined with our results, yields the first constant round multi-party quantum computation protocols without trusted setup in the standard model.}. 

\subsubsection{Quantum Non-Interactive Secure Computation}

Our first result pertains to the most basic setting for secure (quantum) computation: a sender holds input $\mathbf{y}$, a receiver holds input $\mathbf{x}$, and the goal is for the receiver to obtain $Q(\mathbf{x},\mathbf{y})$ for some quantum circuit $Q$. We construct a protocol achieving \emph{minimal interaction} --- commonly known as non-interactive secure computation (NISC)~\cite{EC:IKOPS11} --- where the receiver publishes an encryption of $\mathbf{x}$, the sender replies with an encryption of $\mathbf{y}$, and the receiver subsequently obtains $Q(\mathbf{x},\mathbf{y})$. Our result constitutes the first maliciously-secure NISC for quantum computation (Q-NISC).

\begin{theorem}(Informal)\label{thm:informal-3-message}
Maliciously-secure NISC for quantum computation exists assuming post-quantum maliciously-secure two-message oblivious transfer (OT) with straight-line simulation.
\end{theorem}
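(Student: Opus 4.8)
The plan is to build Q-NISC by combining classical NISC (which follows from two-message OT) with quantum-secure tools, following the high-level template of "garble the quantum computation, then deliver labels via classical NISC." First I would recall the building blocks that should be available from the stated assumption: two-message maliciously-secure OT with straight-line simulation gives, via standard transformations, a maliciously-secure NISC for \emph{classical} functionalities in the CRS model; it also gives post-quantum two-party classical MPC-in-the-head style primitives and, crucially, classical garbled circuits. On the quantum side, the key object is a scheme to delegate/garble a quantum circuit: one encrypts the receiver's input $\mathbf{x}$ under a quantum one-time pad (and the sender's input $\mathbf{y}$ likewise), and then one needs a way for the sender to produce a "quantum garbled circuit" for $Q$ — concretely, the Clifford+measurement decomposition of $Q$ (à la Broadbent–Jeffery / the $C+M$ notation hinted at in the macros) where the Clifford parts are applied in the clear on one-time-padded data and the only non-Clifford operations are magic-state $T$-gadgets, which reduce to classical control logic that can be handled by a classical garbled circuit.

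The main steps, in order, would be: (1) Have the receiver quantum-one-time-pad its input $\mathbf{x}$ and send the ciphertext along with a classical NISC first message that commits it to the OTP keys and any randomness, with the classical functionality later checking well-formedness; (2) Have the sender, upon receiving this, prepare the necessary magic states, apply the Clifford portion of the (appropriately padded) computation $Q$ to the joint state, and produce a classical garbled circuit computing the "key-update / Pauli-tracking" function that, given the receiver's OTP keys and the measurement outcomes of the $T$-gadgets, outputs the final decoding keys — delivering the garbled-circuit labels through the second message of the classical NISC; (3) Have the receiver evaluate: measure the $T$-gadget qubits, feed outcomes (and its own keys) into the garbled circuit via the labels it learned, recover the decoding Pauli, and decrypt to get $Q(\mathbf{x},\mathbf{y})$. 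For malicious security, the honest parties additionally run the classical NISC on a functionality that verifies consistency of the quantum one-time pads, the magic states (via a cut-and-choose or authentication-code test on a subset), and the garbled circuit, so that a cheating sender is caught or forced to behave like a simulated one; correctness and privacy of the quantum part rely on the security of the quantum authentication code / trap code wrapping the padded qubits.

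For the security proof I would give two simulators. Against a malicious receiver: invoke the straight-line simulator of the classical NISC to extract the receiver's effective input $\mathbf{x}$ (the committed OTP keys), send $\mathbf{x}$ to the ideal functionality to get $\by$, and then simulate the sender's message by garbled-circuit simulation together with the simulator of the quantum authentication/trap code, so that the receiver's view is computationally indistinguishable. Against a malicious sender: again use classical-NISC extraction to pull out the sender's input and the description of the (possibly malformed) quantum operations and garbled circuit it committed to, feed $\by$ to the ideal functionality, and argue via the authentication-code security that any deviation is detected except with negligible probability, so the real output matches the ideal one. The whole argument is a hybrid sequence replacing (a) the classical NISC by its simulator, (b) the garbled circuit by its simulator, and (c) the one-time-padded/authenticated quantum data by simulated states.

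The hard part will be handling a malicious sender's quantum message: unlike the classical setting, we cannot simply "extract" a quantum input and re-run the functionality, and a cheating sender could entangle its message with side registers or submit malformed magic states, so the authentication/trap-code test must be argued to either catch this or collapse the sender's behavior to an admissible (simulatable) one — essentially a quantum cut-and-choose soundness argument. A secondary subtlety is that the classical NISC's straight-line simulator must compose with the quantum simulators without rewinding issues (this is exactly why straight-line simulation is assumed), and one must be careful that the quantum one-time pad keys are committed \emph{before} the sender acts so that extraction is meaningful. I expect these to be the technically involved portions, whereas the Clifford-tracking and key-update bookkeeping, though notationally heavy, is routine once the framework is set up.
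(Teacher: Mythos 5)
There is a genuine gap, and it is exactly at the point you flag as ``the hard part.'' Your architecture has the \emph{sender} perform the quantum evaluation on the receiver's encrypted input (apply the Clifford portion, consume self-prepared magic states), with a classical garbled circuit doing Pauli-key updates and the classical NISC ``verifying consistency of the quantum one-time pads and the magic states.'' A classical NISC functionality cannot verify statements about quantum states or about which quantum operations the sender actually applied; this is precisely the obstruction the paper identifies when explaining why QFHE-style or Brakerski--Yuen-style two-message protocols cannot be compiled to malicious security by attaching proofs of well-formedness. Moreover, your appeal to ``authentication/trap-code security'' does not close the hole: a bare quantum one-time pad has no integrity at all, and if instead the receiver's input is wrapped in a Clifford/trap authentication code, then a sender who is \emph{supposed} to transform the state cannot be distinguished, by the code alone, from one applying a different transformation; worse, the sender both prepares and acts on the magic states, so malformed $T$ states cause logical (non-Pauli) errors that no mechanism you specify detects. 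Your malicious-receiver simulation is fine in spirit, but the malicious-sender case as written would fail.

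The paper's construction inverts the data flow to avoid ever verifying quantum operations. The sender performs essentially no quantum computation of $Q$: she only Clifford-encodes the joint state (her input, the receiver's already-Clifford-encoded input, and ancilla $0$ and $T$ states). The maliciously secure classical two-message 2PC, run on the two parties' Clifford keys, \emph{itself} produces the quantum garbled circuit --- this is possible because the Clifford+measurement garbling scheme has an entirely classical garbling procedure --- together with a single unitary $W$ that decodes both parties' Clifford encodings, applies a random $\mathsf{GL}(2k,\mathbb{F}_2)$ map for the $0$-state test and a random permutation for a cut-and-choose $T$-state test (check randomness hidden inside the 2PC), and re-encodes under the garbled circuit's input-encoding Clifford. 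The \emph{receiver} applies $W$, measures the check registers, and evaluates the quantum garbled circuit, which additionally runs magic-state distillation before $Q$. Honest garbling is thus enforced by the classical 2PC's malicious security rather than by any quantum verification, and sender deviations are confined to the ancilla states, which the receiver's randomized checks catch. To repair your proposal you would need to adopt this structure (or something equivalent): keep all quantum evaluation with the receiver, generate the quantum garbled circuit inside the classical NISC, and move the $0$/$T$-state checks to the receiver with check randomness committed inside the classical functionality.
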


Such OT protocols are known from the post-quantum hardness of Learning with Errors (LWE)~\cite{C:PeiVaiWat08}. We remark that our Q-NISC result also extends to the {\em reusable} setting where the receiver has a classical input that they would like to reuse across multiple quantum computations on different sender inputs.

\paragraph{Application: Malicious Designated-Verifier NIZK Arguments for $\mathsf{QMA}$.} As an application of our maliciously-secure Q-NISC, we construct (reusable) \emph{malicious designated-verifier non-interactive zero-knowledge arguments} (MDV-NIZKs) for $\mathsf{QMA}$ in the common random string model. Specifically, our MDV-NIZK enables the following interaction for any $\mathsf{QMA}$ language: a verifier can publish a classical public key $\mathsf{\pk}$ that enables a prover to send an instance $x$ and quantum message $\mathbf{m}$, such that the verifier holding the corresponding secret key $\mathsf{sk}$ can determine if $x$ is a valid instance.

\begin{theorem}(Informal)
\label{thm:inf-dv}
There exists a reusable MDV-NIZK for $\mathsf{QMA}$ with a classical CRS and classical proving key assuming the existence of post-quantum maliciously-secure two-message oblivious transfer with straight-line simulation in the CRS model, and post-quantum (adaptively sound) reusable MDV-NIZK for NP. All of the underlying primitives exist assuming the quantum hardness of learning with errors. 
\end{theorem}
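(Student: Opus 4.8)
The plan is to obtain the MDV-NIZK for $\QMA$ as an essentially direct corollary of our maliciously-secure Q-NISC (Theorem~\ref{thm:informal-3-message}): view the verifier as the Q-NISC \emph{receiver} and the prover as the Q-NISC \emph{sender}, and take the evaluated circuit to be a $\QMA$ verification procedure. As a preliminary step I would fix, for the $\QMA$ language $L$, a verification circuit with negligible completeness error and negligible soundness error that consumes only a \emph{single} copy of the witness register; this is obtained from any $\QMA$ verifier by Marriott--Watrous in-place amplification, which boosts the $1/\poly$ promise gap to $1-\negl$ vs.\ $\negl$ without enlarging the witness. Write $V_x$ for the resulting circuit (taking the witness register, outputting a single accept/reject bit), where $x$ is the instance. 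It is precisely this amplification that yields the ``one copy of the quantum witness'' feature, since the Q-NISC sender only ever needs one copy of its quantum input.

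The construction is then as follows. The CRS consists of the Q-NISC CRS together with the CRS of the reusable MDV-NIZK for $\NP$. The verifier runs the Q-NISC receiver algorithm with empty input, publishes the resulting (classical) first message as $\pk$, and retains the receiver state as $\sk$. To prove $x \in L$ with witness $\bm$, the prover sends $x$ in the clear and runs the Q-NISC sender algorithm on sender-input $\bm$ and circuit $V_x$, producing the quantum proof message. The verifier applies the Q-NISC output-recovery procedure under $\sk$ and accepts iff it recovers ``accept''. The reusable MDV-NIZK for $\NP$ is used to instantiate the classical components of the Q-NISC compiler so that the single published $\pk$ can be safely reused across many proofs in the designated-verifier setting, and so that the CRS and the proving key remain classical; the prover's message is allowed to be quantum.

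Completeness, zero-knowledge, and soundness reduce to the corresponding Q-NISC guarantees. Completeness is immediate from Q-NISC correctness and amplified $\QMA$ completeness. For (adaptive, computational) soundness I would invoke security against a malicious sender: the straight-line Q-NISC simulator extracts an effective sender input, i.e.\ a quantum state $\bm^{*}$, such that the verifier's decision equals $V_x(\bm^{*})$ up to negligible error; if $x \notin L$ then $V_x(\bm^{*})$ rejects with overwhelming probability for \emph{every} state $\bm^{*}$, so the verifier rejects. For (reusable) zero-knowledge I would invoke security against a malicious receiver: the Q-NISC simulator, given only the functionality output---which on a true statement is ``accept'' with overwhelming probability---simulates the prover message so that the malicious verifier's view is indistinguishable, and reusing $\pk$ across proofs is handled by reusability of the Q-NISC and of the MDV-NIZK for $\NP$.

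I expect the main obstacle to be the reusable designated-verifier soundness: unlike a plain CRS-model NIZK, the verifier here holds a secret $\sk$, and its accept/reject decisions across many proofs could in principle leak $\sk$ and eventually let a malicious prover forge an accepting proof for a false statement (the ``verifier-rejection'' attack). Ruling this out requires the underlying OT/Q-NISC and the MDV-NIZK for $\NP$ to be reusable in a way that composes, and care that the straight-line extractor used in the soundness argument still succeeds when the adversary has observed polynomially many honest verifier decisions; this is the step I would spend the most effort on. A secondary, more routine point is verifying that the in-place amplified $\QMA$ verifier---an arbitrary polynomial-size quantum circuit---is a legal functionality for the Q-NISC of Theorem~\ref{thm:informal-3-message}, which it is, since that Q-NISC supports general quantum circuits.
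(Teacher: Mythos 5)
Your construction and your completeness/zero-knowledge arguments coincide with the paper's: the verifier plays the Q-NISC receiver with no input and publishes the classical first message of the classical $\twopc$ as $\pvk$, the prover plays the sender with a single copy of the witness, and zero-knowledge is immediate because the simulator can invoke the 2PQC simulator with output ``accept''. The gap is in reusable soundness, which is exactly the new content of this theorem: you correctly flag the verifier-rejection issue as the main obstacle, but you do not supply the mechanism that resolves it, and without one the construction as you describe it is not obviously reusably sound. The paper's resolution is concrete and two-fold: (i) instantiate the classical $\twopc$ with a \emph{reusable} post-quantum 2PC (the one of LQRWW19, itself built from reusable MDV-NIZK for $\NP$ plus OT --- this is where that assumption enters), and (ii) modify the classical functionality so that the verifier additionally inputs a PRF key $k$ and \emph{all} of the functionality's randomness --- the permutation $\pi$ and linear map $M$ underlying the $\Tstate$/$\Zstate$ cut-and-choose checks, as well as the coins of $\QGarble$ --- is derived as $\PRF(k,x)$, i.e., tied to the instance. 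Soundness then follows from two hybrids: first simulate the reusable 2PC so that the verification oracle's answers are computed from the classical input extracted from each submitted proof, then replace the PRF by a random function; for the fixed no-instance $x^*$ the check and garbling randomness is then uniform and independent of the prover's view, so the 0/T-state check lemmas together with statistical correctness of the quantum garbled circuit force rejection. Without an instance-dependent derivation of this randomness, once $\pvk$ is fixed the randomized functionality's coins must either be fixed across proofs or be influenced by the prover, and the accept/reject feedback over polynomially many queries can leak $\pi$ and $M$ and defeat the cut-and-choose; appealing to the single-execution malicious-sender simulator, as you do, does not by itself survive access to the verification oracle.

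Two smaller points. You claim adaptive soundness for the resulting $\QMA$ argument, but the paper only achieves \emph{non-adaptive} reusable soundness from polynomially-hard QLWE (adaptive soundness would require complexity leveraging and sub-exponential hardness); the ``adaptively sound'' qualifier in the theorem refers to the assumed MDV-NIZK for $\NP$, not the conclusion. Your Marriott--Watrous in-place amplification step is fine and consistent with what the paper leaves implicit, namely a $\QMA$ verifier with negligible completeness and soundness error that consumes a single witness copy.
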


We briefly elaborate on the security guarantees of our reusable MDV-NIZK. Reusability means that soundness holds for multiple proofs (of potentially different statements) computed with respect to the same setup (i.e., the common random string and the public key), even if the prover learns whether or not the verifier accepted each proof; we remark that reusable security is sometimes referred to as multi-theorem security. Malicious security means that the zero-knowledge property holds even against verifiers that generate the public key maliciously. Previously, such a reusable MDV-NIZK for QMA required the prover to have access to multiple copies of the quantum witness~\cite{shmueli2020multitheorem}, while our MDV-NIZK only requires the prover to have a single copy.

\subsubsection{Constant-round 2PQC and MPQC}


Our next set of results concerns the general setting for 2PQC and MPQC \emph{where all parties obtain output}. We focus on minimizing total round complexity as well as \emph{online} round complexity, where the latter refers to the number of \emph{input-dependent} rounds; if a protocol has round complexity $d$ and online round complexity $k$, then the parties can perform the first $d-k$ rounds \emph{before} they receive their inputs.\footnote{We remark that a $k$-online round protocol can also be viewed as a $k$-round protocol in a quantum pre-processing model, i.e. a model where parties receive some quantum correlations as setup.}

We obtain various results, some from the generic assumption of quantum polynomially-secure two-message oblivious transfer, and others from the specific assumption of sub-exponential QLWE. Our results in this section are summarized in~\cref{table: results}.\footnote{The results below are in the setting of security with abort, as opposed to security with \emph{unanimous} abort (which is only a distinction in the multi-party setting). If one wants security with unanimous abort, the overall round complexity will not change, but one more round of online communication will be required.} 



\begin{table}[ht!]
\caption{Maliciously-Secure Quantum Computation in the CRS Model}
\begin{center}
    \begin{tabular}{c|c|c}
         &  From OT & From sub-exp QLWE\\
         \hline
         Two-party & 3 rounds (3 online) & 3 rounds (2 online) \\
         \hline
         Multi-party & 5 rounds (3 online) & 4 rounds (2 online)
    \end{tabular}
\end{center}
\label{table: results}
\end{table}

In order to prove the security of these protocols, we develop a delayed simulation technique, which we call ``simulation via teleportation'', which may be of independent interest.

\subsubsection{Is Two-Round Secure Quantum Computation Possible?}
A natural next question is whether it is possible to construct two-round secure quantum computation \emph{without} pre-processing. This appears to be a challenging question to resolve, either positively or negatively. We provide some preliminary results on both fronts: we give a negative result indicating that common simulation strategies from the classical setting will not suffice in the quantum setting, but we also provide a proof-of-concept positive result, with a new simulation strategy, assuming virtual-black-box obfuscation of quantum circuits. We stress that the latter result is primarily to give intuition, as virtual-black-box obfuscation is known to be impossible even for classical circuits~\cite{C:BGIRSVY01}. We limit the scope of this preliminary investigation to the \emph{two-party} setting.

First, we give some intuition for why it seems hard to design a two-round two-party protocol by showing that, under a plausible quantum information-theoretic conjecture, a large class of common simulation techniques would \emph{not} suffice. We consider any simulator that learns which player (between Alice and Bob) is corrupted only {\em after} it has generated the simulated CRS. We call such a simulator an \emph{oblivious simulator}. To the best of our knowledge, all existing classical and quantum two-party computation protocols in the CRS model either (1) already admit oblivious simulation, or (2) can generically be transformed to admit oblivious simulation via post-quantum NIZK proofs of knowledge for $\mathsf{NP}$.


In the quantum setting, we show, roughly, that any two-round 2PQC protocol for general quantum functionalities \emph{with an oblivious simulator} would yield an \textit{instantaneous nonlocal quantum computation} protocol \cite{PhysRevLett.90.010402,Beigi_2011,DBLP:conf/tqc/Speelman16,DBLP:journals/tit/GonzalesC20} for general quantum functionalities, with polynomial-size pre-processing. 


Instantaneous nonlocal quantum computation is well-studied in the quantum information literature \cite{PhysRevLett.90.010402,Beigi_2011,DBLP:conf/tqc/Speelman16,DBLP:journals/tit/GonzalesC20}, and the best known protocols for general functionalities require exponential-size pre-processing \cite{Beigi_2011}. Thus, a two-round 2PQC for general functionalities with oblivious simulation would immediately yield an exponential improvement in the size of the pre-processing for this task.

\begin{theorem} (Informal)
Under the conjecture that there exists a quantum functionality that does not admit an instantaneous nonlocal quantum computation protocol with polynomial-size pre-processing, there exists a quantum functionality that cannot be securely computed in two rounds in the classical CRS model with an oblivious simulator.
\end{theorem}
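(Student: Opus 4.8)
The plan is to prove the contrapositive: assuming that \emph{every} quantum functionality admits a two-round 2PQC protocol in the classical CRS model with an oblivious simulator (one that commits to the simulated CRS before learning which party is corrupted), I would construct, for an arbitrary unitary $U$ on a pair of registers $(A,B)$, an instantaneous nonlocal quantum computation (INQC) protocol for $U$ with polynomial-size pre-processing; this contradicts the conjecture, since the best known INQC protocols for general unitaries require exponential pre-processing. Concretely, I would instantiate the hypothesized protocol $\Pi$ on the 2PQC functionality $\cF$ that takes joint quantum input $(A,B)$ and returns the $A$-register of $U(A,B)$ to Alice and the $B$-register to Bob, and then compile $\Pi$ into an INQC protocol for $U$.

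For the compilation: the INQC pre-processing (which must be input independent and symmetric in the two parties) runs $\Sim$'s CRS sampler to obtain a simulated $\crs$ together with its trapdoor information, hands a copy of \emph{both} to each party, and additionally distributes polynomially many EPR pairs. This is where obliviousness is essential --- it guarantees that a \emph{single} CRS distribution simultaneously supports simulation against either party, so that it is legal to fix it in a party-symmetric pre-processing --- and where it matters that the CRS is classical, so that it can be freely duplicated. In the single simultaneous round, each party computes its honest $\Pi$ round-one message from its input and the simulated $\crs$ and teleports to the other party the registers the other party would need in order to run the \emph{second} round of $\Pi$ on its behalf, sending the Pauli byproducts in the clear as part of the same message. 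A single round thus collapses the \emph{one} layer of cross-round dependency inherent to a two-round protocol (each party's round-two message depends only on the other party's round-one message). After the round, each party applies the Pauli corrections, runs its own round-two next-message function, and --- since it cannot receive the other party's round-two message within a single round --- instead \emph{manufactures} that message by invoking $\Sim$ with the trapdoor from pre-processing, and finally runs its output algorithm to obtain its share of the output register.

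To argue correctness, I would show via a hybrid argument that the \emph{joint} output register $(A',B')$ is within negligible trace distance of $U(A,B)$ for every (possibly entangled) input: start from a literal honest execution of $\Pi$, which is correct by assumption; switch to the simulated $\crs$; and then, treating each honest party's own round-one message as a benign ``malicious'' first message, invoke $\Pi$'s simulation guarantee --- which states exactly that the simulator's output (run with the extracted input and the ideal output) is indistinguishable from the real interaction, and which must be applied with respect to the joint distribution of a corrupted party's view and output together with the honest party's output, so that it controls the correlations between $A'$ and $B'$ and not merely the marginals. All quantities --- the simulated $\crs$, the trapdoor, the EPR pairs, and all computations --- are polynomial-size because $\Pi$ is efficient, so the pre-processing is polynomial-size, which completes the contradiction.

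The main obstacle is precisely the step in the correctness argument: a single simultaneous round cannot deliver both round-two messages, and one cannot sidestep this by routing the whole computation to a single party, because the output state $(A',B')$ is generically entangled across the two parties, so having one party compute both halves and ``discard'' one is simply incorrect. The oblivious simulator is what lets each party \emph{locally} reconstruct the round-two message it is missing, and the crux of the proof is to show that performing this reconstruction at \emph{both} parties simultaneously composes into one coherent execution producing the correct joint output; this is also why an oblivious (rather than merely black-box straight-line) simulator is the relevant hypothesis, and why the argument yields only conditional evidence rather than an outright impossibility for two-round 2PQC. A secondary subtlety I would need to handle is that a 2PQC simulator typically consumes the ideal output when producing its round-two message, whereas the compiled party does not have its output in hand at that point; I would argue that this ``missing'' output can be recovered from registers already teleported between the parties (e.g.\ by decoding the other party's round-one message with the trapdoor), or else that the simulator can be invoked in a mode that does not need it.
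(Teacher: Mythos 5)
There is a genuine gap, and it is exactly the one you flag at the end: your compiled party must manufacture the other party's \emph{second-round} message by invoking the simulator, but $\Sim_3$ produces that message from the ideal output $\by_A$, which is precisely what the party is trying to compute. Neither of your fallbacks resolves this. Recovering the output ``by decoding the other party's round-one message with the trapdoor'' means extracting $\bx_B$ and computing $U(\bx_A,\bx_B)$ locally, which is the single-party routing you yourself rule out (the joint output is entangled, so one party cannot hold both halves); moreover, for a generic functionality the simulation guarantee does not even imply that the extracted input is close to the real input, so this decoding step has no justification. And there is no general ``mode that does not need the output'': a simulator for a party that receives output must use that output to simulate the final message. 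So the crux you identify—showing that two simultaneous local reconstructions compose into one coherent execution—remains unproven, and the hybrid argument you sketch does not address it.

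The paper's proof avoids this circularity by never simulating the second round at all, and it does so with three ingredients your plan is missing. First, it applies the hypothesis to a special functionality, C-SWAP-$U$: on quantum inputs $(\bx_A,\bx_B)$ and classical bits $z_A,z_B$, it either applies $U$ and then swaps the two output registers, or (if $z_A\oplus z_B=1$) just swaps the inputs. The possible pure-swap branch forces the simulator's extracted input to be statistically close to the true input (otherwise the honest party's real and ideal outputs would differ), and the built-in swap is what makes the role reversal below deliver each party its correct half of $U(\bx_A,\bx_B)$. Second, a no-cloning-style lemma shows that, because the map $(\bx_A,\bx_B)\mapsto(\bx_A',\bx_B',\text{residual states})$ defined by running $A_1,B_1$ on a simulated $\crs$ and then the extractors $\Sim_2^{(A)},\Sim_2^{(B)}$ approximately preserves the inputs, the leftover states $(\bst_1^{(A)},\bst_1^{(B)},\bst_2^{(\Sim,A)},\bst_2^{(\Sim,B)})$ are essentially \emph{independent of the inputs}; this is what licenses generating them in the input-independent pre-processing and handing each party the \emph{other} party's residual first-round state. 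Third, online, each party runs the extractor \emph{in reverse} on its actual input together with the pre-shared simulator residual to reconstruct its honest first message, then locally runs the \emph{other} party's honest second-round algorithm and sends that second-round message in the single simultaneous round; each party finishes by running the other party's output-reconstruction algorithm. Only first-round simulation and extraction are ever used, so the ideal-output dependence never arises. Without the swap functionality, the input-independence lemma, and the inverse-extraction trick, your compilation does not go through.
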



\ifsubmission
Towards getting around this potential barrier, we give a proof-of-concept construction of a protocol with non-oblivious simulation. Specifically, we assume a (strong) form of VBB obfuscation for quantum circuits that contain unitary and measurement gates, where the former may be classically controlled on the outcome of measurement gates. We point out, however, that VBB-obfuscation of circuits with measurement gates is potentially even more powerful than the VBB obfuscation for unitaries that was formalized in~\cite{Alagic2016OnQO} (further discussion on this is available in the full version). Under this assumption, we obtain a two-round two-party secure quantum computation protocol in the CRS model.
\else
Towards getting around this potential barrier, we give a proof-of-concept construction of a protocol with non-oblivious simulation. Specifically, we assume a (strong) form of VBB obfuscation for quantum circuits that contain unitary and measurement gates, where the former may be classically controlled on the outcome of measurement gates. We point out, however, that VBB-obfuscation of circuits with measurement gates is potentially even more powerful than the VBB obfuscation for unitaries that was formalized in~\cite{Alagic2016OnQO} (see discussion in \cref{subsec:vbb-protocol}). Under this assumption, we obtain a two-round two-party secure quantum computation protocol in the CRS model (that is straightforward to extend to the multi-party setting).
\fi

\begin{theorem}(Informal)
Two-round two-party secure quantum computation in the common reference string model exists assuming a strong form of VBB or ideal obfuscation for quantum circuits as discussed above.
\end{theorem}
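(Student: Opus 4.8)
The plan is to realize the ``round-collapse via obfuscation of the transcript function'' paradigm, adapted to quantum inputs and outputs. Since one-round secure computation is ruled out already classically by residual-function attacks, two rounds is the target, and the two rounds will play the roles of ``commit'' and then ``open.'' Concretely, the CRS will contain a single obfuscated quantum circuit $\widetilde C$ (available to both parties; in the ideal-obfuscation formulation it is just an oracle) with hard-coded secrets: a PKE decryption key, a signature/MAC key, and PRF keys for deriving one-time pads. In round one, each party $P\in\{A,B\}$ commits via a statistically-binding commitment to classical one-time-pad keys $(a_P,b_P)$, sends the quantum-one-time-pad-encrypted input $\widehat{\bx}_P = X^{a_P}Z^{b_P}\bx_P$, and sends a public-key encryption of $(a_P,b_P)$ and of the commitment randomness under the hard-coded public key; note the QOTP layer is \emph{perfectly} hiding, which will streamline the proof. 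In round two, after seeing the other party's round-one message, each party sends a decommitment token (e.g.\ a signature-verifiable opening that binds its round-one commitment to the round-one transcript). Finally, each party locally runs $\widetilde C$ on the full classical transcript together with both QOTP-encrypted quantum registers: $\widetilde C$ recovers $(a_A,b_A),(a_B,b_B)$ from the ciphertexts, checks all commitments and decommitment tokens against the transcript (routing to an $\bot$ branch via a classically-controlled gate if any check fails), un-pads the two quantum inputs, evaluates $Q$, and hands each party $P$ its output share re-encrypted under a one-time pad derived (via the hard-coded PRF) from $P$'s committed randomness, so that only $P$ can remove it and recover $\by_P$. Correctness is immediate from correctness of $\widetilde C$ on honest transcripts.

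For security, fix a malicious party, say Bob, modeled by purification as a unitary adversary $\mathcal{A}$, and build the simulator $\cS$. Working in the ideal-obfuscation model, $\cS$ generates all of $\widetilde C$'s hard-coded secrets and answers $\mathcal{A}$'s oracle queries itself. It emits honest Alice's round-one message with a dummy input: $\widehat{\bx}_A$ is a QOTP of $|0\cdots0\rangle$ (perfectly indistinguishable), plus a commitment to and an encryption of junk (computationally indistinguishable by hiding and semantic security). After receiving $\mathcal{A}$'s round-one message, $\cS$ extracts Bob's effective input by decrypting $\mathcal{A}$'s ciphertext to get $(a_B,b_B)$, checking the commitment, and setting $\bx_B^\ast := Z^{b_B}X^{a_B}\widehat{\bx}_B$ --- a well-defined quantum register (possibly entangled with $\mathcal{A}$'s workspace) that $\cS$ forwards to the ideal functionality, receiving back Bob's share $\by_B$. $\cS$ sends a dummy round-two decommitment for Alice, and whenever $\mathcal{A}$ queries $\widetilde C$ it answers by (i) running $\widetilde C$'s consistency checks on the queried transcript exactly as the real circuit would, (ii) on the branch that would produce Bob's output, returning $\by_B$ re-padded with the pad derived from Bob's committed randomness rather than recomputing $Q$, and (iii) on the branch that would produce Alice's output, returning $\bot$ --- which is what the real circuit does, since $\mathcal{A}$ cannot produce a valid decommitment token for Alice. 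Indistinguishability then follows from a short hybrid argument: (H$_0$) real execution; (H$_1$) replace the obfuscation by the ideal-obfuscation oracle; (H$_2$) answer $\mathcal{A}$'s ``Bob-output'' query using the extracted input and the ideal functionality; (H$_3$) switch Alice's round-one and round-two messages to dummies (hiding, semantic security, perfect hiding of the QOTP) --- this last hybrid is the ideal world.

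The step I expect to be the main obstacle is (H$_2$): arguing that the malicious party is effectively confined to a \emph{single} evaluation of the functionality on a single, well-defined input, i.e.\ that there is no quantum analogue of the residual-function attack. This must combine the statistical binding of the round-one commitment, the unforgeability of the decommitment tokens, and the fact that Bob obtains Alice's valid decommitment only in round two, so that any query to $\widetilde C$ made earlier, or with a different transcript, deterministically hits the $\bot$ branch; one must additionally handle the possibility that $\mathcal{A}$'s round-one commitment register is in superposition, which calls for a purified/coherent analysis together with a collapse argument ensuring that the extracted $(a_B,b_B)$, and hence $\bx_B^\ast$, are well-defined with all but negligible weight on $\mathcal{A}$'s accepting branches. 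A secondary but genuine subtlety, already flagged in the paper, is that $\widetilde C$ is a true quantum channel (it measures for the classical checks and classically controls the un-padding and the evaluation of $Q$), so one needs the strong form of obfuscation that handles measurement gates rather than only unitaries; we will simply assume this, noting that in the ideal-obfuscation formulation the simulator's ability to emulate the oracle makes the extraction and programming steps above go through verbatim, while a VBB instantiation would instead route these steps through the VBB simulator.
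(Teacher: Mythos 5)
Your construction diverges substantially from the paper's (which uses six obfuscated programs with a hidden ``direction'' bit and Clifford authentication codes everywhere, and a \emph{non-oblivious} simulator that samples the CRS differently depending on which party is corrupted), and the divergence is where the proposal breaks. The central gap is in your hybrid (H$_2$)/(H$_3$) step when instantiated in the actual CRS model: with a genuine (non-programmable) VBB obfuscation placed in the CRS, your simulator must fix the obfuscated circuit \emph{before} it sees the corrupted party's round-one message, hence before it can extract $\bx_B^{\ast}$ and query the ideal functionality. But the corrupted party obtains its output by \emph{locally evaluating} that fixed circuit, which in your simulated world computes $Q(\mathbf{0},\bx_B)$ (Alice's slot now holds a QOTP of zeros), not the ideal output $Q(\bx_A,\bx_B)$ --- an immediately distinguishable discrepancy. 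Your fix is to answer the adversary's oracle queries adaptively, but that is the ideal-oracle model, not the CRS model; ``routing through the VBB simulator'' does not help, since the VBB simulator only simulates the adversary's view of the obfuscated code, it does not let the protocol simulator reprogram outputs of a circuit the adversary evaluates itself. Note also that your simulated CRS is independent of which party is corrupted, i.e.\ your simulator is exactly the \emph{oblivious} kind that the paper's conditional barrier (via instantaneous nonlocal quantum computation) indicates cannot work for general functionalities; the paper's protocol escapes this precisely by making the CRS sampling depend on the corrupted party, flipping which party's obfuscations carry the real computation, and proving the two CRS directions indistinguishable via VBB $\rightarrow$ oracle access $\rightarrow$ PRF-to-random $\rightarrow$ perfect hiding plus authentication of the Clifford code.

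The second gap is the one you partially flag but underestimate: a bare quantum one-time pad hides but does not \emph{authenticate}, and your classical commitments/signatures bind only the classical transcript, not the quantum registers. A malicious Bob can apply Pauli corrections to $\widehat{\bx}_A$ (so the computation runs on a modified Alice input), feed junk into Alice's slot, or re-run $\widetilde{C}$ on the same accepted classical transcript with different quantum states in his own slot --- since his commitment fixes only the pad $(a_B,b_B)$, un-padding different states yields different effective inputs --- thereby obtaining many evaluations of $Q$, which a simulator with a single ideal query cannot reproduce, and it has no sound rule for deciding which oracle query (possibly on a superposition or a malleated version of Alice's message) should be routed to the ideal functionality. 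The paper's construction addresses exactly this by Clifford-authenticating (with $\lambda$ trap qubits) every quantum message and every residual state, with keys derived inside the obfuscations from fresh PRF evaluations, so that any tampering or slot-substitution fails the trap measurement and aborts, and the key structure prevents mixing encodings across programs; your protocol has no analogous mechanism, and adding one is not a routine patch --- it is the substance of the construction. (A smaller inconsistency: your simulator's step (iii) claims the Alice-output branch returns $\bot$ because the adversary lacks Alice's decommitment token, yet in your own protocol Alice sends that token to Bob in round two.)
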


We remark that while there exist (contrived) examples of functionalities that cannot be VBB obfuscated~\cite{Alagic2016OnQO,alagic2020impossibility,ananth2020secure}, it is still plausible that many quantum functionalities can be obfuscated. However, without any candidate constructions of obfuscation for quantum circuits, we stress that our result should only be taken as a proof-of-concept.

\ifsubmission
\subsection{Paper Organization}
In~\cref{sec:techoverview}, we provide technical intuition for all of our results. In~\cref{sec:three-message}, we give a full technical specification of our three-message 2PQC protocol. We prove that security holds against a malicious Alice, and we defer a security proof for malicious Bob to the full version (as will become clear in~\cref{sec:techoverview}, handling malicious Alice is the more challenging case). We defer the remainder of our results to the full version, which includes the two-round 2PQC with preprocessing, the MPQC results, the technical formalization of~\cite{ARXIV:BrakerskiYuen20} C+M garbling, MDV-NIZKs for $\mathsf{QMA}$, our oblivious simulation barrier, and our VBB-based proof-of-concept construction.
\else
\fi

\section{Technical Overview}
\label{sec:techoverview}

\subsection{Quantum Background}

We briefly recap some relevant concepts from quantum computation.

\paragraph{Notation.} We use bold letters to write the density matrix of a quantum state $\bx$. We use the shorthand $U(\bx)$ to mean $U \bx U^\dagger$, the result of applying unitary $U$ to $\bx$. The notation $(\bx,\by)$ denotes a state on two registers, where $\bx$ and $\by$ are potentially entangled. The $k$-fold tensor product of a state $\bx \otimes \bx \otimes \cdots \otimes \bx$ will be written as $\bx^k$.

\paragraph{The Pauli Group.} The Pauli group on a single qubit, denoted by $\mathscr{P}_1$, is generated by the unitary operations $X$ (bit flip) and $Z$ (phase flip), defined as $X = \begin{bmatrix} 0 & 1 \\ 1 & 0 \end{bmatrix}, Z = \begin{bmatrix} 1 & 0 \\ 0 & -1 \end{bmatrix}.$ The Pauli group on $n$ qubits, denoted by $\mathscr{P}_n$, is the $n$-fold tensor product of the single qubit Pauli group. Any unitary in the Pauli group $\mathscr{P}_n$ can be written (up to global phase) as $\bigotimes_{i \in [n]} X^{r_i}Z^{s_i}$ for $r, s \in \{0,1\}^n$.

\paragraph{The Clifford Group.} The Clifford group on $n$ qubits, denoted by $\mathscr{C}_n$, is the group of unitaries that normalize $\mathscr{P}_n$, i.e. $C \in \mathscr{C}_n$ if and only if for all $U \in \mathscr{P}_n$, we have $CUC^\dagger \in \mathscr{P}_n$. Alternatively, we can think of a Clifford unitary $C$ as an operation where for any choice of $r,s \in \{0,1\}^n$, there exists a choice of $r',s' \in \{0,1\}^n$ such that 
\[ C \left(\bigotimes_{i \in [n]} X^{r_i}Z^{s_i}\right) = \left(\bigotimes_{i \in [n]} X^{r'_i}Z^{s'_i}\right) C.\]

Intuitively, this means that with a suitable update of the Pauli operation, one can swap the order in which a Clifford and a Pauli are applied.

\paragraph{Clifford Authentication Codes.} We will make extensive use of Clifford authentication codes. Clifford authentication codes are an information-theoretic encoding scheme for quantum states that provides both secrecy and authentication. An $n$-qubit quantum state $\mathbf{x}$ can be encoded in a Clifford authentication code as follows: prepare a $\lambda$-qubit all $0$'s state which we denote as $\mathbf{0}^\lambda$ (where $\lambda$ is a security parameter), sample a random Clifford unitary $C \leftarrow \mathscr{C}_{n+\lambda}$, and output $C (\mathbf{x}, \mathbf{0}^\lambda)$. The Clifford $C$ serves as a secret key, while the $\mathbf{0}^\lambda$ qubits enable authentication, and are called ``trap'' qubits. A party without knowledge of $C$ cannot modify the encoding without modifying the trap qubits (except with negligible probability). Therefore, decoding works by applying $C^\dagger$ and then measuring the $\lambda$ trap qubits in the computational basis. If these measurements are all $0$, this ensures that with all but negligible probability, the $n$ remaining registers hold the originally encoded state $\bx$.

\paragraph{Clifford + Measurement Circuits.} We will rely heavily on the ``Clifford + Measurement'' representation of quantum circuits (henceforth ``C+M circuits'') due to \cite{BravyiKitaev05}. In this representation, a quantum circuit can be decomposed into layers. Each layer consists of a Clifford unitary whose output wires are partitioned into wires that will be fed as inputs into the next layer, and wires that will be measured. The latter group of wires are measured in the computational basis, resulting in a classical bitstring that is used to select the Clifford unitary to be applied in the subsequent layer. The first layer takes in all of the inputs to the quantum circuit, ancilla $\mathbf{0}$ states, and ``magic'' $\mathbf{T}$ states defined as $\mathbf{T} \coloneqq (\ket{0} + e^{i \pi/4}\ket{1})/\sqrt{2}$. The final layer only produces output wires (i.e. its output registers have no wires to be measured), which are interpreted as the output of the circuit. \cite{BravyiKitaev05} demonstrate that, with constant multiplicative factor overhead in size, any quantum circuit can be written as a ``C + M circuit'' or equivalently, in a magic state representation.

Therefore, for the purposes of this technical overview, we will assume that any quantum circuit $F$ is written as a C+M circuit $F_{\mathrm{CM}}$, and its evaluation on an input $\bx$ is computed as $F(\bx) = F_{\mathrm{CM}}(\bx, \mathbf{T}^k, \mathbf{0}^k).$ For simplicity, we use the same $k$ to denote the number of $\mathbf{T}$ states and the number of ancilla $\mathbf{0}$ states.

\paragraph{Magic State Distillation.} In settings where malicious parties are tasked with providing the $\mathbf{T}$ states, we will use cryptographic techniques such as ``cut-and-choose'' to ensure that $F_{\mathrm{CM}}$ is evaluated on an input of the form $(\mathbf{x},\widehat{\mathbf{T}^k},\mathbf{0}^k)$ where $\widehat{\mathbf{T}^k}$ is a state guaranteed to be ``somewhat'' close to $\mathbf{T}^k$. However, correctness of $F_{\mathrm{CM}}$ will require states that are negligibly close to real magic states. To that end, we will make use of a magic state distillation C+M circuit $D$ due to~\cite{EC:DGJMS20} which takes in somewhat-close magic states $\widehat{\mathbf{T}^k}$ and outputs states negligibly close to $\mathbf{T}^{k'}$, for $k' < k$. Therefore, the representation of any functionality $F$ will in fact be a C+M circuit $F_{\mathrm{CM},D}$ that first applies $D$ to $\widehat{\mathbf{T}^k}$, and then runs $F_{\mathrm{CM}}$.

\subsection{Why is Malicious Security Hard to Achieve?}

We begin this technical overview by describing our results in the two-party setting. Before this, we briefly explain why malicious security does not follow readily from existing techniques. Indeed, a candidate two-message 2PQC (where one party receives output) with \emph{specious} security (the quantum analogue of classical semi-honest security~\cite{C:DupNieSal10}) was recently proposed in~\cite{ARXIV:BrakerskiYuen20}. Alternatively, any construction of quantum fully-homomorphic encryption (QFHE) naturally yields a two-message 2PQC protocol: (1) Alice QFHE-encodes her input and sends it to Bob, (2) Bob evaluates the functionality on his input and Alice's encoded input, and (3) Bob sends Alice the encryption of her output.

One might hope to compile this QFHE-based protocol or the \cite{ARXIV:BrakerskiYuen20} protocol into a maliciously secure protocol by having the parties include proofs that their messages are well-formed. Unfortunately, it is unclear how to implement this in the quantum setting. In both of these approaches, the parties would have to prove (in zero-knowledge) statements of the form ``$\by$ is the result of evaluating quantum circuit $C$ on $\bx$.'' Crucially, the \emph{statement} the parties need to prove explicitly makes reference to a quantum state. This is beyond the reach of what one can prove with, say, NIZKs for $\mathsf{QMA}$, in which witnesses are quantum but the statements are entirely classical.

Therefore, we design our malicious 2PQC so that parties do not have to prove general statements about quantum states. A core ingredient in our protocol is a quantum garbled circuit construction sketched in~\cite[\S2.5]{ARXIV:BrakerskiYuen20}, where the circuit garbling procedure is entirely classical.\footnote{We remark that the 2PQC proposed in \cite{ARXIV:BrakerskiYuen20} is based on their ``main'' quantum garbled circuit construction, which crucially does \emph{not} have a classical circuit garbling procedure. The advantage of their main construction is that garbling can be done in low depth, whereas the alternative construction requires an expensive but classical garbling procedure.} Combining this with a post-quantum maliciously-secure \emph{classical} 2PC, we will ensure valid circuit garbling against malicious quantum adversaries.

\subsection{A Garbling Scheme for $\CM$ Circuits} Our first step is to formalize the proposal sketched in \cite[\S2.5]{ARXIV:BrakerskiYuen20} for garbling $\CM$ circuits. The starting point for the \cite[\S2.5]{ARXIV:BrakerskiYuen20} construction is a simple technique for garbling any quantum circuit that consists of a single Clifford unitary $F$.\footnote{ \cite{ARXIV:BrakerskiYuen20} call this \emph{group-randomizing quantum randomized encoding}.} The idea is to sample a random Clifford $E$ and give out $FE^{\dagger}$ as the garbled circuit; note that the description of $FE^{\dagger}$ will be entirely classical. Since the Clifford unitaries form a group, $FE^{\dagger}$ is a uniformly random Clifford unitary independent of $F$. To garble the input quantum state $\mathbf{x}$, simply compute $E(\mathbf{x})$. The construction in \cite[\S2.5]{ARXIV:BrakerskiYuen20} extends this simple construction to any circuit. 

To build intuition, we will consider a two-layer $\CM$ circuit $Q = (F_1,f)$, where $F_1$ is the first layer Clifford unitary, and $f$ is a classical circuit that takes as input a single bit measurement result $m$, and outputs a classical description of $F_2$, the second layer Clifford unitary. On input $\mathbf{x}$, the circuit operates as follows:
\begin{enumerate}
\item Apply $F_1$ to $\mathbf{x}$.
\item Measure the last output wire in the computational basis to obtain $m \in \{0,1\}$, and feed the remaining wires to the next layer. Compute the second layer Clifford unitary $F_2 = f(m)$.
\item Apply $F_2$ to the non-measured output wires from the first layer. Return the result.
\end{enumerate}

One could try to extend the simple idea for one-layer garbling to this circuit. We still sample a random input-garbling Clifford $E_0$ and compute $F_1E_0^{\dagger}$. To hide the second layer Clifford, a natural idea is to sample yet another random Clifford $E_1$ to be applied to the non-measured output wires of $F_1$. That is, we replace $F_1E_0^{\dagger}$ with $(E_1 \otimes \bbI)F_1 E_0^{\dagger}$, and release the description of a function $g$ such that $g(m) = f(m) E_1^{\dagger}$.

However, this may in general be insecure. Let $F_2^{(0)}$ be the Clifford output by function $f$ when $m = 0$, and $F_2^{(1)}$ the Clifford output by function $f$ when $m = 1$. Suppose $ F_2^{(0)} -  F_2^{(1)} = A$ for some invertible matrix $A$. 
Then, an attacker with access to $g$ could obtain $F_2^{(0)}E_1^{\dagger} - F_2^{(1)}E_1^{\dagger}$, and multiplying the result by $A^{-1}$ yields $ A^{-1}(F_2^{(0)}E_1^{\dagger} - F_2^{(1)}E_1^{\dagger}) = A^{-1}AE_1^{\dagger} = E_1^{\dagger}$.

Therefore, instead of giving out $g$, the construction of \cite[\S2.5]{ARXIV:BrakerskiYuen20} gives out a classical garbling of $g$. To accommodate this, the output wire from the first layer that is measured to produce $m \in \{0,1\}$ must be replaced by a collection of wires that produces the corresponding label $\mathsf{lab}_{m}$ for the garbled circuit. This can be easily achieved by applying a suitable ``label unitary'' to the $m$ wire (and ancilla wires) within the garbled gate for the first layer.

There is one last issue with this approach: an attacker that chooses not to measure the wires containing $\mathsf{lab}_{m}$ can obtain a superposition over two valid labels. Recall that the standard definition of security for classical garbled circuits only guarantees simulation of one label, not a quantum superposition of both labels. To ensure the attacker cannot get away with skipping the computational basis measurement, the \cite[\S2.5]{ARXIV:BrakerskiYuen20} construction applies a $Z$-twirl to $m$ before the ``label unitary'' is applied. Recall that a $Z$-twirl is simply a random application of a Pauli $Z$ gate, i.e. $Z^b$ for a uniformly random bit $b$; applying $Z^b$ to a wire is equivalent to performing a computational basis measurement (without recording the result). 

To recap, a garbled $2$-layer $\CM$ circuit $Q$ consists of three components: an ``input garbling'' Clifford $E_0$, an initial Clifford unitary to be applied to the garbled input $D_0 \coloneqq (E_1 \otimes \bbI)F_1 E_0^{\dagger}$, and a classical garbled circuit $\widetilde{g}$. Extrapolating, we see that in general a garbled $\CM$ circuit takes the form $$(E_0,D_0,\widetilde{g}_1,\dots,\widetilde{g}_d) \coloneqq (E_0,\widetilde{Q}),$$
where the $\widetilde{g}_i$'s are garblings of classical circuits.
Crucially, all of these components can be generated by an entirely classical circuit. The only quantum operation involved in the garbling process is the application of $E_0$ to the input $\bx$ to garble the input. Next, we show how we can take advantage of this mostly classical garbling procedure to obtain maliciously-secure 2PQC.

\subsection{A Three-Message Protocol with Malicious Security}
\label{subsec:3-round-malicious}

In this section, we describe a three-message 2PQC protocol where both parties obtain output. This implies the two-message 2PQC result with one-sided output described in the first part of our results section, and fills in the upper left corner of \cref{table: results}.

We begin with a plausible but \emph{insecure} construction of a three-message 2PQC based on the above quantum garbled circuit construction. We will then highlight the ways a malicious attacker might break this construction, and arrive at our final construction by implementing suitable modifications.

Our protocol relies only on a \emph{classical} two-message 2PC with one-sided output that is (post-quantum) secure against malicious adversaries; this can be realized by combining (post-quantum) classical garbled circuits~\cite{FOCS:Yao86} with (post-quantum) two-message oblivious transfer~\cite{C:PeiVaiWat08} following eg.~\cite{EC:IKOPS11}.

We will consider two parties: Alice with input $\mathbf{x}_A$ and Bob with input $\mathbf{x}_B$.
They wish to jointly compute a quantum circuit $Q$ on their inputs whose output is delivered to both players. 
$Q$ is represented as a Clifford+Measurement circuit that takes input $(\mathbf{x}_A, \mathbf{x}_B, \mathbf{T}^k, \mathbf{0}^k)$. We denote by $(\mathbf{y}_A, \mathbf{y}_B)$ the joint outputs of Alice and Bob. At a high level, the parties will use the first two messages (Bob $\to$ Alice, Alice $\to$ Bob) to jointly encode their quantum inputs, while in parallel computing a two-message classical 2PC that outputs the classical description of a quantum garbled circuit to Bob. By evaluating the garbled circuit, Bob can learn his own output, as well as Alice's encoded output, which he sends to Alice in the 3rd message.

In more detail, the classical functionality $\cF[Q]$ to be computed by the classical 2PC is defined as follows. It takes as input (the classical description of) a Clifford unitary $C_{B,\mathrm{in}}$ from Bob and Clifford unitaries $(C_{A,\mathrm{in}},C_{A,\mathrm{out}})$ from Alice. Let $Q_B$ be a modification of $Q$ that outputs $(C_{A,\mathrm{out}}(\by_A,\Zstate^\secp),\mathbf{y}_B)$ in place of $(\mathbf{y}_A, \mathbf{y}_B)$; looking ahead, this will enable Bob to evaluate (a garbling of) $Q_B$ on (a garbling of) their joint inputs without learning Alice's output. The functionality computes a garbling $(E_0,\widetilde{Q_B})$ of $Q_B$. Finally, it computes $W \coloneqq E_0 \cdot (\bbI \otimes C_{B,\mathrm{in}}^{-1} \otimes \bbI) \cdot C_{A,\mathrm{in}}^{-1}$ (where the registers implied by the tensor product will become clear below), 
and outputs $(W,\widetilde{Q_B})$ to Bob.


The (insecure) protocol template is as follows:
\begin{itemize}
\item \textbf{First Message (Bob $\rightarrow$ Alice).} Bob picks a random Clifford $C_{B,\mathrm{in}}$ and uses it to encrypt and authenticate his input $\mathbf{x}_B$ as $\mathbf{m}_1 \coloneqq C_{B,\mathrm{in}} (\bx_B,\Zstate^\secp)$. He also computes the first round message $m_1$ of the classical 2PC, using $C_{B,\mathrm{in}}$ as his input. He sends $(\mathbf{m}_1,m_1)$ to Alice.
\item \textbf{Second Message (Alice $\rightarrow$ Bob).} After receiving $(\mathbf{m}_1,m_1)$, Alice picks a random Clifford $C_{A,\mathrm{in}}$ and uses it to encrypt her input $\mathbf{x}_A$ along with Bob's encoding $\mathbf{m}_1$, $k$ copies of a $\mathbf{T}$ state, and $k+\lambda$ copies of a $\mathbf{0}$ state. The result of this is $\mathbf{m}_2 \coloneqq C_{A,\mathrm{in}}(\mathbf{x}_A, \mathbf{m}_1, \mathbf{T}^k, \mathbf{0}^{k+\lambda})$. Alice also samples another random Clifford $C_{A,\mathrm{out}}$ that will serve to encrypt and authenticate her output, and computes the second round message $m_2$ of the classical 2PC using input $(C_{A,\mathrm{in}},C_{A,\mathrm{out}})$. She sends $(\mathbf{m}_2,m_2)$ to Bob.
\item \textbf{Third Message (Bob $\rightarrow$ Alice).} After receiving $(\mathbf{m}_2,m_2)$, Bob can compute his output of the classical 2PC, which is $(W,\widetilde{Q_B})$. He computes $$W(\mathbf{m}_2) = 
E_0 \cdot (\bbI \otimes C_{B,\mathrm{in}}^{-1} \otimes \bbI) \cdot C_{A,\mathrm{in}}^{-1} \left( C_{A,\mathrm{in}}(\mathbf{x}_A, \mathbf{m}_1, \mathbf{T}^k, \mathbf{0}^{k+\lambda}) \right)  
= E_0 (\mathbf{x}_A,\mathbf{x}_B,\mathbf{T}^k, \mathbf{0}^{k+\lambda}).$$ Recall that $E_0(\mathbf{x}_A, \mathbf{x}_B, \mathbf{T}^k, \mathbf{0}^{k+\lambda})$ corresponds to a garbled input for $\widetilde{Q_B}$.
He evaluates $\widetilde{Q_B}$ on this garbled input and obtains $(C_{A,\mathrm{out}}(\mathbf{y}_A,\Zstate^\secp), \mathbf{y}_B)$.

At this point, Bob has his output $\mathbf{y}_B$ in the clear. Next he sets $\mathbf{m}_3 = C_{A,\mathrm{out}}(\mathbf{y}_A,\Zstate^\secp)$, and sends $\mathbf{m}_3$ to Alice.
Upon receiving $\mathbf{m}_3$, Alice can recover her output by computing $C_{A,\mathrm{out}}^{-1}(\mathbf{m}_3)$.
\end{itemize}

The above protocol can already be shown to be secure against malicious Bob by relying on security of the classical two-party computation protocol against malicious adversaries. But malicious Alice can break security by generating ill-formed auxiliary states. We now describe this issue in some more detail and then present modifications to address the problem.

\paragraph{Malicious Generation of Auxiliary States.} In the second message of the protocol, Alice is instructed to send a quantum state $C_{A,\mathrm{in}}(\bx_A,\bm_1,\mathbf{T}^k,\mathbf{0}^{k+\lambda})$. A malicious Alice can deviate from honest behavior by submitting arbitrary states in place of the magic $\mathbf{T}$ states and the auxiliary $\mathbf{0}$ states, either of which may compromise security. 

We therefore modify the classical 2PC to include randomized checks that will enable Bob to detect if Alice has deviated from honest behavior.

We check validity of $\mathbf{0}$ states using the ``random linear map'' technique of~\cite{EC:DGJMS20}. The classical 2PC will sample a uniformly random matrix $M \in \mathbb{F}_2^{k \times k}$, and apply a unitary $U_{M}$ that maps the quantum state $\mathbf{v} = \ket{v}\bra{v}$ for any $v \in \mathbb{F}_2^k$ to the state $\mathbf{Mv} = \ket{Mv}\bra{Mv}$. For any $M \in \mathbb{F}_2^{k \times k}$, there exists an efficient Clifford unitary $U_M$ implementing this map. This check takes advantage of the fact that $U_M(\mathbf{0}^k) = \mathbf{0}^k$ for any $M$, but on any other pure state $\mathbf{v} = \ket{v}\bra{v}$ for non-zero $v \in \mathbb{F}_2^k$, we have $U_M(\mathbf{v}) \neq \mathbf{0}^k$ with overwhelming probability in $k$.

More precisely, our protocol will now ask Alice to prepare twice $(2k)$ the required number of $\mathbf{0}$ states. The classical 2PC will generate a Clifford unitary $U_{M}$ implementing a random linear map $M \in \mathbb{F}_2^{2k \times 2k}$, and incorporate $U_{M}$ into its output Clifford $W$, which is now $W = (E_0 \otimes \bbI) \cdot (\bbI \otimes C_{B,\mathrm{in}}^{-1} \otimes \bbI) \cdot (\bbI \otimes U_M) \cdot C_{A,\mathrm{in}}^{-1}$. Now when Bob applies $W$ to Alice's message $C_{A,\mathrm{in}}(\bx_A,C_{B,\mathrm{in}}(\bx_B,\Zstate^\secp),\mathbf{T}^k,\mathbf{0}^{2k})$, it has the effect of stripping off $C_{A,\mathrm{in}}$ by applying $C_{A,\mathrm{in}}^{-1}$, and then applying $U_{M}$ to the last $2k$ registers. The rest of the application of $W$ has the same effect as before the modification, so it undoes the application of $C_{B,\mathrm{in}}$, and then re-encodes \emph{all but the last $k$ registers} under the input garbling Clifford $E_0$ to produce a garbled input. Crucially, the last $k$ registers are designated ``$\mathbf{0}$-state check registers'', which Bob can simply measure in the computational basis to detect if Alice prepared the $\mathbf{0}$ states properly.

Unfortunately, this technique does not extend to checking validity of $\mathbf{T}$ states. To do so, we would have to map $\mathbf{T}$ states to $\mathbf{0}$ states, but there is no Clifford unitary that realizes this transformation.\footnote{The existence of such a Clifford would imply that Clifford + Measurement circuits \emph{without} magic states are universal for quantum computing, contradicting the Gottesman–Knill theorem (assuming $\mathsf{BPP} \neq \mathsf{BQP}$).} The problem with using a non-Clifford unitary is that security of $W$ relies on the fact that it is the product of a random Clifford $C_{A,\mathrm{in}}$ and some other Clifford $W'$. Since the Clifford unitaries form a group, multiplication by a random $C_{A,\mathrm{in}}$ perfectly masks the details of $W'$, but only when $W'$ is Clifford. 

We will therefore employ the ``cut-and-choose'' technique from~\cite{EC:DGJMS20}. The protocol will now have Alice prepare $\lambda(k+1)$-many $\mathbf{T}$ states instead of just $k$. The classical 2PC will generate a random permutation $\pi$ on $[\lambda(k+1)]$, which will move a random selection of $\lambda$ of the $\mathbf{T}$ states into ``$\mathbf{T}$-state check registers.'' The application of $\pi$ will be implemented by a unitary $U_\pi$ incorporated into $W$. After applying $W$, Bob will apply a projective measurement onto $\mathbf{T}$ to each of the $\mathbf{T}$-state check registers, and will abort if any of the $\lambda$ measurements fails. 

If all of the $\lambda$ measurements pass, this means the remaining $\lambda k$ un-tested $\mathbf{T}$ states are ``somewhat close'' to being real $\mathbf{T}$ states. However, being ``somewhat close'' will not be sufficient; for instance, an attacker who prepares exactly one completely invalid $\mathbf{T}$ state will only be caught with $1/(k+1)$ probability. 

We will therefore need to apply magic-state distillation to transform these into states which are negligibly close to real $\mathbf{T}$ states. For this, we use a magic-state distillation circuit of~\cite[\S2.5]{EC:DGJMS20} (which builds on \cite{BravyiKitaev05}). This circuit consists solely of Clifford gates and computational basis measurements. To apply this circuit we modify our underlying functionality, so that we now give out a garbling of a circuit that first implements magic-state distillation and only then applies $Q_B$.

This completes an overview of our protocol, and a formal construction and analysis can be found in Section \ref{sec:three-message}.

\subsection{Application: Reusable MDV-NIZK for QMA}

Now we briefly describe how the above techniques readily give a reusable malicious designated-verifier NIZK for QMA in the CRS model. Note that NIZK for QMA is a special case of two-party quantum computation, where the functionality being computed is the verification circuit $\cV$ for some QMA language, the prover (previously Alice) has the quantum witness $\bw$ as input, and the verifier (previously Bob) has no input and receives a binary output indicating whether $\cV(x,\bw)$ accepts or rejects, where $x$ is the (classical) description of the instance they are considering.

Since the prover does not receive output, there is no need for the third message in the protocol of Section \ref{subsec:3-round-malicious}. Furthermore, since the verifier has no input, there is no need for any quantum message from him in the first message. The verifier only needs to send a first-round classical 2PC message which then functions as a proving key. The (classical) left-over state is the verifier's secret verification key. After this, the prover just sends one quantum message (the Second Message in the above protocol), proving that $\cV(x,\bw) = 1$.

In order to make the above template reusable, we can first instantiate the underlying classical 2PC with a reusable 2PC. Once this is in place, the verifier's first-round message is necessarily instance-indepedent. Then, to ensure that a cheating prover cannot break soundness by observing whether the verifier accepts its proofs or not, we modify the classical functionality to take as input a PRF key from the verifier, and generate all required randomness (used for the $\Zstate$ and $\Tstate$ checks, and the quantum garbling procedure) by applying this PRF to the (classical) description of the instance $x$. By security of the reusable 2PC and the PRF, a verifier will never accept a maliciously sampled proof for any instance $x$ not in the language.


\subsection{Challenges in Achieving a Two-Round Protocol in the Quantum Setting}\label{subsec:two-round-challenges}

The previous sections show that we can achieve 2PQC in two messages if only one party receives output, which is optimal in terms of round complexity. Now we ask whether both parties can obtain output with just two rounds of simultaneous exchange. Indeed, in the classical setting, there is a natural approach to obtaining a two-round protocol, given a two-message protocol where one party receives output. The parties simply run two parallel executions of the two-message protocol on the same inputs - one in which Alice speaks first and the functionality only computes her part of the output, and another in which Bob speaks first and the functionality only computes his part of the output. Unfortunately, this natural approach completely fails in the quantum setting, for at least two reasons.

\begin{itemize}
    \item Running two parallel executions of the same protocol on the same set of inputs seems to require \emph{cloning} those inputs, which is in general impossible if the inputs may be arbitrary quantum states.
    \item Running two parallel executions of a randomized functionality requires the parties to fix the same random coins to be used in each execution, as otherwise their outputs may not be properly jointly distributed. This is not possible in the quantum setting, since randomness can come from measurement, and measurement results cannot be fixed and agreed upon beforehand.
\end{itemize}

These issues motivate the rest of our work. Since running two protocols in parallel on the same inputs is problematic, we take as our guiding principle that one party must be performing the actual computation at some point in the protocol, and then distributing the outputs. 


Interestingly, while the first issue mentioned above is unique to the setting of quantum inputs, the second issue applies even if the parties wish to compute a quantum circuit over just \emph{classical} inputs, which we regard as a very natural setting. Thus, while this paper focuses on the most general case of secure quantum computation over potentially quantum inputs, we stress that all the results we achieve are the best known even for the classical input setting. Furthermore, note that both issues also exist in the specious setting, so it doesn't appear to be straightforward to achieve two-round 2PQC even in this setting. While the focus of this paper is on the setting of malicious security, exploring these questions in the specious setting is also an interesting direction.

\subsection{A Two-Round Protocol with Pre-Processing} 
\label{sec: two-round protocol tech ovw}

Our next result is a three-round protocol for 2PQC which requires only two \emph{online} rounds of communication, filling in the upper right corner of \cref{table: results}.


In fact, we construct a protocol in which the pre-input phase only consists of a \emph{single} message from Bob to Alice (computed with respect to a CRS). We take our three sequential message protocol as a starting point, and introduce several modifications. The first modification will immediately achieve the goal of removing input-dependence from Bob's first message, and all the subsequent modifications will be necessary to restore correctness and security.

\paragraph{Modification 1: Removing Input-Dependence via Teleportation.} Before sending his first message, Bob samples $n$ EPR pairs, where $n$ is the number of qubits of the input $\bx_B$. We denote these EPR pairs by $(\mathbf{epr}_1,\mathbf{epr}_2)$, where $\mathbf{epr}_1$ denotes the left $n$ qubits, and $\mathbf{epr}_2$ denotes the right $n$ qubits. In place of sending $C_{B,\mathrm{in}}(\bx_B,\Zstate^\secp)$, Bob sends $\bm_{B,1} \coloneqq C_{B,\mathrm{in}}(\mathbf{epr}_1,\Zstate^\secp)$. Note that the classical 2PC only requires input $C_{B,\mathrm{in}}$, which is a random Clifford that Bob samples for himself, so Bob's entire first round message $(\bm_{B,1},m_{B,1})$ can now be sent \emph{before} Bob receives his input. The idea is that later on, when Bob learns his input $\bx_B$, he will perform Bell measurements on $(\bx_B,\mathbf{epr}_2)$ to teleport $\bx_B$ into $\mathbf{epr}_1$.

\paragraph{Issue: Incorporating Bob's Teleportation Errors.} Teleporting $\bx_B$ into $\mathbf{epr}_1$ will require Bob to somehow correct $\mathbf{epr}_1$ later in the protocol using the results of his Bell measurements on $(\bx_B,\mathbf{epr}_2)$. But enabling Bob to do this in a way that does not compromise security will be tricky, as we now explain.

After receiving the second round message from Alice in our original malicious 2PQC protocol, Bob learns the output of the classical 2PC, which includes (1) a (classical description of a) quantum garbled circuit $\widetilde{Q}$, and (2) a Clifford unitary $W$. Bob applies $W$ to Alice's quantum message $\bm_{A,2}$, performs the appropriate $\mathbf{0}$ and $\mathbf{T}$ state checks, and conditioned on the checks passing, is left with a state of the form $E_0(\bx_A,\bx_B,\widehat{\mathbf{T}},\mathbf{0})$, where $\widehat{\mathbf{T}}$ is a state ``somewhat close'' to $\mathbf{T}^k$. But at this point in our newly modified protocol, Bob is holding the state $E_0(\bx_A,\mathbf{epr}_1,\widehat{\mathbf{T}},\mathbf{0})$. To restore correctness, we somehow need to modify the protocol so that Bob can apply $X^{x_{\mathrm{inp}}}Z^{z_{\mathrm{inp}}}$ to $\mathbf{epr}_1$ ``inside'' the $E_0$ mask, where $x_{\mathrm{inp}},z_{\mathrm{inp}}$ are the result of Bell basis measurements on $(\bx_B,\mathbf{epr}_2)$.

Recall that the structure of $W$ is $W = E_0 \cdot U_{\mathrm{dec-check}}^\dagger$, where $E_0$ is the input garbling Clifford for the quantum garbled circuit, and $U_{\mathrm{dec-check}}$ is the matrix that undoes $C_{A,\mathrm{in}}$, undoes $C_{B,\mathrm{in}}$, and then applies a permutation $\pi$ and a random linear map $M$, and rearranges all the to-be-checked registers to the last few (rightmost) register slots. The multiplication by $E_0$ is applied only to the non-checked registers. 

Thus, it seems like correctness would have to be restored by inserting the unitary $(\bbI \otimes X^{x_{\mathrm{inp}}}Z^{z_{\mathrm{inp}}} \otimes \bbI)$ in between $E_0$ and $U_{\mathrm{dec-check}}^\dagger$. But if Bob can learn $E_0(\bbI \otimes X^{x_{\mathrm{inp}}}Z^{z_{\mathrm{inp}}} \otimes \bbI)U_{\mathrm{dec-check}}^\dagger$ for even two different values of $x_{\mathrm{inp}}$ and $z_{\mathrm{inp}}$, security of the input garbling Clifford $E_0$ may be lost entirely. 



\paragraph{Modification 2: Classical Garbling + Quantum Multi-Key Fully Homomorphic Encryption} In order to resolve this issue, we will split up the matrix $E_0(\bbI \otimes X^{x_{\mathrm{inp}}}Z^{z_{\mathrm{inp}}} \otimes \bbI)U_{\mathrm{dec-check}}^\dagger$ into two matrices 
\begin{align*}
U_{x_{\mathrm{inp}},z_{\mathrm{inp}}} &\coloneqq E_0(\bbI \otimes X^{x_{\mathrm{inp}}}Z^{z_{\mathrm{inp}}}\otimes \bbI)U_{\mathrm{rand}}^\dagger\\ U_{\mathrm{check}} &\coloneqq U_{\mathrm{rand}}U_{\mathrm{dec-check}}^\dagger
\end{align*}
where $U_{\mathrm{rand}}$ is a ``re-randomizing'' Clifford. 

The matrix $U_{\mathrm{check}}$ is independent of Bob's teleportation errors, and will now be output to Bob by the classical 2PC. But to preserve security, we will have Bob obtain $U_{x_{\mathrm{inp}},z_{\mathrm{inp}}}$ by evaluating a \emph{classical} garbled circuit $\widetilde{f}_{\mathrm{inp}}$ where $f_{\mathrm{inp}}(x_{\mathrm{inp}},z_{\mathrm{inp}}) \coloneqq U_{x_{\mathrm{inp}},z_{\mathrm{inp}}}$; the garbled circuit $\widetilde{f}_{\mathrm{inp}}$ is included in the output of the classical 2PC.

But now we are faced with a new problem: how does Bob obtain the (classical) labels for $\widetilde{f}_{\mathrm{inp}}$? Since we only have one round of interaction remaining, Bob won't be able to run an OT to learn the correct labels (Bob could learn the labels by the end of the two online rounds, but then we would still need another round for Bob to send Alice her encrypted output).

We resolve this problem with \emph{quantum multi-key fully-homomorphic encryption} ($\mathsf{QMFHE}$), which we will use in tandem with our classical garbled circuit $\widetilde{f}_{\mathrm{inp}}$ to enable Bob to compute (a homomorphic encryption of) $U_{x_\mathrm{inp},z_\mathrm{inp}}$ without leaking anything else. Before we continue, we give a brief, intuition-level recap of $\mathsf{QMFHE}$ (we refer the reader to \ifsubmission the full version \else~\cref{sec: qmfhe}\fi for a formal description). Recall that a standard fully-homomorphic encryption ($\mathsf{FHE}$) allows one to apply arbitrary efficient computation to encrypted data (without needing to first decrypt). \emph{Multi-key} $\mathsf{FHE}$ ($\mathsf{MFHE}$) extends $\mathsf{FHE}$ to enable computation over multiple ciphertexts encrypted under different keys; the output of such a homomorphic computation is a ``multi-key'' ciphertext which can only be decrypted given all the secret keys for all of the ciphertexts involved in the computation~\cite{STOC:LopTroVai12}. Finally, $\mathsf{QMFHE}$ extends $\mathsf{MFHE}$ a step further to allow arbitrary efficient \emph{quantum} computation over encrypted (classical or quantum) data~\cite{EPRINT:Goyal18,C:Brakerski18,FOCS:Mahadev18b,Agarwal2020PostQuantumMC}.

We will encrypt each of the garbled circuit labels for $\widetilde{f}_{\mathrm{inp}}$ under an independent $\mathsf{QMFHE}$ key. All of these encrypted labels along with the corresponding $\mathsf{QMFHE}$ public keys (to enable quantum computations over these ciphertexts) will also be output to Bob as part of the classical 2PC. We remark that this requires a $\mathsf{QMFHE}$ scheme where encryptions of classical plaintexts are themselves classical; such schemes are known assuming the quantum hardness of the learning with errors (QLWE) assumption~\cite{Agarwal2020PostQuantumMC}.\footnote{We only require \emph{leveled} $\mathsf{QMFHE}$, which can be based solely on the QLWE assumption. Unleveled $\mathsf{QMFHE}$ requires an additional circularity security assumption.}

To recap, Bob obtains from the classical 2PC a collection of $\mathsf{QMFHE}$ ciphertexts, one for each of the garbled circuit labels for $\widetilde{f}_{\mathrm{inp}}$. Bob picks out the ciphertexts corresponding to $x_{\mathrm{inp}},z_{\mathrm{inp}}$ and performs quantum multi-key evaluation of $\widetilde{f}_{\mathrm{inp}}$ over these ciphertexts, obtaining a $\mathsf{QMFHE}$ encryption of the output of $\widetilde{f}_{\mathrm{inp}}$, i.e. $\mathsf{QMFHE}.\mathsf{Enc}(\pk_{x_{\mathrm{inp}},z_{\mathrm{inp}}},U_{x_{\mathrm{inp}},z_{\mathrm{inp}}})$ where $\pk_{x_{\mathrm{inp}},z_{\mathrm{inp}}}$ denotes the collection of $\mathsf{QMFHE}$ public keys corresponding to $x_{\mathrm{inp}},z_{\mathrm{inp}}$. The classical 2PC output also includes $U_{\mathrm{check}}$ in the clear, which Bob can apply to $\bm_{A,2}$ to obtain $U_{\mathrm{rand}}(\bx_A,\mathbf{epr}_1,\widehat{\mathbf{T}},\mathbf{0})$ (after performing appropriate measurement checks). Then Bob can homomorphically compute the ciphertext $\mathsf{QMFHE}.\mathsf{Enc}(\pk_{x_{\mathrm{inp}},z_{\mathrm{inp}}},E_0(\bx_A,\bx_B,\widehat{\mathbf{T}},\mathbf{0}))$, and proceed to homomorphically evaluate his quantum garbled circuit to obtain $\mathsf{QMFHE}.\mathsf{Enc}(\pk_{x_{\mathrm{inp}},z_{\mathrm{inp}}},(C_{A,\mathrm{out}}(\by_A,\Zstate^\secp),\by_B))$.

In order for Bob to obtain his final output in the clear, we will have Bob send Alice $x_{\mathrm{inp}},z_{\mathrm{inp}}$ in the first online round. In response, in the second online round Alice will reply with $\sk_{x_{\mathrm{inp}},z_{\mathrm{inp}}}$; security of the $\mathsf{QMFHE}$ will guarantee that Bob cannot decrypt ciphertexts corresponding to any other choice of the teleportation errors. In the second online round, Bob will send Alice $\mathsf{QMFHE}.\mathsf{Enc}(\pk_{x_{\mathrm{inp}},z_{\mathrm{inp}}},(C_{A,\mathrm{out}}(\by_A,\Zstate^\secp))$, which she can decrypt to obtain $\by_A$. Finally, Bob produces his output by performing $\mathsf{QMFHE}$ decryption with $\sk_{x_{\mathrm{inp}},z_{\mathrm{inp}}}$.

\paragraph{Issue: Simulating a Quantum Garbled Circuit with Unknown Output.} At this point, we have a correct protocol whose first round is completely input-independent. However, we will run into issues when attempting to prove malicious security.


The problem arises in the security proof for a malicious Bob. In the original three-round maliciously secure protocol, the simulator is able to extract $\bx_B$ from Bob’s first round message to Alice; this is done by first extracting $C_{B,\mathrm{in}}$ from Bob’s first round classical message for the classical 2PC, and then applying $C_{B,\mathrm{in}}^{-1}$ to Bob’s first round quantum message. Extracting $\bx_B$ from Bob’s first round message to Alice is crucial for proving security, since it enables the simulator to query the ideal functionality on $\bx_B$, learn the output $\by_B$, and finally simulate the quantum garbled circuit using Bob's output $\by_B$ before computing Alice's simulated second round message to be sent to Bob. This second round message reveals to Bob the quantum garbled circuit, so it is crucial that the quantum garbled circuit simulator has been executed at this point.



Not surprisingly, this simulation strategy runs into a major problem in our newly modified protocol. Bob’s first message is independent of $\bx_B$, so the simulator cannot query the ideal functionality, and therefore seemingly cannot simulate the quantum garbled circuit before computing Alice's message, which in particular reveals the quantum garbled circuit to Bob. In summary, the simulator must provide Bob with the quantum garbled circuit (part of Alice's first online round message), \emph{before} it has enough information to extract Bob's input. This appears quite problematic since simulating a garbled circuit certainly requires knowing the output. However, since Bob can only obtain an \emph{encryption} of the output of the garbled circuit after receiving Alice's first message, it is still reasonable to expect that the protocol is secure.

\paragraph{Modification 3: Simulation via Teleportation.} We fix this problem through a new technique we call \emph{simulation via teleportation}. The idea is as follows. Instead of running the quantum garbled circuit simulator on the output of the circuit (which the simulator does not yet know), the simulator will first prepare fresh EPR pairs $\mathbf{epr}_1’,\mathbf{epr}_2’$ and then run the quantum garbled circuit simulator on $(C_{A,\mathrm{out}}(\Zstate,\mathbf{0}^\secp),\mathbf{epr}_1’)$ (where $\Zstate$ takes the place of Alice's input $\bx_A$ and $\mathbf{epr}_1’$ takes the place of Bob's output $\by_B$). In the following round, after Bob has teleported over his input state $\bx_B$, the simulator will query the ideal functionality, learn $\by_B$, and then \emph{teleport $\by_B$ into $\mathbf{epr}_1’$}. 

Implementing the final teleportation step requires some care. When the simulator learns $\by_B$, it performs Bell measurements on $(\by_B,\mathbf{epr}_2’)$, obtaining measurement outcomes $x_{\mathrm{out}},z_{\mathrm{out}}$. It must then find some way to apply $x_{\mathrm{out}},z_{\mathrm{out}}$ to the state $\mathbf{epr}_1’$ so that Bob can obtain his correct output. 


So we further modify the protocol so that the garbled circuit Bob receives from the classical 2PC is modified to output $(C_{A,\mathrm{out}}(\by_A,\Zstate^\secp),X^{x_{\mathrm{out}}}Z^{z_{\mathrm{out}}}\by_B)$ instead of $(C_{A,\mathrm{out}}(\by_A,\Zstate^\secp),\by_B)$, as before. That is, in the real protocol, an honest Alice will sample random $x_{\mathrm{out}},z_{\mathrm{out}}$, and then the 2PC will output the circuit implementing this functionality. Alice will send $x_{\mathrm{out}},z_{\mathrm{out}}$ to Bob in the second online round, and Bob will first apply Pauli corrections $X^{x_{\mathrm{out}}}Z^{z_{\mathrm{out}}}$ to his output to obtain $\by_B$. In the simulated protocol, however, $x_{\mathrm{out}},z_{\mathrm{out}}$ are not sampled by the simulator. Instead, they are the result of the simulator's Bell measurements on $(\by_B,\mathbf{epr}_2’)$. The simulator thus simulates a garbled circuit that outputs $(C_{A,\mathrm{out}}(\Zstate,\Zstate^\secp),\mathbf{epr}_1’)$, and then sends $x_{\mathrm{out}},z_{\mathrm{out}}$ in the second online round. Note that this teleportation step occurs \emph{exclusively within the simulation}.

\paragraph{Modification 4: Alice (Equivocally) Commits to Pauli Corrections.} To arrive at a fully secure protocol, we need to address one last issue. As currently described, there is nothing that prevents a malicious Alice from misreporting her choice of $x_{\mathrm{out}}, z_{\mathrm{out}}$. This can introduce arbitrary Pauli errors into Bob’s output that he has no way of detecting. However, this can easily be fixed using equivocal commitments. That is, Alice inputs $x_{\mathrm{out}},z_{\mathrm{out}}$ to the classical 2PC, along with commitment randomness $s$. Bob obtains the commitment as part of the output of the classical 2PC, and later when Alice sends $x_{\mathrm{out}},z_{\mathrm{out}}$ in the second online round, she must also send along $s$. The equivocality property enables the simulation strategy to work as before, as the simulator will have the power to send Bob a commitment to an arbitrary value, and after learning $x_{\mathrm{out}},z_{\mathrm{out}}$ from its Bell measurements, use equivocation to produce a valid opening.

\subsection{The Multi-Party Setting}

In this section, we describe our results in the multi-party setting, filling in the bottom row of \cref{table: results}. 

We begin by describing our approach to obtaining a five-round protocol from quantum-secure OT. Our approach follows the same high-level idea as the three-message 2PQC protocol described in Section \ref{subsec:3-round-malicious}, where one party (the ``designated party'', or $P_1$) will evaluate a quantum garbled circuit on encodings of each party's input, and then distribute the encoded outputs to each party. However, implementing this template in the multi-party setting requires resolving a host of new challenges.

\paragraph{Input Encoding.} Recall that in our two-party protocol, Alice received an encoding of Bob's input, concatenated their own input, re-randomized the entire set of registers with a random Clifford $C$, and then sent the re-randomized state to Bob. This re-randomization ensures that the only meaningful computation Bob can perform is to apply the quantum garbled circuit, whose classical description is re-randomized with $C^\dagger$. A natural extension of this idea to the multi-party setting goes as follows. First, each party sends their encoded input to $P_1$. Then $P_1$ concatenates all inputs together and re-randomizes the resulting set of registers with their own random Clifford $C_1$. Then, these registers are passed around in a circle, each party $P_i$ applying their own re-randomizing Clifford $C_i$. Finally, $P_1$ receives the fully re-randomized state, along with some classical description of a quantum garbled circuit obtained via classical MPC, and re-randomized with $C_1^\dagger \dots C_n^\dagger$. The fact that each party applies their own re-randomizing Clifford is necessary, since we are in the dishonest majority setting. Indeed, if only one party $P_i$ is honest, their security will crucially rely on the fact that the adversary does not know their re-randomizing Clifford $C_i$. This approach of encrypting and sending a state around the circle of parties for re-randomization is similar to \cite{EC:DGJMS20}'s ``input encoding'' protocol, in which each individual party's input is sent around the circle of parties for re-randomization.

Unfortunately, the round complexity of this encoding step will grow linearly with the number of parties. To obtain a constant-round protocol, our idea is to round-collapse this input-encoding via the use of quantum teleportation. In the first round, parties will send EPR pairs to each other following the topology of the computation described above. That is, each party sets up EPR pairs with $P_1$ that will be used to teleport their encoded inputs to $P_1$, and each consecutive pair of parties will set up EPR pairs that will be used to teleport the encoded state around the circle. After this setup, the parties can \emph{simultaneously} apply re-randomization Cliffords and teleport the encoded state around the circle. This will introduce teleportation errors, but since the re-randomization operations are Clifford, these can be later corrected. Indeed, this correction will be facilitated by a classical MPC protocol that takes as input each party's Clifford and set of teleportation errors.

\paragraph{0 and T State Checks.} The next challenge is how to enforce 0 and $T$ state checks in the multi-party setting. Recall that in the two-party setting, we had the non-evaluator party (Alice) prepare the 0 and $T$ states, which were then checked by the garbled circuit evaluator (Bob). This approach works because we know that if Alice is malicious and tried to cheat during preparation of these states, then Bob must be honest and will then refuse to evaluate the garbled circuit. However, this does not carry over to the multi-party setting. If we try to fix some party $P_i$ to prepare the $0$ and $T$ states and then have the evaluator $P_1$ check them, it may be the case that \emph{both} $P_i$ and $P_1$ are malicious, which would be problematic.

Thus, we take a different approach, instructing $P_1$ to prepare the 0 and $T$ states, and designing a \emph{distributed} checking protocol, similar to that of \cite{EC:DGJMS20}. We now briefly describe the $T$ state check, leaving a description of the 0 state check to the body. $P_1$ will be instructed to concatenate all parties' inputs with their own $T$ states, and then send the resulting state around the circle for re-randomization. Later, they receive the re-randomized state, along with a unitary from the classical MPC that i) undoes the re-randomization, ii) samples a different subset of $T$ states for each party, iii) Clifford-encodes each subset, and iv) garbles the inputs together with the remaining $T$ states. Thus, $P_1$ obtains $n$ encoded subsets of $T$ states, and is supposed to send one to each party. Each party will then receive their encoded subset, decode (using information obtained from the classical MPC), and measure in the $T$-basis. Each party will then abort the protocol if their check failed. Only if \emph{no} parties abort will the classical MPC send information to each party allowing them to decrypt their output from the quantum garbled circuit. It is crucial that \emph{no} party receives output until all honest parties indicate that their $T$ state check passed, because using malformed $T$ states in the quantum garbled circuit could result in outputs that leak information about honest party inputs.

\paragraph{The Five-Round Protocol.} We give a high-level overview of the five rounds of the protocol.

\begin{itemize}
    \item Round 1: Each party $P_i$ generates EPR pairs and sends half of each pair to its neighbor $P_{i+1}$. Additionally, party $P_1$ generates enough EPR pairs so that it can send EPR pair halves to every other party $P_i$ for $i \neq 1$.
    \item Round 2: Teleport inputs to $P_1$ and teleport the resulting state around the circle (with re-randomization Cliffords $C_i$ applied along the way). Input teleportation errors and $\{C_i\}_{i \in [n]}$ to the classical MPC.
    \item Round 3: Classical MPC delivers unitary to $P_1$ that samples subsets of $T$ states and garbles inputs, along with classical description of the quantum garbled circuit.
    \item Round 4: $P_1$ evaluates the unitary and garbled circuit, then delivers encoded subsets of $T$ states and encrypted outputs to each party.
    \item Round 5: If no parties abort after their $T$ state check, the classical MPC delivers key to each party allowing them to decrypt their output.
\end{itemize}

Note that the distributed $T$ state check is the reason that the protocol requires five rounds. The first round is used for setting up EPR pairs. At this point the parties can perform quantum teleportation and obtain their Pauli errors. Now, these must be corrected by the classical MPC, which takes a minimum of two rounds. Thus, $P_1$ can only obtain output from the MPC, and thus from the quantum garbled circuit, after Round 3. Then, Round 4 must be used to distribute subsets of $T$ states, and Round 5 must be used to deliver decryption keys conditioned on all parties being happy with their $T$ states. As we describe in the body, the actual computation of the garbled circuit can be delayed one round (at the cost of settling for security with abort rather than unanimous abort), giving a five-round protocol with three online rounds.

Now we discuss how to instantiate the classical MPC. We are going to need an MPC that supports \emph{reactive} functionalities, where inputs may depend on previous outputs obtained from the MPC. Moreover, we need the MPC to be \emph{round-optimal}, in the sense that outputs delivered in round $i$ may depend on inputs from round $i-1$. We observe that the round-collapsing compiler of \cite{EC:GarSri18a} gives exactly this --- an $\ell+1$ round MPC for a reactive functionality with $\ell$ rounds of output. Thus, we can rely solely on quantum-secure two-message OT to construct the above five-round quantum MPC. \ifsubmission\else See~\cref{subsec:reactivempc} for more discussion about the classical reactive MPC.\fi

\paragraph{The Four-Round Protocol.} Finally, we observe that there is some slack in the aforementioned protocol. Indeed, $P_1$ does not obtain any output from the classical MPC until after round 3, when in principle the classical MPC can be used to compute some output in only two rounds. The reason we waited three rounds is that we wanted to include the parties' teleportation errors in the computation performed by the MPC, and these are not known until the beginning of the second round.

However, we can use ideas similar to those in \cref{sec: two-round protocol tech ovw} in order to allow the MPC to compute something meaningful during the first two rounds without yet knowing the teleportation errors. In particular, we make use of classical garbled circuits and quantum multi-key FHE to provide a mechanism by which the classical MPC can output information allowing $P_1$ to (homomorphically) compute a function of the teleportation errors after Round 2. This allows us to collapse the total number of required rounds to 4. Moreover, a similar idea allows the parties to delay teleportation of their inputs another round, giving a four-round protocol with (optimal) \emph{two} rounds of online interaction. Equivalently, our protocol can be seen as two-round MPQC in a quantum pre-processing model.

\subsection{Two Round 2PQC Without Pre-Processing: Challenges and Possibilities}
\label{subsec:tech-overview-two-round}
In this section, we explore the possibility of achieving a two-round 2PQC protocol in the CRS model \emph{without pre-processing}. We stress that this model \emph{does not permit pre-shared entanglement} between the two parties, as we consider sharing of entanglement to be a pre-processing step.

\paragraph{The Challenge of Oblivious Simulation.} In the classical setting, all known two-round two-party computation protocols (in the CRS model) can be modified so that security is proven via (what we call) an \emph{oblivious simulator}.\footnote{Each party will use a NIZK proof of knowledge to prove that their first message is well-formed, using their input and randomness as witness. Then, a simulator programming the CRS may extract either party's input.} That is, the simulator (1) only makes black-box queries to the adversary, (2) is straight-line (meaning it only runs the adversary a single time without rewinding), and (3) it generates the simulated CRS \emph{independently of the choice of corrupted party} (between Alice and Bob).

By focusing on protocols with oblivious simulation, we can highlight an apparent difficulty of building secure two-round protocols for quantum functionalities in the CRS model. Assume without loss of generality that Alice is adversarial (the identical argument applies to Bob). Observe that if the first message that Alice sends is not computationally binding to her input $\mathbf{x}_A$, she can potentially cheat by \emph{equivocating}, i.e. acting as if she had received a different input, and subsequently learn multiple outputs of the functionality. If the simulation is oblivious, then this reasoning applies simultaneously to Alice and Bob --- that is, both parties must, in the first round, send computationally-binding commitments to their respective inputs. This is immediately problematic for quantum inputs, since no-cloning implies that their leftover states will have no (computationally) useful information about their original inputs. Thus, it is unclear how a general computation can be performed on their \emph{joint} inputs before the start of the second round, as the parties have effectively swapped their initial states. And somehow, after just one more round of messaging, they must hold their correctly computed output states.

Our negative result formalizes this intuitive difficulty. If the simulator is oblivious, then by roughly following the above reasoning, at the end of the first round:
\begin{itemize}
\item Alice holds a computationally binding commitment to Bob's input $\mathbf{x}_B$,
\item Bob holds a computationally binding commitment to Alice's input $\mathbf{x}_A$, and
\item Neither party has information about their original inputs.
\end{itemize}
Moreover, the correctness of oblivious simulation implies that for a computationally indistinguishable CRS, there exists a ``trapdoor'' that would enable Alice to extract $\mathbf{x}_B$ and would enable Bob to extract $\mathbf{x}_A$. But now their states can be viewed as the states of two parties at the \emph{beginning of a one-round protocol with polynomial-size pre-processing} in which the parties' inputs are \emph{swapped}; the pre-processing step is necessary to give both parties the trapdoor information of the simulator. The resulting one-round protocol no longer satisfies any meaningful security guarantees, but crucially, it still satisfies correctness. Moreover, the one-round protocol falls into a model of ``instantaneous non-local computation'' that has been previously studied in the quantum information literature~\cite{Beigi_2011}. It is currently open whether this model enables general quantum computation with only polynomial-size preprocessing, and a positive result for two-round 2PQC with oblivious simulation would affirmatively answer this question.

\paragraph{A Proof-of-Concept Construction from Quantum VBB Obfuscation.}

\ifsubmission
Given the above barrier, one could attempt to construct a two-round protocol whose security relies crucially on a \emph{non-oblivious} simulation strategy. In this work, we take an initial step in this direction by providing a proof-of-concept construction from a strong form of quantum VBB obfuscation that handles obfuscation of quantum circuits that include both unitary gates and measurement gates (further discussion is available in the full version).
\else
Given the above barrier, one could attempt to construct a two-round protocol whose security relies crucially on a \emph{non-oblivious} simulation strategy. In this work, we take an initial step in this direction by providing a proof-of-concept construction from a strong form of quantum VBB obfuscation that handles obfuscation of quantum circuits that include both unitary gates and measurement gates (see \cref{defn:vbb} and the discussion preceding it).
\fi

In our construction, Alice will send an encryption of her input to Bob in round 1, who will then homomorphically compute the functionality over their joint inputs and respond with Alice's encrypted output in round 2. Alice will also send a message in round 2 that allows Bob to decrypt his output. However, the key is that this interaction will actually be indistinguishable from an interaction in which the \emph{opposite} flow of computation is occuring. In particular, if the CRS if sampled differently (but in an indistinguishable way), it will be the case that Bob is actually sending his encrypted input to Alice in the first round, and then Alice homomorphically computes the functionality and sends Bob's encrypted output back in the second round.

To instantiate this template, we provide a number of quantum obfuscations in the CRS, three per party. First, there are the  ``input'' obfuscations $\cO_{A,\inp}$ and $\cO_{B,\inp}$. $\cO_{A,\inp}$ will take as input Alice's input $\bx_A$ along with a ``dummy'' input $\bd_A$, and output Clifford encodings of each. Alice is instructed to send the first output of this obfuscation as her first message, and keep the second output as her state. In the real protocol, the obfuscated functionality will be such that the first output will be the Clifford encoding of the first input (Alice's real input $\bx_A$), and the second output will be the Clifford encoding of the second input (Alice's dummy input $\bd_A$). On the other hand, $\cO_{B,\inp}$ will obfuscate the functionality that does the exact opposite, setting its first output to be a Clifford encoding of its second input, and its second output to be a Clifford encodings of its first input. Thus, in round 1, Alice sends a Clifford encoding of her real input and keeps a Clifford encoding of her dummy input in her state, while Bob sends a Clifford encoding of his dummy input and keeps a Clifford encoding of his real input in his state.

The next obfuscations $\cO_{A,\cmp}$ and $\cO_{B,\cmp}$ share secret randomness with the input obfuscations (in the form of PRF keys) and can thus decrypt Clifford encodings output by the input obfuscations. They each are defined to decrypt and check the authenticity of their inputs, apply the functionality $Q$ that the parties wish to compute, and then encode the outputs with freshly sampled Cliffords. Each party will run their respective obfuscation on their state and the other party's first round message. Note that then Alice is just using $\cO_{A,\cmp}$ to compute $Q$ over dummy inputs, while Bob is using $\cO_{B,\cmp}$ to compute $Q$ over their real inputs. Alice will send an encrypted dummy output to Bob in round 2, while Bob will send an encrypted real output to Alice.

Finally, each party applies their respective output obfuscation $\cO_{A,\out}$ and $\cO_{B,\out}$ to their final state and other party's second round message. $\cO_{A,\out}$ will ignore Alice's state (which contains Alice's dummy output) and decrypt and output Bob's second round message (which contains Alice's real output). On the other hand, $\cO_{B,\out}$ will ignore Alice's second round message and decrypt and output Bob's state.

Now, it is possible to argue (under the assumption that the obfuscations in the CRS are in fact VBB obfuscations\ifsubmission\else\footnote{Attempting to prove this based on just indistinguishability obfuscation runs into issues that arise due to the inherently probabilistic nature of the functionalities obfuscated. In particular, they generate randomness via measurement and then use this randomness to generate Clifford matrices. In the classical setting, one could usually generate the required randomness with a PRF applied to the input, but it is unclear how to do this when the input is a quantum state.}\fi) that, because all intermediate states and messages are Clifford-encoded, ``switching the direction'' of the input and output obfuscations cannot be noticed by the parties. Note that if each of $\cO_{A,\inp}$ and $\cO_{B,\inp}$ are re-defined to permute the order of their outputs, then the flow of computation will be completely reversed. In particular, Alice will be computing the functionality over real inputs with $\cO_{A,\cmp}$, and Bob will be computing the functionality over dummy inputs with $\cO_{B,\cmp}$. Thus, depending on how the simulator programs the CRS, it can either extract directly from Alice's first round message OR it can extract directly from Bob's first round message, but it could never extract from both simultaneously. 

Thus, this template represents a potential method for securely computing a quantum functionality in two rounds, where one of the two parties actually performs the computation between rounds 1 and 2 and then distributes the output in round 2. In other words, it is an instantiation of our guiding principle mentioned in \cref{subsec:two-round-challenges} in a model without pre-processing.

\ifsubmission\else Of course, since VBB obfuscation of quantum circuits is in general impossible \cite{Alagic2016OnQO}, one may wonder how to interpret this result. One may view this construction, in conjunction with our impossibility result for oblivious simulators, as suggesting a particular template for designing two-round 2PQC that with new ideas may eventually be instantiated to give a construction from plausible assumptions. On the other hand, one may view the construction as a potential barrier to obtaining a more general impossibility result. Indeed, showing that it is impossible to securely compute a particular quantum functionality $Q$ in two rounds now requires showing that (strong) VBB obfuscation of certain functionalities is impossible. Currently, we only know that some very specific functionalities are un-obfuscatable~\cite{Alagic2016OnQO,alagic2020impossibility,ananth2020secure}.  \fi

\section{Preliminaries}

\subsection{Notation}
Following \cite{ARXIV:BrakerskiYuen20}, we define a Quantum Random Variable, or QRV, to be a density matrix $\bx$ on register $\gray{X}$. We will generally refer to QRVs with lowercase bold font and to registers with uppercase gray font. A collection of QRVs $(\bx,\by,\bz)$ on registers $\gray{X},\gray{Y},\gray{Z}$ is also a QRV, and $\bx,\by,\bz$ may or may not be entangled with each other.

Let $\lambda$ denote the security parameter. We will consider non-uniform quantum polynomial-time adversaries, denoted by $\cA = \{\cA_\secp,\brho_\secp\}_{\secp \in \bbN}$, where each $\cA_\secp$ is the classical description of a $\poly(\secp)$-size quantum circuit, and each $\brho_\secp$ is some (not necesarily efficiently computable) non-uniform $\poly(\secp)$-qubit quantum advice.

We will denote the trace distance between two QRVs $\bx$ and $\by$ with $\|\bx - \by\|_1$ and for infinite sequences of QRVs $\{\bx_\secp\}_{\secp \in \bbN}$ and $\{\by_\secp\}_{\secp \in \bbN}$ we write $$\{\bx_\secp\}_{\secp \in \bbN} \approx_s \{\by_\secp\}_{\secp \in \bbN}$$ to indicate that there exists a negligible function $\mu(\cdot)$ such that $||\bx_\secp - \by_\secp||_1 \leq \mu(\secp)$. Here, the $s$ refers to ``statistical'' indistinguishability.

In addition, we write $$\{\bx_\secp\}_{\secp \in \bbN} \approx_c \{\by_\secp\}_{\secp \in \bbN}$$ to indicate that there exists a negligible function $\mu(\cdot)$ such that for all QPT distinguishers $\cD = \{\cD_\secp,\bd_\secp\}_{\secp}$, $$\left|\Pr[\cD_\secp(\bd_\secp,\bx_\secp) = 1] - \Pr[\cD_\secp(\bd_\secp,\by_\secp) = 1] \right| \leq \mu(\secp).$$ Here, the $c$ refers to ``computational'' indistinguishability.

Let $\mathscr{C}_n$ and $\mathscr{P}_n$ denote the $n$-qubit Clifford and Pauli groups, respectively. Let $\Zstate$ refer to a 0 state and $\Tstate$ refer to a $T$ state. $\Zstate^n$ denotes $n$ copies of a single qubit 0 state and likewise for $\Tstate^n$. Let $X$ and $Z$ be the Pauli matrices, i.e.
\[ X = \begin{bmatrix} 0 & 1 \\ 1 & 0 \end{bmatrix}, Z = \begin{bmatrix} 1 & 0 \\ 0 & -1 \end{bmatrix}. \]

\subsection{Clifford Authentication Code}

\begin{definition}[Clifford Authentication Code]
The $n$-qubit $\secp$-trap Clifford authentication code consists of the following algorithms, which encode an $n$-qubit state $\bx$ with key $C \in \mathscr{C}_{n+\secp}$.
\begin{itemize}
    \item $\Enc(C,\bx)$: Compute $C(\bx,\Zstate^\secp) \coloneqq \widehat{\bx}$.
    \item $\Dec(C,\widehat{\bx})$: Compute $(\bx',\by) \coloneqq C^\dagger(\widehat{\bx})$ (where $\bx'$ is the first $n$ registers of the result and $\by$ is the final $\secp$) and measure $\by$ in the standard basis. If the outcome is $0^\secp$, return $\bx'$, and otherwise return $\ket{\bot}\bra{\bot}$.
\end{itemize}
This authentication code satisfies the following property. For any QRV $(\bx,\bz)$ and any CPTP map $\cA$ acting on the encoding and side-information $\bz$, there exist maps $\cB_0$, $\cB_1$ acting on $\bz$ such that $\cB_0+\cB_1$ is CPTP (completely positive trace preserving), and 

$$\left\|\E_{C \gets \mathscr{C}_{n+\secp}}\left[\Dec(C,\cA(\Enc(C,\bx),\bz))\right] - \left((\bx,\cB_0(\bz)) + \left(\ket{\bot}\bra{\bot}, \cB_1(\bz)\right)\right)\right\|_1 = \negl(\secp).$$

\end{definition}

\subsection{Multi-Party Quantum Computation}

Below we give a definition of maliciously-secure multi-party quantum computation with abort, following the standard real/ideal world paradigm for defining secure computation~\cite{Goldbook}. It can be strengthened to security with \emph{unanimous} abort by changing the interface of the ideal functionality $\cI$ so that its final input is $\abort$ or $\mathsf{ok}$, and it either outputs all honest party outputs (if $\mathsf{ok}$) or none (if $\abort$).

Consider an $n$-party quantum functionality specified by a family of quantum circuits $\cQ = \{Q_\secp\}_{\secp \in \bbN}$ where $Q_\secp$ has $m_1(\secp) + \dots + m_n(\secp)$ input qubits and $\ell_1(\secp) + \dots + \ell_n(\secp)$ output qubits. We will consider a QPT adversary $\cA = \{\cA_\secp\}_{\secp \in \bbN}$ that corrupts any subset $M \subset [n]$ of parties.


Let $\Pi$ be an $n$-party protocol for computing $Q$. For security parameter $\secp$ and any collection of (potentially entangled) quantum states $(\bx_1,\dots,\bx_n,\baux_\cA,\baux_\cD)$, where $\bx_i$ is $P_i$'s input to $Q_\secp$ (on $m_i(\secp)$ registers), $\baux_\cA$ is some side information (on an arbitrary number of registers) given to the adversary, and $\baux_\cD$ is some information (on an arbitrary number of registers) given to the distinguisher, we define the quantum random variable $\Real_{\Pi,\Q}(\cA_\secp,\{\bx_i\}_{i \in [n]},\baux_\cA)$ as follows. $\cA_\secp(\{\bx_i\}_{i \in M},\baux_\cA)$ interacts with honest party algorithms on inputs $\{\bx_i\}_{i \in [n] \setminus M}$ participating in protocol $\Pi$, after which the honest parties output $\{\by_i\}_{i \in [n] \setminus M}$ and $\cA$ outputs a final state $\bz$ (an arbitrary function computed on an arbitrary subset of the registers that comprise its view). The random variable $\Real_{\Pi,\Q}(\cA_\secp,\{\bx_i\}_{i \in [n]},\baux_\cA)$ then consists of $\{\by_i\}_{i \in [n] \setminus M}$ along with $\bz$.

For any $\cA$, we require the existence of a simulator $\Sim= \{\Sim_\secp\}_{\secp \in \bbN}$ that takes as input $(\{\bx_i\}_{i \in M},\baux_\cA)$, has access to an ideal functionality $\cI[\{\bx_i\}_{i \in [n] \setminus M}](\cdot)$, and outputs a state $\bz$. The ideal functionality accepts an input $\{\bx_i\}_{i \in M}$, applies $Q_\secp$ to $(\bx_1,\dots,\bx_n)$ to recover $(\by_1,\dots,\by_n)$, and returns $\{\by_i\}_{i \in M}$ to $\Sim_\secp$. Then, for each $i \in [n] \setminus M$, it waits for either an $\mathsf{abort}_i$ or $\mathsf{ok}_i$ message from $\Sim_\secp$. In the case of $\mathsf{ok}_i$ it includes $\by_i$ in its output and in the case of $\mathsf{abort}_i$ it includes $\bot$ (note that these outputs are not given to $\Sim_\secp$). Now, we define the quantum random variable $\Ideal_{\Pi,\Q}(\Sim_\secp,\{\bx_i\}_{i \in [n]},\baux_\cA)$ to consist of the output of $\cI[\{\bx_i\}_{i \in [n] \setminus M}](\cdot)$ and the final state $\bz$ of $\Sim_\secp^{\cI[\{\bx_i\}_{i \in [n] \setminus M}](\cdot)}(\{\bx_i\}_{i \in M},\baux_\cA)$.

\begin{definition}[Secure Multi-Party Quantum Computation]
\label{def:mpqc} A protocol $\Pi$ securely computes $Q$ if for all QPT $\cA = \{\cA_\secp\}_{\secp \in \bbN}$ corrupting subset of parties $M \subset [n]$, there exists a QPT $\Sim = \{\Sim_\secp\}_{\secp \in \bbN}$ such that for all $\{\bx_{1,\secp},\dots,\bx_{n,\secp},\baux_{\cA,\secp},\baux_{\cD,\secp}\}_{\secp \in \bbN}$ and all QPT $\cD = \{\cD_\secp\}_{\secp \in \bbN}$, there exists a negligible function $\nu(\cdot)$ such that

\begin{align*}
    &\bigg| \Pr\left[\cD_\secp\left(\baux_{\cD,\secp},\Real_{\Pi,\Q}(\cA_\secp,\{\bx_{i,\secp}\}_{i \in [n]},\baux_{\cA,\secp})\right) = 1\right]\\ &- \Pr\left[\cD_\secp\left(\baux_{\cD,\secp},\Ideal_{\Pi,\Q}(\Sim_\secp,\{\bx_{i,\secp}\}_{i \in [n]},\baux_{\cA,\secp})\right) = 1\right] \bigg| \leq \nu(\secp).
\end{align*}


\end{definition}

\subsection{Useful Lemmas}

\begin{lemma}[Magic State Distillation~\cite{BravyiKitaev05,EC:DGJMS20}]\label{lemma:distillation} Let $p(\cdot)$ be a polynomial. Then there exists a $\poly(\secp)$ size $\CM$ circuit $Q$ from $\secp p(\secp)$ input qubits to $p(\secp)$ output qubits such that the following holds. Take any state $\bx$ on $\secp p(\secp) +\secp$ qubits. Apply a uniformly random permutation to the registers of $\bx$ and then measure the final $\secp$ qubits in the $T$-basis to obtain a bitstring $s$. Let $\widetilde{\bx}$ be the remaining $\secp p(\secp)$ registers. Then there exist negligible functions $\mu,\nu$ such that $$\Pr\left[(s = 0) \wedge \left(\left\|Q(\widetilde{\bx})- \Tstate^{p(\secp)}\right\|_1 > \mu(\secp)\right)\right] \leq \nu(\secp).$$

\end{lemma}
\begin{proof}
This follows from applying~\cite[Lemma I.1]{EC:DGJMS20} with parameters $n = \secp p(\secp)$, $k = \secp$, $\delta = 1/2$ followed by~\cite[Lemma 2.7]{EC:DGJMS20} with parameters $m = \secp p(\secp)$, $\ell = m/2, t=p(\secp)$.
\end{proof}


\begin{lemma}[\cite{EC:DGJMS20}]\label{lemma:linear-map}
For any $n \in \bbN$ and projector $\Pi$ on $2n$ qubits, define the quantum channel $\cL^\Pi$ by $$\cL^\Pi(\bx) \coloneqq \Pi\bx\Pi + \ket{\bot}\bra{\bot}\trace[(\bbI^{2n}-\Pi)\bx],$$ where $\ket{\bot}$ is a distinguished state on $2n$ qubits with $\Pi\ket{\bot} = 0$. For any $t \in \{0,1\}^n$, let $\Pi_{t,\Full} \coloneqq \ket{0^{2n}}\bra{0^{2n}}$ if $t = 0^n$ and $\Pi_{t,\Full} \coloneqq 0$ otherwise. Let $\Pi_{t,\Half} \coloneqq \bbI^n \otimes \ket{t}\bra{t}$. Then for any QRV $\bx$ on $2n$ registers and $t \in \{0,1\}^n$, $$\left\|\cL^{\Pi_{t,\Full}}(\bx) - \E_{U \gets \mathsf{GL}(2n,\bbF_2)}\left[\cL^{\Pi_{t,\Half}}(U(\bx))\right]\right\|_1 = \negl(n).$$
\end{lemma}

This implies the following lemma, which is stated in terms of an interactive game between adversary and challenger.

\begin{lemma}\label{lemma:Ztest}
Consider the following experiment. An adversary $\cA$ outputs a state $\bx^{\gray{M_1},\gray{M_2}}$ on $2n$ qubits. Then, the challenger samples $U \gets \mathsf{GL}(2n,\bbF_2), r,s \gets \{0,1\}^n$, computes $\left(\by_1^{\gray{M_1}},\by_2^{\gray{M_2}}\right) \coloneqq (\bbI \otimes X^rZ^s)U(\bx)$, and returns $\by_2$. Finally, the adversary outputs a string $r'$. The experiment accepts if $r' = r$ and otherwise rejects. Then there exist negligible functions $\mu,\nu$ such $$\Pr\left[(r' = r) \wedge \left(\left\|\by_1 - \mathbf{0}^n \right\|_1 > \mu(n)\right)\right] \leq \nu(n).$$

\end{lemma}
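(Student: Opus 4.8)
The plan is to deduce this from Lemma~\ref{lemma:linear-map} by unpacking what the channel $\cL^{\Pi_{t,\Half}}$ does once we condition on the value of the measured half-register. First I would argue that the experiment in Lemma~\ref{lemma:Ztest} can be rewritten purely in terms of a computational-basis measurement on $\gray{M_2}$. Indeed, the challenger applies $U$, then applies $X^r Z^s$ to $\gray{M_2}$ only, and returns $\gray{M_2}$ to the adversary; the adversary must guess $r$. Since $Z^s$ acts trivially up to phase when $\gray{M_2}$ is subsequently thought of as being measured in the computational basis (and the only information the adversary can extract about $r$ that correlates with the \emph{post-selection event} on $\gray{M_1}$ is the computational-basis content of $\gray{M_2}$), the key observation is: the adversary can predict $r$ with non-negligible advantage only if the state $U(\bx)$ has $\gray{M_2}$ concentrated (in the computational basis) on a single string $t$, because $X^r$ shifts that string to $t \oplus r$ and the adversary sees $t\oplus r$ but not $t$. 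Wait — the adversary does know the distribution of $U(\bx)$ it prepared, so it \emph{can} subtract off a known $t$. The real content is therefore that if $r' = r$ happens with non-negligible probability, then $\gray{M_2}$ of $U(\bx)$ must be (close to) supported on a \emph{single deterministic} value $t$, i.e.\ $U(\bx)$ is close to a state of the form $(\cdot)^{\gray{M_1}} \otimes \ket{t}\bra{t}^{\gray{M_2}}$ in the relevant sense.

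Concretely, I would proceed as follows. Fix the adversary's output state $\bx$ and let $\bx_t$ denote (the subnormalized) state on $\gray{M_1}$ obtained by applying $U$, measuring $\gray{M_2}$ in the computational basis, and post-selecting on outcome $t$; note $\sum_t \trace[\bx_t] = 1$ and this is exactly $\cL^{\Pi_{t,\Half}}(U(\bx))$ restricted to $\gray{M_1}$, up to the $\ket\bot$ bookkeeping. The event $(r'=r) \wedge (\|\by_1 - \mathbf 0^n\|_1 > \mu)$, averaged over $U, r, s$, is bounded by $\sum_t \Pr[\text{adv outputs } t \mid \text{saw } t\oplus r]\cdot(\text{weight that } \bx_t \text{ is } \mu\text{-far from } \mathbf 0^n)$. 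Summing the probabilities the adversary assigns over all $t$ gives at most $1$, so the quantity is at most $\max_t$ (fraction of weight on $t$ that is far from $\mathbf 0^n$) — but that's not yet negligible. This is where Lemma~\ref{lemma:linear-map} enters: averaging additionally over $U \gets \mathsf{GL}(2n,\bbF_2)$, the mixture $\E_U \cL^{\Pi_{t,\Half}}(U(\bx))$ is $\negl(n)$-close to $\cL^{\Pi_{t,\Full}}(\bx)$, which is the all-zeros projector when $t = 0^n$ and the zero map when $t \neq 0^n$. Hence after the $U$-average, essentially all the post-selection weight sits on $t = 0^n$, and conditioned on that outcome the residual state on $\gray{M_1}$ is $\negl(n)$-close to $\mathbf 0^n$. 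So the probability of seeing a post-selection outcome $t$ whose associated $\gray{M_1}$-state is $\mu$-far from $\mathbf 0^n$ is itself negligible, regardless of how well the adversary then predicts $r$.

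To make this rigorous I would: (i) formalize the reduction that, given an $\cA$ winning Lemma~\ref{lemma:Ztest} with probability $p$, exhibits a corresponding quantity in the left side of Lemma~\ref{lemma:linear-map} that is at least $p - \negl$; this uses that $X^r$ on $\gray{M_2}$ implements a relabeling $t \mapsto t \oplus r$ of the computational-basis outcomes and that $Z^s$ followed by a (notional) computational-basis readout contributes nothing, so the $(r,s)$-twirl on $\gray{M_2}$ is, from the point of view of the joint $(\gray{M_1},\text{classical outcome})$ statistics, equivalent to a computational-basis measurement with a uniformly-shifted outcome label; (ii) invoke Lemma~\ref{lemma:linear-map} to replace $\E_U[\cL^{\Pi_{t,\Half}}(U(\bx))]$ by $\cL^{\Pi_{t,\Full}}(\bx)$ at the cost of $\negl(n)$ in trace distance, for each $t$ (there are $2^n$ values of $t$, so one must be mildly careful, but the trace distances combine additively over a \emph{distribution} on $t$, not over all $2^n$ terms — the correct statement is about the single channel $\bx \mapsto \sum_t \ket t \bra t_{\gray{M_2}} \otimes \cL^{\Pi_{t,\Half}}(U(\bx))$, which is what Lemma~\ref{lemma:linear-map} already packages); and (iii) read off that the joint event "outcome $t$, residual $\gray{M_1}$-state $\mu$-far from $\mathbf 0^n$" has negligible probability, which immediately gives the claimed bound since $(r'=r)$ can only increase a probability by a factor $\le 1$.

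The main obstacle I anticipate is step (i): carefully justifying that the Pauli twirl $X^r Z^s$ on $\gray{M_2}$, with $r$ to be guessed, really does reduce to the computational-basis-measurement picture underlying Lemma~\ref{lemma:linear-map}, \emph{including the post-selection on $\gray{M_1}$}. One has to be careful that the adversary's guess $r'$ can depend on $\gray{M_2}$ coherently (it need not measure), so the argument should be phrased as: the success probability is a linear functional of the state $(\bbI \otimes X^r Z^s) U(\bx)$ averaged over $r,s$, and averaging $Z^s$ dephases $\gray{M_2}$ in the computational basis while averaging $X^r$ then spreads the outcome label uniformly — after which the optimal guessing probability for $r$ given the dephased, shifted register is controlled by how peaked the $\gray{M_2}$-distribution of $U(\bx)$ is, which is exactly what the $\cL^{\Pi_{t,\Full}}$ side of Lemma~\ref{lemma:linear-map} pins down. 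Everything else is bookkeeping with trace distance and the triangle inequality.
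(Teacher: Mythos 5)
Your first half is essentially sound and, in substance, parallels the paper's proof: averaging over $s$ dephases $\gray{M_2}$ (legitimate because the acceptance predicate does not depend on $s$), $X^r$ merely relabels computational-basis outcomes, and since $r$ is a uniform one-time pad, predicting $r$ from $t\oplus r$ is equivalent to predicting the measurement outcome $t$. This lands you at the same reduced experiment the paper reaches by its sandwich-and-twirl step (there, the check $r'=r$ is rewritten so that $X^rZ^s$ conjugates the attack, and the Pauli twirl turns the attack into a mixture of Pauli shifts of the $\gray{M_2}$ outcome), so your route is a legitimate, arguably more elementary, alternative for that part. One caveat you gloss over: the adversary may keep a residual register entangled with $\bx$, so "$\Pr[\text{adv outputs }t\mid\text{saw }t\oplus r]$" must really be a POVM $\{F_t\}$ on that register; the normalization $\sum_t F_t=\bbI$ is what later plays the role of your "$\sum_t q_t\le 1$" (in the paper it is the weights of the twirled Pauli attack).

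The genuine gap is in your concluding step. The claim that "after the $U$-average, essentially all the post-selection weight sits on $t=0^n$, \ldots{} regardless of how well the adversary then predicts $r$" is false: \cref{lemma:linear-map} makes the weight on each \emph{fixed} $t\neq 0^n$ negligible, but there are $2^n$ such outcomes and their total weight can be $\approx 1$. Concretely, take $\bx=\ket{v}\bra{v}$ for a fixed $v\in\bbF_2^{2n}$ whose $\gray{M_1}$-half is nonzero: $U(\bx)$ is a uniformly random nonzero basis vector, the outcome $t$ is spread over exponentially many values, and the residual $\gray{M_1}$ state is far from $\mathbf{0}^n$ with probability $\approx 1$ --- yet the lemma is saved only because the adversary then guesses $r$ (equivalently $t$) with probability $\approx 2^{-n}$. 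So the success condition cannot be discarded; it is exactly what tames the $2^n$-fold sum. Relatedly, your parenthetical fix --- that \cref{lemma:linear-map} "already packages" the single channel $\bx\mapsto\sum_t \ket{t}\bra{t}\otimes \cL^{\Pi_{t,\Half}}(U(\bx))$ --- is not what that lemma states (it is a per-fixed-$t$ statement), and the packaged version is false by the same example. The correct finish, which your own displayed bound almost contains before you abandon it, is: apply \cref{lemma:linear-map} for each fixed $t$ \emph{after} the $U$-average (for $t\neq 0^n$ even the outcome probability is negligible; for $t=0^n$ the accept branch has $\gray{M_1}$ negligibly close to $\mathbf{0}^n$), then sum these per-$t$ bounds against the guess weights $q_t$ (or the POVM $\{F_t\}$), which sum to at most one; this also needs the negligible bound to be uniform in $t$, which it is for the underlying \cite{EC:DGJMS20} lemma. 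That weighted summation is precisely the role the mixture weights $p_t$ of the twirled Pauli attack play in the paper's proof.
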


\begin{proof} 
We can specify any adversary by a starting state  $\left(\bx^{\gray{M_1},\gray{M_2}},\bz^{\gray{Z}}\right)$ and an attack unitary $A$, where $A$ is applied to $(\by_2,\bz)$ and then followed by a computational basis measurement to produce $r'$. Then we can write the experiment in the lemma as sampling $U,r,s$, computing $$\left(\bbI^{\gray{M_1}} \otimes X^rZ^s \otimes \bbI^{\gray{Z}}\right)\left(\bbI^{\gray{M_1}} \otimes A^{\gray{M_2},\gray{Z}}\right)\left(\bbI^{\gray{M_1}} \otimes X^rZ^s \otimes \bbI^{\gray{Z}}\right)\left(U \otimes \bbI^{\gray{Z}}\right)(\bx,\bz),$$ measuring the $\gray{M_2}$ register and rejecting if the result is not $0^n$. Now, one can apply the Pauli twirl (see for example \cite[Lemma A.2]{EC:DGJMS20}) to argue that $$\left(\bbI^{\gray{M_1}} \otimes X^rZ^s \otimes \bbI^{\gray{Z}}\right)\left(\bbI^{\gray{M_1}} \otimes A^{\gray{M_2},\gray{Z}}\right)\left(\bbI^{\gray{M_1}} \otimes X^rZ^s \otimes \bbI^{\gray{Z}}\right)$$ can be written as a classical mixture of Pauli attacks on register $\gray{M_2}$. Thus, we can write the state on registers $\gray{M_1},\gray{M_2}$ that results from this experiment as $$\sum_{t \in \{0,1\}^n} p_t \cL^{\Pi_{t,\mathsf{Half}}}(U(\bx)),$$ where $p_t$ is the probability the the $x$-value in the adversary's mixture of Pauli attacks is equal to $t$. Now, by \cref{lemma:linear-map}, this state is negligibly close to $$\sum_{t \in \{0,1\}^n} p_t \cL^{\Pi_{t,\mathsf{Full}}}(\bx).$$ Finally, observe that this state is either $\ket{\bot}\bra{\bot}$ (a reject), or $\ket{0^{2n}}\bra{0^{2n}}$ (an accept), so the event stated in the lemma will occur with probability 0.

\end{proof}

\subsection{Two-Message Two-Party Classical Computation}
\label{subsec:2pc}


As a building block, we will use post-quantum maliciously-secure two-message two-party classical computation in the CRS model where one party receives output. We will require that the simulator is straight-line and black-box. We will refer to such a protocol simply as $\twopc$.

$\twopc$ is defined by four algorithms $(\twopc.\gen,\twopc_1,\twopc_2,\twopc_\out)$. We will keep the convention that party $B$, with input $x_B$, first computes $\twopc_1$, then party $A$, with input $x_A$, computes $\twopc_2$, and finally party $B$ recovers its output $y$ with $\twopc_\out$. The syntax of these algorithms is as follows, where $C$ is the description of the circuit to be computed.

\begin{itemize}
    \item $\crs \gets \twopc.\gen(1^\secp)$.
    \item $(m_1,\st) \gets \twopc_1(1^\secp,C,\crs,x_B)$.
    \item $m_2 \gets \twopc_2(1^\secp,C,\crs,m_1,x_A)$.
    \item $y \gets \twopc_\out(1^\secp,\st,m_2)$.
\end{itemize}

Let $C = \{C_\secp\}_{\secp \in \bbN}$ be a (potentially randomized) family of classical circuits where $C_\secp$ takes as input $(x_A,x_B) \in \{0,1\}^{n_A(\secp) + n_B(\secp)}$ and outputs $y$. Consider the case of an adversary $\cA = \{\cA_\secp\}_{\secp \in \bbN}$ corrupting party $A$. For every $\lambda \in \bbN$, the the view of the environment in the real execution is denoted by a random variable $\Real_{\Pi,\C,A}(\cA_\secp,x_A,x_B,\baux_\cA)$, where $x_A$ is party $A$'s input, $x_B$ is party $B$'s input, and $\baux_\cA$ is some potentially quantum side information that may be entangled with the distinguisher's side information $\baux_\cD$. The random variable consists of $(\bz,y_B)$, where $\bz$ is $\cA_\secp$'s final output after interacting with an honest $B$ algorithm $B(1^\secp,x_B)$, and $y_B$ is $B$'s output.

We require the existence of a QPT simulator $\Sim_A = \left(\Sim_A^{(1)},\Sim_A^{(2)}\right)$ that interacts with any QPT adversary $\cA = \{\cA_\secp\}_{\secp \in \bbN}$ corrupting party $A$. $\Sim_A$ has the following syntax.

\begin{itemize}
    \item $\Sim_A^{(1)}(1^\secp)$ generates $(\crs,\tau,m_1)$, sends $(\crs,m_1)$ to $\cA_\secp(x_A,\baux_\cA)$, and receives back $m_2$.
    \item $\Sim_A^{(2)}(1^\secp,x_A,\tau,m_2)$ computes either $x_A'$ or $\bot$, which it forwards to an ideal functionality $\cI_A[x_B](\cdot)$.
\end{itemize}

$\cI[x_B](\cdot)$ operates as follows. It takes an input $x_A'$ or $\bot$, and in the non-$\bot$ case it computes and outputs $y \gets C_\secp(x_A',x_B)$ (note this is not returned to the simulator), and in the $\bot$ case it outputs $\bot$. The random variable $\Ideal_{\Pi,\C,A}(\Sim_A,x_A,x_B,\baux_\cA)$ consists of the output of $\cA_\secp(x_A,\baux_\cA)$ after interacting with the simulator, along with the output of $\cI[x_B](\cdot)$. 

We require an analogous security property in the case that $\cA$ corrupts party $B$. Here, the syntax of $\Sim_B = \left(\Sim_B^{(1)},\Sim_B^{(2)}\right)$ is as follows.

\begin{itemize}
    \item $\Sim_B^{(1)}(1^\secp)$ generates $(\crs,\tau)$, sends $\crs$ to $\cA_\secp(x_B,\baux_\cA)$, and receives back $m_1$.
    \item $\Sim_B^{(2)}(1^\secp,\tau,m_1)$ takes the adversary's message $m_1$ and either extracts an input $x_B'$ or $\bot$, which it forwards to an ideal functionality $\cI[x_A](\cdot)$. In the non-$\bot$ case, $\cI[x_A]$ computes and returns $y \gets C(x_A,x_B')$ to the simulator and outputs $\mathsf{ok}$. In the $\bot$ case it outputs $\abort$, and the simulator send $\bot$ to $\cA$ and simulation ends.
    \item $\Sim_B^{(3)}(1^\secp,\tau,y)$ receives an output $y$ from the ideal functionality and uses it to form a second round message $m_2$, which it sends to $\cA_\secp$.
\end{itemize}

\begin{definition}[Post-Quantum Two-Message Two-Party Computation]
\label{def:2pc}

A protocol $\Pi$ securely computes $C$ against malicious party $P \in \{A,B\}$ if for all QPT $\cA = \{\cA_\secp\}_{\secp \in \bbN}$ corrupting party $P$, there exists a QPT $\Sim_P = \{\Sim_{P,\secp}\}_{\secp \in \bbN}$ such that for all $\{x_{A,\secp},x_{B,\secp},\baux_{\cA,\secp},\baux_{\cD,\secp}\}_{\secp \in \bbN}$ and all QPT $\cD = \{\cD_\secp\}_{\secp \in \bbN}$, there exists a negligible function $\nu(\cdot)$ such that

\begin{align*}
    &\bigg| \Pr\left[\cD_\secp\left(\baux_{\cD,\secp},\Real_{\Pi,\C,P}(\cA_\secp,x_{A,\secp},x_{B,\secp},\baux_{\cA,\secp})\right) = 1\right]\\ &- \Pr\left[\cD_\secp\left(\baux_{\cD,\secp},\Ideal_{\Pi,\C,P}(\Sim_{P,\secp},x_{A,\secp},x_{B,\secp},\baux_{\cA,\secp})\right) = 1\right] \bigg| \leq \nu(\secp).
\end{align*}

\end{definition}

A secure two-party computation protocol satisfying Definition~\ref{def:2pc} can be obtained based on any post-quantum maliciously-secure two-message OT with straight-line simulation, via~\cite{C:IshPraSah08,EC:IKOPS11}. 
The non-interactive secure two-party protocol from~\cite[Appendix B]{C:IshPraSah08} is based on Yao's garbled circuit technique~\cite{FOCS:Yao86} along with a cut-and-choose mechanism for proving that a garbled circuit is computed correctly. The cut-and-choose is non-interactive in the OT-hybrid model. This can be cast in the simpler setting of two-party computation with a CRS, where we replace the ideal calls to the OT with a post-quantum secure two-message OT with straight-line simulation (that is auxiliary-input secure). The latter can be based on the quantum hardness of the learning with errors (QLWE) problem~\cite{C:PeiVaiWat08}.

In Section \ref{sec:mdv-nizk}, we will rely on reusable post-quantum two-party computation with straight-line simulation, in order to obtain reusable malicious designated-verifier NIZKs for QMA. We point out that~\cite{C:LQRWW19} build reusable (post-quantum) two-party computation (with straight-line simulation) assuming (post-quantum)  malicious MDV-NIZKs for NP, and (post-quantum) oblivious transfer. Both can be obtained from QLWE~\cite{C:LQRWW19,C:PeiVaiWat08}.

\subsection{Round-Optimal MPC for Classical Reactive Functionalities}\label{subsec:reactivempc}

We, define a $d$-level $n$-party randomized functionality $\cF = (\cF_1,\dots,\cF_d)$ as follows. Let $r$ be random coins. Then each $\cF_j$ can be defined as $$\left(y_1^{(j)},\dots,y_n^{(j)}\right) \coloneqq \cF_j\left(\left(x_1^{(1)},\dots,x_1^{(j)}\right),\dots,\left(x_n^{(1)},\dots,x_n^{(j)}\right),w^{(j)},r\right),$$ where $\left(x_i^{(1)},\dots,x_i^{(d)}\right)$ are the set of party $i$'s private inputs, $(w^{(1)},\dots,w^{(d)})$ are some public inputs, and $\left(y_i^{(1)},\dots,y_i^{(d)}\right)$ are the set of party $i$'s outputs. We allow $x_i^{(j)}$ to be an arbitrary function of $y_i^{(1)},\dots,y_i^{(j-2)}$.

Now, we define an ideal functionality $\cI_\cF$ for computing $\cF$ in $d+1$ rounds. Let $\cH \subset [n]$ be a subset of honest parties and $\cM \coloneqq [n] \setminus \cH$ be the corresponding subset of malicious parties. $\cI_\cF$ is initialized with honest party inputs $\left\{x_i^{(1)}\right\}_{i \in \cH}$. 

\begin{itemize}

    \item In round 1, accept and store private inputs $\left\{x_i^{(1)}\right\}_{i \in \cM}$.
    
    \item In round $j$, for $j \in \{2,\dots,d\}$, do the following. Accept public input $w^{(j-1)}$ and compute 
    
    $$\left(y_1^{(j-1)},\dots,y_n^{(j-1)}\right) \coloneqq \cF_j\left(\left(x_1^{(1)},\dots,x_1^{(j-1)}\right),\dots,\left(x_n^{(1)},\dots,x_n^{(j-1)}\right),w^{(j-1)},r\right).$$ Output $\left\{y_i^{(j-1)}\right\}_{i \in \cM}$. Accept either $\left(\mathsf{ok},\left\{x_i^{(j)}\right\}_{i \in \cM}\right)$ or $\abort$ as input. If $\mathsf{ok}$, set level $j-1$ honest party outputs to $\left\{y_i^{(j-1)}\right\}_{i \in \cH}$ and compute the next set of honest party inputs $\left\{x_i^{(j)}\right\}_{i \in \cH}$. If $\abort$, set honest party outputs to $\bot$ for each level $k \geq j-1$.
    
    \item In round $d+1$, accept public input $w^{(d)}$ and compute 
    
    $$\left(y_1^{(d)},\dots,y_n^{(d)}\right) \coloneqq \cF_j\left(\left(x_1^{(1)},\dots,x_1^{(d)}\right),\dots,\left(x_n^{(1)},\dots,x_n^{(d)}\right),w^{(d)},r\right).$$ Output $\left\{y_i^{(d)}\right\}_{i \in \cM}$. Accept either $\mathsf{ok}$ or $\abort$ as input. If $\mathsf{ok}$, set level $d$ honest party outputs to $\left\{y_i^{(d)}\right\}_{i \in \cH}$. If $\abort$, set level $d$ honest party outputs to $\bot$.
\end{itemize}

We observe that the protocol of \cite{EC:GarSri18a} can be used to implement this ideal functionality for any $d$-level $n$-party functionality. If \cite{EC:GarSri18a} is instantiated with post-quantum maliciously-secure two-message oblivious transfer in the CRS model with straight-line black-box simulation, the resulting $d+1$ round MPC protocol will be quantum-secure and admit a straight-line black-box simulator. 

To see this, recall that \cite{EC:GarSri18a} is a compiler that operates on any underlying MPC protocol. One can fix the underlying MPC protocol to support reactive functionalities. Such an MPC can be obtained from any non-reactive MPC by having the non-reactive MPC output a fresh set of secret shares of each party's input between each execution.
We will rely on a maliciously-secure reactive MPC from OT~\cite{C:CreVanTap95,C:IshPraSah08}. Then, the first round of the \cite{EC:GarSri18a}-compiled protocol will consist of first round OT messages committing each party to the randomness they will use throughout the execution of the underlying reactive MPC protocol, as in the original \cite{EC:GarSri18a} protocol. Each subsequent round will consist of the \cite{EC:GarSri18a} second-round messages for computing the appropriate portion of the reactive functionality, along with encryptions of each party's input for the next round. During the security proof, in each round the simulator can simply invoke the simulator for the appropriate portion of the underlying reactive MPC.

Thus, we will construct each of our multi-party quantum computation protocols with respect to a $d$-level $n$-party classical ideal functionality, that can be implemented by \cite{EC:GarSri18a}.

\subsection{Quantum Multi-Key Fully-Homomorphic Encryption}
\label{sec: qmfhe}

We use a quantum multi-key fully-homomorphic encryption scheme that supports classical encryption of classical ciphertexts. We do not require compactness or the classicality-preserving property as required by~\cite{Agarwal2020PostQuantumMC}, but we do require a form a circuit-privacy, presented below as ciphertext re-randomization.

\begin{definition}[Quantum Multi-Key Fully-Homomorphic Encryption~\cite{FOCS:Mahadev18b,Agarwal2020PostQuantumMC}]
\label{def:qmfhe}
A quantum multi-key fully-homomorphic encryption scheme is given by seven algorithms ($\qmfhe.\gen,\qmfhe.\KeyGen$, $\qmfhe.\CEnc$, $\qmfhe.\Enc$, $\qmfhe.\eval$, $\qmfhe.\Rerand$, $\qmfhe.\Dec$) with the following syntax.
\begin{itemize}
    \item $\crs \gets \qmfhe.\gen(1^\secp)$: A PPT algorithm that outputs a classical common reference string.
    \item $(\pk,\sk) \leftarrow \qmfhe.\KeyGen(1^\secp,\crs)$ : A PPT algorithm that given a security parameter, samples a classical public key and a classical secret key.
    \item $\ct \leftarrow \qmfhe.\CEnc(\pk,x)$ : A PPT algorithm that takes as input a bit $x$ and outputs a classical ciphertext.
    \item $\bct \leftarrow \qmfhe.\Enc(\pk,\bx)$ : A QPT algorithm that takes as input a qubit $\bx$ and outputs a quantum ciphertext.
    \item $\widehat{\bct} \gets \qmfhe.\eval((\pk_1,\dots,\pk_n),Q,(\bct_1,\dots,\bct_n))$: A QPT algorithm that takes as input a set of $n$ public keys, a quantum circuit $Q$, and a set of $n$ (classical or quantum) ciphertexts, and outputs an evaluated ciphertext $\widehat{\bct}$.
    \item $\widetilde{\bct} \gets \qmfhe.\Rerand((\pk_1,\dots,\pk_n),\bct)$: A QPT algorithm that re-randomizes a ciphertext $\bct$ encrypted under a set of $n$ public keys
    \item $\bx \gets \qmfhe.\Dec((\sk_1,\dots,\sk_n),\bct)$: A QPT algorithm that takes as input a set of $n$ secret keys and a quantum ciphertext $\bct$ and outputs a qubit.
\end{itemize}

The scheme satisfies the following.
\begin{enumerate}
\item
{\bf Quantum Semantic Security:} The encryption algorithm maintains quantum semantic security.
\item {\bf Quantum Homomorphism: } For any polynomial-size quantum circuit $Q$, input state $\bx_1,\dots,\bx_n$, crs $\crs \in \qmfhe.\gen(1^\secp)$, and key pairs $(\pk_1,\sk_1),\dots,(\pk_n,\sk_n)  \in \qmfhe.\KeyGen(1^\secp,\crs)$, it holds that $\by_0 \approx_s \by_1,$ where $\by_{0},\by_{1}$ are QRVs defined as follows:
    \begin{itemize}
        \item $\by_{0}$: For each $i \in [n]$, encrypt each classical bit of $\bx_i$ with $\qmfhe.\CEnc(\pk_i,\cdot)$ and the rest with $\qmfhe.\Enc(\pk_i,\cdot)$. Execute $\qmfhe.\eval((\pk_1,\dots,\pk_n),Q,\cdot)$ on the $n$ encryptions to obtain $\widehat{\bct}$. Then output $\qmfhe.\Dec((\sk_1,\dots,\sk_n),\qmfhe.\Rerand(\widehat{\bct}))$.
        \item $\by_{1}$: Output $Q(\bx_1,\dots,\bx_n)$.
    \end{itemize}


\item {\bf Ciphertext Re-randomization:} For any $\crs \in \qmfhe.\gen(1^\secp)$, key pairs $(\pk_1,\sk_1),\dots,(\pk_n,\sk_n) \in \qmfhe.\KeyGen(1^\secp,\crs)$, and ciphertexts $\bct_1,\bct_2$ such that $$\qmfhe.\Dec((\sk_1,\dots,\sk_n),\bct_1) = \qmfhe.\Dec((\sk_1,\dots,\sk_n),\bct_2),$$ it holds that $$\qmfhe.\Rerand((\pk_1,\dots,\pk_n),\bct_1) \approx_s \qmfhe.\Rerand((\pk_1,\dots,\pk_n),\bct_2).$$ 
\end{enumerate}
\end{definition}

We now sketch how to add the ciphertext re-randomization property to the QMFHE scheme constructed in~\cite{Agarwal2020PostQuantumMC} via ``noise-flooding''. An evaluated ciphertext encrypting the quantum state $\brho$ will have the form $$\mfhe.\Enc((\pk_1,\dots,\pk_n),(x,z)),X^xZ^z\brho,$$where $\mfhe$ is a \emph{classical} multi-key fully-homomorphic encryption scheme. Thus, it suffices to show how to add ciphertext re-randomization to the classical multi-key FHE scheme of~\cite{EC:MukWic16}. 

It is well-known that standard single-key FHE schemes from the literature (\cite{C:GenSahWat13}) are statistically re-randomizable. Now to construct MFHE with ciphertext re-randomization, we can append to each $\mfhe$ public key a freshly sampled GSW encryption of its corresponding secret key. To re-randomize a $\mfhe$ ciphertext encrypted under public keys $\pk_1,\dots,\pk_n$, one can compute the partial decryption under each corresponding GSW ciphertext, resulting in $n$ ciphertexts whose plaintexts sum to $\mu (q/2) + e$, where $\mu$ was the bit encrypted under $\mfhe$. Then, add a random additive secret sharing of $e'$ for a large enough $e'$ under the encryptions and re-randomize each. The result is an random additive sharing of $\mu (q/2) + e + e'$ under re-randomized GSW ciphertexts, where $e + e' \approx_s e''$ for some distribution $e''$ independent of the computation.

\subsection{Non-Interactive Equivocal Commitment}
\label{sec: equivocal commitments}

\begin{definition}[Equivocal Commitment]
A quantum-secure statistically-binding non-interactive equivocal commitment is given by three algorithms $(\Com.\Gen,\Com.\Enc,\Com.\Ver)$ with the following syntax.
\begin{itemize}
    \item $\crs \gets \Com.\Gen(1^\secp)$.
    \item $\cmt \coloneqq \Com.\Enc(1^\secp,\crs,m;r)$.
    \item $b \gets \Com.\Ver(1^\secp,\crs,\cmt,m,r)$.
\end{itemize}
It satisfies the following notion of correctness. For any $m \in \{0,1\}^*$, $$\Pr\left[b = 1 : \begin{array}{r}\crs \gets \Com.\Gen(1^\secp), r \gets \{0,1\}^\secp \\ \cmt \coloneqq \Com.\Enc(1^\secp,\crs,m;r),b \gets \Com.\Ver(1^\secp,\crs,\cmt,m,r)\end{array}\right] = 1-\negl(\secp).$$
It satisfies the statistical binding property.
$$\Pr_{\crs \gets \Com.\Gen(1^\secp)}\left[\begin{array}{l} \exists (\cmt,m_0,m_1,r_0,r_1), m_0 \neq m_1 \text{ s.t. } \\ \Com.\Ver(1^\secp,\crs,\cmt,m_0,r_0) = 1 = \Com.\Ver(1^\secp,\crs,\cmt,m_1,r_1) \end{array}\right] = \negl(\secp).$$

Finally, it satisfies the following notion of security (hiding). There exists algorithms $\Com.\Sim.\Gen,\Com.\Sim.\Open$ such that for any $m \in \{0,1\}^*,$

$$\Pr\left[b = 1 : \begin{array}{r}(\crs,\cmt,\tau) \gets \Com.\Sim.\Gen(1^\secp) \\ r_m \gets \Com.\Sim.\Open(1^\secp,\tau,m),b \gets \Com.\Ver(1^\secp,\crs,\cmt,m,r_m)\end{array}\right] = 1-\negl(\secp),$$ and

$$\left\{(\crs,\cmt) : \begin{array}{r}\crs \gets \Com.\Gen(1^\secp)\\ \cmt \gets \Com.\Enc(1^\secp,\crs,m)\end{array}\right\}_{\secp \in \bbN} \approx_c \left\{(\crs,\cmt) : (\crs,\cmt,\tau) \gets \Com.\Sim.\Gen(1^\secp)\right\}_{\secp \in \bbN}.$$
\end{definition}

A commitment scheme satisfying the above definition can be based on any quantum-secure one-way function~\cite{JC:Naor91}.

\subsection{Garbled Circuits}

\begin{definition}[Garbled Circuit]\label{def:GC}
A garbling scheme for circuits is a tuple of PPT algorithms $(\Garble, \GEval)$. $\Garble$ is the circuit garbling procedure and $\GEval$ is the corresponding evaluation procedure. More formally:
\begin{itemize}
  \item $(\widetilde{C},\{\lab_{i,b}\}_{i \in [n], b\in \{0,1\}})  \gets \Garble\left(1^\secp, C\right)$: $\Garble$ takes as input a security parameter $1^\secp$, a classical circuit $C$, and outputs a \emph{garbled circuit} $\widetilde{C}$ along with labels $\{\lab_{i,b}\}_{i \in [n], b \in \{0,1\}}$, where $n$ is the length of the input to $C$. 
  \item $y \gets \GEval\left(\widetilde{C}, \{\lab_{i,x_i}\}_{i \in [n]}\right)$: Given a garbled circuit $\widetilde{C}$ and a sequence of input labels $\{ \lab_{i,x_i}\}_{i \in [n]}$, $\GEval$ outputs a string $y$.
\end{itemize}

\paragraph{Correctness.} For correctness, we require that for any classical circuit $C$ and input $x \in \{0,1\}^n$ we have that:
\[\Pr\left[C(x) = \GEval\left(\widetilde{C}, \{\lab_{i,x_i}\}_{i \in [n]}\right)\right] = 1, \]
where $(\widetilde{C}, \{\lab_{i,b}\}_{i \in [n], b \in \{0,1\}}) \gets \Garble\left(1^\secp, C\right)$.

\paragraph{Security.} For security, we require that there exists a PPT simulator $\GSim$ such that for any classical circuit $C$ and input $x \in \{0,1\}^n$, we have that
\[\left(\widetilde{C}, \{\lab_{i,x_i}\}_{i \in [n]}\right) \approx_c  \GSim\left(1^\secp,1^{n},1^{|C|},C(x)\right),\]
where $(\widetilde{C}, \{\lab_{i,b}\}_{i\in [n], b\in \{0,1\}}) \gets \Garble\left(1^{\secp},C\right)$.
\end{definition}


\ifsubmission
\else
\section{A Garbling Scheme for Clifford + Measurement Circuits}

In this section, we formalize and prove the security of a method sketched in \cite[\S2.5]{ARXIV:BrakerskiYuen20} for garbling Clifford plus measurement circuits. Note that this is not the main garbling scheme analyzed in \cite{ARXIV:BrakerskiYuen20}, but it is a scheme that is sketched there informally. We begin by giving the formal definition of a Clifford + measurement circuit, as well as our definition of a garbling scheme for such circuits.

\begin{definition}[Clifford + Measurement ($\CM$) Circuit]
\label{def: Cliff plus Meas circ}A Clifford + Measurement ($\CM$) circuit with parameters $\{n_i,k_i\}_{i \in [d]}$ operates on $n_0 \coloneqq n_1 + k_1$ input qubits and applies $d$ alternating layers of Clifford unitary and computational basis measurements, during which a total of $k \coloneqq k_1 + \dots + k_d$ of the input qubits are measured. It is specified by $(F_0,f_1,\dots,f_d)$, where $F_0$ is a Clifford unitary, and each $f_i$ is a classical circuit which takes as input the result of computational basis measurements on the $i$th layer, and outputs a Clifford unitary $F_i$. In layer $i \in [d]$, $k_i$ qubits are measured and $n_i$ qubits are left over. The circuit is evaluated by first applying $F_0$ to the $n_0$ input qubits. Then the following steps are performed for $i = 1,\dots,d$:
\begin{itemize}
\item Measure the remaining $k_i$ qubits in the computational basis, resulting in outcomes $m_i \in \{0,1\}^{k_i}$.
\item Evaluate $f_i(m_i)$ to obtain a classical description of a Clifford $F_i \in \mathscr{C}_{n_i}$.
\item Apply $F_i$ to the first $n_i$ registers.

\end{itemize}

The output of the circuit is the result of applying $F_d$ to the final $n_d$ registers. 

\end{definition}


It is well-known (\cite{BravyiKitaev05}) that any polynomial-size quantum circuit can be written as a $\CM$ circuit with polynomial-size parameters $\{n_i,k_i\}_{i \in [d]}$. The transformation maintains correctness as along as sufficient $\Tstate$ states are appended to the input during evaluation.

\begin{definition}[Garbling Scheme for $\CM$ Circuits]\label{defn:QGC}
A Garbling Scheme for $\CM$ Circuits consists of three procedures $(\QGarble,\QGEval,\QGSim)$ with the following syntax.
\begin{itemize}
    \item $(E_0,\widetilde{Q}) \gets \QGarble(1^\secp,Q)$: A \emph{classical} PPT procedure that takes as input the security parameter and a $\CM$ circuit and outputs a Clifford ``input garbling'' matrix $E_0$ and a quantum garbled circuit $\widetilde{Q}$.
    \item $\bx_\out \gets \QGEval(\widetilde{\bx}_\inp,\widetilde{Q})$: A QPT procedure that takes as input a garbled input $\widetilde{\bx}_\inp$ and a garbled $\CM$ circuit $\widetilde{Q}$, and outputs a quantum state $\bx_\out$.\
    \item $(\widetilde{\bx}_\inp,\widetilde{Q}) \gets \QGSim(1^\secp,\{n_i,k_i\}_{i \in [d]},\bx_\out)$: A QPT procedure that takes as input the security parameter, parameters for a $\CM$ circuit, and an output state, and outputs a simulated garbled input and garbled circuit.
\end{itemize}
\paragraph{Correctness.} For any $\CM$ circuit $Q$ with parameters $\{n_i,k_i\}_{i \in [d]}$, and $n_0$-qubit input state $\bx_\inp$ along with (potentially entangled) auxiliary information $\bz$, 
\begin{align*}
    \left\{\left(\QGEval\left(E_0\left(\bx_\inp,\Zstate^{k\secp}\right), \widetilde{Q}\right),\bz\right) : \left(E_0,\widetilde{Q}\right) \gets \QGarble\left(1^\secp,Q\right)\right\} \approx_s \left(Q\left(\bx_\inp\right),\bz\right). 
\end{align*}
\paragraph{Security.} For any $\CM$ circuit $Q$ with parameters $\{n_i,k_i\}_{i \in [d]}$, and $n_0$-qubit input state $\bx_\inp$ along with (potentially entangled) auxiliary information $\bz$,
\begin{align*}
    \left\{\left(E_0\left(\bx_\inp,\Zstate^{k\secp}\right), \widetilde{Q},\bz\right) : \left(E_0,\widetilde{Q}\right) \gets \QGarble\left(1^\secp,Q\right)\right\} \approx_c \left(\QGSim\left(1^\secp,\{n_i,k_i\}_{i \in [d]},Q(\bx_\inp)\right),\bz\right). 
\end{align*}
\end{definition}

   


Before formally describing the concrete garbling scheme for $\mathsf{C} + \mathsf{M}$ circuits, we give a formal definition of a process $\labenc$ for sampling a ``label encoding'' unitary given a set of classical garbled circuit labels. For $\lambda$-bit strings $s_0$, $s_1$ and a bit $b$, let $C_b^{s_0,s_1}$ be the Clifford acting on $\lambda+1$ qubits, operating as follows:
\begin{itemize}
    \item Apply $Z_b$ to the first qubit. Looking ahead, $b$ will be chosen at random so that $Z_b$ will have the effect of a $Z$-twirl, which is equivalent to a measurement in the computational basis.
    \item Map $\ket{0, 0^{\lambda}}$ to $\ket{0, s_0}$, and $\ket{1,0^{\lambda}}$ to $\ket{1, s_1}$.
\end{itemize}

\paragraph{$\labenc(\labset)$:} Takes as input $\labset = \{\lab_{i,0},\lab_{i,1}\}_{i \in [n]}$, where the $\lab_{i,b}$ are $\lambda$-bit strings. Draws $n$ random bits $b_1,\dots,b_n \gets \{0,1\}$, and outputs $\bigotimes_{i \in [n]}C^{\lab_{0,i}, \lab_{1,i}}_{b_i}$,



\begin{lemma}
\label{lem: labels}
Let $m>n$. For any $m$-qubit state $\ket{\phi}$ and set of labels $\labset = \{\lab_{i,0}, \lab_{i,1}\}_{i \in [n]}$, where the $\lab_{i,b}$ are $\lambda$-bit strings, $$L\ket{\phi'}\bra{\phi'}L^{\dagger} = \mathbb{E}_{\mathsf{inp}} \ket{\phi'_{\mathsf{inp}}} \bra{\phi'_{\mathsf{inp}}} \otimes \ket{\lab_{1,\mathsf{inp}_1}, \dots, \lab_{n,\mathsf{inp}_n}}\bra{\lab_{1,\mathsf{inp}_1}, \dots, \lab_{n,\mathsf{inp}_n}} \,,$$ 
where $L \gets \labenc(\labset)$, $\ket{\phi'}$ is the $(m+n\secp)$-qubit state consisting of $\ket{\phi}$ and $n\secp$ ancillary 0 states, $\ket{\phi'_{\mathsf{inp}}}$ is the post-measurement state on the first $m-n$ qubits, conditioned on measuring the last $n$ qubits and obtaining outcome $\mathsf{inp}$, and the expectation is taken over $\mathsf{inp} \in \{0,1\}^n$ distributed according to the result of measuring the last $n$ qubits of $\ket{\phi}$ in the computational basis. 
\end{lemma}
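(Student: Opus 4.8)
The plan is to prove the claimed identity by reducing it to two elementary facts about the Clifford $C_b^{s_0,s_1}$ from which $\labenc$ is built: (i) the leading operation $Z^b$ with $b$ uniform is a $Z$-twirl, which coincides with an (unrecorded) computational-basis measurement of the first qubit; and (ii) the rest of $C_b^{s_0,s_1}$ is the computational-basis permutation $L_i$ sending $\ket{c,0^\secp}\mapsto\ket{c,\lab_{i,c}}$ for $c\in\{0,1\}$, which acts deterministically on ancilla-zero basis inputs (the only inputs it ever sees here). Concretely, I would first record that $C_{b_i}^{\lab_{i,0},\lab_{i,1}} = L_i\,(Z^{b_i}\otimes\bbI^{\otimes\secp})$, so a sample $L\gets\labenc(\labset)$ factors as $L = L_{\mathrm{all}}\cdot Z^{\vec b}$, where $L_{\mathrm{all}}\coloneqq\bigotimes_{i\in[n]}L_i$, $Z^{\vec b}\coloneqq\bigotimes_{i\in[n]}(Z^{b_i}\otimes\bbI^{\otimes\secp})$, and $\vec b=(b_1,\dots,b_n)$ is uniform. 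It is also convenient to fix notation: group the $n(\secp{+}1)$ qubits on which $L$ acts into $n$ blocks, the $i$-th block being the $i$-th of the $n$ ``input'' qubits of $\ket\phi$ together with its $\secp$ dedicated fresh ancillas; write $\gray I$ for the $n$ input qubits, $\gray K$ for the remaining $m-n$ qubits of $\ket\phi$, and $\gray A$ for the $n\secp$ ancillas, so that $\ket{\phi'}$ is $\ket\phi$ on $\gray K\gray I$ tensored with $\ket{0^{n\secp}}$ on $\gray A$. Throughout, I read $L\ket{\phi'}\bra{\phi'}L^\dagger$ as the mixed state obtained by sampling $L$, i.e. as $\E_{\vec b}\big[L\ket{\phi'}\bra{\phi'}L^\dagger\big]$.

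The first step applies the twirl. Using the standard identity that $\E_{b\gets\{0,1\}}[Z^b\,\brho\,Z^b]$ on one qubit equals complete dephasing in the computational basis, and that the $b_i$ are i.i.d. with $Z^{\vec b}$ supported on $\gray I$, one gets $\E_{\vec b}[Z^{\vec b}(\cdot)Z^{\vec b}] = \cD_{\gray I}\otimes\mathrm{id}_{\gray K\gray A}$, where $\cD$ measures all $n$ qubits of $\gray I$ and discards the outcome. Applying this to $\ket{\phi'}$ and expanding $\cD$ over outcomes $\mathsf{inp}\in\{0,1\}^n$ gives
\[\E_{\vec b}\big[Z^{\vec b}\ket{\phi'}\bra{\phi'}Z^{\vec b}\big] = \sum_{\mathsf{inp}\in\{0,1\}^n} p_{\mathsf{inp}}\;\ket{\phi'_{\mathsf{inp}}}\bra{\phi'_{\mathsf{inp}}}\otimes\ket{\mathsf{inp}}\bra{\mathsf{inp}}\otimes\ket{0^{n\secp}}\bra{0^{n\secp}},\]
where the three tensor factors live on $\gray K$, $\gray I$, $\gray A$ respectively, $p_{\mathsf{inp}}$ is the probability that measuring the $\gray I$ qubits of $\ket\phi$ yields $\mathsf{inp}$, and $\ket{\phi'_{\mathsf{inp}}}$ is the corresponding normalized post-measurement state on $\gray K$ --- exactly the quantities named in the lemma.

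The second step applies $L_{\mathrm{all}}=\bigotimes_i L_i$ to this (now $\gray I\gray A$-classical) mixture term by term: in the $\mathsf{inp}$-branch, block $i$ carries $\ket{\mathsf{inp}_i}\ket{0^\secp}$, which $L_i$ maps to $\ket{\mathsf{inp}_i}\ket{\lab_{i,\mathsf{inp}_i}}$, so by linearity
\[\E_{\vec b}\big[L\ket{\phi'}\bra{\phi'}L^\dagger\big] = \sum_{\mathsf{inp}\in\{0,1\}^n} p_{\mathsf{inp}}\;\ket{\phi'_{\mathsf{inp}}}\bra{\phi'_{\mathsf{inp}}}\otimes\bigotimes_{i\in[n]}\ket{\mathsf{inp}_i,\lab_{i,\mathsf{inp}_i}}\bra{\mathsf{inp}_i,\lab_{i,\mathsf{inp}_i}}.\]
Since $\sum_{\mathsf{inp}} p_{\mathsf{inp}}(\cdot)$ is precisely $\E_{\mathsf{inp}}(\cdot)$ with $\mathsf{inp}$ distributed as the measurement of the last $n$ qubits of $\ket\phi$, this is the right-hand side of the lemma, and the proof is complete.

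Little new ingenuity is needed; the points that require care are bookkeeping ones: isolating exactly which $n$ qubits of $\ket\phi$ act as ``input'' qubits, checking that $L$ genuinely splits into a labeling permutation times an independent $Z$-twirl on precisely those qubits, making explicit the expectation over the internal randomness of $\labenc$ on the left-hand side, and reconciling the extra $\ket{\mathsf{inp}_i}$ qubit in each block (present on the left) with the label-only registers written on the right --- the latter is harmless because, after the twirl, each such qubit is classically determined by (hence may be traced out alongside, or folded into) the corresponding label register.
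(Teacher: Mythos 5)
Your proof is correct and follows essentially the same route as the paper's: both rest on the observation that the random $Z^{b}$ factors act as a computational-basis dephasing of the $n$ input qubits while the remaining part of each $C_{b}^{\lab_{i,0},\lab_{i,1}}$ deterministically writes $\ket{\lab_{i,\mathsf{inp}_i}}$ onto the zero ancillas, the only difference being that you twirl first and then apply the label permutation, whereas the paper expands $\ket{\phi'}$ in the computational basis, applies the label map, and then invokes the Pauli-$Z$ twirl. Your closing remark about folding the residual $\ket{\mathsf{inp}_i}$ qubits into the label registers matches the implicit identification the paper also makes in its final step.
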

\begin{proof}
We can write $\ket{\phi'}$ as follows:
$$\ket{\phi'} = \sum_{x \in \{0,1\}^n} \alpha_x \ket{\phi_x} \ket{x} \,.$$
for some $\alpha_x \in \mathbb{C}$. Then,
$$ \mathbb{E}_{L \gets \labenc(\labset)} L \ket{\phi'} \bra{\phi'} L^{\dagger} = \mathbb{E}_{z \gets \{0,1\}^n} \bbI \otimes Z^z\otimes \bbI \left( \sum_{x \in \{0,1\}^n} \alpha_x \ket{\phi_x} \ket{x} \ket{\lab_{x}}\right) \left(\sum_{x' \in \{0,1\}^n} \alpha_{x'} \bra{\phi_{x'}} \bra{x} \bra{\lab_{x'}}\right) \bbI \otimes Z^z \otimes \bbI \,,$$
where $\ket{\lab_x} = \ket{\lab_{1,x_1}, \ldots, \lab_{1,x_n}}$.


By a well-known property of the Pauli-Z twirl, the above is equal to:
$$ \sum_{x \in \{0,1\}^n} |\alpha_x|^2 \ket{\phi_x}\bra{\phi_x} \otimes \ket{x}\bra{x} \otimes \ket{\lab_{x}}\bra{\lab_{x}} \,,$$
which implies the desired statement.
\end{proof}

Now, we are ready to describe formally the garbling scheme $(\QGarble,\QGEval,\QGSim)$ for $\mathsf{C}+\mathsf{M}$ circuits sketched by \cite{ARXIV:BrakerskiYuen20}. Let $(\Garble,\GEval,\GSim)$ be a classical garbling scheme.

\paragraph{$\QGarble(1^\secp,Q)$:}
Takes as input a $\CM$ circuit $Q$ with parameters $\{n_i,k_i\}_{i \in [d]}$.
\begin{enumerate}
    \item For $i \in [0,\dots,d]$, define $h_i \coloneqq k - \sum_{j=1}^i k_j$, so that $h_0 = k, h_1 = k-k_1, h_2 = k-k_1-k_2$, and so on.
    \item For each $i \in [0,\dots,d]$, sample $E_i \gets \mathscr{C}_{n_i + h_i\secp}$.
    \item For each $i \in [d]$, let $f_i$ be the classical circuit (derived from the description of $Q$) that takes as input $k_i$ bits interpreted as the outcomes of computational basis measurements in layer $i$ and outputs a Clifford circuit $F_i \in \mathscr{C}_{n_{i}}$ to be applied on the remaining $n_{i}$ qubits.
    \item Let $g_d$ be a classical circuit outputting descriptions of Clifford circuits, defined so that $g_d(x) = f_d(x)E_d^\dagger$. Compute $(\labset_d, \widetilde{g}_d) \gets \Garble(1^\secp,g_d)$.  
    \item For each $i$ from $d-1$ to 1, sample $L_{i+1} \gets \labenc(\labset_{i+1})$ and compute $(\labset_i,\widetilde{g}_i) \gets \Garble(1^\secp,g_i)$, where $g_i$ is a classical circuit that outputs descriptions of Clifford circuits,  $$g_i(x) = (E_{i+1} \otimes L_{i+1}) (f_i(x) \otimes \bbI^{h_i\secp}) E_i^\dagger\,.$$ 
    \item Let $F_0$ be the initial Clifford to be applied to the input qubits, sample $L_1 \gets \labenc(\labset_1)$ and output $$E_0, \widetilde{Q} = \left((E_1 \otimes L_1)(F_0\otimes\bbI^{h_0\secp})E_0^\dagger, \widetilde{g}_1,\dots,\widetilde{g}_d\right).$$
\end{enumerate}

\paragraph{$\QGEval(\widetilde{\bx}_\inp,\widetilde{Q})$} Takes as input a garbled input $\widetilde{\bx}_\inp$ and a garbled $\CM$ circuit $\widetilde{Q}$.

\begin{enumerate}
    \item Write $\widetilde{Q} = (D_0,\widetilde{g}_1,\dots,\widetilde{g}_d)$ and set $\bx_0 \coloneqq \widetilde{\bx}_\inp$. For $i$ from 1 to $d$, compute $D_{i-1}(\bx_{i-1})$, measure the last $k_i\secp$ qubits to obtain a set of labels $\labsimset_i$, compute $D_i \gets \GEval(\tilde{g}_i,\labsimset_i)$, and set $\bx_i$ to be the remaining $n_{i} + h_i \lambda$ qubits of the state. 
    \item Output $D_d(\bx_d)$.
\end{enumerate}

\paragraph{$\QGSim(1^\secp,\{n_i,k_i\}_{i \in [d]},\bx_\out)$:} Takes as input parameters for a $\CM$ circuit and a state $\bx_\out$.

\begin{enumerate}
    \item For each $i \in [0,\dots,d]$, sample $D_i \gets \mathscr{C}_{n_i + h_i\secp}$, where recall that $h_i \coloneqq k - \sum_{j=1}^i k_j$.
    \item Let $\bx_d = D_d^\dagger(\bx_\out)$. For $i$ from $d$ to $1$, compute $\labsimset_i,\tilde{g}_i \gets \GSim(1^\secp,1^{k_i},1^{|g_i|},D_i^\dagger)$ and set $\bx_{i-1} \coloneqq D_{i-1}^\dagger(\bx_i,\ket{\widetilde{\lab}_i}\bra{\widetilde{\lab}_i})$.  
    \item Output $\bx_0,D_0,\tilde{g}_1,\dots,\tilde{g}_d$.
\end{enumerate}

\begin{theorem}
\label{thm: garbling}
The triple $(\QGarble, \QGEval, \QGSim)$ defined above satisfies \cref{defn:QGC}.
\end{theorem}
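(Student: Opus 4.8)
The plan is to prove correctness and security separately, in both cases by induction on the number of layers $d$, with the group structure of $\mathscr{C}_n$ (re-randomization by a uniform Clifford is a bijection) and \cref{lem: labels} (the $Z$-twirl built into $\labenc$ forces a computational-basis measurement, collapsing any superposition over the two labels of a wire) doing essentially all the work. In the security proof, classical garbled-circuit security (\cref{def:GC}, in its post-quantum form) is invoked once per layer.

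\textbf{Correctness.} I would in fact show exact equality (which implies the stated $\approx_s$). The inductive invariant: when $\QGEval$ is run on $\left(E_0(\bx_\inp,\Zstate^{k\secp}),\widetilde{Q}\right)$, at the start of the $i$-th iteration --- conditioned on the measurements so far yielding outcomes $\vec m = (m_1,\dots,m_{i-1})$ --- the working register $\bx_{i-1}$ equals $E_{i-1}\!\left(\bx_{i-1}^{(\vec m)},\Zstate^{h_{i-1}\secp}\right)$, where $\bx_{i-1}^{(\vec m)}$ is the state of $Q$ run on $\bx_\inp$ immediately after its layer-$(i-1)$ Clifford, conditioned on $\vec m$ (with $\bx_0^{(\cdot)}=\bx_\inp$ and $E_0$ as the base case). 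Applying $D_{i-1}=(E_i\otimes L_i)(F_{i-1}\otimes\bbI^{h_{i-1}\secp})E_{i-1}^\dagger$ cancels $E_{i-1}$, applies the true layer Clifford $F_{i-1}=f_{i-1}(m_{i-1})$, and then \cref{lem: labels} applied to the $L_i$-block shows the post-measurement residual is $\bbE_{m_i}\!\left[E_i\!\left(\bx_i^{(\vec m,m_i)},\Zstate^{h_i\secp}\right)\otimes\ket{\labset_{i,m_i}}\!\bra{\labset_{i,m_i}}\right]$, with $m_i$ distributed exactly as the layer-$i$ measurement of $Q$; classical GC correctness gives $\GEval(\widetilde{g}_i,\labset_{i,m_i})=g_i(m_i)=(E_{i+1}\otimes L_{i+1})(F_i\otimes\bbI)E_i^\dagger$, restoring the invariant. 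At $i=d$, $h_d=0$ and $g_d(m_d)=F_dE_d^\dagger$, so the output is $F_dE_d^\dagger\cdot E_d\!\left(\bx_d^{(\vec m)}\right)=F_d\!\left(\bx_d^{(\vec m)}\right)$; averaging over $\vec m$ (whose distribution matches the one induced by $Q$) yields exactly $Q(\bx_\inp)$, and the auxiliary register $\bz$ is a spectator throughout.

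\textbf{Security.} I would use a hybrid argument with $H_0,\dots,H_d$, where $H_0$ is the real distribution and, passing from $H_{j-1}$ to $H_j$, we ``simulate layer $j$'': replace $\widetilde{g}_j$ by a $\GSim$ output and rotate the relevant input-garbling Cliffords into fresh uniform $D_0,\dots,D_j$ (with $D_0$ replacing the first component of $\widetilde{Q}$, handled at the first step). Each step has two sub-steps. First, Clifford re-randomization: at step $j$ the Clifford $g_{j-1}(m_{j-1})=(E_j\otimes L_j)(F_{j-1}\otimes\bbI)E_{j-1}^\dagger$ (for $j=1$, the exposed $(E_1\otimes L_1)(F_0\otimes\bbI)E_0^\dagger$) has $E_{j-1}$ occurring in exactly two places --- this Clifford and the register it garbles --- so one rewrites the pair $\big(\text{state carrying }E_{j-1},\ g_{j-1}(m_{j-1})\big)$ as $\big(D_{j-1}^\dagger(E_j\otimes L_j)(F_{j-1}(\cdot),\Zstate^{h_{j-1}\secp}),\ D_{j-1}\big)$ for $D_{j-1}\gets\mathscr{C}$ uniform; this is an \emph{exact} identity since left-multiplication by a fixed Clifford permutes $\mathscr{C}$. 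Then \cref{lem: labels} applied to the $L_j$-block exposes only the single label set $\labset_{j,m_j}$ (the $m_j$ being the true layer-$j$ outcomes), leaving $E_j$ on the surviving block. Second, classical GC security: $(\widetilde{g}_j,\labset_{j,m_j})\approx_c\GSim(1^\secp,1^{k_j},1^{|g_j|},g_j(m_j))$, where $g_j(m_j)=(E_{j+1}\otimes L_{j+1})(F_j\otimes\bbI)E_j^\dagger$, and $=F_dE_d^\dagger$ when $j=d$; since $m_j$ is a measurement-derived variable correlated with $\bz$ and the rest of the hybrid, this is applied by treating that whole correlated distribution as non-uniform quantum advice for the classical-GC distinguisher. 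Composing the two sub-steps for $j=1,\dots,d$ and then re-randomizing $E_d$ in the last step (using $g_d(m_d)=F_dE_d^\dagger$ and $h_d=0$, so the state becomes $D_d^\dagger F_d(\bx_d^{(\vec m)})$ with $D_d$ independent of $\vec m$, hence $\bbE_{\vec m}[F_d(\bx_d^{(\vec m)})]=Q(\bx_\inp)$) turns the real distribution into exactly $\QGSim(1^\secp,\{n_i,k_i\}_{i\in[d]},Q(\bx_\inp))$: the first component of $\widetilde{Q}$ is $D_0$, each $\widetilde{g}_j$ is the $\GSim$ output on $D_j$, and the garbled input telescopes to $\bx_0=D_0^\dagger\!\big(D_1^\dagger(\cdots D_d^\dagger(Q(\bx_\inp))\cdots,\widetilde{\lab}_1)\big)$.

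\textbf{Main obstacle.} I expect the hard part to be the register bookkeeping in the re-randomization sub-step: at each layer one must track how the $n_{i-1}+h_{i-1}\secp$ wires on which $g_{i-1}$ acts split into the $E_i$-block (surviving data plus surviving $\Zstate$ ancillas, $n_i+h_i\secp$ wires) and the $L_i$-block ($k_i$ to-be-measured wires plus $k_i\secp$ ancillas that will carry layer $i+1$'s labels), and verify that each $E_{i-1}$ genuinely occurs in exactly two locations so the bijection argument is valid, with the edge cases $i=1$ (no $g_0$; $L_1$ exposed in the clear) and $i=d$ (no outer label unitary in $g_d$, $h_d=0$) handled separately. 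A secondary subtlety is making precise that the \cref{lem: labels} equality, and each application of classical GC security, is preserved under conjugation by the already-introduced outer Cliffords $D_0^\dagger\cdots D_{i-1}^\dagger$ and under tensoring with the untouched $\bz$ and the not-yet-simulated $\widetilde{g}_{i+1},\dots,\widetilde{g}_d$ --- routine, but it must be stated carefully to chain the hybrids without circularity.
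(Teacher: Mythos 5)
Your proposal is correct and follows essentially the same route as the paper: the paper proves security by induction on the number of layers (peeling one layer at a time), which when unrolled is exactly your front-to-back hybrid argument, using the same three ingredients per layer — the Clifford group re-randomization identity (\cref{lem: clifford}), the $Z$-twirl/label-collapse lemma (\cref{lem: labels}), and classical garbled-circuit security applied per layer with the surrounding state as (quantum) auxiliary input. Your explicit inductive correctness argument is a welcome addition (the paper leaves correctness implicit), and your handling of the outcome-correlation via non-uniform advice is an acceptable variant of the paper's "apply $\GSim$ for each outcome $x$, then note the same $D$ works for all $x$" step.
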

To prove Theorem \ref{thm: garbling}, we need the following additional lemma.

\begin{lemma}
\label{lem: clifford}
For any state $\bx$ on $n$ qubits and any Clifford $R$ on $n$ qubits. The following two states are identical:
\begin{itemize}
    \item $\mathbb{E}_{C \leftarrow \mathscr{C}_{n}} \left( C (\bx), RC^{\dagger} \right)$
    \item $\mathbb{E}_{D \leftarrow \mathscr{C}_{n}} \left( D^{\dagger} R (\bx), D \right)$
\end{itemize}
\end{lemma}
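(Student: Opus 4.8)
\textbf{Proof proposal (Lemma~\ref{lem: clifford}).}

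The plan is to prove the identity by a single change of variables in the uniform average over the Clifford group. Observe first that both quantum random variables in the statement live on the same pair of registers: a quantum register holding an $n$-qubit state, together with a (classical) register holding the description of an $n$-qubit Clifford. Hence it suffices to exhibit a distribution-preserving relabeling of $\mathscr{C}_n$ that carries the pair $(C(\bx), RC^\dagger)$ onto the pair $(D^\dagger R(\bx), D)$; since we are averaging the joint state over both registers, re-indexing the sum over the group variable transforms one expression into the other exactly.

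Concretely, I would substitute $D \coloneqq RC^\dagger$. Inversion $C \mapsto C^\dagger$ is a bijection of the group $\mathscr{C}_n$ onto itself, and left-translation $E \mapsto RE$ is also a bijection, so their composition $C \mapsto RC^\dagger$ is a bijection of $\mathscr{C}_n$; in particular, if $C$ is uniform over $\mathscr{C}_n$ then $D = RC^\dagger$ is uniform over $\mathscr{C}_n$ as well. Under this substitution the second register of the first expression, namely $RC^\dagger$, is precisely $D$. For the first register, from $D = RC^\dagger$ we get $C^\dagger = R^\dagger D$, hence $C = D^\dagger R$, and therefore, writing $U(\bx) = U\bx U^\dagger$,
\[
C(\bx) = C\bx C^\dagger = (D^\dagger R)\,\bx\,(D^\dagger R)^\dagger = D^\dagger R \bx R^\dagger D = D^\dagger R(\bx).
\]
Thus $(C(\bx), RC^\dagger) = (D^\dagger R(\bx), D)$ for the corresponding value of $D$, and re-indexing the uniform average over $C$ by the uniform average over $D$ gives
\[
\mathbb{E}_{C \gets \mathscr{C}_n}\big(C(\bx),\, RC^\dagger\big) \;=\; \mathbb{E}_{D \gets \mathscr{C}_n}\big(D^\dagger R(\bx),\, D\big),
\]
which is the claimed equality of states.

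I do not expect any genuine obstacle here; the identity is exact, so no approximation estimates are needed. The only points worth spelling out are (i) that the average is over the full joint state on both registers, so a bijection acting on the classical register's contents re-indexes the entire expression, and (ii) that $C \mapsto RC^\dagger$ is a bijection of the finite group $\mathscr{C}_n$, hence preserves the uniform distribution — this is immediate since inversion and left-translation are each bijections of the group.
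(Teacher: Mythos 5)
Your proof is correct and follows essentially the same route as the paper: a measure-preserving change of variables in the uniform average over the group $\mathscr{C}_n$. The paper merely performs the relabeling in two steps (right-translation $C \mapsto DR$, then inversion of $D$), whereas you collapse it into the single substitution $D \coloneqq RC^\dagger$; the content is identical.
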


\begin{proof}
The proof is straightforward. Notice first that, because $\mathscr{C}_n$ is a group, we have that 
$$ \mathbb{E}_{C \leftarrow \mathscr{C}_{n}} \left( C (\bx), RC^{\dagger} \right) = \mathbb{E}_{C \leftarrow \mathscr{C}_{n} \cdot R} \left( C (\bx), RC^{\dagger} \right) \,,$$
where we denote by $\mathscr{C}_n \cdot R$ the group $\{C R : C \in \mathscr{C}_n\}$.
We can equivalently rewrite the RHS as
$$\mathbb{E}_{D \leftarrow \mathscr{C}_{n}} \left( DR (\bx), R(DR)^{\dagger} \right) \,,$$
which, upon simplification, is equal to
$$\mathbb{E}_{D \leftarrow \mathscr{C}_{n}} \left( DR (\bx), D^{\dagger} \right) \,.$$
Finally, using again that $\mathscr{C}_n$ is a group, the latter equals
$$\mathbb{E}_{D \leftarrow \mathscr{C}_{n}} \left( D^{\dagger}R (\bx), D \right) \,,$$
as desired.
\end{proof}

\begin{proof}[Proof of Theorem \ref{thm: garbling}]
We will show this by induction on the number of measurement layers $d$ in the circuit $Q$ (we use the same notation as above for the components of $Q$).

\paragraph{Base step ($d = 0$):} When $d=0$, the LHS of the equation defining security in \cref{defn:QGC} is:
\begin{equation}
\left\{E_0\left(\bx_\inp\right), \widetilde{Q} : \,\left(E_0, \widetilde{Q} = F_0 E_0^{\dagger} \right) \gets \QGarble(Q)\right\}    
\end{equation}
By definition of $\mathsf{QGarble}$, the latter is equivalent to:
\begin{equation}
\bigg\{E_0\left(\bx_\inp \right), \widetilde{Q}:
\,\,\,\,\,\,\widetilde{Q} = F_0 E_0^{\dagger}, \,\, E_0 \gets \mathscr{C}_{n_0} \bigg\}
\end{equation}

By Lemma \ref{lem: clifford}, the latter is identical to:
\begin{equation}
\label{eq: 5}
\bigg\{D_0^{\dagger}F_0 \left(\bx_\inp \right), D_0:
\,\,D_0 \gets \mathscr{C}_{n_0} \bigg\} \,.
\end{equation}

\paragraph{Inductive step ($d \Rightarrow d+1$):} Suppose that for some $d$ the following two distributions are identical for any $\mathsf{C} + \mathsf{M}$ circuit $Q$ with $d$ measurements (where we use the same notation as in definition \ref{def: Cliff plus Meas circ} for the components of $Q$), and any input state $\bx_\inp$.
\begin{itemize}
    \item $$
\left\{E_0\left(\bx_\inp,\Zstate^{k\secp}\right), \widetilde{Q} : \,\left(E_0, \widetilde{Q} = \left((E_1 \otimes L_1) (F_0 \otimes \bbI^{h_0\lambda}) E_0^{\dagger}, \widetilde{g}_1, \ldots, \widetilde{g}_d\right)\right) \gets \QGarble(Q)\right\}    
$$
    \item \begin{align*}
\bigg\{D_0^{\dagger} \bigg(D_1^{\dagger}\otimes \bbI\bigg) \bigg(\cdots \Big(D_{d-1}^{\dagger}\otimes \bbI\Big) \Big((D_d^{\dagger}\otimes \bbI) \left(Q\left(\bx_\inp \right) \otimes \lab_{d} \right) &\otimes \lab_{d-1} \Big) \otimes \cdots \otimes \lab_1 \bigg), D_0, \widetilde{g}_1, \ldots, \widetilde{g}_d: \\
\,\,\,\,\,\,& D_i \gets \mathscr{C}_{n_i + h_i \lambda}, \,i \in \{0,\ldots,d\},\\
&(\lab_{i}, \widetilde{g}_i) \gets \mathsf{GSim}( D_i), \text{ for $i \in [d]$} \bigg\}
\end{align*}
\end{itemize}

Let $Q$ be a $\mathsf{C} + \mathsf{M}$ circuit with $d+1$ measurements, and let $\bx_\inp$ be an input to the circuit. Consider the distribution of input encoding + garbled circuit:
$$
\left\{E_0\left(\bx_\inp,\Zstate^{k\secp}\right), \widetilde{Q} : \,\left(E_0, \widetilde{Q} = \left((E_1 \otimes L_1) (F_0 \otimes \bbI^{h_0\lambda}) E_0^{\dagger}, \widetilde{g}_1, \ldots, \widetilde{g}_{d+1}\right)\right) \gets \QGarble(Q)\right\}    
$$
Let $Q_d$ be the circuit that runs $Q$ up to (and including) the adaptive Clifford controlled on the $d$-th measurement outcome. For ease of notation, we simply write $\lab_{i,x}$ to denote the encoding label for measurement outcome $x$ at the $i$-th layer. More precisely, $\lab_{i,x} = (\lab_{i,x_1},\ldots,\lab_{i,x_n})$ for an appropriate $n$. Since $\widetilde{g}_{d+1}$ is independent of the random Clifford $E_d$, we can apply the inductive hypothesis to the $d$-measurement circuit $\left(E_{d+1} \otimes L_{d+1} \right) Q_d$ on input $\bx_\inp$). We deduce that the above distribution is computationally indistinguishable from: 
\begin{align*}
\bigg\{D_0^{\dagger} \bigg(D_1^{\dagger}\otimes \bbI\bigg) \bigg(\cdots \Big(D_{d}^{\dagger}\otimes \bbI\Big) \Big( &\left(E_{d+1} \otimes L_{d+1} \right) \left(Q_d\left(\bx_\inp\right)\right) \otimes \lab_{d} \Big) \otimes \cdots \otimes \lab_1 \bigg), D_0, \widetilde{g}_1, \ldots, \widetilde{g}_d, \widetilde{g}_{d+1}: \\
\,\,\,\,\,\,& D_i \gets \mathscr{C}_{n_i + h_i \lambda}, \,i\in\{0,\ldots,d\},\, E_{d+1} \gets \mathscr{C}_{n_{d+1}} \\
&(\lab_{i}, \widetilde{g}_i) \gets \mathsf{GSim}( D_i), \text{ for $i \in [d]$}, \\
&\left(\lab_{d+1} = \{\lab_{d+1,x}\}_{x \in \{0,1\}^{\lambda}}, \widetilde{g}_{d+1}\right) \gets \mathsf{Garble}\left(g_{d+1}: g_{d+1}(x) = f_{d+1}(x) E_{d+1}^{\dagger}\right), \\ 
& L_{d+1} \gets \mathsf{LabEnc}(\lab_{d+1}) \bigg\}
\end{align*}
Let $Q_d^{x}\left(\bx_\inp \right)$ be the post-measurement state upon executing circuit $Q$ up to the $d$-th measurement, conditioned on the $d$-th measurement outcome being $x$. By Lemma \ref{lem: labels}, the above distribution is identical to:
\begin{align*}
\bigg\{D_0^{\dagger} \bigg(D_1^{\dagger}\otimes \bbI\bigg) \bigg(\cdots \Big(D_{d}^{\dagger}\otimes \bbI\Big) \Big( &\mathbb{E}_{x \gets \mathsf{Meas}(Q_d\left(\bx_\inp\right))} \left[(E_{d+1} \otimes \bbI) \left(Q_d^{x} \left(\bx_\inp \right) \otimes \lab_{d+1,x}\right)\right] \otimes \lab_{d} \Big) \otimes \cdots \otimes \lab_1 \bigg), \\
&D_0, \widetilde{g}_1, .., \widetilde{g}_{d+1}: \\
\,\,\,\,\,\,&  D_i \gets \mathscr{C}_{n_i + h_i \lambda}, \,i\in\{0,\ldots,d\},\, E_{d+1} \gets \mathscr{C}_{n_{d+1}}, \\
&(\lab_{i}, \widetilde{g}_i) \gets \mathsf{GSim}( D_i), \text{ for $i \in [d]$}, \\
&\left(\lab_{d+1} = \{\lab_{d+1,x}\}_{x \in \{0,1\}^{\lambda}}, \widetilde{g}_{d+1}\right) \gets \mathsf{Garble}\left(g_{d+1}: g_{d+1}(x) = f_{d+1}(x) E_{d+1}^{\dagger}\right) \bigg\}
\end{align*}
We apply the simulation property of the classical garbling scheme (for each $x$), and deduce that the latter is computationally indistinguishable from:
\begin{align*}
\bigg\{\mathbb{E}_{x \gets \mathsf{Meas}(Q_d\left(\bx_\inp\right))} \bigg[\bigg\{D_0^{\dagger} \bigg(D_1^{\dagger}\otimes \bbI\bigg) \bigg(\cdot \cdot\Big(D_{d}^{\dagger}\otimes \bbI\Big) \Big( &(E_{d+1} \otimes \bbI) \left(Q_d^{x} \left(\bx_\inp\right) \otimes \lab_{d+1,x}\right) \otimes \lab_{d} \Big) \otimes \cdot \cdot \otimes \lab_1 \bigg), \\
&D_0, \widetilde{g}_1,\ldots, \widetilde{g}_{d},\widetilde{g}_{d+1,x} \bigg]: \\
\,\,\,\,\,\,&  D_i \gets \mathscr{C}_{n_i + h_i \lambda}, \,i\in\{0,\ldots,d\},\, E_{d+1} \gets \mathscr{C}_{n_{d+1}}, \\
&(\lab_{i}, \widetilde{g}_i) \gets \mathsf{GSim}( D_i), \text{ for $i \in [d]$}, \\
&\left(\lab_{d+1,x}, \widetilde{g}_{d+1,x}\right) \gets \mathsf{GSim}\left(f_{d+1}(x) E_{d+1}^{\dagger}\right), \text{ for $x \in \{0,1\}^{\lambda}$} \bigg\}
\end{align*}
We apply Lemma \ref{lem: clifford} (for each $x$) to deduce that the latter is identical to:
\begin{align*}
\bigg\{\mathbb{E}_{x \gets \mathsf{Meas}(Q_d\left(\bx_\inp\right))} \bigg[D_0^{\dagger} \bigg(D_1^{\dagger}\otimes \bbI\bigg) \bigg(\cdot \cdot\Big(D_{d}^{\dagger}\otimes \bbI\Big) \Big( &(D_{d+1,x}^{\dagger} \otimes \bbI) \left(f_{d+1}(x)Q_d^{x} \left(\bx_\inp\right) \otimes \lab_{d+1,x}\right) \otimes \lab_{d} \Big) \otimes \cdot \cdot \otimes \lab_1 \bigg), \\
&D_0, \widetilde{g}_1,\ldots, \widetilde{g}_{d},\widetilde{g}_{d+1,x} \bigg]: \\
\,\,\,\,\,\,&  D_i \gets \mathscr{C}_{n_i + h_i \lambda}, \,i\in\{0,\ldots,d\}, \\
& D_{d+1,x} \gets \mathscr{C}_{n_{d+1}}, \,x \in \{0,1\}^{\lambda}, \\
&(\lab_{i}, \widetilde{g}_i) \gets \mathsf{GSim}( D_i), \text{ for $i \in [d]$}, \\
&\left(\lab_{d+1,x}, \widetilde{g}_{d+1,x}\right) \gets \mathsf{GSim}\left(D_{d+1,x}\right), \text{ for $x \in \{0,1\}^{\lambda}$} \bigg\}
\end{align*}
It is straightforward to see that latter is the same distribution as:
\begin{align*}
\bigg\{D_0^{\dagger} \bigg(D_1^{\dagger}\otimes \bbI\bigg) \bigg(\cdot \cdot\Big(D_{d}^{\dagger}\otimes \bbI\Big) \Big( (D_{d+1}^{\dagger} \otimes \bbI) &\left( \mathbb{E}_{x \gets \mathsf{Meas}(Q_d\left(\bx_\inp\right))} \left[f_{d+1}(x)Q_d^{x} \left(\bx_\inp\right)\right] \otimes \lab_{d+1}\right) \otimes \lab_{d} \Big) \otimes \cdot \cdot \otimes \lab_1 \bigg), \\
&D_0, \widetilde{g}_1,\ldots, \widetilde{g}_{d},\widetilde{g}_{d+1} \bigg]: \\
\,\,\,\,\,\,& D_i \gets \mathscr{C}_{n_i + h_i \lambda}, \,i\in\{0,\ldots,d+1\},  \\
&(\lab_{i}, \widetilde{g}_i) \gets \mathsf{GSim}( D_i), \text{ for $i \in [d+1]$} \bigg\}
\end{align*}
i.e. sampling the same $D_{d+1}$ and simulated garbling output for all $x$ results in the same distribution. Finally, we can rewrite the latter as:
\begin{align*}
\bigg\{D_0^{\dagger} \bigg(D_1^{\dagger}\otimes \bbI\bigg) \bigg(\cdot \cdot\Big(D_{d}^{\dagger}\otimes \bbI\Big) \Big( (D_{d+1}^{\dagger} \otimes \bbI) &\left( Q\left(\bx_\inp\right) \otimes \lab_{d+1}\right) \otimes \lab_{d} \Big) \otimes \cdot \cdot \otimes \lab_1 \bigg), \\
&D_0, \widetilde{g}_1,\ldots, \widetilde{g}_{d},\widetilde{g}_{d+1} \bigg]: \\
\,\,\,\,\,\,& D_i \gets \mathscr{C}_{n_i + h_i \lambda}, \,i\in\{0,\ldots,d+1\}, \\
&(\lab_{i}, \widetilde{g}_i) \gets \mathsf{GSim}( D_i), \text{ for $i \in [d+1]$} \bigg\} \,,
\end{align*}
as desired.
\pagebreak


\end{proof}


\fi

\section{Quantum Non-Interactive Secure Computation}
\label{sec:three-message}

\ifsubmission
\subsection{Useful Lemmas}

\begin{lemma}[Magic State Distillation~\cite{BravyiKitaev05,EC:DGJMS20}]\label{lemma:distillation} Let $p(\cdot)$ be a polynomial. Then there exists a $\poly(\secp)$ size $\CM$ circuit $Q$ from $\secp p(\secp)$ input qubits to $p(\secp)$ output qubits such that the following holds. Take any state $\bx$ on $\secp p(\secp) +\secp$ qubits. Apply a uniformly random permutation to the registers of $\bx$ and then measure the final $\secp$ qubits in the $T$-basis to obtain a bitstring $s$. Let $\widetilde{\bx}$ be the remaining $\secp p(\secp)$ registers. Then there exist negligible functions $\mu,\nu$ such that $$\Pr\left[(s = 0) \wedge \left(\left\|Q(\widetilde{\bx})- \Tstate^{p(\secp)}\right\|_1 > \mu(\secp)\right)\right] \leq \nu(\secp).$$

\end{lemma}
\begin{proof}
This follows from applying~\cite[Lemma I.1]{EC:DGJMS20} with parameters $n = \secp p(\secp)$, $k = \secp$, $\delta = 1/2$ followed by~\cite[Lemma 2.7]{EC:DGJMS20} with parameters $m = \secp p(\secp)$, $\ell = m/2, t=p(\secp)$.
\end{proof}

\begin{lemma}[\cite{EC:DGJMS20}]\label{lemma:linear-map}
For any $n \in \bbN$ and projector $\Pi$ on $2n$ qubits, define the quantum channel $\cL^\Pi$ by $$\cL^\Pi(\bx) \coloneqq \Pi\bx\Pi + \ket{\bot}\bra{\bot}\trace[(\bbI^{2n}-\Pi)\bx],$$ where $\ket{\bot}$ is a distinguished state on $2n$ qubits with $\Pi\ket{\bot} = 0$. For any $t \in \{0,1\}^n$, let $\Pi_{t,\Full} \coloneqq \ket{0^{2n}}\bra{0^{2n}}$ if $t = 0^n$ and $\Pi_{t,\Full} \coloneqq 0$ otherwise. Let $\Pi_{t,\Half} \coloneqq \bbI^n \otimes \ket{t}\bra{t}$. Then for any QRV $\bx$ on $2n$ registers and $t \in \{0,1\}^n$, $$\left\|\cL^{\Pi_{t,\Full}}(\bx) - \E_{U \gets \mathsf{GL}(2n,\bbF_2)}\left[\cL^{\Pi_{t,\Half}}(U(\bx))\right]\right\|_1 = \negl(n).$$
\end{lemma}

\else
\fi

\subsection{The Protocol}\label{subsec:3-msg-protocol}

In what follows, we describe our protocol for two-party quantum computation in the setting of sequential messages. This protocol requires three messages of interaction when both players desire output, and two messages in a setting where only one party obtains an output, which can be seen as a Q-NISC (Quantum Non-interactive Secure Computation) protocol.

\paragraph{Ingredients.} Our protocol will make use of the following cryptographic primitives: (1) Quantum-secure two-message two-party classical computation in the CRS model $(\twopc.\gen,\twopc_1,\twopc_2,\twopc_\out)$ with a straight-line black-box simulator (\ifsubmission see Section 3.4 of the full version\else\cref{subsec:2pc}\fi), and (2) a garbling scheme for $\CM$ circuits $(\QGarble, \QGEval, \QGSim)$. \ifsubmission (see Section 4 of the full version)\else\fi

\paragraph{Notation.}

The protocol below computes a two-party quantum functionality represented by a $\CM$ circuit $Q$ that takes $n_A + n_B$ input qubits, produces $m_A + m_B$ output qubits, and requires $n_Z$ auxiliary $\Zstate$ states and $n_T$ auxiliary $\Tstate$ states. Let $\secp$ be the security parameter. The total number of quantum registers used will be $s = n_A + (n_B + \secp) + (2n_Z + \secp) + (n_T + 1)\secp$, and we'll give a name to different groups of these registers.

\ifsubmission
\else
In round 1, $B$ operates on $n_B + \secp$ registers, partitioned as $(\gray{B},\gray{\Trap_B})$, and sends these registers to $A$. In round 2, $A$ operates on these registers, along with $\gray{A}$ of size $n_A$, $\gray{Z_A}$ of size $2n_Z$, $\gray{\Trap_A}$ of size $\secp$, and $\gray{T_A}$ of size $(n_T+1)\secp$. An honest party $A$ will return all registers to $B$ in the order $(\gray{A},\gray{B},\gray{\Trap_B},\gray{Z_A},\gray{\Trap_A},\gray{T_A})$. During party $B$'s subsequent computation, the register $\gray{Z_A}$ will be partitioned into two registers $(\gray{Z_\inp},\gray{Z_\chck})$, where each has size $n_Z$, and register $\gray{T_A}$ will be partitioned into two registers $(\gray{T_\inp},\gray{T_\chck})$, where $\gray{T_\inp}$ has size $n_T\secp$ and $\gray{T_\chck}$ has size $\secp$. 
\fi

Given a $\CM$ circuit $Q$ and a Clifford $C_\out \in \mathscr{C}_{m_A + \secp}$, we define another $\CM$ circuit $Q_\dst[C_\out]$. This circuit takes as input $n_A + n_B + n_Z + \secp + n_T\secp$ qubits $(\bx_A,\bx_B,\bz_\inp,\trap_A,\bt_\inp)$ on registers $(\gray{A},\gray{B},\gray{Z_\inp},\gray{\Trap_A},\gray{T_\inp})$. It will first apply the magic state distillation circuit from~\cref{lemma:distillation} with parameters $(n_T\secp,\secp)$ to $\bt_\inp$ to produce QRV $\bt$ of size $n_T$. It will then run $Q$ on $(\bx_A,\bx_B,\bz_\inp,\bt)$ to produce $(\by_A,\by_B)$. Finally, it will output $(C_\out(\by_A,\trap_A),\by_B)$.

\protocol
{\proref{fig:classical-f}: Classical Functionality $\mathcal{F}[Q]$}
{Classical functionality to be used in \proref{fig:three-message}.}
{fig:classical-f}
{
\ifsubmission\small\else\fi
\textbf{Common Information:} Security parameter $\secp$, and $\CM$ circuit $Q$ to be computed with $n_A + n_B$ input qubits, $m_A + m_B$ output qubits, $n_Z$ auxiliary $\Zstate$ states, and $n_T$ auxiliary $\Tstate$ states. Let $s = n_A + (n_B + \secp) + (2n_Z + \secp) + (n_T + 1)\secp$.\\

\textbf{Party A Input:} Classical descriptions of $C_A \in \mathscr{C}_s$ and $C_\out \in \mathscr{C}_{m_A + \secp}$.\\
\textbf{Party B Input:} Classical description of $C_B \in \mathscr{C}_{n_B + \secp}$.\\

\textbf{The Functionality:}

\begin{enumerate}
\item Sample the unitary $U_{\decchck}$ as follows:
\begin{itemize}
    \item Sample a random permutation $\pi$ on $(n_T+1)\secp$ elements.
    \item Sample a random element $M \gets \mathsf{GL}(2n_T,\bbF_2)$.
    \item Compute a description of the Clifford $U_\chck$ that operates as follows on registers $(\gray{A},\gray{B},\gray{\Trap_B},\gray{Z_A},\gray{\Trap_A},\gray{T_A})$.
    \begin{enumerate}
        \item Rearrange the registers of $\gray{T_A}$ according to the permutation $\pi$ and then partition the registers into $(\gray{T_\inp},\gray{T_\chck})$.
        \item Apply the linear map $M$ to the registers $\gray{Z_A}$ and then partition the registers into $(\gray{Z_\inp},\gray{Z_\chck})$.
        \item Re-arrange the registers to $(\gray{A},\gray{B},\gray{Z_\inp},\gray{\Trap_A},\gray{T_\inp},\gray{Z_\chck},\gray{\Trap_B},\gray{T_\chck})$.
    \end{enumerate}
    \item Define $U_{\decchck}$ as: $$U_{\decchck} \coloneqq U_\chck\left(\bbI^{n_A} \otimes C_B^\dagger \otimes \bbI^{(2n_Z + \secp) + (n_T+1)\secp}\right)C_A^\dagger.$$
\end{itemize}
\item Sample $(E_0,D_0,\widetilde{g}_1,\dots,\widetilde{g}_d) \gets \QGarble(1^\secp,Q_\dst[C_\out])$.
\item Compute a description of $U_{\decchckenc} \coloneqq \left(E_0 \otimes \bbI^{(n_Z+\secp)+\secp}\right)U_{\decchck}^\dagger.$
\end{enumerate}

\textbf{Party B Output:} (1) A unitary $U_{\decchckenc}$ on $s$ qubits (to be applied on registers $(\gray{A},\gray{B},\gray{\Trap_B},\gray{Z_A},\gray{\Trap_A},\gray{T_A})$), and (2) A QGC $(D_0,\widetilde{g}_1,\dots,\widetilde{g}_d)$ (to be applied to registers $(\gray{A},\gray{B},\gray{Z_\inp},\gray{\Trap_A},\gray{T_\inp})$).\\
}
\clearpage

\protocol
{\proref{fig:three-message}: Three-message two-party quantum computation}
{Three-message two-party quantum computation.}
{fig:three-message}
{
\textbf{Common Information:} (1) Security parameter $\secp$, and (2) a $\CM$ circuit $Q$ over $n_A + n_B$ input qubits, $m_A + m_B$ output qubits, $n_Z$ auxiliary $\Zstate$ states, and $n_T$ auxiliary $\Tstate$ states. Let $s = n_A + (n_B + \secp) + (2n_Z + \secp) + (n_T + 1)\secp$.\\

\textbf{Party A Input:} $\bx_A$\\
\textbf{Party B Input:} $\bx_B$\\

\underline{\textbf{The Protocol:}}\\
\textbf{Setup.} Run classical 2PC setup: $\crs \gets \twopc.\gen(1^\secp)$.\\

\textbf{Round 1.} \emph{Party $B$:}
\begin{enumerate}
    \item Sample $C_B \gets \mathscr{C}_{n_B + \secp}$ and compute $\bm_{B,1} \coloneqq C_B(\bx_B, \Zstate^{\secp})$.
    \item Compute $(m_{B,1},\st) \gets \twopc_1(1^\secp,\cF[Q],\crs,C_B)$.
    \item Send to Party $A$: $(m_{B,1},\bm_{B,1})$.
\end{enumerate}

\textbf{Round 2.} \emph{Party $A$:}
\begin{enumerate}
    \item Sample $C_A \gets \mathscr{C}_s$ and $C_\out \gets \mathscr{C}_{m_A + \secp}$.
    \item Compute $\bm_{A,2} \coloneqq C_A(\bx_A,\bm_{B,1},\Zstate^{2n_Z},\Zstate^\secp,\Tstate^{(n_T+1)\secp})$.
    \item Compute
    $m_{A,2} \gets \twopc_2(1^\secp,\cF[Q],\crs,m_{B,1},(C_A,C_\out))$.
    \item Send to Party $B$: $(m_{A,2},\bm_{A,2})$.
\end{enumerate}

\textbf{Round 3.} \emph{Party $B$:}
\begin{enumerate}
    \item Compute $(U_{\decchckenc},D_0,\tilde{g}_1,\dots,\tilde{g}_d) \gets \twopc_\out(1^\secp,\st,m_{A,2})$.
    \item Compute $(\bm_\inp,\bz_\chck,\trap_B,\bt_\chck) \coloneqq U_{\decchckenc}(\bm_2)$, where 
    \begin{itemize}
        \item $\bm_\inp$ is on registers $(\gray{A},\gray{B},\gray{Z_\inp},\gray{\Trap_A},\gray{T_\inp})$,
        \item $(\bz_\chck,\trap_B,\bt_\chck)$ is on registers $(\gray{Z_\chck},\gray{\Trap_B},\gray{T_\chck})$.
    \end{itemize}
    \item Measure each qubit of $(\bz_\chck,\trap_B)$ in the standard basis and abort if any measurement is not zero.
    \item Measure each qubit of $\bt_\chck$ in the $T$-basis and abort if any measurement is not zero.
    \item Compute $(\widehat{\by}_A,\by_B) \gets \QGEval((D_0,\tilde{g}_1,\dots,\tilde{g}_d),\bm_\inp)$, where $\widehat{\by}_A$ consists of $m_A + \secp$ qubits and $\by_B$ consists of $m_B$ qubits.
    \item Send to Party $A$: $\widehat{\by}_A$.
\end{enumerate}

\textbf{Output Reconstruction.}
\begin{itemize}
\item \emph{Party $A$:} Compute $(\by_A,\trap_A) \coloneqq C_\out^\dagger(\widehat{\by}_A)$, where $\by_A$ consists of $m_A$ qubits and $\trap_A$ consists of $\secp$ qubits. Measure each qubit of $\trap_A$ in the standard basis and abort if any measurement is not zero. Otherwise, output $\by_A$.
\item \emph{Party $B$:} Output $\by_B$.
\end{itemize}
}

\clearpage

\subsection{Security} 

\begin{theorem}
Assuming post-quantum maliciously-secure two-message oblivious transfer, there exists maliciously-secure NISC for quantum computation and maliciously-secure three-message two-party quantum computation.
\end{theorem}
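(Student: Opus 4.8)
The plan is to instantiate \proref{fig:three-message} with a concrete post-quantum maliciously-secure two-message classical 2PC and verify that it meets \cref{def:mpqc} in the two-party case; the NISC (Q-NISC) claim then follows by truncation. Concretely, the needed $\twopc$ with a straight-line black-box simulator is obtained from post-quantum maliciously-secure two-message OT via the cut-and-choose compiler of \cite{C:IshPraSah08,EC:IKOPS11} (such OT is known from QLWE \cite{C:PeiVaiWat08}), so it suffices to analyze \proref{fig:three-message} given black-box access to the classical simulators $\Sim_A=(\Sim_A^{(1)},\Sim_A^{(2)})$ and $\Sim_B=(\Sim_B^{(1)},\Sim_B^{(2)},\Sim_B^{(3)})$ of \cref{def:2pc}. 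For the Q-NISC statement, set $m_A=0$ with trivial $C_\out$ and delete Round~3 together with party $A$'s output reconstruction; what remains are the two messages $B\to A\to B$ in which only the receiver $B$ obtains output, and the malicious-$A$ / malicious-$B$ analyses below specialize verbatim.

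Security against a malicious $B$ is the easier direction. The simulator uses $\Sim_B^{(1)}$ to produce $\crs$, obtain $\cA$'s classical message $m_{B,1}$, and extract (via the 2PC trapdoor) either $C_B$ or an abort; it then applies $C_B^{\dagger}$ to $\cA$'s quantum message $\bm_{B,1}$, measures the $\secp$ trap qubits, treats a nonzero outcome as an abort, and otherwise forwards the first $n_B$ registers to the ideal functionality to receive $\by_B$. It runs $\cF[Q]$ honestly on the extracted $C_B$, a fresh $C_A$, and a fresh $C_\out$ to obtain $U_{\decchckenc}$ and a garbled circuit, then \emph{replaces} the garbled circuit and garbled input by $\QGSim$ run on the output state $\big(C_\out(\Zstate^{m_A},\Zstate^{\secp}),\by_B\big)$, and sets $\bm_{A,2}\coloneqq U_{\decchckenc}^{-1}\big(\widetilde{\bx}_\inp,\Zstate^{n_Z},\Zstate^{\secp},\Tstate^{\secp}\big)$ where $\widetilde{\bx}_\inp$ is the simulated garbled input; $\Sim_B^{(3)}$ then delivers $(U_{\decchckenc},\widetilde{Q})$ as the 2PC output. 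Finally it receives $\cA$'s Round~3 message $\widehat{\by}_A^{*}$ and, using Clifford authentication with key $C_\out$ (which $\cA$ never sees), decides $\mathsf{ok}_A$ versus $\mathsf{abort}_A$ according to the induced $\cB_0/\cB_1$ decomposition. Indistinguishability is a short hybrid chain: invoke $\twopc$-security against malicious $B$ (\cref{def:2pc}); note that the pair $(U_{\decchckenc},\text{garbled input})$ is distributed exactly as $(E_0,E_0(\cdot))$ in \cref{defn:QGC} (the rest of the 2PC output being a fixed classical suffix and a fixed Clifford prefix) and apply garbling security ($\approx_c$); replace $C_\out(\by_A,\trap_A)$ by $C_\out(\Zstate,\Zstate)$ using Clifford secrecy (a uniform $C_\out$ twirls Alice's unknown output to maximally mixed on $\cA$'s side, even conditioned on $\by_B$); and finally invoke Clifford authentication to handle tampering in Round~3.

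Security against a malicious $A$ is the main obstacle, and is where the checks engineered into \proref{fig:classical-f} are used. The simulator runs $\Sim_A^{(1)}$ to produce $(\crs,m_{B,1})$ and sends, as the quantum first message, $C_B(\Zstate^{n_B},\Zstate^{\secp})$ for a fresh uniform $C_B$; since $C_B$ is uniform and independent of $\cA$'s view, this is perfectly indistinguishable from an honest encoding of $\bx_B$. On receiving $(m_{A,2},\bm_{A,2})$ it uses $\Sim_A^{(2)}$ to extract $(C_A,C_\out)$ (or abort) and then \emph{emulates $\cF[Q]$ itself} on $(C_A,C_\out,C_B)$ with fresh $\pi,M$ and $\QGarble$ randomness, learning $U_{\decchckenc}$, $E_0$, and the garbled circuit. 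It applies $U_{\decchckenc}$ to $\bm_{A,2}$, performs exactly honest Bob's Round~3 checks (standard-basis measurements on $(\bz_\chck,\trap_B)$ and $T$-basis measurements on $\bt_\chck$), and aborts on any failure; conditioned on no abort it applies $E_0^{\dagger}$ to $\bm_\inp$, reads the $\gray{A}$ register as the extracted input $\bx_A'$ (discarding the $\gray{B}$ register, a decoded dummy), queries $\cI[\bx_B]$ on $\bx_A'$ to obtain $\by_A$, sends $\widehat{\by}_A\coloneqq C_\out(\by_A,\Zstate^{\secp})$ in Round~3, and outputs $\mathsf{ok}_B$. The crux is \emph{correctness of extraction}: conditioned on no abort, the garbled input that honest Bob would evaluate is negligibly close to an honest garbled input $E_0(\bx_A',\bx_B,\Zstate^{n_Z},\trap_A,\widehat{\Tstate})$ on which the distillation-wrapped circuit $Q_\dst[C_\out]$ outputs a state negligibly close to $Q(\bx_A',\bx_B)$; this combines (i) Clifford authentication with key $C_B$, which forces the $\gray{B}$ component to be $\bx_B$ or trip $\trap_B$; (ii) \cref{lemma:Ztest} (via the random-linear-map \cref{lemma:linear-map}), which forces the $\Zstate$ registers to be genuine or trip the $\bz_\chck$ measurement; (iii) the cut-and-choose permutation $\pi$ with magic-state distillation (\cref{lemma:distillation}), which forces $\widehat{\Tstate}$, after distillation inside $Q_\dst[C_\out]$, to be negligibly close to $\Tstate^{n_T}$ or trip the $T$-basis check; and (iv) correctness of the garbling scheme (\cref{defn:QGC}), whose auxiliary-state formulation handles entanglement with $\cA$'s residual register and the distinguisher. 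The hybrid sequence is then: (1) replace honest $B$'s interaction with the classical 2PC by $\Sim_A$ (\cref{def:2pc}); (2) switch $\bm_{B,1}$ from an encoding of $\bx_B$ to one of $\Zstate^{n_B}$ (perfect, by uniformity of $C_B$); (3) replace honest $B$'s local output computation by the extract-and-query procedure above, whose indistinguishability is exactly correctness of extraction — at which point the experiment is the ideal one. Combining the two cases proves the theorem, and the same simulators and hybrids (with the dropped round) yield maliciously-secure Q-NISC.
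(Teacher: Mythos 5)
Your overall architecture is the same as the paper's proof (straight-line $\twopc$ simulation, $\QGSim$/correctness of the garbling scheme, Clifford hiding and authentication, and the $\Zstate$/$\Tstate$-check lemmas), but both of your simulators mishandle the adversary's \emph{trap} registers, and this breaks real/ideal indistinguishability. In the malicious-$B$ case you have the simulator decode $\bm_{B,1}$ with the extracted $C_B$, measure the $\secp$ trap qubits, and abort on a nonzero outcome. But honest Alice never checks $\gray{\Trap_B}$: in \proref{fig:three-message} those registers are merely routed by $U_{\decchckenc}$ to a check that \emph{Bob himself} performs in Round 3, and a malicious Bob can simply skip it. So a Bob who sends $C_B(\bx_B,\mathbf{v})$ with $\mathbf{v}\neq\Zstate^\secp$ (or with EPR halves in the trap slots) still receives a well-formed second message and learns $\by_B$ in the real world, while your ideal world aborts before Round 2 --- trivially distinguishable. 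The paper's simulator therefore does \emph{not} test these registers: it forwards only the decoded data registers to the ideal functionality and re-embeds the (possibly nonzero, possibly entangled) state $\trap_B$ into the simulated message, $\bm_{A,2} \coloneqq U_{\decchckenc}^\dagger(\widetilde{\bm}_\inp,\Zstate^{n_Z},\trap_B,\Tstate^\secp)$, rather than your fresh $\Zstate^\secp$.

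The malicious-$A$ simulator has the same species of error: you send $\widehat{\by}_A \coloneqq C_\out(\by_A,\Zstate^\secp)$, but $Q_\dst[C_\out]$ outputs $C_\out(\by_A,\trap_A)$ where $\trap_A$ is whatever Alice placed in the $\gray{\Trap_A}$ register, and none of Bob's checks touch that register. A cheating Alice who puts $\ket{1^\secp}$ (or halves of EPR pairs) there passes every check and, in the real world, recovers exactly that state under $C_\out^\dagger$; in your simulation she sees fresh zeros, so she distinguishes. The fix is the paper's: keep $\trap_A$ from the decoded state and return $C_\out(\by_A,\trap_A)$. Relatedly, your hybrid ordering for malicious $A$ is broken as stated: you switch $\bm_{B,1}$ to an encoding of $\Zstate^{n_B}$ and call it perfect \emph{before} replacing Bob's output computation, but the intermediate hybrid then computes $Q$ on $\Zstate^{n_B}$ instead of $\bx_B$; the paper first replaces garbled-circuit evaluation by direct computation of $Q_\dst[C_\out]$, and only then, in a single hybrid, substitutes $\bx_B$ for the decoded input (using authentication) while switching the encoding to zeros. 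The remaining ingredients --- extraction of $(C_A,C_\out)$, simulating with output $(C_\out(\Zstate^{m_A+\secp}),\by_B)$, the group-structure argument for a uniform $U_{\decchckenc}$, and \cref{lemma:linear-map}/\cref{lemma:distillation} --- match the paper.
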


\begin{proof}
Let $\Pi$ be the protocol described in~\proref{fig:three-message} computing some quantum circuit $Q$. \ifsubmission Here, we only show security against a malicious party $A$ and defer the remainder of the proof to the full version.\else We first show that $\Pi$ satisfies \cref{def:mpqc} for any $\cA$ corrupting party $A$.\fi

\paragraph{The simulator.} Consider any QPT adversary $\cA = \{\cA_\secp\}_{\secp \in \bbN}$ corrupting party A. The simulator $\Sim$ is defined as follows. Whenever we say that the simulator aborts, we mean that it sends $\bot$ to the ideal functionality and to the adversary.

\paragraph{$\Sim^{\cI[\bx_B](\cdot)}(\bx_A,\baux_\cA)$:}

\begin{itemize}
    \item Compute $(\crs,\tau,m_{B,1}) \gets \twopc.\Sim_A^{(1)}(1^\secp)$, sample $C_B \gets \mathscr{C}_{n_B + \secp}$, compute $\bm_{B,1} \coloneqq C_B(\Zstate^{n_B},\Zstate^{\secp})$, and send $(\crs,m_{B,1},\bm_{B,1})$ to $\cA_\secp(\bx_A,\baux_\cA)$.
    \item Receive $(m_{A,2},\bm_{A,2})$ from $\cA_\secp$ and compute $\out \gets \twopc.\Sim_A^{(1)}(1^\secp,\tau,m_{A,2})$. If $\out = \bot$ then abort. Otherwise, parse $\out$ as $(C_A,C_\out)$.
    \item Using $(C_A,C_B)$, sample $U_{\decchck}$ as in the description of $\cF[Q]$. Compute $$(\bx_A',\bx_B',\bz_\inp,\trap_A,\bt_\inp,\bz_\chck,\trap_B,\bt_\chck) \coloneqq U_{\decchck}(\bm_{A,2}).$$ Measure each qubit of $\bz_\chck$ and $\trap_B$ in the standard basis and each qubit of $\bt_\chck$ in the $T$-basis. If any measurement is non-zero, then abort.
    \item Forward $\bx_A'$ to $\cI[\bx_B](\cdot)$ and receive back $\by_A$. Compute $\widehat{\by}_A \coloneqq C_\out(\by_A,\trap_A)$, send $\widehat{\by}_A$ to $\cA_\secp$, send $\mathsf{ok}$ to $\cI[\bx_B]$, and output the output of $\cA_\secp$.
\end{itemize}

We consider a sequence of hybrid distributions, where the first hybrid $\cH_0$ is $\Real_{\Pi,\Q}(\cA_\secp,\bx_A,\bx_B,\baux_\cA)$, i.e. the real interaction between the adversary  $\cA_\secp(\bx_A,\baux_\cA)$ and an honest party $B(1^\secp,\bx_B)$. In each hybrid, we describe the differences from the previous hybrid.

\begin{itemize}
    \item $\cH_1$: Simulate $\twopc$ as described in $\Sim$, using $\twopc.\Sim_A^{(1)}$ to compute $m_{B,1}$ and $\twopc.\Sim_A^{(2)}$ to extract an input $(C_A,C_\out)$ (or abort). Use $(C_A,C_\out)$ to sample an output $(U_{\decchckenc},D_0,\widetilde{g}_1,\dots,\widetilde{g}_d)$ of the classical functionality. Use this output to run party $B$'s honest Message 3 algorithm.
    \item $\cH_2$: In this hybrid, we change how $B$'s third round message $\widehat{\by}_A$ is sampled. In particular, rather than evaluating the quantum garbled circuit on $\bm_\inp$, we will directly evaluate $Q_\dst[C_\out]$ on the input. In more detail, given $\bm_{A,2}$ returned by $\cA_\secp$, $(C_A,C_\out)$ extracted from $\cA_\secp$, and $C_B$ sampled in Message 1, $\widehat{\by}_A$ is sampled as follows. Sample $U_{\decchck}$ as in Step 1 of $\cF[Q]$. Compute $$(\bx_A',\bx_B',\bz_\inp,\trap_A,\bt_\inp,\bz_\chck,\trap_B,\bt_\chck) \coloneqq U_{\decchck}(\bm_{A,2})$$ and carry out the checks on $\bz_\chck,\trap_B,\bt_\chck$ as described in Steps 3.(c) and 3.(d) of~\proref{fig:three-message}, aborting if needed. Then, compute $$(\widehat{\by}_A,\by_B) \gets Q_\dst[C_\out](\bx_A',\bx_B',\bz_\inp,\trap_A,\bt_\inp)$$ and return $\widehat{\by}_A$ to $\cA_\secp$.
    \item $\cH_3$: Compute $\bm_{B,1}$ as $C_B(\Zstate^{n_B},\Zstate^{\secp})$, and substitute $\bx_B$ for $\bx_B'$ before applying $Q_\dst[C_\out]$ to the registers described above in $\cH_2$.
    \item $\cH_4$: Rather than directly computing $Q_\dst[C_\out]$, query the ideal functionality with $\bx_A'$, receive $\by_A$, and send $\widehat{\by}_A \coloneqq C_\out(\by_A,\trap_A)$ to $\cA_\secp$. This hybrid is $\Ideal_{\Pi,\Q,A}(\Sim,\brho_\secp,\bx_A,\bx_B,\baux)$.
\end{itemize}

\noindent We show indistinguishability between each pair of hybrids.

\begin{itemize}
    \item $\cH_0 \approx_c \cH_1$: This follows from the security against corrupted $A$ of $\twopc$.
    \item $\cH_1 \approx_s \cH_2$: This follows from the statistical correctness of QGC.
    \item $\cH_2 \approx_s \cH_3$: First, by the security of the Clifford authentication code, conditioned on all measurements of qubits in $\trap_B$ returning 0, we have that $\bx_B' \approx_s \bx_B$. Next, switching $\bx_B$ to $\Zstate^{n_B}$ in $B$'s first message is perfectly indistinguishable due to the perfect hiding of the Clifford authentication code. 
    \item $\cH_3 \approx_s \cH_4$: First, by~\cref{lemma:linear-map}, conditioned on all measurements of qubits in $\bz_\chck$ returning 0, we have that $\bz_\inp \approx_s \Zstate^{n_Z}$.
    
    Next, the above observation, along with~\cref{lemma:distillation}, implies that, conditioned on all $T$-basis measurements of qubits in $\bt_\chck$ returning 0, it holds that the output of $Q_\dst[C_\out](\bx_A',\bx_B,\bz_\inp,\trap_A,\bt_\inp)$ is statistically close to the result of computing $(\by_A,\by_B) \gets Q(\bx_A',\bx_B,\Zstate^{n_Z},\Tstate^{n_T})$ and returning $(C_\out(\by_A,\trap_A),\by_B)$. This is precisely what is being computed in $\cH_4$.

\end{itemize}

\end{proof}

\paragraph{On Reusable Security against Malicious A.}
We remark that the two-message special case of the above protocol, that is, our Quantum NISC protocol, can be lightly modified to also achieve {\em reusable} security. A reusable classical NISC protocol (see, eg.~\cite{C:CDIKLOV19}) retains security against malicious A in a setting where A and B execute many instances of secure computation that {\em reuse} the first message of B. 
A natural quantum analogue of this protocol enables computation of quantum circuits while guaranteeing security against malicious A, in a setting where A and B execute many instances of secure computation that {\em reuse} the first message of B. Here we assume that B's input is classical, and so functionality will hold over repeated executions.
We note that our protocol can be lightly modified to achieve reusable security against malicious A, by replacing the underlying classical 2PC with a reusable classical 2PC. 
The proof of security remains identical, except that the indistinguishability between hybrids 0 and 1 relies on the reusable security of the underlying classical two-party computation protocol. 
In \ifsubmission the full version\else\cref{sec:mdv-nizk}\fi, we discuss how to achieve reusable MDV-NIZKs for NP, which can be viewed as a special case of reusable Q-NISC.

\ifsubmission
\else

Next, we show that $\Pi$ satisfies \cref{def:mpqc} for any $\cA$ corrupting party $B$.

\paragraph{The simulator.} Consider any QPT adversary $\cA = \{\cA_\secp\}_{\secp \in \bbN}$ corrupting party $B$. The simulator $\Sim$ is defined as follows.

\paragraph{$\Sim^{\cI[\bx_A](\cdot)}(\bx_B,\baux_\cA):$}
\begin{itemize}
    \item Simulate CRS and extract from adversary's round 1 message:
    \begin{itemize}
        \item Compute $(\crs,\tau) \gets \twopc.\Sim_B^{(1)}(1^\secp)$ and send $\crs$ to the adversary $\cA_\secp(\bx_B,\baux_\cA)$.
        \item Receive $(m_{B,1},\bm_{B,1})$ from $\cA_\secp$ and compute $\inp \gets \twopc.\Sim_B^{(2)}(1^\secp,\tau,m_{B,1})$. If $\inp = \bot$ then abort. Otherwise, parse $\inp$ as $C_B$ and compute $(\bx'_B,\trap_B) \coloneqq C_B^\dagger(\bm_{B,1})$.
    \end{itemize}
    \item Query ideal functionality and compute simulated round 2 message:
    \begin{itemize}
        \item Forward $\bx_B'$ to $\cI[\bx_A](\cdot)$ and receive back $\by_B$.
        \item Sample $C_\out \gets \mathscr{C}_{m_A + \secp}$ and compute $\widehat{\by}_A' \coloneqq C_\out(\Zstate^{m_A + \secp})$.
        \item Compute $(\widetilde{\bm}_\inp,D_0,\widetilde{g}_1,\dots,\widetilde{g}_d) \gets \QGSim\left(1^\secp,\{n_i,k_i\}_{i \in [d]}, \left(\widehat{\by}_A',\by_B\right)\right)$, where $\widetilde{\bm}_\inp$ is the simulated quantum garbled input on registers $(\gray{A},\gray{B},\gray{Z_\inp},\gray{\Trap_A},\gray{T_\inp})$, and $\{n_i,k_i\}_{i \in [d]}$ are the parameters of $\CM$ circuit $Q_\dst[C_\out]$.
        \item Sample $U_{\decchckenc} \gets \mathscr{C}_s$ and compute $\bm_{A,2} \coloneqq U_{\decchckenc}^\dagger(\widetilde{\bm}_\inp,\Zstate^{n_Z},\trap_B,\Tstate^{\secp})$. 
        \item Compute $m_{A,2} \gets \twopc.\Sim_B^{(3)}(1^\secp,\tau,(U_{\decchckenc},D_0,\widetilde{g}_1,\dots,\widetilde{g}_d))$.
        \item Send $(m_{A,2},\bm_{A,2})$ to $\cA_\secp$. 
    \end{itemize}
    \item Check for abort:
    \begin{itemize}
         \item Receive $\widehat{\by}_A$ from $\cA_\secp$ and measure the last $\secp$ qubits of $C_\out^\dagger(\widehat{\by}_A)$. If any measurement is not zero, send $\abort$ to the ideal functionality and otherwise send $\mathsf{ok}$.
    \item Output the output of $\cA_\secp$.
    \end{itemize}
\end{itemize}

\begin{theorem}
Let $\Pi$ be the protocol described in~\proref{fig:three-message} computing some quantum circuit $Q$. Then $\Pi$ satisfies \cref{def:mpqc} for any $\cA$ corrupting party $B$.

\end{theorem}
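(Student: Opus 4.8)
The plan is to proceed via a hybrid argument that gradually transforms the real interaction $\Real_{\Pi,\Q}(\cA_\secp,\bx_A,\bx_B,\baux_\cA)$ into the ideal interaction with the simulator $\Sim$ described above. The crucial conceptual point, which makes malicious $B$ the easier case, is that $B$'s quantum message is the last quantum message, and more importantly $B$ never learns Alice's output in the clear except through the state $\widehat{\by}_A$ that $A$ herself produces; so essentially everything $B$ sees can be simulated using (i) the classical $\twopc$ simulator against corrupted $B$, and (ii) the quantum garbled circuit simulator $\QGSim$. First I would define the hybrids: $\cH_0$ is the real interaction; in $\cH_1$ we replace the honest $A$'s use of the classical $\twopc$ by the simulator $\twopc.\Sim_B = (\Sim_B^{(1)},\Sim_B^{(2)},\Sim_B^{(3)})$, extracting $C_B$ from $\cA$'s first message (aborting on $\bot$), and using $\twopc.\Sim_B^{(3)}$ to deliver $A$'s second classical message given the (honestly computed) functionality output; in $\cH_2$, instead of having $A$ honestly build $\bm_{A,2} = C_A(\bx_A,\bm_{B,1},\Zstate^{2n_Z},\Zstate^\secp,\Tstate^{(n_T+1)\secp})$ and letting the functionality produce $(U_{\decchckenc},\widetilde{Q_B})$ from $C_A$, we note that the only thing $B$ receives quantumly is $U_{\decchckenc}(\bm_{A,2})$ together with the classical garbled circuit, so we can sample $U_{\decchckenc}$ uniformly at random (using that $C_A$ is a uniformly random Clifford, hence $U_{\decchckenc} = (E_0 \otimes \bbI)U_{\decchck}^\dagger$ with $U_{\decchck}$ involving $C_A^\dagger$ is uniformly random by the group property of the Clifford group — cf. \cref{lem: clifford}) and set $\bm_{A,2} \coloneqq U_{\decchckenc}^\dagger(\text{garbled input})$; in $\cH_3$ we replace the honestly-generated quantum garbled circuit and garbled input by $\QGSim$ run on the true output $(C_\out(\by_A,\trap_A), \by_B)$; in $\cH_4$ we extract $\bx_B'$ from $\cA$'s first message via $C_B^\dagger$ and use it to query the ideal functionality for $\by_B$, also replacing $C_\out(\by_A,\trap_A)$ by $C_\out(\Zstate^{m_A+\secp})$; this final hybrid is $\Ideal_{\Pi,\Q,B}(\Sim,\dots)$.

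For the indistinguishability steps: $\cH_0 \approx_c \cH_1$ follows immediately from the security of $\twopc$ against a malicious receiver $B$ (\cref{def:2pc}), viewing everything $A$ does quantumly as part of the environment/distinguisher. The step $\cH_1 \approx_s \cH_2$ is a purely information-theoretic rewriting: the joint distribution of $(U_{\decchckenc}, \bm_{A,2}, \widetilde{Q_B})$ is unchanged because $U_{\decchckenc}$ absorbs the uniformly random $C_A$ (apply \cref{lem: clifford} with the role of the random Clifford played by $C_A$, the "other" operator being the fixed $U_\chck \cdot (\cdot) \cdot$ and $E_0$), so sampling $U_{\decchckenc}$ first and then defining $\bm_{A,2}$ as its inverse applied to the (honestly computed) garbled input $E_0(\bx_A,\bx_B',\widehat{\Tstate},\Zstate)$ is identical. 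The step $\cH_2 \approx_c \cH_3$ is exactly the security property of the $\CM$ garbling scheme (\cref{defn:QGC}, established in \cref{thm: garbling}): the adversary's view consists of a garbled input, a garbled circuit, and auxiliary side information (here the rest of $\cA$'s state together with $\trap_B$, the $\Zstate^{n_Z}$ and $\Tstate^\secp$ check registers which are all independent of the garbling randomness), so we may replace $(E_0(\cdots),\widetilde{Q_B})$ by $\QGSim(1^\secp,\{n_i,k_i\},Q_B(\cdots)) = \QGSim(1^\secp,\{n_i,k_i\},(C_\out(\by_A,\trap_A),\by_B))$. Finally $\cH_3 \approx_s \cH_4$ has two parts: (a) extracting $\bx_B' = C_B^\dagger(\bm_{B,1})$ (first $n_B$ registers) and feeding it to the ideal functionality produces the same $\by_B$ that $A$ would compute honestly, since $A$ in the protocol literally decodes $\bm_{B,1}$ with $C_B^\dagger$ inside $U_{\decchck}$ — this is exact, no approximation needed; (b) replacing $C_\out(\by_A,\trap_A)$ by $C_\out(\Zstate^{m_A+\secp})$ is perfectly indistinguishable by the perfect hiding of the Clifford authentication code, since $C_\out$ is a fresh uniformly random Clifford that $B$ only ever sees applied to this one state (and the abort check that $\Sim$ performs on $\widehat{\by}_A$ using $C_\out^\dagger$ matches, by the authentication property, the event that $A$'s trap check passes in $\cH_3$).

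The main obstacle I anticipate is step (b)/(a) interaction in the last hybrid, specifically arguing that the $\abort$/$\mathsf{ok}$ decision made by $\Sim$ (measuring the last $\secp$ qubits of $C_\out^\dagger(\widehat{\by}_A)$ and checking they are $0^\secp$) induces on the honest-party output $\by_B$ exactly the distribution that the real protocol induces — i.e. that conditioned on $A$'s trap measurement passing, $A$'s recovered $\by_A$ is irrelevant to $B$'s output, and conditioned on failing, both worlds abort $A$. This requires invoking the authentication security of the Clifford code against the CPTP map that $\cA$ applies to produce $\widehat{\by}_A$ from the state $A$ sent it: by the Clifford authentication code guarantee, the decoded output is negligibly close to a mixture of (the correct $\by_A$ with $\mathsf{ok}$) and ($\bot$ with $\abort$), with the branching independent of $\by_A$ itself, so $A$'s honest output and $B$'s honest output remain correctly correlated. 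One must be slightly careful that $\widehat{\by}_A$ in $\cH_3$ onward is the $C_\out$-encoding of the \emph{simulated} garbled-circuit output rather than a genuine evaluation, but since $\cH_2 \approx_c \cH_3$ already moved us to the simulated world and the authentication argument only uses that the message is $C_\out(\text{something}, \trap_A = \Zstate^\secp)$ for a uniformly random $C_\out$ unknown to $\cA$, this goes through. I would also need to double-check that $\trap_B$, and the $\Zstate^{n_Z}, \Tstate^\secp$ check registers, are genuinely independent of the quantum garbling randomness so that they can be treated as part of the ``auxiliary information $\bz$'' in the invocation of \cref{defn:QGC} — this is the case by construction of $\cF[Q]$, since $U_{\decchck}$ routes those registers around $E_0$.
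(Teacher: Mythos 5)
Your proposal is correct and follows essentially the same route as the paper's own proof: simulate the classical $\twopc$ against corrupted $B$, use the Clifford group property to re-sample $U_{\decchckenc}$ uniformly and define $\bm_{A,2}$ as its inverse applied to the (honest) garbled input plus check registers, invoke the security of the $\CM$ garbling scheme to substitute $\QGSim$, and finish with the perfect hiding and statistical authentication of the Clifford code to swap $C_\out(\by_A,\trap_A)$ for $C_\out(\Zstate^{m_A+\secp})$ and to match the simulator's $\mathsf{ok}/\abort$ decision to $A$'s real trap check (your $\cH_4$ merely merges the paper's final two hybrids). Your anticipated subtleties — treating $\trap_B$ and the check registers as auxiliary information in the garbling security invocation, and the authentication-based argument that the abort branching is independent of $\by_A$ — are exactly the points the paper's proof relies on.
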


\begin{proof}
We consider a sequence of hybrid distributions, where $\cH_0$ is $\Real_{\Pi,\Q}(\cA_\secp,\bx_A,\bx_B,\baux_\cA)$, i.e. the real interaction between $\cA_\secp(\bx_B,\baux_\cA)$ and an honest party $A(1^\secp,\bx_A)$. In each hybrid, we describe the differences from the previous hybrids.

\begin{itemize}
    \item $\cH_1$: Simulate $\twopc$, using $\twopc.\Sim_B^{(1)}$ to sample $\twopc.\crs$, $\twopc.\Sim_B^{(2)}$ to extract the adversary's input $C_B$, and $\twopc.\Sim_B^{(3)}$ to sample party $A$'s message $m_{A,2}$. Use $C_B$ and freshly sampled $(C_A,C_\out)$ to sample the output of the classical functionality that is given to $\twopc.\Sim_B^{(3)}$.
    \item $\cH_2$: In this hybrid, we make a (perfectly indistinguishable) switch in how $\bm_{A,2}$ is computed and how $U_{\decchckenc}$ (part of the classical $\twopc$ output) is sampled. Define $(\bx'_B,\trap_B) \coloneqq C_B^\dagger(\bm_{B,1})$, where $C_B$ was extracted from $m_{B,1}$. Note that in $\cH_1$, by the definition of $\cF[Q]$,  $$U_{\decchckenc}(\bm_{A,2}) \coloneqq (E_0(\bx_A,\bx_B',\Zstate^{n_Z+\secp},\Tstate^{n_T\secp}),\Zstate^{n_Z},\trap_B,\Tstate^\secp).$$
    Moreover, there exists a Clifford unitary $U$ such that $U_{\decchckenc} = UC_A^\dagger$, where $C_A$ was sampled uniformly at random from $\mathscr{C}_s$. Thus, since the Clifford matrices form a group, an equivalent sampling procedure would be to sample $U_{\decchckenc} \gets \mathscr{C}_s$ and define $$\bm_{A,2} \coloneqq U_{\decchckenc}^\dagger(E_0(\bx_A,\bx_B',\Zstate^{n_Z+\secp},\Tstate^{n_T\secp}),\Zstate^{n_Z},\trap_B,\Tstate^\secp).$$ This is how $\cH_2$ is defined.
    \item $\cH_3$: In this hybrid, we simulate the quantum garbled circuit. In particular, compute $$(\widehat{\by}_A,\by_B) \gets Q_\dst[C_\out](\bx_A,\bx_B',\Zstate^{n_Z+\secp},\Tstate^{n_T\secp}),$$ followed by $$(\widetilde{\bm}_\inp,D_0,\widetilde{g}_1,\dots,\widetilde{g}_d) \gets \QGSim(1^{\secp},\{n_i,k_i\}_{i \in [d]},(\widehat{\by}_A,\by_B)).$$ Finally, substitute $\widetilde{\bm}_\inp$ for $E_0(\bx_A,\bx_B',\Zstate^{n_Z+\secp},\Tstate^{n_T\secp})$ in the computation of $\bm_{A,2}$ so that $$\bm_{A,2} \coloneqq U_{\decchckenc}^\dagger(\widetilde{\bm}_\inp,\Zstate^{n_Z},\trap_B,\Tstate^\secp).$$
    \item $\cH_4$: Note that $Q_\dst[C_\out](\bx_A,\bx_B',\Zstate^{n_Z+\secp},\Tstate^{n_T\secp})$ may be computed in two stages, where the first outputs $(\by_A,\by_B,\Zstate^\secp,C_\out)$ and the second outputs $(C_\out(\by_A,\Zstate^\secp),\by_B)$. In this hybrid, compute only the first stage, set $\by_A$ aside and re-define the final output to be $(\widehat{\by}_A',\by_B) \coloneqq (C_\out(\Zstate^{m_A+\secp}),\by_B)$.
    
    Now, during $A$'s output reconstruction step, if the check (step 4.(b) of~\proref{fig:three-message}) passes, output $\by_A$, and otherwise abort.
    \item $\cH_5$: Instead of directly computing $\by_B$ from the first stage of $Q_\dst[C_\out](\bx_A,\bx_B',\Zstate^{n_Z+\secp},\Tstate^{n_T\secp})$, forward $\bx_B'$ to $\cI[\bx_A](\cdot)$ and receive back $\by_B$. Now, during party $A$'s output reconstruction step, if the check passes, send $\mathsf{ok}$ to the ideal functionality, and otherwise send $\abort$ to the ideal functionality. This is $\Ideal_{\Pi,\Q,B}(\Sim,\brho_\secp,\bx_A,\bx_B,\baux)$.
\end{itemize}

\noindent We show indistinguishability between each pair of hybrids.

\begin{itemize}
    \item $\cH_0 \approx_c \cH_1$: This follows directly from the security against corrupted $B$ of $\twopc$.
    \item $\cH_1 \equiv \cH_2$: This is argued above.
    \item $\cH_2 \approx_c \cH_3$: This follows directly from the security of the QGC.
    \item $\cH_3 \approx_s \cH_4$: This follows directly from the (perfect) hiding and (statistical) authentication of the Clifford code.
    \item $\cH_4 \equiv \cH_5$: This follows from the definition of $\cI[\bx_A](\cdot)$.
\end{itemize}

\end{proof}

\fi

\section{Application: Reusable Malicious Designated Verifier NIZK for QMA}
\label{sec:mdv-nizk}

In this section, we show how a small tweak to the protocol from last section gives a reusable MDV-NIZK for QMA. Features of our construction differ from those of \cite{shmueli2020multitheorem} in the following ways.

\begin{itemize}
    \item It assumes post-quantum OT and reusable MDV-NIZK for NP, whereas \cite{shmueli2020multitheorem} assumed (levelled) fully-homomorphic encryption (note that both assumptions are known from QLWE).
    \item The prover only requires a single copy of the witness state, whereas \cite{shmueli2020multitheorem} required the prover to have access to polynomially-many identical copies of the witness.
\end{itemize}

\begin{definition}[MDV-NIZK Argument for QMA]\label{def:mdv-nizk}
A non-interactive computational zero-knowlege argument for a language $\cL = (\cL_\yes,\cL_\no) \in \QMA$ in the malicious designated-verifier model consists of $4$ algorithms $(\Setup,\VSetup,\Prove,\Verify)$ with the following syntax.

\begin{itemize}
    \item $\crs \gets \Setup(1^\secp)$: A classical PPT algorithm that on input the security parameter samples a common uniformly random string $\crs$.
    \item $(\pvk,\svk) \gets \VSetup(\crs)$: A classical PPT algorithm that on input $\crs$ samples a pair of public and secret verification keys.
    \item $\bpi \gets \Prove(\crs,\pvk,x,\bw)$: A QPT algorithm that on input $\crs$, the public verification key, an instance $x \in \cL_\yes$, and a quantum witness $\bw$, outputs a quantum state $\bpi$.
    \item $\Verify(\crs,\svk,x,\bpi)$: A QPT algorithm that on input $\crs$, secret verification key $\svk$, and instance $x \in \cL$, and a quantum proof $\bpi$, outputs a bit indicating acceptance or rejection.
\end{itemize}
The protocol satisfies the following properties.
\begin{itemize}
    \item \textbf{Statistical Completeness:} There exists a negligible function $\mu(\cdot)$ such that for any $\secp \in \bbN$, $x \in \cL_\yes \cap \{0,1\}^\secp$, $\bw \in \cR_\cL(x)$, $\crs \in \Setup(1^\secp)$, $(\pvk,\svk) \in \VSetup(\crs),$ $$\Pr_{\bpi \gets \Prove(\crs,\pvk,x,\bw)}\left[\Verify(\crs,\svk,x,\bpi)\right] \geq 1 -\mu(\secp).$$
    \item \textbf{Reusable (Non-Adaptive) Soundness:\footnote{A previous version of this paper defined and claimed to achieve adaptive soundness from polynomially-hard assumptions. However, we actually only achieve \emph{non-adaptive} soundness from polynomially-hard QLWE. Similar to \cite{shmueli2020multitheorem}, we could upgrade security to adaptive soundness if we use complexity leveraging and assume sub-exponentially secure QLWE.}} For every quantum polynomial-size adversarial prover $\cP^* = \{\cP^*_\secp,\bp_\secp\}_{\secp \in \bbN}$ and $\{x_\secp\}_{\secp \in \bbN}$ where for each $\secp \in \bbN, x_\secp \in \cL_\mathsf{no}$, there exists a negligible function $\mu(\cdot)$ such that for every $\secp \in \bbN$, $$\Pr_{\substack{\crs \gets \Setup(1^\secp) \\ (\pvk,\svk) \gets \VSetup(\crs) \\ \bpi \gets \cP^*_\secp(\bp_\secp,\crs,\pvk)^{\Verify(\crs,\svk,\cdot,\cdot)}}}\left[1 = \Verify(\crs,\svk,x,\bpi)\right] \leq \mu(\secp).$$
    \item \textbf{Malicious Zero-Knowledge:} There exists a QPT simulator $\Sim$ such that for every QPT distinguisher $\cD = \{\cD_\secp,\bd_\secp\}_{\secp \in \bbN}$, there exists a negligible function $\mu(\cdot)$ such that for every $\secp \in \bbN$, $$\left|\Pr_{\crs \gets \Setup(1^\secp)}\left[\cD_\secp(\bd_\secp,\crs)^{\Prove(\crs,\cdot,\cdot,\cdot)}\right] - \Pr_{(\crs,\tau) \gets \Sim(1^\secp)}\left[\cD_\secp(\bd_\secp,\crs)^{\Sim(\tau,\cdot,\cdot)}\right]\right| \leq \mu(\secp),$$ where,
    \begin{itemize}
        \item Every query $\cD_\secp$ makes to the oracle is of the form $(\pvk^*,x,\bw)$, where $\pvk^*$ is arbitrary, $x \in \cL_\yes \cup \{0,1\}^\secp$, and $\bw \in \cR_\cL(s)$.
        \item $\Prove(\crs,\cdot,\cdot,\cdot)$ is the honest prover algorithm and $\Sim(\tau,\cdot,\cdot)$ acts only on $\tau,\pvk^*$, and $x$.
    \end{itemize}
\end{itemize}
\end{definition}

\begin{theorem}
Assuming post-quantum maliciously-secure two-message oblivious transfer and post-quantum reusable MDV-NIZK for NP (see the discussion following \cref{def:2pc}), there exists a reusable MDV-NIZK satisfying \cref{def:mdv-nizk}.
\end{theorem}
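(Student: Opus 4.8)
## Proof Plan for the Reusable MDV-NIZK for QMA

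The plan is to instantiate the three-message 2PQC protocol of \cref{fig:three-message} in the degenerate case where the verifier plays the role of party $B$ with \emph{no} quantum input, the prover plays the role of party $A$ with the QMA witness $\bw$ as its quantum input, and the circuit $Q$ being computed is the QMA verification circuit $\cV$ for the language $\cL$ (amplified so that completeness and soundness errors are negligible). The key structural observation, already sketched in \cref{sec:mdv-nizk}, is that since the verifier has no quantum input and receives the (single-bit) output, we can drop the third message entirely, and the verifier's first message degenerates to just the first message $m_{B,1}$ of the underlying classical $\twopc$ together with a PRF key as part of its $\twopc$ input. We will set $\Setup$ to output the $\twopc.\crs$ (which is a uniformly random string, matching the common random string requirement) together with the CRS for the reusable classical 2PC; $\VSetup$ runs $\twopc_1$ with the verifier's input being a freshly sampled PRF key $K$, publishing $m_{B,1}$ as $\pvk$ and keeping $(\st,K)$ as $\svk$; $\Prove$ runs $\twopc_2$ with input $(C_A, C_\out)$ together with the statement $x$, and sends the resulting classical message plus the quantum message $\bm_{A,2}$; and $\Verify$ runs $\twopc_\out$ and the honest party-$B$ Round-3 checks from \cref{fig:three-message}, accepting iff all the $\Zstate$- and $\Tstate$-checks pass and the garbled circuit evaluates to $1$.

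The core modification needed to go from \emph{one-shot} security to \emph{reusable} security is twofold, exactly as described in the text: (i) instantiate $\twopc$ with the reusable post-quantum two-party computation of \cite{C:LQRWW19} (which exists from post-quantum OT and reusable MDV-NIZK for NP), so that $\pvk$ is statement-independent and can be reused; and (ii) modify the classical functionality $\cF[\cV]$ so that instead of sampling fresh randomness for the permutation $\pi$, the linear map $M$, and the quantum garbling $\QGarble$, it derives \emph{all} of this randomness as $\PRF(K, x)$, where $K$ is the verifier's reusable input and $x$ is the statement supplied by the prover in its $\twopc_2$ message. This ensures that a cheating prover issuing many queries cannot adaptively correlate the verifier's accept/reject pattern across statements into an attack, since each statement gets independent-looking randomness.

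I would then prove the three properties in turn. \textbf{Completeness} is immediate from statistical correctness of the QGC garbling scheme (\cref{thm: garbling}) and \cref{lemma:distillation}, since an honest prover feeds a genuine witness and genuine $\Zstate,\Tstate$ ancillas, so all checks pass and the garbled circuit outputs $\cV(x,\bw)=1$ with overwhelming probability. \textbf{Malicious zero-knowledge} follows by invoking the simulator $\Sim$ for malicious party $A$ constructed in the proof of \cref{fig:three-message}'s security against corrupted $A$: that simulator extracts $C_A, C_\out$ from the (possibly maliciously generated) first verifier message via $\twopc.\Sim_A$, never uses the prover's witness, and only needs the output bit from the ideal functionality — which here is simply "$x \in \cL_\yes$", so no witness is required, and in particular a single copy of $\bw$ suffices (the simulator uses none). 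Reusability of the ZK property is handled by the multi-theorem/reusable guarantee of the classical $\twopc$ simulator and a standard hybrid over the distinguisher's polynomially many oracle queries. \textbf{Reusable (non-adaptive) soundness} is the main obstacle: here I would argue via a sequence of hybrids that replaces the PRF outputs with truly random values (relying on post-quantum PRF security, which follows from QLWE), then replaces the reusable $\twopc$ execution with its straight-line simulator (so that the prover's effective input $(C_A, C_\out)$ is extracted on each query), and then uses \cref{lemma:Ztest} together with \cref{lemma:distillation} and the authentication property of the Clifford code to show that, conditioned on all of the verifier's $\Zstate$- and $\Tstate$-checks passing, the garbled circuit is effectively being evaluated on $(\bx_A', \bx_B', \Zstate^{n_Z}, \Tstate^{n_T})$ for an extracted witness $\bx_A'$; since $x \in \cL_\no$, the QMA verifier rejects such an input with overwhelming probability regardless of $\bx_A'$, so the verifier accepts with negligible probability. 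The delicate point in this last step is that the soundness adversary sees the accept/reject bit on each of its queries, so the reduction must show that this leakage is simulatable without the verifier's secret randomness — which is precisely where the PRF-derived randomness and the reusable/straight-line simulator of the classical 2PC are essential, and where I expect to spend the most care.
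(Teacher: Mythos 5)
Your proposal matches the paper's proof essentially step for step: use the two-message specialization of \proref{fig:three-message} with the verifier as the input-less party $B$, instantiate the classical $\twopc$ with the reusable post-quantum 2PC of~\cite{C:LQRWW19}, derive all of the verifier's randomness (for $\pi$, $M$, and $\QGarble$) as $\PRF(k,x)$, argue zero-knowledge via the malicious-$A$ simulator (which needs no witness), and prove reusable soundness by hybrids that end with uniform check/garbling randomness so that the $\Zstate$/$\Tstate$ checks and QGC correctness force rejection on $x^* \in \cL_\no$. The only (easily fixed) deviation is the order of your soundness hybrids: you should first replace the verifier's 2PC messages and output computation by the straight-line 2PC simulator, and only then invoke PRF security, since in the real game the PRF key is the verifier's 2PC input and so cannot be accessed solely through PRF-oracle calls — the paper performs these two steps in that order.
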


\begin{proof}
For $x \in \cL$, let $\cV_\cL[x](\cdot)$ be the QMA verification circuit that takes as input a potential witness $\bw$ and outputs a bit indicating acceptance or rejection. For any $x$, we will use \proref{fig:three-message} to compute the functionality $\cV_\cL[x](\cdot)$ (where Alice has input $\bw$ and only Bob obtains output) in two messages. Note that Bob has no input, and thus his first message is entirely classical, only consisting of the first message of the classical $\twopc$. This already gives a one-time MDV-NIZK.

Now, we argue how to achieve reusability, while maintaining soundness and zero-knowledge. First, we will instantiate the classical $\twopc$ with one that is reusable and post-quantum secure~\cite{C:LQRWW19}. Given such a $\twopc$ protocol, Bob can compute his first message independently of the statement to be proven by Alice, and Alice can subsequently re-use this first message to prove any number of statements. This already satisfies zero-knowledge, as the MDV-NIZK simulator can always just query the 2PQC simulator with output 1.

To achieve reusable soundness, we alter the classical functionality $\cF[\cV_\cL[x]]$ computed by the $\twopc$. It now takes as input a PRF key $k$ from Bob and generates all of Bob's randomness via $\PRF(k,x)$, i.e., the PRF applied to the (classical) instance $x$. This includes the auxiliary state checking randomness (permutation $\pi$ and linear map $M$) along with Bob's contribution to the classical randomness used to generate the quantum garbled circuit. To prove reusable soundness, let $\cP^*$ be a cheating prover, $x^* \in \cL_\mathsf{no}$ be a no instance, and consider the following hybrids.

\begin{itemize}
    \item $\cH_1$: The $\crs$ and the verifier's classical $\twopc$ message are generated by the $\twopc$ simulator, and the prover's oracle $\Verify(\crs,\svk,\cdot,\cdot)$ is now answered with help from the $\twopc$ simulator. That is, the $\twopc$ simulator extracts from the classical part of the prover's proof and computes the functionality specified by the instance $x$, which outputs the classical part of the quantum garbled circuit. This classical part is then used to compute the verifier's output on the quantum part of the prover's proof. Indistinguishability from the real game follows by security of the reusable classical $\twopc$.
    \item $\cH_2$: The PRF calls made during computation of the classical functionality are replaced with calls to a uniformly random function. Indistinguishability from the real game follows by security of the PRF.
\end{itemize}

In $\cH_2$, whenever the prover submits a proof for $x^*$, the randomness used to generate the $\mathbf{0}$ and $\mathbf{T}$ state checks and the quantum garbled circuit will be a string that is uniform and independent of the prover's view. Thus, by soundness of these checks, along with statistical correctness of the quantum garbled circuit, the verification oracle will output 0 with overwhelming probability. Thus, $P^*$ could not be making the verifier output 1 on any proof for $x^*$, except with negligible probability.

\end{proof}

\section{Two-Round Two-Party Quantum Computation with Preprocessing}
\label{sec:two-round-preprocess}

\ifsubmission\else This section presents a three-round protocol that only requires two rounds of online communication. This protocol can be equivalently interpreted as a two-round protocol with (quantum) pre-processing. \fi

\subsection{The Protocol}

\paragraph{Ingredients.} Our protocol will make use of the following cryptographic primitives, which are all assumed to be sub-exponentially secure (i.e. there exists $\epsilon$ such that the primitive is $(2^{-\secp^{\epsilon}})$-secure).

\begin{itemize}
    \item Quantum-secure two-message two-party classical computation in the CRS model where one party receives output $(\twopc.\gen,\twopc_1,\twopc_2,\twopc_\out)$ and with a straight-line black-box simulator\ifsubmission~(see Section 3.4 of the full version)\else~(\cref{subsec:2pc})\fi.
    \item A garbling scheme for $\CM$ circuits $(\QGarble, \QGEval, \QGSim)$\ifsubmission~(see section 4 of the full version)\else\fi.
    \item A quantum multi-key FHE scheme $\qmfhe=(\KeyGen,\CEnc,\Enc,\Eval,\Rerand,\Dec)$ with ciphertext re-randomization and classical encryption of classical messages\ifsubmission~(see section 3.6 of the full version)\else\fi.
    \item A quantum-secure equivocal commitment $\Com = (\Com.\Gen,\Com.\Enc,\Com.\Ver)$\ifsubmission~(see section 3.7 of the full version)\else\fi.
    \item A quantum-secure classical garbled circuit $(\Garble,\GEval,\GSim)$\ifsubmission~(see section 3.8 of the full version)\else\fi.
\end{itemize}

\paragraph{Notation.}

The circuit $Q_\dst[C_\out,x_\out,z_\out]$ is defined like $Q_\dst[C_\out]$ from~\cref{subsec:3-msg-protocol} except that $X^{x_\out}Z^{z_\out}$ is applied to $B$'s output $\by_B$. $f_{\inpcor}[E_0,U_{\rerand}]$ is a classical ``input correction'' circuit that takes as input $x_\inp,z_\inp \in \{0,1\}^{n_B}$ and outputs $U_{\rerandenc} \coloneqq E_0\left(\bbI^{n_A} \otimes X^{x_\inp}Z^{z_\inp} \otimes \bbI^{n_Z + \secp + n_T\secp}\right)U_\rerand^\dagger.$ 

For a $2 \times n$ set of elements $\{a_{i,b}\}_{i \in [n], b \in \{0,1\}}$, and a string $x \in \{0,1\}^n$, we let $a^{(x)} \coloneqq \{a_{i,x_i}\}_{i \in [n]}$. We will use this notation below to refer to sets of public keys $\pk^{(x_\out,z_\out)}$, secret keys $\sk^{(x_\out,z_\out)}$, labels $\lab^{(x_\out,z_\out)}$, and random strings $r^{(x_\out,z_\out)}$. Let $c_\mathsf{lev}$ be a constant satisfying $c_\mathsf{lev} > 1/\epsilon$.

\protocol
{\proref{fig:classical-g}: Classical Functionality $\cG[Q,\Com.\crs]$}
{Classical functionality to be used in \proref{fig:two-online}.}
{fig:classical-g}
{
\ifsubmission\small\else\fi
\textbf{Common Information:} (1) Security parameter $\secp$, (2) a $\CM$ circuit $Q$ on $n_A + n_B$ input qubits, $m_A + m_B$ output qubits, $n_Z$ auxiliary $\Zstate$ states, and $n_T$ auxiliary $\Tstate$ states, and (3) a crs $\Com.\crs$ for an equivocal commitment. Let $s = n_A + (n_B + \secp) + (2n_Z + \secp) + (n_T + 1)\secp$. Let $\secplev = \max\{\secp,(2n_B)^{c_\mathsf{lev}}\}$.\\

\textbf{Party A Input:} Classical descriptions of $C_A \in \mathscr{C}_s$, $C_\out \in \mathscr{C}_{m_A + \secp}$, $\{r_{i,b}\}_{i \in [2n_B], b \in \{0,1\}} \in (\{0,1\}^\secplev)^{4n_B}$, $x_\out,z_\out \in \{0,1\}^{m_B},s \in \{0,1\}^\secplev$.\\
\textbf{Party B Input:} Classical description of $C_B \in \mathscr{C}_{n_B + \secp}$.\\

\textbf{The Functionality:}
\begin{enumerate}
\item Sample $U_{\decchck}$ as in $\cF[Q]$, using $C_A$ and $C_B$.
\item Sample $(E_0,D_0,\widetilde{g}_1,\dots,\widetilde{g}_d) \gets \QGarble(1^\secplev,Q_\dst[C_\out,x_\out,z_\out])$. 
\item Sample $U_\rerand \gets \mathscr{C}_{n_A + n_B + n_Z + \secp + n_T}$.
\item Compute a description of $U_{\decchckrerand} \coloneqq \left(U_\rerand \otimes \bbI^{(n_Z+\secp)+\secp}\right)U_{\decchck}.$
\item Compute $(\{\lab_{i,b}\}_{i \in [2n_B],b \in \{0,1\}},\widetilde{f}_{\inpcor}) \gets \Garble(1^\secplev,f_{\inpcor}[E_0,U_\rerand])$.
\item For each $i \in [2n_B], b \in \{0,1\}$, compute $(\pk_{i,b},\sk_{i,b}) \coloneqq \qmfhe.\Gen(1^\secplev;r_{i,b})$ and $\ct_{i,b} \gets \qmfhe.\CEnc(\pk_{i,b},\lab_{i,b})$.
\item Compute $\cmt \coloneqq \Com.\Enc(\Com.\crs,(x_\out,z_\out);s)$.
\end{enumerate}

\textbf{Party B Output:} (1) A unitary $U_{\decchckrerand}$ to be applied to $s$ qubits, partitioned as registers $(\gray{A},\gray{B},\gray{\Trap_B},\gray{Z_A},\gray{\Trap_A},\gray{T_A})$, (2) a classical garbled circuit along with encryptions of its labels $\{\pk_{i,b},\ct_{i,b}\}_{i \in [2n_B],b \in \{0,1\}},\widetilde{f}_{\inpcor}$, (3) a QGC $(D_0,\widetilde{g}_1,\dots,\widetilde{g}_d)$ to be applied to registers $(\gray{A},\gray{B},\gray{Z_\inp},\gray{\Trap_A},\gray{T_\inp})$, and (4) a commitment $\cmt$.
}

\clearpage

\protocol
{\proref{fig:two-online}: Two-party quantum computation with two online rounds}
{Two-party quantum computation with two online rounds.}
{fig:two-online}
{
\textbf{Common Information:} Security parameter $\secp$, and $\CM$ circuit $Q$ to be computed with $n_A + n_B$ input qubits, $m_A + m_B$ output qubits, $n_Z$ auxiliary $\Zstate$ states, and $n_T$ auxiliary $\Tstate$ states. Let $s = n_A + (n_B + \secp) + (2n_Z + \secp) + (n_T + 1)\secp$. Let $\secplev = \max\{\secp,(2n_B)^{c_\mathsf{lev}}\}$.\\

\textbf{Party $A$ input:} $\bx_A$\\
\textbf{Party $B$ input:} $\bx_B$\\

\underline{\textbf{The Protocol:}}

\textbf{Setup.} Run classsical 2PC setup: $\twopc.\crs \gets \twopc.\gen(1^\secplev),\Com.\crs \gets \Com.\Gen(1^\secplev)$.\\

\textbf{Round 0 (pre-processing).}

\emph{Party $B$:}
\begin{enumerate}
    \item Prepare $n_B$ EPR pairs $\left\{\left(\be_1^{(i)},\be_2^{(i)}\right)\right\}_{i \in [n_B]}$. Let $\be_1$ denote $(\be_1^{(i)})_{i \in [n_B]}$ and $\be_2$ denote $(\be_2^{(i)})_{i \in [n_B]}$. 
    \item Sample $C_B \gets \mathscr{C}_{n_B + \secp}$ and compute $\bm_{B,1} \coloneqq C_B(\be_1, \Zstate^{\secp})$.
    \item Compute $(m_{B,1},\st) \gets \twopc_1(1^\secplev,\cG[Q,\Com.\crs],\twopc.\crs,C_B)$.
    \item Send to Party $A$: $(m_{B,1},\bm_{B,1})$.
\end{enumerate}

\textbf{Round 1.}

\emph{Party $A$:}
\begin{enumerate}
    \item Sample the following:
    \begin{itemize}
    \item a random Clifford $C_A \gets \mathscr{C}_s$,
    \item a random Clifford $C_\out \gets \mathscr{C}_{m_A + \secp}$,
    \item $4n_B$ random length-$\secplev$ bitstrings $\{r_{i,b}\}_{i \in [2n_B], b \in \{0,1\}}$,
    \item one random length-$\secplev$ bitstring $s$,
    \item two random length-$m_B$ bitstrings $x_\out,z_\out$.
    \end{itemize}
    \item Compute $\bm_{A,2} \coloneqq C_A(\bx_A,\bm_{B,1},\Zstate^{2n_Z},\Zstate^\secp,\Tstate^{(n_T+1)\secp})$.
    \item Compute 
    \[m_{A,2} \gets \twopc_2(1^\secplev,\cG[Q,\Com.\crs],\twopc.\crs,m_{B,1},(C_A,C_\out,\{r_{i,b}\}_{i,b},x_\out,z_\out,s)).\]
    \item Send to Party $B$: $(m_{A,2},\bm_{A,2})$.
\end{enumerate}

\emph{Party $B$:}
\begin{enumerate}
    \item Perform Bell measurements on each pair of corresponding qubits in $(\bx_B,\be_2)$, obtaining measurement outcomes $(x_\inp,z_\inp)$. 
    \item Send to Party $A$: $(x_\inp,z_\inp)$.
\end{enumerate}
}
\clearpage

\protocol
{\proref{fig:two-online}: Two-party quantum computation with two online rounds}
{Two-party quantum computation with two online rounds (continued).}
{fig:two-online}
{
\textbf{Round 2.}

\emph{Party $A$:}
\begin{enumerate}
    \item Send to Party $B$: $\left(r^{(x_\inp,z_\inp)},x_\out,z_\out,s\right)$.
\end{enumerate}
\emph{Party $B$:}
\begin{enumerate}
    \item Compute 
    \[\left(\begin{array}{c}U_{\decchckrerand},\{\pk_{i,b},\ct_{i,b}\}_{i,b},\\\widetilde{f}_{\inpcor},D_0,\tilde{g}_1,\dots,\tilde{g}_d,\cmt\end{array}\right) \gets \twopc_\out(1^\secplev,\st,m_{A,2}).\]
    \item Compute
    \[(\bm_\inp,\bz_\chck,\trap_B,\bt_\chck) \coloneqq U_{\decchckrerand}(\bm_{A,2}),\]
    where 
    \begin{itemize}
        \item $\bm_\inp$ is on registers $(\gray{A},\gray{B},\gray{Z_\inp},\gray{\Trap_A},\gray{T_\inp})$,
        \item $(\bz_\chck,\trap_B,\bt_\chck)$ is on registers $(\gray{Z_\chck},\gray{\Trap_B},\gray{T_\chck})$.
    \end{itemize}
    \item Measure $(\bz_\chck,\trap_B)$ in the standard basis and abort if any measurement is not zero.
    \item Measure each qubit of $\bt_\chck$ in the $T$-basis and abort if any measurement is not zero.
    \item Compute a ciphertext $\qmfhe.\Enc(\pk^{(x_\inp,z_\inp)},U_{\rerandenc})$ via homomorphic evaluation, where $U_{\rerandenc} \gets \GEval(\widetilde{f}_{\inpcor},\lab^{(x_\inp,z_\inp)})$.
    \item Compute a ciphertext $\qmfhe.\Enc(\pk^{(x_\inp,z_\inp)},(\widehat{\by}_A,\overline{\by}_B))$ via homomorphic evaluation, where $(\widehat{\by}_A,\overline{\by}_B) \gets \QGEval((D_0,\tilde{g}_1,\dots,\tilde{g}_d),U_{\rerandenc}(\bm_\inp))$.
    \item Apply $\qmfhe.\Rerand$ to the encryption of $\widehat{\by}_A$ and send the result $\qmfhe.\Enc(\pk^{(x_\inp,z_\inp)},\widehat{\by}_A)$.
\end{enumerate}
    
\textbf{Output Reconstruction.}
\begin{itemize}
\item \emph{Party $A$}: Use $\sk^{(x_\inp,z_\inp)}$ to decrypt  $\qmfhe.\Enc(\pk^{(x_\inp,z_\inp)},\widehat{\by}_A)$. If decryption fails, then abort. Compute $(\by_A,\trap_A) \coloneqq C_\out^\dagger(\widehat{\by}_A)$, where $\by_A$ consists of $m_A$ qubits and $\trap_A$ consists of $\secp$ qubits. Measure each qubit of $\trap_A$ in the standard basis and abort if any measurement is not zero. Otherwise, output $\by_A$.
\item \emph{Party $B$}: Use $r^{(x_\inp,z_\inp)}$ to generate $\pk^{(x_\inp,z_\inp)}, \sk^{(x_\inp,z_\inp)}$ and check that these public keys match the public keys obtained from the output of $\twopc$ in Round 2. If not, then abort. Use $\sk^{(x_\inp,z_\inp)}$ to decrypt $\qmfhe.\Enc(\pk^{(x_\inp,z_\inp)},\overline{\by}_B)$. If $\Com.\Ver(1^\secplev,\Com.\crs,\cmt,(x_\out,z_\out),s) = 1$, then compute and output $\by_B \coloneqq X^{x_\out}Z^{z_\out}\overline{\by}_B$, and otherwise abort.
\end{itemize}
}
\clearpage

\ifsubmission
\else
\subsection{Security}

\begin{theorem}
Assuming post-quantum maliciously-secure two-message oblivious transfer and (levelled) multi-key quantum fully homomorphic encryption with sub-exponential security, there exists maliciously-secure three-round two-party quantum computation. Both of the above assumptions are known from the sub-exponential hardness of QLWE.
\end{theorem}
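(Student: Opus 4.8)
The goal is to show that the protocol of \proref{fig:two-online} satisfies \cref{def:mpqc}, handling a malicious $A$ and a malicious $B$ separately and reusing as much of the analysis of \proref{fig:three-message} as possible. All building-block primitives (classical $\twopc$, the $\CM$-garbling scheme, $\qmfhe$, the equivocal commitment $\Com$, and the classical garbling scheme) are run at the inflated security parameter $\secplev = \max\{\secp,(2n_B)^{c_\mathsf{lev}}\}$; this is precisely what makes the hybrid argument go through even though a rushing $B$ chooses the teleportation errors $(x_\inp,z_\inp)$ — which index the $\qmfhe$ keys $\pk^{(x_\inp,z_\inp)}$ and the active labels of $\widetilde{f}_{\inpcor}$ — \emph{after} seeing $A$'s round-$1$ message. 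Whenever a hybrid step must fix $(x_\inp,z_\inp)$ before producing the object it switches, we guess it (losing a $2^{-2n_B}$ factor) and invoke sub-exponential security of the relevant primitive; since $c_\mathsf{lev}>1/\epsilon$ we have $2^{-\secplev^{\epsilon}} \le 2^{-(2n_B)^{c_\mathsf{lev}\epsilon}} \ll 2^{-2n_B}$, so the loss is absorbed. All the required primitives follow from sub-exponentially hard QLWE, as discussed following \cref{def:2pc} and in \cref{sec: qmfhe}, \cref{sec: equivocal commitments}.

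\textbf{Malicious $A$.} This is the easier direction and essentially mirrors the malicious-$A$ proof for \proref{fig:three-message}. The simulator plays $B$: it runs $\twopc.\Sim_A$ to generate the crs and extract $A$'s classical $\twopc$-input $(C_A,C_\out,\{r_{i,b}\},x_\out,z_\out,s)$ (aborting on $\bot$), sends a dummy round-$0$ message $C_B(\Zstate^{n_B},\Zstate^\secp)$ and a uniformly random $(x_\inp,z_\inp)$ (identically distributed to an honest $B$'s Bell outcomes), then applies $U_{\decchck}$ (sampled as in $\cG[Q,\Com.\crs]$ from $C_A,C_B$) to $A$'s quantum message, performs the $\Zstate$- and $\Tstate$-basis checks, decodes $\bx_A'$ and $\trap_A$, forwards $\bx_A'$ to $\cI[\bx_B]$, receives $\by_A$, and — knowing every $r_{i,b}$, hence $\pk^{(x_\inp,z_\inp)}$ — returns $\qmfhe.\Enc(\pk^{(x_\inp,z_\inp)},C_\out(\by_A,\trap_A))$, sending $\mathsf{ok}$ or $\abort$ to $\cI[\bx_B]$ according to the checks. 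The hybrid sequence is: (i) simulate $\twopc$ ($\twopc$ security against $A$); (ii) replace $\QGEval$ by direct evaluation of $Q_\dst[C_\out,x_\out,z_\out]$ and use $\qmfhe$ homomorphism and $\qmfhe.\Rerand$ to remove circuit-privacy leakage (\cref{def:qmfhe}); (iii) replace $B$'s first quantum message and the decoded $\bx_B'$ by $C_B(\be_1,\Zstate^\secp)$ and $\bx_B$ using perfect hiding and statistical authentication of the Clifford code; (iv) use \cref{lemma:Ztest} and \cref{lemma:distillation} to argue that, conditioned on the check registers measuring $0$, direct evaluation of $Q_\dst$ is statistically close to running $Q(\bx_A',\bx_B,\Zstate^{n_Z},\Tstate^{n_T})$, i.e.\ to a query to $\cI[\bx_B]$; (v) switch the garbled $f_{\inpcor}$ and its encrypted labels — essentially free against $A$, who only ever holds secret keys it sampled itself. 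Reusability of the NISC special case (one-sided output) carries over verbatim.

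\textbf{Malicious $B$.} This is the heart of the proof and is where the ``simulation via teleportation'' idea of \cref{sec: two-round protocol tech ovw} is made precise. The simulator plays $A$: it runs $\twopc.\Sim_B$ for the crs, extracts $C_B$ from $B$'s round-$0$ classical message (abort on $\bot$), decodes $(\,\cdot\,,\trap_B)\coloneqq C_B^\dagger(\bm_{B,1})$, and keeps the received register as a teleportation channel for later input extraction. In round $1$ it (a) prepares fresh EPR pairs $(\be_1',\be_2')$; (b) runs $\QGSim$ on the placeholder output $(C_\out(\Zstate^{m_A},\Zstate^\secp),\be_1')$ to obtain $(\widetilde{\bm}_\inp,D_0,\widetilde g_1,\dots,\widetilde g_d)$; (c) samples uniform Cliffords in place of $U_{\decchckrerand}$ and of the $\widetilde{f}_{\inpcor}$-output $U_{\rerandenc}$, and forms $\bm_{A,2}$ so that $B$'s prescribed computation reconstructs exactly $\widetilde{\bm}_\inp$; (d) simulates $\widetilde f_{\inpcor}$ and encrypts the \emph{same} simulated label under $\pk_{i,0}$ and $\pk_{i,1}$ (so these ciphertexts are independent of $(x_\inp,z_\inp)$); and (e) uses $\Com.\Sim.\Gen$ for $\cmt$; it then feeds this simulated $\twopc$-output to $\twopc.\Sim_B^{(3)}$ to get $A$'s round-$1$ message. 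On receiving $(x_\inp,z_\inp)$ it applies $X^{x_\inp}Z^{z_\inp}$ to the teleportation channel to recover $\bx_B'$, queries $\cI[\bx_A]$ for $\by_B$, Bell-measures $(\by_B,\be_2')$ to obtain $(x_\out,z_\out)$, opens $\cmt$ via $\Com.\Sim.\Open$, sends $(r^{(x_\inp,z_\inp)},x_\out,z_\out,s)$, and finally forwards $\mathsf{ok}$/$\abort$ to $\cI[\bx_A]$ according to its (simulated) $\trap_A$ check. The hybrid chain from $\Real$ to $\Ideal$ is: (1) simulate $\twopc$ ($\twopc$ security against $B$); (2) resample $U_{\decchckrerand}$ (and $\bm_{A,2}$) as a uniform Clifford — \emph{perfect}, by \cref{lem: clifford}, since $U_\rerand$ (equivalently $C_A$) is a fresh uniform Clifford independent of $B$'s view; (3) switch the inactive label-ciphertexts $\ct_{i,1-(x_\inp,z_\inp)_i}$ to encrypt the active label and then both to a simulated label, via $\qmfhe$ semantic security plus complexity leveraging over $(x_\inp,z_\inp)$; (4) replace $\mathsf{Garble}(f_{\inpcor})$ by $\GSim$ on $U_{\rerandenc}=E_0(\bbI\otimes X^{x_\inp}Z^{z_\inp}\otimes\bbI)U_\rerand^\dagger$ (garbled-circuit security, again with leveraging), then replace this value by a uniform Clifford (\cref{lem: clifford} once more); (5) replace the real garbled-circuit input $E_0(\bx_A,\bx_B',\ldots)$ and its downstream evaluation by $\QGSim$'s output on $Q_\dst[C_\out,x_\out,z_\out](\bx_A,\bx_B',\ldots)$, using \cref{thm: garbling}; (6) replace $\cmt$ and its opening by the equivocal simulator; (7) replace ``sample $x_\out,z_\out$; compute $Q_\dst$; output'' by ``run $\QGSim$ on a $(\Zstate,\be_1')$ placeholder; query $\cI[\bx_A]$; teleport $\by_B$ into $\be_1'$; set $x_\out,z_\out$ to the Bell outcomes'', which is statistically faithful because the pre-programmed $X^{x_\out}Z^{z_\out}$ inside $Q_\dst[C_\out,x_\out,z_\out]$ exactly cancels the teleportation correction on $B$'s output register, the Clifford code hides $\by_A$ (justifying the $C_\out(\Zstate^{m_A},\Zstate^\secp)$ placeholder), and $A$'s $\trap_A$ check is what determines $\mathsf{ok}/\abort$. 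Chaining yields $\Ideal \approx_c \Real$, completing the proof.

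\textbf{Main obstacle.} The crux is step (7) together with the round ordering it must obey: the simulator is forced to commit in round $1$ to a (simulated) quantum garbled circuit, the garbled $f_{\inpcor}$, the encrypted labels, and the commitment $\cmt$ — all \emph{before} $B$'s teleportation errors $(x_\inp,z_\inp)$ are revealed and hence before it can extract $\bx_B'$ or learn $\by_B$. Making the post-hoc ``patch'' sound requires simulating the quantum garbled circuit on a teleportation placeholder and later injecting $\by_B$ by quantum teleportation, which only type-checks because $Q_\dst[C_\out,x_\out,z_\out]$ was defined to apply the very Pauli $X^{x_\out}Z^{z_\out}$ that the equivocal commitment lets the simulator choose after the fact; and it requires pushing the \emph{adaptive} choice of $(x_\inp,z_\inp)$ through the $\qmfhe$- and garbled-circuit-security reductions, which is exactly why those primitives must be sub-exponentially secure and are instantiated at parameter $\secplev$. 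Everything else is a careful but routine adaptation of the \proref{fig:three-message} analysis and the lemmas of \cref{subsec:2pc} through \cref{sec: equivocal commitments}.
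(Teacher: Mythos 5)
Your proposal follows essentially the same route as the paper: the same simulators (dummy Clifford-encoded EPR halves and uniform $(x_\inp,z_\inp)$ against a malicious $A$; against a malicious $B$, extraction of $C_B$, a $\QGSim$-simulated garbled circuit run on a placeholder $(C_\out(\Zstate^{m_A+\secp}),\be_1')$, simulated $\widetilde{f}_{\inpcor}$ with both label-ciphertexts encrypting the simulated label, an equivocal $\cmt$, and later teleportation of $\by_B$ into $\be_1'$ with the Bell outcomes reported as $(x_\out,z_\out)$), the same group-structure re-randomization of $U_{\decchckrerand}$ and $U_{\rerandenc}$, and the same complexity-leveraging: the paper partitions the real/ideal distributions by $(x_\inp,z_\inp)$ and proves closeness up to $\mu(\secp)/2^{2n_B}$ per value using security at the inflated parameter $\secplev$, which is exactly your ``guess and absorb the $2^{-2n_B}$ loss'' argument. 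Your minor reorderings (e.g.\ switching inactive ciphertexts to the active label rather than to $0$ before simulating the classical garbled circuit) are harmless since the inactive secret keys never appear in the conditioned distribution.

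The one concrete gap is in the malicious-$A$ direction. There, honest $B$'s output is not determined by the quantum checks alone: in the second online round $A$ sends $(r^{(x_\inp,z_\inp)},x_\out,z_\out,s)$, and $B$ only accepts if the regenerated keys match the $\twopc$ output and $\Com.\Ver(\cmt,(x_\out,z_\out),s)=1$, after which $B$ applies $X^{x_\out}Z^{z_\out}$ to $\overline{\by}_B$. A malicious $A$ could therefore try to open $\cmt$ to some $(x'_\out,z'_\out)\neq(x_\out,z_\out)$, which would shift $B$'s output by an unreported Pauli — something the ideal functionality can never reproduce. Your simulator's ``send $\mathsf{ok}$ or $\abort$ according to the checks'' and your hybrid list (i)–(v) omit both the verification of $A$'s second-round opening against the values extracted from the $\twopc$ and the hybrid step that relies on the \emph{statistical binding} of $\Com$ (the paper's $\cH_3\approx_s\cH_4$ in that proof) to conclude that any accepting opening must equal the committed $(x_\out,z_\out)$, so the mask inside $Q_\dst[C_\out,x_\out,z_\out]$ is exactly cancelled and the ideal-world delivery of $\by_B$ is faithful. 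This is a small, repairable omission — you invoke the equivocation side of the commitment correctly against a malicious $B$, but the binding side must be invoked explicitly against a malicious $A$ — and with it added your argument coincides with the paper's proof.
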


\begin{proof}

Let $\Pi$ be the protocol described in~\proref{fig:two-online} computing some quantum circuit $Q$. First, we show that $\Pi$ satisfies \cref{def:mpqc} for any $\cA$ corrupting party $A$.

\paragraph{The simulator.} Consider any QPT adversary $\{\cA_\secp\}_{\secp \in \bbN}$ corrupting party $A$. The simulator $\Sim$ is defined as follows.

\paragraph{$\Sim^{\cI[\bx_B](\cdot)}(\bx_A,\baux_\cA)$:}

\begin{itemize}
    \item Compute $(\crs,\tau,m_{B,1}) \gets \twopc.\Sim_A^{(1)}(1^{\secplev})$, sample $C_B \gets \mathscr{C}_{n_B + \secp}$, compute $\bm_{B,1} \coloneqq C_B(\Zstate^{n_B},\Zstate^{\secp})$, sample $x_\inp,z_\inp \gets \{0,1\}^{n_B}$, and send $(m_{B,1},\bm_{B,1}), (x_\inp,z_\inp)$ to the adversary $\cA_\secp(\bx_A,\baux_\cA)$.
    \item Receive $(m_{A,2},\bm_{A,2})$ from $\cA_\secp$ and compute $\out \gets \twopc.\Sim_A^{(1)}(1^\secp,\tau,m_{A,2})$. If $\out = \bot$ then abort. Otherwise, parse $\out$ as $(C_A,C_\out,\{r_{i,b}\}_{i,b},x_\out,z_\out,s)$.
    \item Using $(C_A,C_B)$, sample $U_{\decchck}$ as in the description of $\cF[Q]$. Compute $$(\bx_A',\bx_B',\bz_\inp,\trap_A,\bt_\inp,\bz_\chck,\trap_B,\bt_\chck) \coloneqq U_{\decchck}(\bm_{A,2}).$$ Measure each qubit of $\bz_\chck$ and $\trap_B$ in the standard basis and each qubit of $\bt_\chck$ in the $T$-basis. If any measurement is non-zero, then abort.
    \item Forward $\bx_A'$ to $\cI[\bx_B](\cdot)$ and receive back $\by_A$. Compute $\widehat{\by}_A \coloneqq C_\out(\by_A,\trap_A)$, and send a re-randomized $\qmfhe.\Enc(\pk^{(x_\inp,z_\inp)},\widehat{\by}_A)$ to $\cA_\secp$, where $\pk^{(x_\inp,z_\inp)}$ are generated from $r^{(x_\inp,z_\inp)}$.
    \item Receive $\left(\{r'_i\}_{i \in [2n_B]},x'_\out,z'_\out,s'\right)$ from $\cA$ and check that:
    \begin{itemize}
        \item For all $i \in [2n_B], \pk_i'$ is equal to $\pk_i$, where $(\pk_i',\sk_i') \coloneqq \qmfhe.\Gen(1^{\secplev};r_i')$ and $(\pk_i,\sk_i) \coloneqq \qmfhe.\Gen(1^{\secplev};r_{i,(x_\inp,z_\inp)_i})$.
        \item $\Com.\Ver(1^{\secplev},\Com.\crs,\cmt,(x'_\out,z'_\out),s') = 1$, where $\cmt \coloneqq \Com.\Enc(1^{\secplev},\Com.\crs,(x_\out,z_\out);s)$.
    \end{itemize}
    If the checks pass send $\mathsf{ok}$ to $\cI[\bx_B]$ and otherwise send $\abort$.
\end{itemize}

We consider a sequence of hybrid distributions, where $\cH_0$ is $\Real_{\Pi,\Q}(\cA_\secp,\bx_A,\bx_B,\baux_\cA)$, i.e. the real interaction between $\cA_\secp(\bx_A,\baux_\cA)$ and an honest party $B(1^\secp,\bx_B)$. In each hybrid, we describe the differences from the previous hybrid.

\begin{itemize}
    \item $\cH_1$: Simulate $\twopc$ as described in $\Sim$, using $\twopc.\Sim_A^{(1)}$ to compute $m_{B,1}$ and $\twopc.\Sim_A^{(2)}$ to extract an input $(C_A,C_\out,\{r_{i,b}\}_{i,b},x_\out,z_\out,s)$ (or abort). Use $(C_A,C_\out,\{r_{i,b}\}_{i,b},x_\out,z_\out,s)$ to sample an output $(U_{\decchckrerand},D_0,\widetilde{g}_1,\dots,\widetilde{g}_d)$ of the classical functionality. Use this output to run party $B$'s honest Round 2 algorithm.
    \item $\cH_2$: In this hybrid, we change how $B$'s second round message $\qmfhe.\Enc(\pk^{(x_\inp,z_\inp)},\widehat{\by}_A)$ is sampled. In particular, rather than evaluating the classical garbled circuit and quantum garbled circuit under $\qmfhe$, we will directly evaluate $Q_\dst[C_\out,x_\out,z_\out]$ on the input. In more detail, given $\bm_{A,2}$ returned by $\cA_\secp$, $(C_A,C_\out,\{r_{i,b}\}_{i,b},x_\out,z_\out,s)$ extracted from $\cA_\secp$, and $C_B$ sampled in Message 0, $\widehat{\by}_A$ is sampled as follows. Sample $U_{\decchck}$ as in Step 1 of $\cF[Q]$. Compute $$(\bx_A',\overline{\bx}_B',\bz_\inp,\trap_A,\bt_\inp,\bz_\chck,\trap_B,\bt_\chck) \coloneqq U_{\decchck}(\bm_{A,2})$$ and carry out the checks on $\bz_\chck,\trap_B,\bt_\chck$ as described in Steps 3 and 4 of~\proref{fig:two-online}, aborting if needed. Then, set $\bx_B' \coloneqq X^{x_\inp}Z^{z_\inp}\overline{\bx}_B'$ and compute $$(\widehat{\by}_A,\overline{\by}_B) \gets Q_\dst[C_\out,x_\out,z_\out](\bx_A',\bx_B',\bz_\inp,\trap_A,\bt_\inp)$$ and return a re-randomized $\qmfhe.\Enc(\pk^{(x_\inp,z_\inp)},\widehat{\by}_A)$ to $\cA_\secp$.
    \item $\cH_3$: Compute $\bm_{B,1}$ as $C_B(\Zstate^{n_B},\Zstate^{\secp})$, and sample $x_\inp,z_\inp \gets \{0,1\}^{n_B}$ rather than computing them based on Bell measurement outcomes. Furthermore, substitute $\bx_B$ for $\bx_B'$ before applying $Q_\dst[C_\out,x_\out,z_\out]$ to the registers described above in $\cH_2$.
    \item $\cH_4$: Do not compute $\overline{\by}_B$ followed by $\by_B \coloneqq X^{x_\inp}Z^{z_\inp}\overline{\by}_B$ (in party $B$'s reconstruction). Rather, compute $$(\widehat{\by}_A,\by_B) \gets Q_\dst[C_\out,0^{m_B},0^{m_B}](\bx_A',\bx_B,\bz_\inp,\trap_A,\bt_\inp).$$ 
    \item $\cH_5$: Instead of directly computing $Q_\dst[C_\out,0^{m_B},0^{m_B}]$, query the ideal functionality with $\bx_A'$, receive $\by_A$, and send $\qmfhe.\Enc(\pk^{(x_\inp,z_\inp)},C_\out(\by_A,\trap_A))$ to $\cA_\secp$. After receiving $\left(\{r'_i\}_{i \in [2n_B]},x'_\out,z'_\out,s'\right)$ from $\cA$, carry out the checks described in $\Sim$ and send $\mathsf{ok}$ or $\abort$ to $\cI[\bx_B]$. This hybrid is $\Ideal_{\Pi,\Q,A}(\Sim,\brho_\secp,\bx_A,\bx_B,\baux)$.
\end{itemize}

\noindent We show indistinguishability between each pair of hybrids.

\begin{itemize}
    \item $\cH_0 \approx_c \cH_1$: This follows directly from the security against corrupted $A$ of $\twopc$.
    \item $\cH_1 \approx_s \cH_2$: This follows directly from the statistical correctness of the classical garbled circuit, the statistical correctness of the quantum garbled circuit, and the statistical ciphrerext re-randomization of $\qmfhe$.
    \item $\cH_2 \approx_s \cH_3$: First, by the correctness of teleportation, and by the security of the Clifford authentication code, conditioned on all measurements of qubits in $\trap_B$ returning 0, we have that $\bx_B' \approx_s \bx_B$. Next, switching $\be_{1}$ to $\Zstate^{n_B}$ in $B$'s first message is perfectly indistinguishable due to the perfect hiding of the Clifford authentication code. 
    \item $\cH_3 \approx_s \cH_4$: This follows from the statistical binding of $\Com$.
    \item $\cH_4 \approx_s \cH_5$: First, by~\cref{lemma:linear-map}, conditioned on all measurements of qubits in $\bz_\chck$ returning 0, we have that $\bz_\inp \approx_s \Zstate^{n_Z}$.
    
    Next, the above observation, along with~\cref{lemma:distillation}, implies that, conditioned on all $T$-basis measurements of qubits in $\bt_\chck$ returning 0, it holds that the output of $Q_\dst[C_\out](\bx_A',\bx_B,\bz_\inp,\trap_A,\bt_\inp)$ is statistically close to the result of computing $(\by_A,\by_B) \gets Q(\bx_A',\bx_B,\Zstate^{n_Z},\Tstate^{n_T})$ and returning $(C_\out(\by_A,\trap_A),\by_B)$. This is precisely what is being computed in $\cH_4$.

\end{itemize}

\end{proof}

\fi


Next, we show that $\Pi$ satisfies \cref{def:mpqc} for any $\cA$ corrupting party $B$.

\paragraph{The simulator.} Consider any QPT adversary $\{\cA_\secp\}_{\secp \in \bbN}$ corrupting party $B$. The simulator $\Sim$ is defined as follows.

\paragraph{$\Sim^{\cI[\bx_A](\cdot)}(\bx_B,\baux_\cA)$:}

\begin{itemize}
    \item Simulate CRS and extract from adversary's round 0 message:
    \begin{itemize}
        \item Compute $(\twopc.\crs,\twopc.\tau) \gets \twopc.\Sim_B^{(1)}(1^\secplev)$, $(\Com.\crs,\Sim.\cmt,\Com.\tau) \gets \Com.\Sim.\Gen(1^\secplev)$, and send $(\twopc.\crs,\Com.\crs)$ to  $\cA_\secp(\bx_B,\baux_\cA)$.
        \item Receive $(m_{B,1},\bm_{B,1})$ and then compute $\inp \gets \twopc.\Sim_B^{(2)}(1^\secplev,\twopc.\tau,m_1).$ If $\inp = \bot$, then abort, if not parse $\inp$ as $C_B$ and compute $(\overline{\bx}_B,\trap_B) \coloneqq C_B^\dagger(\bm_{B,1})$.
    \end{itemize}
    \item Compute quantum part of simulated round 1 message:
    \begin{itemize}
        \item Sample $C_\out \gets \mathscr{C}_{m_A + \secp}$ and compute $\widehat{\by}'_A \coloneqq C_\out(\Zstate^{m_A + \secp})$. 
        \item Prepare $m_B$ EPR pairs $\left\{\left(\be_{\Sim,1}^{(i)},\be_{\Sim,1}^{(i)}\right)\right\}_{i \in [m_B]}$, and let $$\be_{\Sim,1} \coloneqq \left(\be_{\Sim,1}^{(1)},\dots,\be_{\Sim,1}^{(m_B)}\right), \be_{\Sim,2} \coloneqq \left(\be_{\Sim,2}^{(1)},\dots,\be_{\Sim,1}^{(m_B)}\right).$$
        \item Compute $(\widetilde{\bm}_\inp,D_0,\widetilde{g}_1,\dots,\widetilde{g}_d) \gets \QGSim\left(1^\secplev,\{n_i,k_i\}_{i \in [d]}, \left(\widehat{\by}'_A,\be_{\Sim,1}\right)\right)$, where $\widetilde{\bm}_\inp$ is the simulated quantum garbled input \ifsubmission\else on registers $(\gray{A},\gray{B},\gray{Z_\inp},\gray{\Trap_A},\gray{T_\inp})$\fi, and $\{n_i,k_i\}_{i \in [d]}$ are the parameters of $\CM$ circuit $Q_\dst[\cdot,\cdot,\cdot]$.
        \item Sample $U_{\rerandenc} \gets \mathscr{C}_{n_A+n_B+n_Z+\secp+n_T\secp}$.
        \item Sample $U_{\decchckrerand} \gets \mathscr{C}_S$ and compute $$\bm_{A,2} \coloneqq U_{\decchckrerand}^\dagger(U_{\rerandenc}^\dagger(\widetilde{\bm}_\inp),\Zstate^{n_Z},\trap_B,\Tstate^{\secp}).$$
    \end{itemize}
    \item Compute classical part of simulated round 1 message:
    \begin{itemize}
        \item Compute $(\{\widetilde{\lab}_i\}_{i \in [2n_B]},\widetilde{f}_{\inpcor}) \gets \GSim(1^\secplev,1^{2n_B},1^{|f_{\inpcor}|},U_{\rerandenc})$.
        \item Sample $\{r_{i,b}\}_{i \in [2n_B], b \in \{0,1\}} \gets (\{0,1\}^\secplev)^{4n_B}$.
        \item For $i \in [2n_B], b \in \{0,1\}$, compute $(\pk_{i,b},\sk_{i,b}) \coloneqq \qmfhe.\Gen(1^\secplev;r_{i,b})$ and $\ct_{i,b} \gets \qmfhe.\CEnc(\pk_{i,b},\widetilde{\lab}_i)$.
        \item Compute 
        \[m_{A,2} \gets \twopc.\Sim_B^{(3)}\left(1^\secplev,\twopc.\tau,\left(\begin{array}{c}U_{\decchckrerand},\{\pk_{i,b},\ct_{i,b}\}_{i,b},\\\widetilde{f}_{\inpcor},D_0,\widetilde{g}_1,\dots,\widetilde{g}_d,\Sim.\cmt\end{array}\right)\right).\]
    \end{itemize}
    \item Send round 1 message and extract adversary's input:
    \begin{itemize}
        \item Send $(m_{A,2},\bm_{A,2})$ to $\cA_\secp$.
        \item Receive $(x_\inp,z_\inp)$ from $\cA_\secp$ and compute $\bx_B' \coloneqq X^{x_\inp}Z^{z_\inp}\overline{\bx}_B$.
    \end{itemize}
    \item Query ideal functionality and send simulated round 2 message:
    \begin{itemize}
        \item Forward $\bx_B'$ to $\cI[\bx_A](\cdot)$ and receive back $\by_B$.
        \item Perform Bell measurements on each pair of corresponding qubits in $(\by_B,\be_{\Sim,2})$ and let $x_\out,z_\out \in \{0,1\}^{m_B}$ be the measurement outcomes.
        \item Compute $s \gets \Com.\Sim.\Open(1^\secplev,\Com.\tau,(x_\out,z_\out))$.
        \item Send $\left(r^{(x_\inp,z_\inp)},x_\out,z_\out,s
        \right)$ to $\cA_\secp$.
    \end{itemize}
    \item Check for abort:
    \begin{itemize}
        \item Receive $\qmfhe.\Dec(\sk^{(x_\inp,z_\inp)},\widehat{\by}_A)$ from $\cA_\secp$ and use $\sk^{(x_\inp,z_\inp)}$ to decrypt the ciphertext. If decryption fails, then abort.
        \item Measure the last $\secp$ qubits of $C_\out^\dagger(\widehat{\by}_A)$ in the standard basis. If any measurement is not zero, send $\abort$ to the ideal functionality and otherwise send $\mathsf{continue}$.
        \item Output the output of $\cA_\secp$.
    \end{itemize}
\end{itemize}

\paragraph{Notation.} For any adversary $\{\cA_\secp\}_{\secp \in \bbN}$ and set of inputs $(\bx_A,\bx_B,\baux_\cA)$, we partition the distributions $\Real_{\Pi,\Q}(\cA_\secp,\bx_A,\bx_B,\baux_\cA)$ and $\Ideal_{\Pi,\Q}(\Sim,\bx_A,\bx_B,\baux_\cA)$ by the first round message $(x_\inp,z_\inp)$ sent by the adversary. That is, we define the distribution $\Real_{\Pi,\Q}^{(x_\inp,z_\inp)}(\cA_\secp,\bx_A,\bx_B,\baux_\cA)$ to be the distribution $\Real_{\Pi,\Q}(\cA_\secp,\bx_A,\bx_B,\baux_\cA)$ except that the output of the distribution is replaced with $\bot$ if the adversary did \emph{not} send $(x_\inp,z_\inp)$ in round 1. We define $\Ideal_{\Pi,\Q}^{(x_\inp,z_\inp)}(\Sim,\bx_A,\bx_B,\baux_\cA)$ analogously.


We now prove the following lemma, which is the main part of the proof of security against malicious $B$. For notational convenience, we drop the indexing of inputs and teleportation errors by $\secp$.

\begin{lemma}
\label{lemma:distribution-bot}

There exists a negligible function $\mu$ such that for any QPT $\cA = \{\cA_\secp\}_{\secp \in \bbN}$, QPT distinguisher $\cD = \{\cD_\secp\}_{\secp \in \bbN}$, inputs $(\bx_A,\bx_B,\baux_\cA,\baux_\cD)$, and teleportation errors $x_\inp,z_\inp$,

\begin{align*}
&\bigg|\Pr\left[\cD_\secp\left(\baux_\cD,\Real_{\Pi,\Q}^{(x_\inp,z_\inp)}\left(\cA_\secp,\bx_A,\bx_B,\baux_\cA\right)\right) = 1\right]\\ &- \Pr\left[\cD_\secp\left(\baux_\cD,\Ideal_{\Pi,\Q}^{(x_\inp,z_\inp)}\left(\Sim_\secp,\bx_A,\bx_B,\baux_\cA\right) \right) = 1  \right] \bigg| \leq \frac{\mu(\secp)}{2^{2n_B}}.
\end{align*}


\end{lemma}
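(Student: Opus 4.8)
The plan is to prove the lemma via a hybrid argument that mirrors the sequence of hybrids used in the malicious-$A$ case, but now tracking how each modification affects only the branch of the real/ideal distributions where the adversary sends the fixed teleportation errors $(x_\inp,z_\inp)$. The key observation justifying the $2^{-2n_B}$ factor is that the simulator $\Sim$ commits, \emph{before} seeing $\cA$'s round-1 reply, to a single guess of the teleportation errors implicitly through which $\qmfhe$ public keys $\pk^{(x_\inp,z_\inp)}$ it prepares honestly versus which it leaves uncorrelated; equivalently, conditioned on the event that $\cA$ outputs exactly $(x_\inp,z_\inp)$, the simulator's view matches the real execution, and this event-restriction is what contributes the $1/2^{2n_B}$ normalization. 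So the first step is to set up the restricted distributions $\Real_{\Pi,\Q}^{(x_\inp,z_\inp)}$ and $\Ideal_{\Pi,\Q}^{(x_\inp,z_\inp)}$ as in the notation paragraph, and note that it suffices to bound the distinguishing advantage between them by $\mu(\secp)/2^{2n_B}$ uniformly over the choice of $(x_\inp,z_\inp)$.

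\textbf{Hybrid sequence.} I would then run essentially the same five hybrids as in the malicious-$B$ proof of \proref{fig:three-message}, adapted to \proref{fig:two-online}: ($\cH_1$) simulate $\twopc$ using $\twopc.\Sim_B$, extracting $C_B$ from $\cA$'s round-0 message and producing $m_{A,2}$ via $\twopc.\Sim_B^{(3)}$ on a functionality output freshly sampled from $(C_A,C_\out,\{r_{i,b}\},x_\out,z_\out,s)$; ($\cH_2$) replace $C_A$-based sampling of $\bm_{A,2}$ and $U_{\decchckrerand}$ by directly sampling $U_{\decchckrerand}\gets\mathscr{C}_s$ and defining $\bm_{A,2}$ as $U_{\decchckrerand}^\dagger$ applied to $(U_\rerand(E_0(\cdots)),\Zstate^{n_Z},\trap_B,\Tstate^\secp)$ — a perfect switch by the group property of the Clifford group, exactly as in the earlier proof, and similarly move the label-encrypting structure to use $U_{\rerandenc}$; ($\cH_3$) invoke \cref{thm: garbling} to replace the honestly garbled $\CM$ circuit $(E_0,\widetilde{Q}_{\dst})$ with the output of $\QGSim$ run on $Q_\dst[C_\out,x_\out,z_\out]$'s true output $(\widehat{\by}_A,\overline{\by}_B)$, and simultaneously invoke classical garbled-circuit security (Definition~\ref{def:GC}) to replace $\widetilde{f}_{\inpcor}$ and its labels with $\GSim$'s output on the single value $U_{\rerandenc}$, using semantic security of $\qmfhe$ to argue the encryptions $\ct_{i,b}$ of the \emph{unopened} labels can be switched; ($\cH_4$) use the equivocal commitment — $\Com.\Sim.\Gen,\Com.\Sim.\Open$ — to replace $\cmt$ by a simulated commitment opened to $(x_\out,z_\out)$ only after they are determined, and replace the EPR-teleportation-based computation of Bob's output by the ``simulation via teleportation'' step: prepare $\be_{\Sim,1},\be_{\Sim,2}$, run $\QGSim$ on $(\widehat{\by}_A',\be_{\Sim,1})$ with $\widehat{\by}_A'\coloneqq C_\out(\Zstate^{m_A+\secp})$, and later Bell-measure $(\by_B,\be_{\Sim,2})$ to obtain $(x_\out,z_\out)$; ($\cH_5$) route Bob's input $\bx_B'=X^{x_\inp}Z^{z_\inp}\overline{\bx}_B$ through the ideal functionality $\cI[\bx_A]$ instead of computing $Q_\dst$ directly. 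Hybrid $\cH_5$ is $\Ideal_{\Pi,\Q}^{(x_\inp,z_\inp)}(\Sim,\cdot)$ restricted to the branch $(x_\inp,z_\inp)$.

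\textbf{Bounding each step and locating the obstacle.} For the indistinguishability bounds: $\cH_0\approx_c\cH_1$ by security of $\twopc$ against malicious $B$; $\cH_1\equiv\cH_2$ is exact (Clifford group property plus re-randomization bookkeeping); $\cH_2\approx_c\cH_3$ combines quantum garbled circuit security (\cref{thm: garbling}), classical garbled circuit security, and $\qmfhe$ semantic security; $\cH_3\approx_s\cH_4$ uses equivocal commitment hiding, perfect hiding of the Clifford code, correctness of teleportation, and the key lemma \cref{lemma:Ztest} / \cref{lemma:linear-map} / \cref{lemma:distillation} governing the $\Zstate$, trap, and $\Tstate$ checks; $\cH_4\equiv\cH_5$ by definition of $\cI[\bx_A]$. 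Crucially, \emph{every one of these reductions is to a primitive instantiated at security parameter $\secplev = \max\{\secp,(2n_B)^{c_\mathsf{lev}}\}$}, and since $c_\mathsf{lev}>1/\epsilon$ we have $2^{-\secplev^\epsilon}\le 2^{-(2n_B)^{c_\mathsf{lev}\epsilon}} \le \mu'(\secp)\cdot 2^{-2n_B}$ for a negligible $\mu'$; this sub-exponential slack is precisely what lets us absorb the $2^{-2n_B}$ factor while keeping the overall bound $\mu(\secp)/2^{2n_B}$ with $\mu$ negligible. I expect the main obstacle to be step $\cH_3\to\cH_4$: one must argue that, conditioned on surviving the trap/$\Zstate$/$\Tstate$ checks, the state Bob holds after applying $U_{\decchckrerand}$ really is statistically close to the honest garbled input on registers $(\gray{A},\gray{B},\gray{Z_\inp},\gray{\Trap_A},\gray{T_\inp})$ with Bob's \emph{teleported} input in the $\gray{B}$ slot, and that ``teleporting $\by_B$ into $\be_{\Sim,1}$'' via the $X^{x_\out}Z^{z_\out}$ correction baked into $Q_\dst[\cdot,x_\out,z_\out]$ produces exactly the same joint state (including the correlation with whatever $\cA$ keeps) as computing $Q$ directly — this requires carefully commuting the Bell-measurement Paulis through the Clifford garbling layers and the $\qmfhe$ evaluation, and is where the interplay between \cref{lemma:Ztest}, the Clifford authentication code's CPTP-decomposition guarantee, and the equivocation property all have to line up simultaneously. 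Once that single step is established, the rest is routine composition.
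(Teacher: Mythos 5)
Your proposal is essentially the paper's own proof: the same malicious-$B$ hybrid sequence (simulate $\twopc$, the perfect Clifford-group rewriting of $\bm_{A,2}$ and $U_{\decchckrerand}$, switching the unopened $\qmfhe$ ciphertexts and simulating the classical garbled circuit, simulating the quantum garbled circuit, the equivocal-commitment plus EPR-pair/Bell-measurement rewriting of Bob's output, and finally routing $\bx_B'$ through $\cI[\bx_A]$), merely compressed into fewer hybrids, and the $2^{-2n_B}$ factor is handled exactly as in the paper, by observing that every reduction targets a primitive instantiated at $\secplev \geq (2n_B)^{c_{\mathsf{lev}}}$ with $c_{\mathsf{lev}} > 1/\epsilon$, so sub-exponential $2^{-\secplev^{\epsilon}}$-security absorbs the loss. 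Two harmless inaccuracies worth noting: the simulator does not "commit to a guess" of $(x_\inp,z_\inp)$ through its key preparation (it encrypts the single simulated label under both keys of every position; the $2^{-2n_B}$ target exists only because of the union bound in the outer theorem), and \cref{lemma:Ztest}/\cref{lemma:distillation} are not needed here since Alice is honest and her $\Zstate$/$\Tstate$ states are well-formed — the statistical steps instead rest on teleportation correctness, the hiding and authentication of the Clifford code, and equivocation, as you otherwise list.
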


\begin{proof}
First note that by the definition of $\secplev$, a $\cD$ violating the lemma distinguishes with probability at least $\left(\frac{1}{\poly(\secp)}\right)2^{-\secplev^{(1/c_\mathsf{lev})}} \geq \frac{1}{2^{\secplev^{\epsilon}}}.$

Now fix any collection $\cD,\cA,\bx_A,\bx_B,\baux_\cA,\baux_\cD,x_\inp,z_\inp$. We show the indistinguishability via a sequence of hybrids, where $\cH_0$ is the distribution $\Real_{\Pi,\Q}^{(x_\inp,z_\inp)}(\cA_\secp,\bx_A,\bx_B,\baux_\cA)$. In each hybrid, we describe the differences from the previous hybrid.

\begin{itemize}
    \item $\cH_1$: Simulate $\twopc$, using $\twopc.\Sim_B^{(1)}$ to sample $\twopc.\crs$, $\twopc.\Sim_B^{(2)}$ to extract the adversary's input $C_B$, and $\twopc.\Sim_B^{(3)}$ to sample party $A$'s message $m_{A,2}$. Use $C_B$ and freshly sampled $C_A,C_\out,\{r_{i,b}\}_{i,b},x_\out,z_\out,s$ to sample the output of the classical functionality that is given to $\twopc.\Sim_B^{(3)}$.
    \item $\cH_2$: Simulate $\Com$, using $\Com.\Sim.\Gen$ to sample $\Com.\crs$ and the commitment $\Sim.\cmt$. Note that $\Sim.\cmt$ is now used directly in computing the output of $\twopc$, and $s$ is no longer sampled by party $A$. Open the commitment in the second round to $(x_\out,z_\out)$ using $\Com.\Sim.\Open$.
    \item $\cH_3$: In this hybrid, we make a (perfectly indistinguishable) switch in how $\bm_{A,2}$ is computed and how $U_{\decchckrerand}$ (part of the $\twopc$ output) is sampled. Define $(\overline{\bx}'_B,\trap_B) \coloneqq C_B^\dagger(\bm_{B,1})$, where $C_B$ was extracted from $m_{B,1}$. Note that in $\cH_2$, by the definitions of $\cF[Q]$ and $\cG[Q]$,
    
    $$U_{\decchckrerand}(\bm_{A,2}) \coloneqq (U_{\rerand}(\bx_A,\overline{\bx}'_B,\Zstate^{n_Z},\Tstate^{n_T\secp}),\Zstate^{n_Z},\trap_B,\Tstate^{\secp}).$$ 
    
    Moreover, there exists a Clifford unitary $U$ such that $U_{\decchckrerand} = UC_A^\dagger$, where $C_A$ was sampled uniformly at random from $\mathscr{C}_s$. Thus, since the Clifford matrices form a group, an equivalent sampling procedure would be to sample $U_{\decchckrerand} \gets \mathscr{C}_s$ and define $$\bm_{A,2} \coloneqq U_{\decchckrerand}^\dagger(U_{\rerand}(\bx_A,\overline{\bx}_B',\Zstate^{n_Z+\secp},\Tstate^{n_T\secp}),\Zstate^{n_Z},\trap_B,\Tstate^\secp).$$ This is how $\cH_3$ is defined.
    \item $\cH_4^{(1)},\dots,\cH_4^{(2n_B)}$: In $\cH_4^{(i)}$, let $\ct_{i,1-(x_\inp,z_\inp)_i} \gets \qmfhe.\CEnc(\pk_{i,1-(x_\inp,z_\inp)_i},0)$.
    \item $\cH_5$: Simulate the classical garbled circuit. In particular, let $$U_{\rerandenc} \coloneqq E_0\left(\bbI^{n_A} \otimes X^{x_\inp}Z^{z_\inp} \otimes \bbI^{n_Z + \secp + n_T\secp}\right)U_\rerand^\dagger,$$ and compute $(\{\widetilde{\lab}_i\}_{i \in [2n_B]},\widetilde{f}_{\inpcor}) \gets \GSim(1^\secplev,1^{2n_B},1^{|f_{\inpcor}|},U_{\rerandenc})$. Now, each $\ct_{i,(x_\inp,z_\inp)_i}$ be will an encryption of $\widetilde{\lab}_i$.
    \item $\cH_6^{(1)},\dots,\cH_6^{(2n_B)}$: In $\cH_6^{(i)}$, let $\ct_{i,1-(x_\inp,z_\inp)_i} \gets \qmfhe.\CEnc(\pk_{i,1-(x_\inp,z_\inp)_i},\widetilde{\lab}_i)$.
    \item $\cH_7$: In this hybrid, we make another perfectly indistinguishable switch in how $\bm_{A,2}$ is computed. Let $\bx_B' \coloneqq X^{x_\inp}Z^{z_\inp}\overline{\bx}'_B$, and compute $U_{\rerandenc} \coloneqq E_0U_{\rerand}^\dagger$ and $$\bm_{A,2} \coloneqq U_{\decchckrerand}^\dagger(U_{\rerand}(\bx_A,\bx_B',\Zstate^{n_Z+\secp},\Tstate^{n_T\secp}),\Zstate^{n_Z},\trap_B,\Tstate^{\secp}).$$
    \item $\cH_8$: Simulate the quantum garbled circuit. In particular, compute $$(\widehat{\by}_A,\overline{\by}_B) \gets Q_\dst[C_\out,x_\out,z_\out](\bx_A,\bx_B',\Zstate^{n_Z},\Tstate^{n_T\secp}),$$ followed by $$(\widetilde{\bm}_\inp,D_0,\widetilde{g}_1,\dots,\widetilde{g}_d) \gets \QGSim(1^{\secplev},\{n_i,k_i\}_{i \in [d]},(\widehat{\by}_A,\overline{\by}_B)),$$ where $\{n_i,k_i\}_{i \in [d]}$ are the parameters of the $\CM$ circuit $Q_\dst[C_\out,x_\out,z_\out]$. 
    
    Sample $U_{\rerandenc} \gets \mathscr{C}_{n_A+n_B+n_Z+\secp+n_T\secp}$ and compute $$\bm_{A,2} \coloneqq U_{\decchckrerand}^\dagger(U_{\rerandenc}^\dagger(\bx_A,\bx'_B,\Zstate^{n_Z+\secp},\Tstate^{n_T\secp}),\Zstate^{n_Z},\trap_B,\Tstate^{\secp}).$$ 
    
    \item $\cH_{10}$: Note that $Q_\dst[C_\out,x_\out,z_\out](\bx_A,\bx_B',\Zstate^{n_Z+\secp},\Tstate^{n_T\secp})$ may be computed in two stages, where the first outputs $(\by_A,\by_B,\Zstate^\secp,C_\out,x_\out,z_\out)$ and the second outputs $(\widehat{\by}_A,\overline{\by}_B) \coloneqq (C_\out(\by_A,\Zstate^\secp),X^{x_\out}Z^{z_\out}\by_B)$. In this hybrid, we make the following perfectly indistinguishable switch to the second part of this computation. Prepare $m_B$ EPR pairs $\left\{\left(\be_{\Sim,1}^{(i)},\be_{\Sim,1}^{(i)}\right)\right\}_{i \in [m_B]}$, and let $\be_{\Sim,1} \coloneqq \left(\be_{\Sim,1}^{(1)},\dots,\be_{\Sim,1}^{(m_B)}\right)$ and $\be_{\Sim,2} \coloneqq \left(\be_{\Sim,2}^{(1)},\dots,\be_{\Sim,1}^{(m_B)}\right)$. Then set $(\widehat{\by}_A,\overline{\by}_B) = (C_\out(\by_A,\Zstate^\secp),\be_{\Sim,1})$ and let $x_\out,z_\out$ be the result of Bell measurements applied to corresponding pairs of qubits of $(\by_B,\be_{\Sim,2})$. Note that these Bell measurements do not have to be performed until the simulator sends its simulated round 2 message.
    \item $\cH_{11}$: After computing the first stage of $Q_\dst[C_\out,x_\out,z_\out](\bx_A,\bx_B',\Zstate^{n_Z+\secp},\Tstate^{n_T\secp})$, set $\by_A$ aside and re-define the final output to be $(\widehat{\by}'_A,\overline{\by}_B) = (C_\out(\Zstate^{m_A+\secp}),\be_{\Sim,1})$. Now, during $A$'s output reconstruction step, if the check (step 3) passes, output $\by_A$, and otherwise abort.
    \item $\cH_{12}$: Rather than directly computing $\by_A$ from the first stage of \ifsubmission the circuit \else\fi $Q_\dst[C_\out,x_\out,z_\out](\bx_A,\bx_B',\Zstate^{n_Z+\secp},\Tstate^{n_T\secp})$, forward $\bx_B'$ to $\cI[\bx_A](\cdot)$ and receive back $\by_B$, which gives the same distribution as $\cH_{11}$. Now, during $A$'s reconstruction step, if the check passes, send $\mathsf{ok}$ to the ideal functionality, and otherwise send $\abort$. This is $\Ideal_{\Pi,\Q}^{(x_\inp,z_\inp)}(\Sim,\bx_A,\bx_B,\baux_\cA)$.
\end{itemize}

\end{proof}

\begin{theorem}
Let $\Pi$ be the protocol described in~\proref{fig:two-online} computing some quantum circuit $Q$. Then $\Pi$ satisfies \cref{def:mpqc} for any $\cA$ corrupting party $B$.

\end{theorem}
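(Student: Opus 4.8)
The plan is to derive the theorem directly from \cref{lemma:distribution-bot} by a union bound over the (exponentially many) classical first‑round messages the adversary can send. The simulator is the machine $\Sim$ already described above; note that it runs the whole interaction at parameter $\secplev = \max\{\secp,(2n_B)^{c_\mathsf{lev}}\}$, and since $n_B = \poly(\secp)$ and $c_\mathsf{lev}$ is a constant, $\secplev = \poly(\secp)$, so $\Sim$ is QPT as \cref{def:mpqc} demands. Crucially, $\Sim$ only ever \emph{reacts} to the value $(x_\inp,z_\inp)$ that $\cA$ sends in Round~1 — it never enumerates over these values — so the exponential case analysis will live entirely in the security argument, not in the running time of the simulator.

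First I would fix an arbitrary QPT $\cA$, QPT distinguisher $\cD$, and inputs $(\bx_A,\bx_B,\baux_\cA,\baux_\cD)$, and observe that the $2^{2n_B}$ events ``$\cA$ sends $w \coloneqq (x_\inp,z_\inp)$ in Round~1'' partition the sample space of each of the real and ideal experiments. Recall that $\Real_{\Pi,\Q}^{w}$ and $\Ideal_{\Pi,\Q}^{w}$ are defined to coincide with $\Real_{\Pi,\Q}$ and $\Ideal_{\Pi,\Q}$ except that the output is overwritten by a distinguished symbol $\bot$ in every run where $\cA$ does not send exactly $w$. Replacing $\cD$ by the (still QPT, hence still covered by \cref{lemma:distribution-bot}) distinguisher $\cD'$ that outputs $0$ on this distinguished symbol and runs $\cD$ otherwise, one gets, for each of the two experiments $\mathsf{EXP}_{\Pi,\Q} \in \{\Real_{\Pi,\Q}(\cA_\secp,\bx_A,\bx_B,\baux_\cA),\,\Ideal_{\Pi,\Q}(\Sim_\secp,\bx_A,\bx_B,\baux_\cA)\}$ (with $\mathsf{EXP}_{\Pi,\Q}^{w}$ its $w$‑partitioned version):
\[ \Pr\!\left[\cD'\!\left(\baux_\cD,\mathsf{EXP}_{\Pi,\Q}\right) = 1\right] = \sum_{w \in \{0,1\}^{2n_B}} \Pr\!\left[\cD'\!\left(\baux_\cD,\mathsf{EXP}_{\Pi,\Q}^{w}\right) = 1\right], \]
since in any single run exactly one summand carries the genuine transcript and all the others feed $\cD'$ the symbol $\bot$, on which $\cD'$ answers $0$. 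Passing from $\cD$ to $\cD'$ is harmless for the theorem: a distinguisher for $\Real$ versus $\Ideal$ with $\bot$‑handling normalized is no weaker than $\cD$, so it suffices to bound $\cD'$.

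Subtracting the two displayed identities, applying the triangle inequality, and invoking \cref{lemma:distribution-bot} on each of the $2^{2n_B}$ terms, the advantage of $\cD'$ (hence of $\cD$) is at most $2^{2n_B}\cdot \mu(\secp)/2^{2n_B} = \mu(\secp)$ for the single negligible function $\mu$ of \cref{lemma:distribution-bot}, which is independent of $\cA$, $\cD$, and the inputs — exactly the bound \cref{def:mpqc} asks for. The only reason this union bound closes is the $2^{-2n_B}$ slack built into \cref{lemma:distribution-bot}, which comes from complexity leveraging: every primitive is run at $\secplev \ge (2n_B)^{c_\mathsf{lev}}$ with $c_\mathsf{lev} > 1/\epsilon$, so $(2^{-\secplev^{\epsilon}})$‑security bounds each hybrid step by $2^{-(2n_B)^{c_\mathsf{lev}\epsilon}} \le 2^{-2n_B}$ (as $c_\mathsf{lev}\epsilon > 1$), while $\secplev \ge \secp$ keeps $2^{2n_B}\cdot 2^{-\secplev^{\epsilon}}$ negligible in $\secp$.

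So for the theorem \emph{as stated}, the only real obstacle is the bookkeeping above — pinning down the $\bot$‑based decomposition and confirming the simulator is input‑reactive rather than enumerating. The substantive work is \cref{lemma:distribution-bot} itself, whose twelve‑hybrid path from $\Real_{\Pi,\Q}^{w}$ to $\Ideal_{\Pi,\Q}^{w}$ I expect to be the genuine difficulty: its crux steps are (i) the two perfectly‑indistinguishable re‑randomizations that exploit the group structure of the Cliffords to resample $U_{\decchckrerand}$ and then $U_{\rerandenc}$ uniformly, (ii) switching the $\qmfhe$‑encryptions of the off‑path labels to encryptions of $0$ and then simulating the classical garbled circuit $\widetilde{f}_{\inpcor}$, and (iii) simulating the quantum garbled circuit on an EPR half $\be_{\Sim,1}$ in place of $B$'s output, then teleporting the true $\by_B$ (obtained from the ideal functionality) into that register via Bell measurements whose outcomes $(x_\out,z_\out)$ are revealed to $\cA$ through the equivocal commitment opened by $\Com.\Sim.\Open$.
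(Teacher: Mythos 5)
Your proposal is correct and takes essentially the same route as the paper: both reduce security against malicious $B$ to \cref{lemma:distribution-bot}, using the $2^{-2n_B}$ slack from complexity leveraging (via $\secplev \geq (2n_B)^{c_\mathsf{lev}}$) to absorb the sum over all $2^{2n_B}$ possible first-round messages $(x_\inp,z_\inp)$ — the paper merely phrases this as a contradiction/averaging ("some term exceeds $\frac{1}{\poly(\secp)(2^{2n_B}+1)}$") instead of your direct union bound, which is a cosmetic difference. The one detail to tighten is that your partition identity assumes every run sends some $(x_\inp,z_\inp)$, whereas the paper separately accounts for the event that $\cA$ fails to report teleportation errors; since that abort branch is trivially simulated, its contribution cancels and the argument goes through.
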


\begin{proof}

Assume towards contradiction the existence of a QPT $\cD = \{\cD_\secp\}_{\secp \in \bbN}$, a QPT $\cA = \{\cA_\secp\}_{\secp \in \bbN}$, and $(\bx_A,\bx_B,\baux_\cA,\baux_\cD)$ such that 
\begin{align*}
&\bigg|\Pr\left[\cD_\secp\left(\baux_\cD,\Real_{\Pi,\Q}\left(\cA_\secp,\bx_A,\bx_B,\baux_\cA\right)\right) = 1\right]\\ &- \Pr\left[\cD_\secp\left(\baux_\cD,\Ideal_{\Pi,\Q}\left(\Sim_\secp,\bx_A,\bx_B,\baux_\cA\right)\right) = 1 \right] \bigg| \geq 1/\poly(\secp).
\end{align*}

Define \ifsubmission the distribution \else\fi $\Real \coloneqq \Real_{\Pi,\Q}(\cA_\secp,\bx_A,\bx_B,\baux_\cA)$ and \ifsubmission the distribution \else\fi $\Ideal \coloneqq \Ideal_{\Pi,\Q}(\Sim,\bx_A,\bx_B,\baux_\cA)$. Furthermore, let $\mathbf{E}^{(x_\inp,z_\inp)}_\Real$ be the event that $\cA$ sends $(x_\inp,z_\inp)$ as its first round message in $\Real$ and define $\mathbf{E}^{(x_\inp,z_\inp)}_\Ideal$, $\mathbf{E}^{(\abort)}_\Real$. Let $\mathbf{E}^{(\abort)}_\Real$ and $\mathbf{E}^{(\abort)}_\Ideal$ be the event that the adversary fails to report some of its teleporation errors, causing the honest party to abort. The above implies that either there exists some $(x_\inp,z_\inp) \in (\{0,1\}^{n_B})^2$ such that 
\begin{align*}
&\bigg|\Pr\left[\cD_\secp(\baux_\cD,\Real) = 1 \big| \mathsf{E}_\Real^{(x_\inp,z_\inp)}\right]\Pr\left[\mathsf{E}_\Real^{(x_\inp,z_\inp)}\right] \\&- \Pr\left[\cD_\secp(\baux_\cD,\Ideal) = 1 \big| \mathsf{E}_\Ideal^{(x_\inp,z_\inp)} \right]\Pr\left[\mathsf{E}_\Ideal^{(x_\inp,z_\inp)}\right] \bigg| \geq \frac{1}{\poly(\secp)(2^{2n_B}+1)}
\end{align*}

or that 

\begin{align*}
&\bigg|\Pr\left[\cD_\secp(\baux_\cD,\Real) = 1 \big| \mathsf{E}_\Real^{(\abort)}\right]\Pr\left[\mathsf{E}_\Real^{(\abort)}\right] \\&- \Pr\left[\cD_\secp(\baux_\cD,\Ideal) = 1 \big| \mathsf{E}_\Ideal^{(\abort)} \right]\Pr\left[\mathsf{E}_\Ideal^{(\abort)}\right] \bigg| \geq \frac{1}{\poly(\secp)(2^{2n_B}+1)}.
\end{align*}

Simulating the distribution conditioned on an abort is trivial, so the second case cannot occur, and the first case immediately contradicts~\cref{lemma:distribution-bot}, completing the proof.
\end{proof}

\section{Multi-Party Quantum Computation in Five Rounds}\label{sec:five-round}

In this section, we show the existence of a five-round protocol for multi-party quantum computation, assuming quantum-secure two-message oblivious transfer in the CRS model with a straight-line black-box simulator. The protocol we present satisfies security with abort, and only requires three rounds of online communication (that is, three rounds of communication once the parties receive their inputs). Thus, this implies the existence of a three-round protocol for multi-party quantum computation given some input-independent quantum pre-processing. 

We also note that the protocol can be adjusted to give security with \emph{unanimous} abort with four rounds of online communication (while keeping the total number of rounds at five), though we do not provide a formal description of this protocol. Roughly, this follows because if parties receive their inputs one round earlier, they will be able to receive and check the authenticity of their (encrypted) outputs at the end of round four, rather than checking the authenticity of their (unencrypted) outputs at the end of round five.

\subsection{The Protocol}\label{subsec:5-round-protocol}

\paragraph{Ingredients.}

\begin{itemize}
    \item Round-optimal quantum-secure multi-party computation for classical reactive functionalities in the CRS model, to be treated as an oracle called $\MPC$ ~(see \cref{subsec:reactivempc}).
    \item A garbling scheme for $\CM$ circuits $(\QGarble, \QGEval, \QGSim)$\ifsubmission~(see section 4 of the full version)\else\fi.
\end{itemize}

\paragraph{Notation.} We use the following parameters and notation throughout:
\begin{itemize}
\item Let $n$ be the number of parties.
\item Let $Q$ be a $\CM$ circuit with $m = m_1 + \dots + m_n$ input qubits and $\ell = \ell_1 + \dots + \ell_n$ output qubits. Let $Q_\dst\left[\left\{C_i^{\inp},C_i^{\out}\right\}_{i \in [n]}\right]$ be the $\CM$ circuit that 
\begin{itemize}
    \item first applies $T$-state distillation (taking as input $\secp$ times as many $T$ states as $Q$),
    \item then Clifford decodes each input using the $C_i^{\inp}$, outputting $\bot$ if any of the decodings fail,
    \item then applies $Q$,
    \item then Clifford encodes each part of the output using $C_i^{\out}$.
\end{itemize}
Let $k_0$ be the total number of 0 states necessary to garble $Q_\dst\left[\left\{C_i^{\inp},C_i^{\out}\right\}_{i \in [n]}\right]$ (which includes auxiliary 0 states for the computation itself, as well as extra 0states used for the garbling operation). Let $k_T$ be the total number of $T$ states that $Q_\dst[\{C_i^{\inp},C_i^{\out}\}_{i \in [n]}]$ takes as input.
\item Let $v = (m + \secp n) + 2(k_0 + \secp n) + (k_T + \secp n)$ be the number of registers teleported around the circle of parties, which includes all Clifford-encoded inputs, 0 states, and T states. We will refer to the first $m + \secp n$ registers as $\gray{N}$, the next $2(k_0 + \secp n)$ registers as $\gray{Z}$, and the final $k_T + \secp n$ registers as $\gray{T}$.
\item During the protocol, these $v$ registers will be manipulated. At one point (before the application of the quantum garbled circuit), we will rename the registers to $\gray{N},\gray{Z_\inp},\gray{T_\inp},\gray{Z_\test},\gray{T_{\test,1}},\dots,\gray{T_{\test,n}}$, where 
\begin{itemize}
    \item $\gray{N}$ has size $m + \secp n$ and holds each party's Clifford-encoded input.
    \item $\gray{Z_\inp}$ has size $k_0$ and holds the auxiliary 0 states that will be used in the computation of the quantum garbled circuit.
    \item $\gray{T_\inp}$ has size $k_T$ and holds the auxiliary T states that will be used in the computation of the quantum garbled circuit.
    \item $\gray{Z_\test}$ has size $k_0 + \secp n$ and holds the result of the 0-state check (will be $\mathbf{0}^{k_0 + \secp}$ if all parties are honest).
    \item Each $\gray{T_{\test,1}},\dots,\gray{T_{\test,n}}$ has size $2\secp$ and holds the result of the T-state check (each will be a Clifford-encoding of $\mathbf{T}^{\otimes \secp}$ if all parties are honest).
\end{itemize}

\end{itemize}

\protocol
{\proref{fig:classical}: Classical Functionality for Five-Round Quantum MPC}
{Classical Functionality for Five-Round Quantum MPC.}
{fig:classical}
{
\textbf{Public Parameters:} Security parameter $\secp$, number of parties $n$, $\CM$ circuit $Q$, and parameters $(m,\ell,k_0,k_T,v)$ defined above.\\

\textbf{Shared Randomness:} Random strings for $0$ state check $r,s \gets \{0,1\}^{k_0 + \secp n}$, re-randomization matrix $U_{\mathsf{enc}} \gets \mathscr{C}_{m + \secp n + k_0 + k_T}$, Cliffords for $T$-state checks $\{C_i^{T} \gets \mathscr{C}_{2\secp}\}_{i \in [n]}$, Cliffords for outputs $\{C_i^{\out} \gets \mathscr{C}_{\ell_i + \secp}\}_{i \in [n]}$.\\

\textbf{Offline Round 1:}
   Obtain input $C_i^\Circle$ from each party $i$.

\textbf{Offline Round 2:} 
   Obtain input $(x_i^{\Circle},z_i^{\Circle})$ from each party $i$.

\textbf{Online Round 1:} Obtain input $(x_i^{\inp},z_i^{\inp},C_i^{\inp})$ from each party $i$.
\begin{itemize}
    \item Compute the unitary $U_{\mathsf{dec}} \coloneqq {C_1^\Circle}^\dagger X^{x_1^{\Circle}}Z^{z_1^{\Circle}} \dots {C_n^\Circle}^\dagger X^{x_n^{\Circle}}Z^{z_n^{\Circle}}$, which will operate on registers $\gray{N},\gray{Z},\gray{T}$, where $\gray{N}$ has size $m+\secp n$, $\gray{Z}$ has size $2(k_0 + \secp n)$, and $\gray{T}$ has size $(k_T + \secp n)$.
    \item Compute the unitary $U_{\mathsf{check}}$ that operates on registers $\gray{Z},\gray{T}$ as follows.
    \begin{itemize}
        \item Sample a random linear map $M \gets \mathsf{GL}(2(k_0 + \secp n),\bbF_2)$, and apply it to the registers $\gray{Z}$. Now refer to the first $k_0$ qubits of $\gray{Z}$ as register $\gray{Z_{\inp}}$, the following $n$ groups of $\secp$ qubits as $\gray{T_{\test,Z,1}},\dots,\gray{T_{\test,Z,n}}$, and the final group of $k_0 + \secp n$ qubits as $\gray{Z_{\mathsf{test}}}$. 
        \item Sample a random permutation $\pi$ on $k_T + \secp n$ elements and rearrange the registers of $\gray{T}$ according to the permutation $\pi$. Now refer to the first $k_T$ qubits of $\gray{T}$ as register $\gray{T_{\inp}}$ and the following $n$ groups of $\secp$ qubits as $\gray{T_{\mathsf{test},T,1}},\dots,\gray{T_{\mathsf{test},T,n}}$.
        \item Rearrange the registers in the order $\gray{N},\gray{Z_\inp}, \gray{T_\inp}, \gray{Z_{\mathsf{test}}}, \gray{T_{\mathsf{test},T,1}}, \gray{T_{\test,Z,1}},\dots, \gray{T_{\mathsf{test},T,n}},\gray{T_{\test,Z,n}}.$
        \item Apply $X^rZ^s$ to $\gray{Z_{\mathsf{test}}}$.
        \item For each $i \in [n]$, apply $C_i^T$ to $(\gray{T_{\mathsf{test},T,i}},\gray{T_{\test,Z,i}})$ and re-name the combined registers $\gray{T_{\test,i}}$.
    \end{itemize}
    \item Output to party 1 the unitary $U_{\mathsf{test}} \coloneqq \left(U_{\mathsf{enc}}^{\gray{N},\gray{Z_\inp},\gray{T_\inp}} \otimes \bbI^{\gray{Z_\test},\gray{T_{\test,1},\dots,\gray{T_{\test,n}}}}\right)\left(\bbI \otimes U_{\mathsf{check}}^{\gray{Z},\gray{T}}\right)U_{\mathsf{dec}}^{\gray{N},\gray{Z},\gray{T}}.$
   
\end{itemize}

\textbf{Online Round 2:} Obtain input $r'$ from party 1.
\begin{itemize}
    \item Compute $(E_0,D_0,\widetilde{g}_1,\dots,\widetilde{g}_d) \gets \QGarble(1^\secp,Q_\dst[\{C_i^\inp,C_i^\out\}_{i \in [n]}])$. 
    \item Compute $U_{\mathsf{garble}} \coloneqq E_0\left(X^{x_1^\inp}Z^{z_1^\inp} \otimes \dots \otimes X^{x_n^\inp}Z^{z_n^\inp} \otimes \bbI^{\gray{Z_\inp},\gray{T_\inp}}\right)U_{\mathsf{enc}}^\dagger$.
    \item Output $U_{\mathsf{garble}}, D_0,\widetilde{g}_1,\dots,\widetilde{g}_d$ to party 1 and for each $i \in [n]$ output $C_i^T$ to party $i$.
\end{itemize}

\textbf{Online Round 3:}
\begin{itemize}
    \item If $r' = r$ then output $C_i^{\out}$ to party $i$ for each $i \in [n]$ and otherwise output $\bot$ to each party.
\end{itemize}
}


\paragraph{Offline Round 1.}

In the first offline round of communication, the parties send EPR pair halves to each other as follows.

\begin{itemize}
    \item \underline{Party $P_1$.} For each $i \in [n]$, party $P_1$ generates $m_i+ \secp$ EPR pairs
    \[\left(\be^{(1\leftrightarrow i)}_R,\be^{(1\leftrightarrow i)}_S\right)^{\otimes m_i + \secp},\]
    where the $(1\leftrightarrow i)$ superscript indicates that these EPR pairs will be shared between party $1$ and party $i$, and the $R$ and $S$ subscripts designate which halves of the EPR pairs will be for receiving teleported qubits and which halves will be for sending teleported qubits. It also generates $v$ EPR pairs
    \[\left(\be^{(n\leftrightarrow 1)}_R,\be^{(n\leftrightarrow 1)}_S\right)^{\otimes v}.\]
    
    For each $i \in [n] \setminus \{1\}$, party $P_1$ sends $\left(\be_S^{(1\leftrightarrow i)}\right)^{\otimes m_i + \secp}$ to party $P_i$. $P_1$ also sends $\left(\be^{(n\leftrightarrow 1)}_S\right)^{\otimes v}$ to party $P_n$. Finally, $P_1$ samples $C_1^\Circle \gets \mathscr{C}_v$ and sends input $C_1^\Circle$ to $\MPC$.
    
    \item \underline{Party $P_i$ for $i \in \{2,3,\dots,n\}$.} Every other party $P_i$ will generate $v$ EPR pairs
    \[\left(\be^{((i-1)\leftrightarrow i)}_R,\be^{((i-1)\leftrightarrow i)}_S\right)^{\otimes v}\]
    and send
    \[\left(\be^{((i-1)\leftrightarrow i)}_S\right)^{\otimes v}\]
    to party $P_{i-1}$. In addition, $P_i$ will sample $C_i^\Circle \gets \mathscr{C}_v$ and send input $C_i^\Circle$ to $\MPC$.
\end{itemize}

\paragraph{Offline Round 2.} In the second offline round, the parties perform local operations and then query the classical $\MPC$. 

\begin{itemize}
    \item \underline{Party $P_1$.} Party $P_1$ prepares the states $\mathbf{0}^{2(k_0 + \secp n)}$ and $\mathbf{T}^{k_T + \secp n}$. It generates the $v$ qubit quantum state
    \[ C_1^{\Circle}
    \left(
     \left( \be_R^{(1\leftrightarrow 1)} \right)^{\otimes m_1 + \lambda}, 
     \left(\be_R^{(1\leftrightarrow 2)} \right)^{\otimes m_2 + \lambda},\ldots,\left(\be_R^{(1 \leftrightarrow n)} \right)^{\otimes m_n + \lambda},\mathbf{0}^{2(k_0 + \secp n)},\mathbf{T}^{k_T + \secp n}\right),\] 
    and begins the process of teleporting this state to party $P_2$. That is, for each $j \in [v]$, it measures positions $(j,v+j)$ of 
    \[ \left(C_1^{\Circle}\left(\left( \be_R^{(1\leftrightarrow 1)} \right)^{\otimes m_1 + \lambda}, 
     \left(\be_R^{(1\leftrightarrow 2)} \right)^{\otimes m_2 + \lambda},\ldots,\left(\be_R^{(1 \leftrightarrow n)} \right)^{\otimes m_n + \lambda},\mathbf{0}^{2(k_0 + \secp n)},\mathbf{T}^{k_T + \secp n}\right),\left(\be^{(1\leftrightarrow 2)}_S\right)^{\otimes v}\right)\]
    in the Bell basis. The result of these measurements is two classical $v$-length bitstrings $x_1^{\Circle},z_1^{\Circle}$.
    
    
    Finally, it sends input $(x_1^{\Circle},z_1^{\Circle})$ to $\MPC$.

    
    \item \underline{Party $P_i$ for $i \in \{2,3,\dots,n\}$.} During this round, party $P_i$ will apply its own masking Clifford $C_i^{\Circle}$ to a state that has already been masked by parties $P_1,P_2,\dots,P_{i-1}$, and then teleport the resulting state along to $P_{i+1}$. Note that this is all happening \emph{simultaneously}, and party $P_i$ does not ``wait'' until it has received the state from party $P_{i-1}$. Precisely, party $P_i$ applies $C_i^{\Circle}$ to the state $\left(\be^{(i-1) \leftrightarrow i}_R\right)^{\otimes v}$, and then for each $j \in [v]$, it measures positions $(j,j+v)$ of 
    \[ \left(C_i^{\Circle}\left(\left(\be^{(i-1) \leftrightarrow i}_R\right)^{\otimes v}\right),\left(\be^{(i\leftrightarrow (i+1))}_S\right)^{\otimes v}\right)\]
    in the Bell basis. The result of these measurements is two classical $v$-length bitstrings $x_i^{\Circle},z_i^{\Circle}$.
    
    
    Finally, it sends input $(x_i^{\Circle},z_i^{\Circle})$ to $\MPC$.
    
\end{itemize}

\paragraph{Parties Receive Inputs.} After the offline rounds, each party $P_i$ receives a $m_i$-qubit input $\bx_i$. 

\paragraph{Online Round 1.} 


\begin{itemize}
\item \underline{Party $P_i$ for $i \in [n]$.} Each party $P_i$ samples a random Clifford $C_i^{\inp} \leftarrow \mathscr{C}_{m_i+\secp}$ and applies it to their input $\bx_i$ along with $\secp$-many $\mathbf{0}$ qubits. They then teleport this state into registers held by party $P_1$. That is, for each $j \in [m_i + \secp]$, it measures positions $(j,j+ m_i + \secp)$ of
\[ \left( C_i^{\inp}\left(\bx_i, \mathbf{0}^{\secp}\right),\left(\be_S^{(1 \leftrightarrow i)}\right)\right)\]
in the Bell basis. The result is two classical $(m_i+\secp)$-length bitstrings $x_i^{\inp},z_i^{\inp}$.

Each party $P_i$ inputs $(x_i^{\inp},z_i^{\inp},C_i^{\inp})$ to $\MPC$. $P_1$ receives an output $U_\mathsf{test}$ from $\MPC$.
\end{itemize}


\paragraph{Online Round 2.}
\begin{itemize}
\item \underline{Party $P_1$.} Party $P_1$ first computes \[\left(\by^{\gray{N},\gray{Z_\inp},\gray{T_\inp}},\by_Z^{\gray{Z_\test}},\by_{T,1}^{\gray{T_{\test,1}}},\dots,\by_{T,n}^{\gray{T_{\test,n}}}\right) \coloneqq U_{\mathsf{test}}\left(\left(\be^{(n\leftrightarrow 1)}_R\right)^{\otimes v}\right).\]
For each $i \in \{2,\dots,n\}$, it sends $\by_{T,i}$ to party $P_i$. Then, measure $\by_Z$ in the computational basis to obtain a classical string $r'$ of length $k_0 + \secp n$, and input $r'$ to $\MPC$.

$P_1$ receives output $(U_{\mathsf{garble}},D_0,\tilde{g}_1,\dots,\tilde{g}_d)$ from $\MPC$.
\item \underline{Party $P_i$ for $i \in [n]$.} Every party receives output $C_i^T$ from $\MPC$.
\end{itemize}

\paragraph{Online Round 3.}
\begin{itemize}
\item \underline{Party $P_1$.} Compute garbled input $\by_\inp = U_{\mathsf{garble}}(\by)$ and run the quantum garbled circuit $(D_0,\tilde{g}_1,\dots,\tilde{g}_d)$ on the resulting $(m+n\secp+k_0+k_T)$-qubit state $\by_\inp$. The result of running the quantum garbled circuit is an $\ell + n\secp$-state. Party $P_1$ partitions this state into $n$ different encrypted output states $\by_{\out,1},\dots,\by_{\out,n}$ where each $\by_{\out,i}$ is an $(\ell_i + \secp)$-qubit state. For each $i \in \{2,\dots,n\}$, party $P_1$ sends $\by_{\out,i}$ to party $P_i$. 
\item \underline{Party $P_i$ for $i \in [n]$.} After the conclusion of Online Round $2$, every  party $P_i$ for $i \in \{1,2,\dots,n\}$ has a ``$T$-check Clifford'' $C_i^{T}$ from $\MPC$, as well as a $2\secp$-qubit state $\by_{T,i}$ from party $P_1$. It computes ${C_i^{T}}^\dagger\left(\by_{T,i}\right)$, and then performs the binary projection that projects the first $\secp$ qubits onto $\mathbf{T}^{\secp}$ and the last $\secp$ qubits onto $\mathbf{0}^{\secp}$. If this projection fails, party $P_i$ asks the MPC to abort. Otherwise, receive output $C_i^\out$ from $\MPC$.

\end{itemize}

\paragraph{Output Reconstruction.}
\begin{itemize}
\item \underline{Party $P_i$ for $i \in [n]$.} After the conclusion of Online Round 3, each party $P_i$ has obtained an output decryption Clifford $C_i^{\out}$ from the classical $\MPC$, along with an $(\ell_i + \secp)$-qubit state $\by_{\out,i}$. It computes ${C_i^{\out}}^\dagger(\by_{\out,i})$ and measures whether the last $\secp$ trap qubits are all $0$. If not, it outputs $\bot$. Otherwise, its output is the remaining $\ell_i$ qubits. 
\end{itemize}

\subsection{Security}

\begin{theorem}
Assuming post-quantum maliciously-secure two-message oblivious transfer, there exists five-round maliciously-secure multi-party quantum computation.
\end{theorem}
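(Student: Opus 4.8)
## Proof Proposal for the Five-Round MPQC Theorem

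The plan is to prove security via the standard real/ideal paradigm of \cref{def:mpqc}, by first establishing security with abort for the protocol in \cref{subsec:5-round-protocol}, and then invoking the composition with the classical reactive MPC from \cref{subsec:reactivempc} to realize $\MPC$ as an oracle. I would fix an adversary $\cA$ corrupting a subset $M \subset [n]$, splitting the analysis into two cases according to whether the designated party $P_1$ is corrupted ($1 \in M$) or honest ($1 \notin M$); the honest-$P_1$ case is the more delicate one, since then $P_1$ performs the quantum garbled circuit evaluation and the simulator must simulate that transcript without knowing honest inputs. In both cases the simulator begins by running the straight-line black-box simulator for the classical $\MPC$, which lets it (i) extract each corrupted party's inputs to $\MPC$ --- namely the circle Cliffords $C_i^\Circle$, the teleportation errors $(x_i^\Circle,z_i^\Circle)$ and $(x_i^\inp,z_i^\inp)$, and the input-encoding Cliffords $C_i^\inp$ --- and (ii) program $\MPC$'s outputs to $\cA$. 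Using the extracted $C_i^\Circle$ and teleportation errors, the simulator reconstructs the unitary $U_\mathsf{dec}$ and can therefore invert the circle-teleportation to recover the Clifford-encoded inputs that the corrupted parties actually injected; applying the extracted $C_i^\inp{}^\dagger$ and checking the traps (using the Clifford authentication code guarantee) yields the corrupted parties' effective quantum inputs $\bx_i'$, which are forwarded to the ideal functionality $\cI$.

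The core of the argument, analogous to the ``simulation via teleportation'' technique used in \cref{sec:two-round-preprocess}, handles the fact that in the honest-$P_1$ case the simulator must produce the encrypted output states $\by_{\out,i}$ and the garbled-circuit transcript \emph{before} it necessarily knows the honest outputs. I would have the simulator prepare fresh EPR pairs to stand in for the honest outputs, run $\QGSim$ (from \cref{thm: garbling}) on these EPR halves in place of the real $Q$-output, and only later --- once the ideal functionality returns the honest outputs --- teleport those outputs into the EPR halves, with the teleportation corrections absorbed into the output-decoding Cliffords $C_i^\out$ that $\MPC$ delivers in Online Round 3 (which the simulator controls via the $\MPC$ simulator). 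The distributed $0$- and $T$-state checks are handled by the guarantees already available in the excerpt: \cref{lemma:Ztest} (equivalently \cref{lemma:linear-map}) shows that if the $r'=r$ check passes then the register $\gray{Z_\inp}$ is negligibly close to $\mathbf{0}^{k_0}$, and \cref{lemma:distillation} shows that conditioned on every honest party's $T$-basis projection succeeding, the distilled $T$-register is negligibly close to $\Tstate^{k_T}$, so that evaluating $Q_\dst[\{C_i^\inp,C_i^\out\}]$ on the reconstructed inputs is statistically indistinguishable from evaluating $Q$ on $(\bx_1',\dots,\bx_n')$ with clean ancillas. Crucially, the protocol withholds $C_i^\out$ from \emph{every} party until \emph{all} honest parties signal their $T$-check passed, which is what makes ``security with abort'' sound even though a malicious $P_1$ could otherwise distribute malformed states.

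Concretely I would organize the proof as a hybrid sequence: $\cH_0$ is the real execution; $\cH_1$ replaces $\MPC$ with its simulator and extracts corrupted inputs; $\cH_2$ replaces the honestly-generated quantum garbled circuit with the output of $\QGSim$ (invoking the security clause of \cref{defn:QGC}); $\cH_3$ replaces each honest party's Clifford-encoded input $C_i^\inp(\bx_i,\mathbf{0}^\secp)$ and circle-encoding with encodings of $\mathbf{0}$ states, using the perfect hiding of the Clifford authentication code and correctness of teleportation; $\cH_4$ introduces the stand-in EPR pairs for honest outputs and the deferred teleportation described above; $\cH_5$ replaces the direct computation $Q_\dst[\cdots]$ with a query to $\cI$, justified by \cref{lemma:Ztest} and \cref{lemma:distillation} together with the Clifford-code authentication bound for the trap measurements on $\by_{\out,i}$. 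Indistinguishability of consecutive hybrids is either statistical (authentication codes, QGC correctness, distillation, linear-map lemma) or follows from post-quantum security of $\MPC$ and the classical garbling scheme. The main obstacle I anticipate is the bookkeeping in the honest-$P_1$ case: one must verify that the composition of $U_\mathsf{dec}$, $U_\mathsf{check}$, $U_\mathsf{enc}$, the quantum garbling, and the output-correction Cliffords can be consistently ``reprogrammed'' by the simulator so that (a) the visible transcript to $\cA$ is reproduced exactly and (b) the deferred teleportation of honest outputs lands correctly inside the $C_i^\out$ encodings --- essentially checking that the Clifford group-randomization arguments of \cref{lem: clifford} go through at every seam where a re-randomizing Clifford meets an adversarially-chosen one, and that the ordering of rounds permits all of this to be done straight-line without rewinding.
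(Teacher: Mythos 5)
There is a genuine gap, and it stems from inverting which corruption case is hard. When $P_1$ is \emph{honest}, the adversary never sees the quantum garbled circuit at all: the classical $\MPC$ outputs $(U_{\mathsf{garble}},D_0,\widetilde{g}_1,\dots,\widetilde{g}_d)$ only to $P_1$, and the garbled input is assembled locally by $P_1$. The adversary's view consists of EPR halves, classical teleportation errors, the $T$-test states $\by_{T,i}$, and the Clifford-encoded outputs $\by_{\out,i}$ with the keys $C_i^T,C_i^\out$ --- all of which the simulator (playing $P_1$ and the $\MPC$) can produce directly, since it extracts the corrupted parties' inputs from the states teleported to it in Online Round 1 and can query the ideal functionality \emph{before} it must send anything output-dependent in Online Round 3. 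So no QGC simulation, no output-side EPR stand-ins, and no corrections absorbed into $C_i^\out$ are needed there; encoding $\mathbf{0}$ under $C_i^\out$ for honest outputs (perfect hiding) already suffices. The delicate case is the opposite one, $P_1$ corrupted, and for that case your proposal is missing the key mechanism. There the quantum registers that will become the garbled input are fixed in \emph{Offline Round 2}: the honest party in the circle must forward the re-randomized state before any inputs exist and before the ideal functionality can be queried, yet by \cref{defn:QGC} the simulated garbled input $\widetilde{\bx}_\inp$ is produced jointly with $\widetilde{Q}$ by $\QGSim$, which can only be run once the corrupted inputs have been extracted and the ideal outputs obtained (Online Round 2). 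Your hybrid $\cH_2$ (``replace the QGC with $\QGSim$'') therefore cannot be executed as stated: at the time the quantum state is committed, the simulator has nothing to feed $\QGSim$.

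The resolution, which your proposal does not supply, is an \emph{input-side} simulation-via-teleportation: the simulator forwards through the circle fresh EPR halves pre-rotated by $U_{\mathsf{garble}}^\dagger U_\test^\dagger$ and the inverses of the remaining circle Cliffords (exploiting that its own $C_k^\Circle$ can be re-defined as a product of these with uniformly random $U_\test,U_{\mathsf{garble}}$, by the Clifford group-randomization you allude to), and only in Online Round 2 runs $\QGSim$ on the output state (corrupted outputs from the ideal functionality, honest outputs replaced by Clifford-encoded $\mathbf{0}$), teleports the resulting simulated garbled input into its retained EPR halves, and absorbs the resulting Pauli corrections into the $U_{\mathsf{garble}}$ that the $\MPC$ hands the corrupted $P_1$; the adversary's other teleportation errors and Cliffords are likewise folded into $\widehat{U}_\test$, $\widehat{C}_i^T$, $\widehat{r}$. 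Your output-side trick (EPR stand-ins for honest outputs with corrections folded into $C_i^\out$), imported from the two-round 2PQC protocol of \cref{sec:two-round-preprocess}, neither addresses this timing problem nor is needed anywhere in the five-round protocol, because the $\MPC$ here releases $C_i^\out$ only in Online Round 3, after the simulator already knows everything it needs. Without the input-side deferral and the accompanying re-randomization hybrids, the corrupted-$P_1$ simulation does not go through.
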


We split the security proof into two cases based on the set of honest parties $\cH \subset [n]$. In the first case, $P_1$ is corrupted and in the second case, the only honest party is $P_1$.

\subsubsection{Case 1: $P_1$ is corrupted}

\paragraph{Simulator.}

Let $\cH \subset [n]$ be the set of honest parties and $k \neq 1$ be arbitrary such that $k \in \cH$. Let $\cM = [n] \setminus \cH$ be the set of corrupted parties. The simulator will act as party $k$, altering its actions as described below. All actions of parties $i \in \cH \setminus \{k\}$ will be honest except for those explicitly mentioned in the simulation (essentially just switching out their inputs for 0). Also note that the simulator will be implementing the ideal functionality for the classical MPC, so we allow the simulator to intercept the adversary's inputs to $\MPC$ and compute the outputs.




\begin{itemize}
    \item {\bf Offline Round 1.} 
    \begin{itemize}
        \item Following honest party $k$'s behavior, prepare $v$ EPR pairs  \[\left(\be^{((k-1)\leftrightarrow k)}_R,\be^{((k-1)\leftrightarrow k)}_S\right)^{\otimes v}\] and send
    \[\be_k \coloneqq \left(\be^{((k-1)\leftrightarrow k)}_S\right)^{\otimes v}\]
    to party $P_{k-1}$. 
    \item Receive state $\be_{k+1}$ of $v$ qubits from $P_{k+1}$.
    \item For each $i \in \cH$, receive a state $\be_{\inp,i}$ of $m_i + \secp$ qubits from $P_1$.
    \item  Obtain $\{C_i^\Circle\}_{i \in \cM}$ from the adversary's query to $\MPC$, and let $\{C_i^\Circle\}_{i \in \cH}$ be the values sampled by honest parties.
    \end{itemize}

    \item {\bf Offline Round 2.}
    \begin{itemize}
    \item 
    Sample $r,s \gets \{0,1\}^{k_0+\secp n}$, $\{C_i^T \gets \mathscr{C}_{2\secp}\}_{i \in [n]}$, $M \gets \mathsf{GL}(2(k_0 + \secp n),\bbF_2)$, and a permutation $\pi$ on $k_T + \secp n$ elements. Use these values to compute $U_\chck$ as in the computation in Online Round 1 of \proref{fig:classical}.
    
    \item Compute $\left(\bn^{\gray{N}},\bz^{\gray{Z_\inp}},\bt^{\gray{T_\inp}},\bz_\test^{\gray{Z_\test}},\bt_{\test,1}^{\gray{T_{\test,1}}},\dots,\bt_{\test,n}^{\gray{T_{\test,n}}}\right) \coloneqq (\bbI \ \otimes \ U_\chck){C_1^\Circle}^\dagger \cdots {C_{k-1}^\Circle}^\dagger(\be_k)$, and discard $\bz^{\gray{Z_\inp}},\bt^{\gray{T_\inp}}$.
    
    \item Sample $U_\test \gets \mathscr{C}_v$ and $U_{\mathsf{garble}} \gets \mathscr{C}_{m + n \secp + k_0 + k_T}$.
    
    \item Prepare $m + \secp n + k_0 + k_T$ EPR pairs $\left\{\left(\be^{(i)}_{\Sim,1},\be^{(i)}_{\Sim,2}\right)\right\}_{i \in [m + \secp n + k_0 + k_T]}$, and let $$\be_{\Sim,1} \coloneqq \left(\be_{\Sim,1}^{(1)},\dots,\be_{\Sim,1}^{(m + \secp n + k_0 + k_T)}\right), \be_{\Sim,2} \coloneqq \left(\be_{\Sim,2}^{(1)},\dots,\be_{\Sim,2}^{(m + \secp n + k_0 + k_T)}\right).$$ 
    
    
    \item Compute $$\bw \coloneqq {C_{k+1}^\Circle}^\dagger(\cdots {C_{n}^\Circle}^\dagger(U_\test^\dagger(U_{\mathsf{garble}}^\dagger(\be_{\Sim,2}),\bz_\test,\bt_{\test,1},\dots,\bt_{\test,n}))\cdots).$$
    
    \item Teleport $\bw$ into $\be_{k+1}$, and let $x_k^\Circle,z_k^\Circle$ be the teleportation errors. Let $\{x_i^\Circle,z_i^\Circle\}_{i \in \cH \setminus \{k\}}$ be the teleportation errors obtained by the other honest parties.
    
    \item Send $\{x_i^\Circle,z_i^\Circle\}_{i \in \cH}$ to the adversary.
    
    \item Receive $\{x_i^\Circle,z_i^\Circle\}_{i \in \cM}$ from the adversary.
    
    
    
    \end{itemize}

    \item {\bf Online Round 1.}
    \begin{itemize}

        \item Let $\widehat{x}_1, \widehat{z}_1$ be such that $$(\bbI \ \otimes \ U_\chck){C_1^\Circle}^\dagger X^{x_1^\Circle}Z^{z_1^\Circle} \cdots {C_{k-1}^\Circle}^\dagger X^{x_{k-1}^\Circle}Z^{z_{k-1}^\Circle} = X^{\widehat{x}_1}Z^{\widehat{z}_1}(\bbI \ \otimes \ U_\chck){C_1^\Circle}^\dagger \cdots {C_{k-1}^\Circle}^\dagger.$$ Write $\widehat{x}_1$ as $\widehat{x}_{\inp,1},\dots,\widehat{x}_{\inp,n},\widehat{x}_Z,\widehat{x}_T,\widehat{x}_{\test,Z},\widehat{x}_{\test,T,1},\dots,\widehat{x}_{\test,T,n}$, and same for $\widehat{z}_1$. Here, each $\widehat{x}_{\inp,i} \in \{0,1\}^{m_i + \secp}$, $\widehat{x}_Z \in \{0,1\}^{k_0}$, $\widehat{x}_T \in \{0,1\}^{k_T}$, $\widehat{x}_{\test,Z} \in \{0,1\}^{k_0 + \secp n}$, and each $\widehat{x}_{\test,T,i} \in \{0,1\}^{2\secp}$.

        \item Let  $\widehat{x}_2, \widehat{z}_2$ be such that
        $$X^{x_{k}^\Circle}Z^{z_{k}^\Circle}{C_{k+1}^\Circle}^\dagger X^{x_{k+1}^\Circle}Z^{z_{k+1}^\Circle}\dots {C_{n}^\Circle}^\dagger X^{x_{n}^\Circle}Z^{z_{n}^\Circle}  = X^{\widehat{x}_2}Z^{\widehat{z}_2}{C_{k+1}^\Circle}^\dagger\dots {C_{n}^\Circle}^\dagger.$$

        \item Let $\widehat{U}_\test \coloneqq U_\test X^{\widehat{x}_2}Z^{\widehat{z}_2}$ and use $\widehat{U}_\test$ in place of $U_\test$.
        

        \item For each $i \in [n]$, let $\widehat{C}_i^T \coloneqq C_i^T X^{\widehat{x}_{\test,T,i}}Z^{\widehat{z}_{\test,T,i}}$, and use $\widehat{C}_i^T$ in place of $C_i^T$ (in Online Round 2).

        \item Let $\widehat{r}:= r \oplus \widehat{x}_{\test,z}$ and use $\widehat{r}$ in place of $r$ (in Online Round 3).

        \item Sample $C_i^\inp \leftarrow  \mathscr{C}_{m_i+\secp}$.
        For $i \in \cH$, teleport $C_i^\inp(\mathbf{0}^{m_i + \secp})$ into $\be_{\inp,i}$. Let $x_i^\inp,z_i^\inp$ be the teleportation errors.
        \item Send $\{C_i^\inp,x_i^\inp,z_i^\inp\}_{i \in \cH}$ and $\widehat{U}_\test$ to the adversary. 
        \item Receive $\{C_i^\inp,x_i^\inp,z_i^\inp\}_{i \in \cM}$ from the adversary. 
    \end{itemize}
    
    \item {\bf Online Round 2.}
    \begin{itemize}
        \item Parse $\bn \coloneqq (\bn_1,\dots,\bn_n)$. 
        For each $i \in [n]$, compute $(\bx_i,\bz_i) \coloneqq {C_i^\inp}^\dagger X^{x_i^\inp} Z^{z_i^\inp}  X^{\widehat{x}_{\inp,i}}Z^{\widehat{z}_{\inp,i}}(\bn_i)$ and measure $\bz_i$. If any measurements are not all 0, then set $(\by_1,\dots,\by_n) \coloneqq (\bot,\cdots,\bot)$. Otherwise, query the ideal functionality with $\{\bx_i\}_{i \in \cM}$ and let $\{\by_i\}_{i \in \cM}$ be the output. Set $\by_i \coloneqq \mathbf{0}^{\ell_i}$ for each $i \in \cH$. 
        \item Sample $\{C_i^\out \gets \mathscr{C}_{\ell_i + \secp}\}_{i \in [n]}$ and set $$\by \coloneqq (C_1^\out(\by_1,\mathbf{0}^\secp),\cdots,C_n^\out(\by_n,\mathbf{0}^\secp)).$$
        \item Compute $$(\widetilde{\bx}^{\gray{N},\gray{Z_\inp},\gray{T_\inp}},D_0,\widetilde{g}_1,\dots,\widetilde{g}_d) \gets \QGSim\left(1^\secplev,Q_\dst,\by\right).$$
        \item Perform Bell measurements between $\widetilde{\bx}$ and $\be_{\Sim,1}$ and let $x_\Sim,z_\Sim$ be the teleportation errors.

        \item Send $\{\widehat{C}_i^T\}_{i \in \cM},X^{x_{\Sim}}Z^{z_{\Sim}}U_{\mathsf{garble}},D_0,\widetilde{g}_1,\dots,\widetilde{g}_d$ to the adversary. 
        \item Receive $\{\by_{T,i}\}_{i \in \cH}$ from the adversary.
    \end{itemize}

    \item {\bf Online Round 3.}
    \begin{itemize}
        \item For each $i \in \cH$, compute $\widehat{C}_i^{T}(\by_{T,i})$, and then perform the binary projection that projects the first $\secp$ qubits onto $\mathbf{T}^{\secp}$ and the last $\secp$ qubits onto $\mathbf{0}^{\secp}$. If this projection fails, then send abort to the ideal functionality (i.e. send $\{\abort_i\}_{i \in \cH}$ to the ideal functionality, and don't send a final round message to the adversary).
        \item Send $\{C_i^\out\}_{i \in \cM}$ to the adversary.
        \item Receive $\{\by_{\out,i}\}_{i \in \cH}$ from the adversary.
    \end{itemize}
    
    \item {\bf Output Reconstruction.}
    On behalf of each $i \in \cH$, compute $C_i^\out(\by_{\out,i})$ and measure whether the last $\secp$ trap qubits are all 0. If not, then send $\abort_i$ to the ideal functionality and otherwise send $\mathsf{ok}_i$.

\end{itemize}

\begin{lemma}
Let $\Pi$ be the protocol described in \cref{subsec:5-round-protocol} computing some quantum circuit $Q$. Then $\Pi$ satisfies \cref{def:mpqc} for any $\cA$ corrupting parties $M \subset [n]$ where $1 \in M$.
\end{lemma}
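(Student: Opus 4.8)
## Proof Proposal

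The plan is to establish indistinguishability between $\Real_{\Pi,\Q}(\cA_\secp,\{\bx_i\}_{i\in[n]},\baux_\cA)$ and $\Ideal_{\Pi,\Q}(\Sim_\secp,\{\bx_i\}_{i\in[n]},\baux_\cA)$ via a hybrid argument, following the same structure as the two-party proofs (Theorems for~\proref{fig:three-message} and~\proref{fig:two-online}), but now carefully tracking the teleportation errors that propagate around the circle of parties. The first hybrid $\cH_0$ is the real execution, with the simulator playing honest party $P_k$ and implementing the classical $\MPC$ ideal functionality honestly. I would then perform the following sequence of modifications. First, switch to the equivalent ``commuted Pauli'' description of the unitaries $U_\test$, $\{C_i^T\}_{i\in[n]}$, and the string $r$: in the real protocol the classical functionality applies $U_\dec$ built from $X^{x_i^\Circle}Z^{z_i^\Circle}$ corrections, and one can push all honest-party teleportation Paulis through the fixed Cliffords to obtain the ``hatted'' versions $\widehat{U}_\test, \widehat{C}_i^T, \widehat{r}$ exactly as in the simulator; this step is perfectly indistinguishable since it is just a rewriting. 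Second, use the fact that $U_\test$ is the product of $C_1^\Circle$ (a uniformly random Clifford sampled by an honest party — note $1\notin\cH$ here, but $k\in\cH$ and the re-randomization can be anchored at $P_k$'s Clifford $C_k^\Circle$) times something else, so by the group property one may sample $U_\test$ (and likewise $U_{\mathsf{garble}}$, via $U_{\mathsf{enc}}$) uniformly at random and define the teleported state accordingly — again perfectly indistinguishable. Third, invoke~\cref{lemma:Ztest} (the $\Zstate$-check lemma) to argue that conditioned on $r'=r$, the register $\gray{Z_\inp}$ is statistically close to $\mathbf{0}^{k_0}$, and then~\cref{lemma:distillation} (magic state distillation) to argue that conditioned on all honest parties' $T$-state projections passing, the distilled $T$ states fed into $Q_\dst$ are negligibly close to $\Tstate^{k_T}$. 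Fourth, invoke the security of the $\QGarble$ scheme (\cref{thm: garbling}) to replace the honestly garbled circuit with $\QGSim$ run on the output $\by = (C_1^\out(\by_1,\mathbf{0}^\secp),\dots,C_n^\out(\by_n,\mathbf{0}^\secp))$. Fifth, use the ``simulation via teleportation'' trick: instead of feeding $\QGSim$ the real outputs, feed it $\be_{\Sim,1}$ (halves of fresh EPR pairs), and defer the choice of $\by_i$ until after Bell-measuring $(\by_i,\be_{\Sim,2})$ — perfectly indistinguishable by the properties of teleportation, and it lets us postpone querying the ideal functionality. Sixth, use perfect hiding of the Clifford authentication code to replace honest parties' real inputs $\bx_i$ with $\mathbf{0}^{m_i}$ in the input-encoding step, and use statistical authentication to argue that the inputs the simulator extracts (via ${C_i^\inp}^\dagger$ on the corrupt parties' teleported registers) are statistically meaningful — i.e., either $\bx_i$ faithfully decoded or the check registers flag an abort. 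Finally, replace the direct computation of $Q_\dst$ with a query to $\cI[\{\bx_i\}_{i\in[n]\setminus M}](\cdot)$, yielding the ideal distribution.

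For the abort handling, I would track the honest parties' $T$-state checks and output-trap checks, sending $\abort_i$ or $\mathsf{ok}_i$ to the ideal functionality exactly when the corresponding honest party would abort in the real protocol; since the classical $\MPC$ functionality only releases the output decryption Cliffords $\{C_i^\out\}$ in Online Round 3 conditioned on $r'=r$, and since the simulator controls the $\MPC$ ideal functionality, the joint distribution of aborts matches. I would show indistinguishability between each consecutive pair of hybrids, citing: security of the classical $\MPC$ (for the initial switch to simulated $\MPC$ behavior, though here the simulator implements $\MPC$ exactly so this is folded into later steps), $\approx_s$ for the Clifford-code and group-rewriting steps, $\approx_s$ via~\cref{lemma:Ztest} and~\cref{lemma:distillation} for the check steps, $\approx_c$ via~\cref{thm: garbling} for the garbled-circuit step, and perfect indistinguishability for the teleportation-deferral steps.

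The main obstacle I anticipate is bookkeeping the Pauli corrections correctly as they commute through the chain of re-randomizing Cliffords $C_1^\Circle,\dots,C_n^\Circle$ and through $U_\chck$ (which includes a random linear map $M$ on the $\gray{Z}$ registers and a random permutation $\pi$ on the $\gray{T}$ registers). The linear map $M$ acts on the Pauli $X$-exponents linearly, while $\pi$ permutes them, and the Clifford conjugations mix $X$ and $Z$ parts; getting the definitions of $\widehat{x}_1,\widehat{z}_1,\widehat{x}_2,\widehat{z}_2,\widehat{x}_{\test,T,i},\widehat{x}_{\test,Z}$ exactly consistent with what the simulator does — so that $\widehat{U}_\test, \widehat{C}_i^T, \widehat{r}$ reproduce the real-world distribution — requires care but is ultimately routine linear algebra over $\bbF_2$ combined with the standard Clifford-Pauli commutation relation. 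A secondary subtlety is ensuring that the $\Zstate$- and $T$-state checks, which are now distributed across parties (each honest party checks its own $T$-state subset $\by_{T,i}$), still compose correctly with~\cref{lemma:Ztest} and~\cref{lemma:distillation}; since a corrupt $P_1$ prepares these states and distributes them, I would argue that passing all honest parties' checks suffices to guarantee (via a union bound over the honest subsets and the random permutation $\pi$) that the distilled states are good, exactly as in the two-party argument but with the observation that at least one honest party's check constrains each relevant block.
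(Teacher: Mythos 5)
Your overall scaffolding matches the paper's proof closely: anchoring the re-randomization at the honest party's Clifford $C_k^{\Circle}$, rewriting the broadcast teleportation errors as hatted unitaries $\widehat{U}_{\test},\widehat{C}_i^{T},\widehat{r}$, replacing honest inputs by $\mathbf{0}$ via perfect hiding of the Clifford code, extracting corrupted inputs with $C_i^{\inp}$ and authentication, and invoking the $\mathbf{0}$-check and distillation lemmas together with QGC security before handing the computation to the ideal functionality. However, there is a genuine gap at the teleportation-deferral step, which is precisely the novel ingredient of this case. You defer the \emph{output}: run $\QGSim$ at Offline Round 2 on fresh EPR halves standing in for $\by$, Bell-measure the true encoded outputs later, and absorb the corrections into the output Cliffords. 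That trick is the one the paper uses for the two-round and four-round protocols, but it misdiagnoses the difficulty here: in the five-round protocol the garbled circuit is only revealed in Online Round 2, \emph{after} the simulator has extracted the corrupted parties' inputs and queried the ideal functionality, so there is no need to postpone the query or the output. What must be postponed is the assembly of the quantum \emph{garbled input}: the state that plays this role is committed when the simulator (as $P_k$) teleports the circle state onward in Offline Round 2, before any inputs exist and before the input-teleportation errors $x_i^{\inp},z_i^{\inp}$ are known.

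Concretely, your step 4 (replacing the honest garbling by $\QGSim(\by)$ via QGC security) is not realizable as stated: in the intermediate hybrid with real garbling, the reduction must hand the QGC challenger the actual input state and then embed the returned garbled input $E_0(\cdot)$ into registers that left its hands in Offline Round 2 and whose logical contents (the corrupted inputs) only materialize in Online Round 1. The paper resolves this by introducing EPR pairs of \emph{garbled-input} size: the forwarded circle state carries $U_{\test}^{\dagger}(U_{\mathsf{garble}}^{\dagger}(\be_{\Sim,2}),\ldots)$, and in Online Round 2 the (real, and later simulated) garbled input is teleported into $\be_{\Sim,1}$, with the Bell-measurement corrections $x_{\Sim},z_{\Sim}$ folded into the unitary $X^{x_{\Sim}}Z^{z_{\Sim}}U_{\mathsf{garble}}$ that the classical MPC outputs to $P_1$ (the paper's hybrids $\cH_3$--$\cH_6$). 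Your output-side EPR pairs do not provide this handle, and without the input-side indirection neither the QGC reduction nor the hybrid in which honest inputs are swapped back in can be carried out. If you add that step (and note that your output-deferral then becomes unnecessary, since the paper's simulator runs $\QGSim$ directly on the ideal-functionality outputs, with un-hatted $C_i^{\out}$), the rest of your argument, including the Pauli bookkeeping through $U_{\chck}$ and the distributed $T$-state checks, goes through as you describe.
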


\begin{proof}

Fix any collection $\cD,\cA,\{\bx_i\}_{i \in [n]},\baux_\cA,\baux_\cD$.
We show the indistinguishability via a sequence of hybrids, where $\cH_0$ is the distribution $\Real_{\Pi,\Q}(\cA_\secp,\{\bx_i\}_{i \in [n]},\baux_\cA)$. In each hybrid, we describe the differences from the previous hybrid, and why each is indistinguishable.

\begin{itemize}
    \item $\cH_1$: \underline{Re-define $C^{\Circle}_k$}. \\ \\
    During Offline Round 2, 
    
    \begin{itemize}
    
    \item Sample $U_\enc,U_\chck$ honestly and sample $U_\test \gets \mathscr{C}_v$. Define
    
    $$C_k^\Circle \coloneqq {C_{k+1}^\Circle}^\dagger\dots {C_{n}^\Circle}^\dagger U_\test^\dagger (U_\enc \otimes \bbI)(\bbI \otimes U_\chck){C_{1}^\Circle}^\dagger\dots {C_{k-1}^\Circle}^\dagger,$$ and apply $C_k^\Circle$ to $\be_k \coloneqq \left(\be^{((k-1)\leftrightarrow k)}_S\right)^{\otimes v}$.
    \end{itemize}
    
    During Online Round 1,
    
    \begin{itemize}
    
    \item Let $\widehat{x}_1, \widehat{z}_1$ be such that $$(\bbI \ \otimes \ U_\chck){C_1^\Circle}^\dagger X^{x_1^\Circle}Z^{z_1^\Circle} \cdots {C_{k-1}^\Circle}^\dagger X^{x_{k-1}^\Circle}Z^{z_{k-1}^\Circle} = X^{\widehat{x}_1}Z^{\widehat{z}_1}(\bbI \ \otimes \ U_\chck){C_1^\Circle}^\dagger \cdots {C_{k-1}^\Circle}^\dagger.$$ Write $\widehat{x}_1$ as $\widehat{x}_{\enc},\widehat{x}_{\test,Z},\widehat{x}_{\test,T,1},\dots,\widehat{x}_{\test,T,n}$, and same for $\widehat{z}_1$. Here, $\widehat{x}_{\enc} \in \{0,1\}^{m + \secp n + k_0 + k_T}$, $\widehat{x}_{\test,Z} \in \{0,1\}^{k_0 + \secp n}$, and each $\widehat{x}_{\test,T,i} \in \{0,1\}^{2\secp}$.

    \item Let  $\widehat{x}_2, \widehat{z}_2$ be such that
    $$X^{x_{k}^\Circle}Z^{z_{k}^\Circle}{C_{k+1}^\Circle}^\dagger X^{x_{k+1}^\Circle}Z^{z_{k+1}^\Circle}\dots {C_{n}^\Circle}^\dagger X^{x_{n}^\Circle}Z^{z_{n}^\Circle}  = X^{\widehat{x}_2}Z^{\widehat{z}_2}{C_{k+1}^\Circle}^\dagger\dots {C_{n}^\Circle}^\dagger.$$
    
    \item Let $\widehat{U}_\test \coloneqq U_\test X^{\widehat{x}_2}Z^{\widehat{z}_2}$ and use $\widehat{U}_\test$ in place of $U_\test$ (in Online Round 1).
    
    \item Let $\widehat{U}_\enc:= U_\enc X^{\widehat{x}_\enc} Z^{\widehat{z}_\enc}$ and use $\widehat{U}_\enc$ in place of $U_\enc$ (in Online Round 2).
    
    \item For each $i \in [n]$, let $\widehat{C}_i^T \coloneqq C_i^T X^{\widehat{x}_{\test,T,i}}Z^{\widehat{z}_{\test,T,i}}$, and use $\widehat{C}_i^T$ in place of $C_i^T$ (in Online Round 2).
    
    \item Let $\widehat{r}:= r \oplus \widehat{x}_{\test,z}$ and use $\widehat{r}$ in place of $r$ (in Online Round 3).
        
    \end{itemize}
    
    This switch is perfectly indistinguishable from $\cH_0$ due to the fact that in $\cH_0$, $C_k^\Circle$ is a uniformly random Clifford, and in $\cH_1$, $U_\test$ is a uniformly random Clifford, and that $U_\enc,\{C_i^T\}_{i \in [n]}$ and $r$ are uniformly random.

    \item $\cH_2$: \underline{Re-define $U_\enc$}. \\ 
    
    During Offline Round 2,
    
    \begin{itemize}
        \item Sample $U_{\mathsf{garble}} \gets \mathscr{C}_{m+\secp n + k_0 + k_T}$, compute $(E_0,D_0,\widetilde{g}_1,\dots,\widetilde{g}_d) \gets \QGarble(1^\secp,Q_\dst[\{C_i^\inp,C_i^\out\}_{i \in [n]}])$, and use $U_{\mathsf{garble}}^\dagger E_0$ in place of $U_\enc$. Thus, we now have 
        $$C_k \coloneqq {C_{k+1}^\Circle}^\dagger\dots {C_{n}^\Circle}^\dagger U_\test^\dagger (U_{\mathsf{garble}}^\dagger \otimes \bbI)(E_0 \otimes \bbI)(\bbI \otimes U_\chck){C_{1}^\Circle}^\dagger\dots {C_{k-1}^\Circle}^\dagger.$$
        
    \end{itemize}
    
    During Online Round 2,
    
    \begin{itemize}
    
        \item Let $\widehat{x}_\inp,\widehat{z}_\inp$ be such that $$E_0 X^{\widehat{x}_\enc}Z^{\widehat{z}_\enc}\left(X^{x_1^\inp}Z^{z_1^\inp} \otimes \dots \otimes X^{x_n^\inp}Z^{z_n^\inp} \otimes \bbI^{\gray{Z_\inp},\gray{T_\inp}}\right) \coloneqq X^{\widehat{x}_\inp}Z^{\widehat{z}_\inp}E_0.$$
        \item Let $\widehat{U}_{\mathsf{garble}} \coloneqq X^{\widehat{x}_\inp}Z^{\widehat{z}_\inp}U_{\mathsf{garble}}$, and use $\widehat{U}_{\mathsf{garble}}$ in place of $U_{\mathsf{garble}}$.
    \end{itemize}

    This switch is perfectly indistinguishable from $\cH_2$ due to the fact that in $\cH_1$, $U_\enc$ is a uniformly random Clifford, and in $\cH_2$, $U_{\mathsf{garble}}$ is a uniformly random Clifford.


    \item $\cH_3$: \underline{Introduce new Pauli errors.} \\
    
    During Offline Round 2,
    \begin{itemize}
        \item Sample $x_{\Sim},z_{\Sim} \gets \{0,1\}^{m + \secp n + k_0 + k_T}$, and define $$C_k \coloneqq {C_{k+1}^\Circle}^\dagger\dots {C_{n}^\Circle}^\dagger U_\test^\dagger (U_{\mathsf{garble}}^\dagger \otimes \bbI)(X^{x_{\Sim}}Z^{z_{\Sim}} \otimes \bbI)(E_0 \otimes \bbI)(\bbI \otimes U_\chck){C_{1}^\Circle}^\dagger\dots {C_{k-1}^\Circle}^\dagger.$$
    \end{itemize}
    
    During Online Round 2,
    
    \begin{itemize}
        \item Let $\widehat{U}_{\mathsf{garble}} \coloneqq X^{x_{\Sim,}}Z^{z_{\Sim}}X^{\widehat{x}_\inp}Z^{\widehat{z}_\inp}U_{\mathsf{garble}}$.
    \end{itemize}
    
    This switch is perfectly indistinguishable since $U_{\mathsf{garble}}$ is a uniformly random Clifford.
    
    \item $\cH_4$: \underline{Introduce new EPR pairs.} 
    
    During Offline Round 2, prepare $m + \secp n + k_0 + k_T$ EPR pairs $\left\{\left(\be^{(i)}_{\Sim,1},\be^{(i)}_{\Sim,2}\right)\right\}_{i \in [m + \secp n + k_0 + k_T]}$, and let $$\be_{\Sim,1} \coloneqq \left(\be_{\Sim,1}^{(1)},\dots,\be_{\Sim,1}^{(m + \secp n + k_0 + k_T)}\right), \be_{\Sim,2} \coloneqq \left(\be_{\Sim,2}^{(1)},\dots,\be_{\Sim,2}^{(m + \secp n + k_0 + k_T)}\right).$$ 
    
    This hybrid will now compute $C_k^\Circle$ in three parts, as follows.

    \begin{itemize}
        \item First, compute 
        $$\left(\bn^{\gray{N}},\bz^{\gray{Z_\inp}},\bt^{\gray{T_\inp}},\bz_\test^{\gray{Z_\test}},\bt_{\test,1}^{\gray{T_{\test,1}}},\dots,\bt_{\test,n}^{\gray{T_{\test,n}}}\right) \coloneqq (\bbI \otimes U_\chck){C_{1}^\Circle}^\dagger\dots {C_{k-1}^\Circle}^\dagger(\be_{k}).$$ 
        
        \item Second, compute
        $$\left(\bx^{\gray{N},\gray{Z_\inp},\gray{T_\inp}}\right) \coloneqq E_0 \left(\bn,\bz,\bt\right).$$ 
         
        \item Third, compute
        $${C_{k+1}^\Circle}^\dagger\dots {C_{n}^\Circle}^\dagger U_\test^\dagger (U_{\mathsf{garble}}^\dagger \otimes \bbI)(\be_{\Sim,2},\bz_\test,\bt_{\test,1},\dots,\bt_{\test,n}).$$ In addition, perform Bell measurements between $\bx$ and $\be_{\Sim,1}$ to obtain teleportation errors, which will be defined to be $x_{\Sim},z_{\Sim}$. 
        
    \end{itemize}
    
    The second part of the computation (on registers $\gray{N},\gray{Z_\inp},\gray{T_\inp}$) will now be delayed until Online Round 2, since the teleportation errors are not used until the definition of $\widehat{U}_{\mathsf{garble}}$, which is given to the adversary in Online Round 2. Thus, this is identically distributed to $\cH_3$.

    \item $\cH_5$: \underline{Move around Pauli errors.} During Online Round 2,
    \begin{itemize}
        \item Change the computation to
        $$\bx \coloneqq E_0\left(X^{x_1^\inp}Z^{z_1^\inp} \otimes \dots \otimes X^{x_n^\inp}Z^{z_n^\inp} \otimes \bbI^{\gray{Z_\inp},\gray{T_\inp}}\right)\left(X^{\widehat{x}_\enc}Z^{\widehat{z}_\enc}\right)(\bn,\bz,\bt).$$
        \item Let $\widehat{U}_{\mathsf{garble}} \coloneqq X^{x_{\Sim}}Z^{z_{\Sim}}U_{\mathsf{garble}}$.
    \end{itemize}
    
    This is again identically distributed to $\cH_4$ since $U_{\mathsf{garble}}$ is a uniformly random Clifford.

    \item $\cH_6$: \underline{Simulate the quantum garbled circuit}. During Online Round 2, rather than applying $E_0$, do the following.
     \begin{itemize}
         \item Compute $$(\bn_\inp,\bz_\inp,\bz_\inp) \coloneqq \left(X^{x_1^\inp}Z^{z_1^\inp} \otimes \dots \otimes X^{x_n^\inp}Z^{z_n^\inp} \otimes \bbI^{\gray{Z_\inp},\gray{T_\inp}}\right)\left(X^{\widehat{x}_\enc}Z^{\widehat{z}_\enc}\right)(\bn,\bz,\bt).$$
         \item Compute $$\by \coloneqq Q_\dst[\{C_i^\inp,C_i^\out\}_{i \in [n]}](\bn_\inp,\bz_\inp,\bz_\inp).$$
         \item Compute $$(\widetilde{\bx}^{\gray{N},\gray{Z_\inp},\gray{T_\inp}},D_0,\widetilde{g}_1,\dots,\widetilde{g}_d) \gets \QGSim\left(1^\secplev,Q_\dst,\by\right).$$
         \item Perform Bell measurements between $\widetilde{\bx}$ and $\be_{\Sim,1}$ as before.
     \end{itemize}
     
     This is indistinguishable from $\cH_5$ due to security of the quantum garbled circuit.


    \item $\cH_{7}$: \underline{Switch honest party inputs to $\mathbf{0}$}. 
    \begin{itemize}
        \item During Online Round 1, for each $i \in \cH$, teleport $C_i^\inp(\mathbf{0}^{m_i + \secp})$ to Party 1 rather than $C_i^\inp(\bx_i,\mathbf{0}^{\secp})$.
        \item During Online Round 2, instead of directly applying $Q_\dst[\{C_i^\inp,C_i^\out\}_{i \in [n]}]$ to $(\bn_\inp,\bz_\inp,\bt_\inp)$, do the following. First apply ${C_1^\inp}^\dagger \otimes \dots \otimes {C_n^\inp}^\dagger$ to $\bn$. Then, swap out the honest party input registers for $\{\bx_i\}_{i \in \cH}$, and continue with the computation of $Q_\dst[\{C_i^\inp,C_i^\out\}_{i \in [n]}]$.
    \end{itemize}
    
    $\cH_7$ is statistically indistinguishable from $\cH_6$ due to properties of the Clifford authentication code. In particular, since the code is perfectly hiding, the adversary cannot tell that the inputs where switched to $\mathbf{0}$. Thus the adversary can only distinguish if the output of $Q_\dst[\{C_i^\inp,C_i^\out\}_{i \in [n]}]$ differs between $\cH_7$ and $\cH_6$. However, if any Clifford authentication test that happens within $Q_\dst[\{C_i^\inp,C_i^\out\}_{i \in [n]}]$ fails, then the output is $(\bot \dots \bot)$. In both $\cH_7$ and $\cH_6$, conditioned on these tests passing, the honest party inputs to $Q_\dst$ are statistically close to $\{\bx_i\}_{i \in \cH}$, due to the authentication property of the Clifford code.
    
    \item $\cH_8$: \underline{Query ideal functionality}. In Online Round 2, rather than computing $Q_\dst[\{C_i^\inp,C_i^\out\}_{i \in [n]}]$ on $(\bn_\inp,\bz_\inp,\bt_\inp)$, do the following.
    
    \begin{itemize}
        \item Parse $\bn_\inp \coloneqq (\bn_1,\dots,\bn_n)$. For each $i \in [n]$, compute $(\bx_i,\bz_i) \coloneqq C_i^\inp 
        (\bn_i)$ and measure $\bz_i$. If any measurements are not all 0, then set $(\by_1,\dots,\by_n) \coloneqq (\bot,\cdots,\bot)$.
        \item Otherwise, query the ideal functionality with $\{\bx_i\}_{i \in \cM}$ and let $\{\by_i\}_{i \in \cM}$ be the output. Set $\by_i \coloneqq \mathbf{0}^{\ell_i}$ for each $i \in \cH$.
        \item Sample $\{C_i^\out \gets \mathscr{C}_{\ell_i + \secp}\}_{i \in [n]}$ and set $$\by \coloneqq (C_1^\out(\by_1,\mathbf{0}^\secp),\cdots,C_n^\out(\by_n,\mathbf{0}^\secp)).$$
        \item Apply $\QGSim$ to $\by$ as before.
    \end{itemize}
    
    In the Output Reconstruction step, do the following.
    \begin{itemize}
        \item For each $i \in \cH$, 
        compute $C_i^\out(\by_i^\out)$ and measure whether the last $\secp$ trap qubits are all 0. If not, then send $\abort_i$ to the ideal functionality and otherwise send $\mathsf{ok}_i$.
    \end{itemize}

    Observe that one difference between $\cH_7$ and $\cH_8$ is that $\bz_\inp$ and $\bt_\inp$ are not used in the computation of $\CM$ circuit $Q_\dst$. Instead, the ideal functionality directly computes $Q$ on the inputs. This will result in statistically close outputs if i) the QGC satisfies statistical correctness, ii) $\bz_\inp$ is statistically close to $\mathbf{0}^{\otimes k_0}$, and iii) the result of applying the distillation circuit to $\bt_\inp$ is statistically close to $\mathbf{T}^{\otimes k_T / \secp}$. \cref{lemma:Ztest} implies that the second requirement holds conditioned on the adversary submitting the correct $r$ to the classical $\MPC$ in Online Round 1 (and otherwise, all honest parties abort). \cref{lemma:distillation} plus Clifford authentication implies that the third requirement holds conditioned on the honest party $T$-state checks in Online Round 3 all passing (and if any one of them fails, the honest parties abort).
    
    The other difference is that honest party outputs are determined by the ideal functionality's computation. First, the adversary cannot tell that $\{\by_i\}_{i \in \cH}$ are switched to $\mathbf{0}$ within the quantum garbled circuit, by perfect hiding of the Clifford authentication code (using Cliffords $\{C_i^\out\}_{i \in \cH}$). Next, in $\cH_7$, the adversary cannot make an honest party accept a state noticeably far from their real output $\by_i$, by authentication of the Clifford code. Thus, $\cH_7$ is statistically close to $\cH_8$. This completes the proof, as $\cH_8$ is the simulator described above.

\end{itemize}

\end{proof}

\subsubsection{Case 2: $P_1$ is the only honest party}

\paragraph{Simulator.} The simulator will act as party 1 and maintain the classical $\MPC$ oracle. It will compute $\MPC$ honestly throughout, and will compute honest party 1 actions throughout except for what is described below.

\begin{itemize}
    \item \textbf{Online Round 1.} Rather than Clifford-encoding and teleporting in $P_1$'s input $\bx_1$, the simulator will teleport $C_1^\inp(\mathbf{0}^{\ell_1 + \secp})$.
    \item \textbf{Online Round 3.} Rather than evaluating the quantum garbled circuit $(U_{\mathsf{garble}},D_0,\widetilde{g}_1,\dots,\widetilde{g}_d)$ on $\by$, the simulator will run the following computation on $\by$.
    \begin{itemize}
        \item Compute $(\bn_1,\dots,\bn_n,\bz,\bt) \coloneqq E_0^\dagger U_{\mathsf{garble}}(\by)$, where $(\bn_1,\dots,\bn_n)$ are the parties' Clifford-encoded inputs. 
        \item For each $i \in [n]$, compute $(\bx_i,\bz_i) \coloneqq {C_i^\inp}^\dagger (\bn_i)$ and measure $\bz_i$. If any measurements are not all 0, then set $(\by_1,\dots,\by_n) \coloneqq (\bot,\cdots,\bot)$. Otherwise, query the ideal functionality with $\{\bx_i\}_{i \in [2,\dots,n]}$ and let $\{\by_i\}_{i \in [2,\dots,n]}$ be the output.
        \item For each $i \in [2,\dots,n]$ set $\by_i^\out \coloneqq C_i^\out(\by_i,\mathbf{0}^\secp)$, set $\by_1^\out \coloneqq C_1^\out(\mathbf{0}^{\ell_1+\secp})$, and continue as the honest party 1.
    \end{itemize}
\end{itemize}

\begin{lemma}
Let $\Pi$ be the protocol described in \cref{subsec:5-round-protocol} computing some quantum circuit $Q$. Then $\Pi$ satisfies \cref{def:mpqc} for any $\cA$ corrupting parties $[2,\dots,n]$.

\end{lemma}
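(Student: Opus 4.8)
The plan is to follow the same hybrid structure as in Case 1, but now the single honest party is the designated evaluator $P_1$, which both simplifies and complicates matters. On the simplifying side, there is no ``re-define $C_k^\Circle$'' step: $P_1$ itself samples $C_1^\Circle$ and performs the distillation-checking protocol, so the simulator (playing $P_1$) can run the classical $\MPC$ oracle honestly and does not need to route re-randomization Cliffords through an honest non-evaluator. On the complicating side, all the Clifford-encoded inputs that flow around the circle originate from \emph{corrupted} parties (in rounds Online Round~1), so the simulator must extract their inputs from these encodings, which it can do because it holds all the $\{C_i^\inp\}_{i \in \cM}$ (sent to the $\MPC$ oracle in Online Round~1) and the $E_0,U_{\mathsf{garble}}$ it generated. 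The sequence of hybrids I would use is: (i) switch $P_1$'s own input $\bx_1$ to $\mathbf{0}^{\ell_1+\secp}$ inside its Clifford encoding $C_1^\inp$, indistinguishable by perfect hiding of the Clifford authentication code; (ii) replace the honest evaluation of the quantum garbled circuit $(U_{\mathsf{garble}},D_0,\widetilde{g}_1,\dots,\widetilde{g}_d)$ on $\by$ with the ``decode-then-recompute'' procedure that strips off $E_0$ and $U_{\mathsf{garble}}$, Clifford-decodes each $\bn_i$ using ${C_i^\inp}^\dagger$, runs $Q_\dst$ explicitly, and re-Clifford-encodes the outputs — this is identical by correctness of the garbling scheme; (iii) replace the explicit $\CM$-circuit evaluation $Q_\dst$ with a direct query to the ideal functionality on $\{\bx_i\}_{i \in \{2,\dots,n\}}$, replacing $P_1$'s output slot by $\mathbf{0}^{\ell_1}$ before re-encoding with $C_1^\out$.

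For step (iii), the key point is that bypassing the $\CM$ circuit in favor of a direct ideal-functionality query yields a statistically close distribution precisely when the auxiliary $\mathbf{0}$-states fed into the garbled circuit are genuinely close to $\mathbf{0}^{k_0}$ and the distilled $T$-states are close to $\mathbf{T}^{k_T/\secp}$. Here I would invoke \cref{lemma:Ztest} (applied to the $\mathbf{0}$-state check that $P_1$ performs by measuring $\by_Z$ and submitting $r'$ to the $\MPC$) to conclude that conditioned on $r' = r$ the register $\gray{Z_\inp}$ holds a state negligibly close to $\mathbf{0}^{k_0}$, and if $r' \neq r$ the $\MPC$ makes everyone abort; and \cref{lemma:distillation} together with the Clifford authentication property of the per-party $T$-check encodings $\{C_i^T\}_{i\in[n]}$ to conclude that conditioned on every honest party's $T$-state projection passing, the distillation output is negligibly close to $\mathbf{T}^{k_T/\secp}$ — but the honest party here is just $P_1$, so this is the check $P_1$ performs on $\by_{T,1}$ in Online Round~3, and if it fails $P_1$ tells the $\MPC$ to abort. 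Finally, perfect hiding of the Clifford code (using $C_1^\out$) hides the substitution of $\mathbf{0}^{\ell_1}$ for $P_1$'s real output, and since $P_1$ is honest there is no issue of the adversary forging an accepted output state for $P_1$.

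The subtlety I expect to require the most care is the input-extraction/consistency argument in step (iii): the simulator must extract each corrupted party's effective input from the Clifford-decoded register $\bn_i$, but a malicious $P_i$ may have teleported in a garbage state or a state entangled with the adversary's side register, and we must argue that the value $\bx_i$ fed to the ideal functionality is consistent with what the real protocol would compute. This is handled by the authentication property of the Clifford code applied to $C_i^\inp$ — if the trap register $\bz_i$ measures to $\mathbf{0}^\secp$ then $\bx_i$ is (negligibly close to) well-defined, and otherwise $Q_\dst$ would itself output $(\bot,\dots,\bot)$, which the simulator mimics by setting $(\by_1,\dots,\by_n) \coloneqq (\bot,\dots,\bot)$. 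One must also track Pauli teleportation errors carefully: as in Case~1, the corrupted parties' reported errors $\{x_i^\inp,z_i^\inp\}_{i\in\cM}$ and $\{x_i^\Circle,z_i^\Circle\}_{i\in\cM}$ get absorbed into $U_{\mathsf{garble}}$ and $U_\test$ via the classical $\MPC$ computation, so the decode step must apply the matching Pauli corrections before invoking ${C_i^\inp}^\dagger$; since in this case the simulator runs the $\MPC$ honestly, these corrections are exactly what the honest functionality prescribes, so no re-randomization bookkeeping beyond the real protocol is needed. With these observations the chain $\cH_0 \approx \cH_1 \approx \cH_2 \approx \cH_3 = \Ideal$ goes through, establishing the lemma.
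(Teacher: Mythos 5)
Your proposal is correct and follows essentially the same route as the paper's proof: the simulator plays the honest evaluator $P_1$ and runs the classical $\MPC$ honestly, and the hybrid chain consists of (a) replacing garbled-circuit evaluation with direct evaluation of $Q_\dst$ via correctness of the garbling scheme, (b) swapping $P_1$'s input for $\mathbf{0}$ via perfect hiding of the Clifford code, and (c) replacing the explicit $Q_\dst$ computation with an ideal-functionality query justified by \cref{lemma:Ztest} for the $\mathbf{0}$-state check and \cref{lemma:distillation} for $P_1$'s own $T$-state check — exactly the paper's three hybrids, merely in a different (and equally valid) order.
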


\begin{proof}
We consider a sequence of hybrid distributions, where $\cH_0$ is $\Real_{\Pi,\Q}(\cA_\secp,\{\bx_i\}_{i \in [n]},\baux_\cA)$, i.e. the real interaction between $\cA_\secp(\{\bx_i\}_{i \in [2,\dots,n]},\baux_\cA)$ and an honest party $P_1(1^\secp,\bx_1)$. In each hybrid, we describe the differences from the previous hybrids.

\begin{itemize}
    \item $\cH_1:$ \underline{Directly compute $Q_\dst[\{C_i^\inp,C_i^\out\}]$ in place of garbled circuit evaluation} During Online Round 3, this hybrid computes $(\bn_1,\dots,\bn_n,\bz,\bt) \coloneqq E_0^\dagger U_{\mathsf{garble}}(\by)$ and then applies $Q_\dst[\{C_i^\inp,C_i^\out\}]$ to produce outputs $(\by_1^\out,\dots,\by_n^\out)$. Statistical indistinguishability follows from the statistical correctness of the QGC.
    \item $\cH_2:$ \underline{Replace $P_1$'s input with $\mathbf{0}$} During Online Round 1, this hybrid teleports in $C_1^\inp(\mathbf{0}^{m_i + \secp})$. Then, during Online Round 2, this hybrid inserts $P_1$'s input $\bx_1$ before the computation of $Q$. This switch is perfectly indistinguishable due to the perfect hiding of the Clifford code.
    \item $\cH_3:$ \underline{Query ideal functionality} During Online Round 3, this hybrid computes $Q_\dst[\{C_i^\inp,C_i^\out\}]$ as described in the simulator, by using the ideal functionality to compute $Q$. This switch is statistically indistinguishable as long as i) $\bz$ is statistically close to $\mathbf{0}^{\otimes k_0}$ (which follows from \cref{lemma:Ztest}), and ii) the result of applying the distillation circuit to $\bt$ to statistically close to $\mathbf{T}^{\otimes k_T / \secp}$ (which follows from \cref{lemma:distillation}, as $P_1$ checks its own subset of T states). This hybrid is the simulator, completing the proof.
\end{itemize}

\end{proof}

\section{Multi-Party Quantum Computation in Four Rounds}

In this section, we show the existence of a four-round protocol for multi-party quantum computation, assuming the sub-exponential hardness of LWE. The protocol we present satisfies security with abort, and only requires two rounds of online communication (that is, two rounds of communication once the parties receive their inputs). Thus, this implies the existence of a two-round protocol for multi-party quantum computation given some input-independent quantum pre-processing. 

We also note that the protocol can be adjusted to give security with \emph{unanimous} abort with three rounds of online communication (while keeping the total number of rounds at four), though we do not provide a formal description of this protocol. Roughly, this follows because if parties receive their inputs one round earlier, they will be able to receive and check the authenticity of their (encrypted) outputs at the end of round three, rather than checking the authenticity of their (unencrypted) outputs at the end of round four. 

\subsection{The Protocol}\label{subsec:4-round-protocol}

\paragraph{Ingredients.} Our protocol will make use of the following cryptographic primitives, which are all assumed to be sub-exponentially secure (i.e. there exists $\epsilon$ such that the primitive is $(2^{-\secp^{\epsilon}})$-secure).

\begin{itemize}
    \item Round-optimal quantum-secure multi-party computation for classical reactive functionalities in the CRS model, to be treated as an oracle called $\MPC$ ~(see \cref{subsec:reactivempc}).
    \item A garbling scheme for $\CM$ circuits $(\QGarble, \QGEval, \QGSim)$\ifsubmission~(see section 4 of the full version)\else\fi.
    \item A quantum multi-key FHE scheme $\qmfhe=(\KeyGen,\CEnc,\Enc,\Eval,\Rerand,\Dec)$ with ciphertext rerandomization and classical encryption of classical messages\ifsubmission~(see section 3.6 of the full version)\else\fi.
    \item A quantum-secure classical garbled circuit $(\Garble,\GEval,\GSim)$\ifsubmission~(see section 3.8 of the full version)\else\fi.
\end{itemize}

\paragraph{Notation.} We use the same notation as described in \cref{subsec:5-round-protocol}.

\protocol
{\proref{fig:classical-4}: Classical Functionality for Four-Round Quantum MPC}
{Classical Functionality for Four-Round Quantum MPC.}
{fig:classical-4}
{
\textbf{Public Parameters:} Security parameter $\secp$, number of parties $n$, $\CM$ circuit $Q$, and parameters $(m,\ell,k_0,k_T,v,\epsilon)$ defined above. Let $\secplev = (2nv + 2m)^{c}$ for some $c > 1/\epsilon$.\\

\textbf{Shared Randomness:} Random strings for $0$ state check $r,s \gets \{0,1\}^{k_0 + \secp n}$, re-randomization matrix $U_{\mathsf{enc}} \gets \mathscr{C}_{m+\secp n + k_0 + k_T}$, Cliffords for $T$-state checks $\{C_i^{T} \gets \mathscr{C}_{2\secp}\}_{i \in [n]}$, Cliffords for outputs $\{C_i^{\out} \gets \mathscr{C}_{\ell_i + \secp}\}_{i \in [n]}$.\\ 

\textbf{Offline Round 1:} Obtain input $\left(C_i^{\Circle},C_i^{\inp}\right)$ from each party $i$.

\textbf{Offline Round 2:} Obtain input $\left(x_i^\Circle,z_i^\Circle\right)$ from each party $i$.
\begin{itemize}
    \item Sample a random linear map $M \gets \mathsf{GL}(2(k_0 + \secp n),\bbF_2)$, and a random permutation $\pi$ on $k_T + \secp n$ elements. Let $f_\test[\{C_i^\Circle\}_{i \in [n]},r,s,U_\enc,\{C_i^T\}_{i \in [n]},M,\pi]$ be the classical circuit that takes as input $\{x_i^\Circle,z_i^\Circle\}_{i \in [n]}$ and outputs $U_\test$, following the computation in Online Round 1 of \proref{fig:classical}.
    \item Compute $(\{\lab^\test_{i,b}\}_{i \in [2nv],b \in \{0,1\}},\widetilde{f}_\test) \gets \Garble(1^\secplev,f_{\test}[\{C_i^\Circle\}_{i \in [n]},r,s,U_\enc,\{C_i^T\}_{i \in [n]},M,\pi])$.
    \item For each $i \in [2nv], b \in \{0,1\}$, compute $(\pk^\test_{i,b},\sk^\test_{i,b}) \gets \qmfhe.\Gen(1^\secplev)$ and $\ct^\test_{i,b} \gets \qmfhe.\CEnc(\pk^\test_{i,b},\lab^\test_{i,b})$.
    \item Output $\left(\{\pk^\test_{i,b},\ct^\test_{i,b}\}_{i \in [2nv], b \in \{0,1\}}, \widetilde{f}_\test\right)$ to party 1.
   
\end{itemize}

\textbf{Online Round 1:} Obtain input $\left(x_i^\inp,z_i^\inp\right)$ from each party $i$ and additional input $\ct_r$ from party 1.
\begin{itemize}
    \item Let $f_\QGC[U_\enc,\{C_i^\inp,C_i^\out\}_{i \in [n]}]$ be the circuit that takes as input $\{x_i^\inp,z_i^\inp\}_{i \in [n]}$ and outputs $(U_\mathsf{garble},D_0,\widetilde{g}_1,\dots,\widetilde{g}_d)$, following the computation in Online Round 2 of \proref{fig:classical}.
    \item Compute $(\{\lab^\QGC_{i,b}\}_{i \in [2m],b \in \{0,1\}},\widetilde{f}_\QGC]) \gets \Garble(1^\secplev,f_{\QGC}[U_\enc,\{C_i^\inp,C_i^\out\}_{i \in [n]}])$.
    \item For each $i \in [2m], b \in \{0,1\}$, compute $(\pk^\QGC_{i,b},\sk^\QGC_{i,b}) \gets \qmfhe.\Gen(1^\secplev)$ and $\ct^\QGC_{i,b} \gets \qmfhe.\CEnc(\pk^\QGC_{i,b},\lab^\QGC_{i,b})$.
    \item Let $t^\Circle \coloneqq (x_1^\Circle,z_1^\Circle,\dots,x_n^\Circle,z_n^\Circle)$, and let $\sk_\test = \{\sk^\test_{i,t^\Circle_i}\}_{i \in [2nv]}$.
    \item Output $\left(\{\pk^\QGC_{i,b},\ct^\QGC_{i,b}\}_{i \in [2m], b \in \{0,1\}}, \widetilde{f}_\QGC\right)$ to party 1 and $(C_i^T,\sk_\test)$ to party $i$ for each $i \in [n]$.
    
\end{itemize}

\textbf{Online Round 2:}
\begin{itemize}
    \item Let $t^\inp \coloneqq (x_1^\inp,z_i^\inp,\dots,x_n^\inp,z_n^\inp)$, and let $\sk_\QGC = \{\sk^\QGC_{i,t^\inp_i}\}_{i \in 2m}$.
    \item Decrypt $\ct_r$ with $\sk_\test$ to obtain a value $r'$. If $r' = r$ then output $(C_i^\out,\sk_\QGC)$ to party $i$ for each $i \in [n]$, and otherwise output $\bot$ to each party.
\end{itemize}
}

\paragraph{Offline Round 1.} The parties send EPR halves to each other exactly as described in Offline Round 1 of the five-round protocol from \cref{sec:five-round}. In addition, each party $i$ samples $C_i^\Circle \gets \mathscr{C}_v$ and $C_i^\inp \gets \mathscr{C}_{m_i + \secp}$ and inputs $(C_i^\Circle,C_i^\inp)$ to the classical $\MPC$.

\paragraph{Offline Round 2.} 
\begin{itemize}
    \item \underline{Party $P_i$ for $i \in [n]$.} The parties apply $C_i^\Circle$ to their registers and teleport the results around the circle, exactly as described in Offline Round 2 of the five-round protocol from \cref{sec:five-round}. This results in teleportation errors $x_i^\Circle,z_i^\Circle$, which party $i$ broadcasts to all parties.
    \item  \underline{Party $P_1$.} $P_1$ receives output $\left(\left\{\pk^\test_{i,b},\ct^\test_{i,b}\right\}_{i \in [2nv], b \in \{0,1\}}, \widetilde{f}_\test\right)$ from $\MPC$.
\end{itemize}

\paragraph{Parties Receive Inputs.} After the offline rounds, each party $P_i$ receives an $m_i$-qubit input $\bx_i$. 

\paragraph{Online Round 1.} 
\begin{itemize}
    \item \underline{Party $P_i$ for $i \in [n]$.} The parties Clifford encode their inputs and teleport the encodings exactly as described in Online Round 1 of the five-round protocol from 
    \cref{sec:five-round}. This results in teleportation errors $x_i^\inp,z_i^\inp$, which party $i$ broadcasts to all parties. Each party $i$ receives output $\left(C_i^T,\sk_\test\right)$ from $\MPC$.
    \item \underline{Party $P_1$.} $P_1$ first homomorphically evaluates $\widetilde{f}_\test$ using encryptions of the labels corresponding to the $\left\{x_i^\Circle,z_i^\Circle\right\}_{i \in [n]}$ that were broadcast last round. This results in $\qmfhe.\Enc(\sk_\test,U_\test)$, where recall from \proref{fig:classical-4} that $\sk_\test$ is a set of $2nv$ secret keys. Now, $P_1$ homomorphically evaluates $U_\test$ on its registers, exactly as in Online Round 2 of the five-round protocol from \cref{sec:five-round} (except here the computation is performed homormophically under $\sk_\test$). This results in the following encryptions: $$\qmfhe.\Enc(\sk_\test, \by), \qmfhe.\Enc(\sk_\test, r'), \qmfhe.\Enc\left(\sk_\test, \by_{T,1}\right),\dots,\qmfhe.\Enc\left(\sk_\test,\by_{T,n}\right).$$   
    $P_1$ sends $\qmfhe.\Enc\left(\sk_\test, \by_{T,i}\right)$ to party $P_i$ and inputs $\qmfhe.\Enc(\sk_\test, r')$ to $\MPC$. Finally, it receives $\left(\left\{\pk^\QGC_{i,b},\ct^\QGC_{i,b}\right\}_{i \in [2m], b \in \{0,1\}}, \widetilde{f}_\QGC\right)$ from $\MPC$.
\end{itemize}

\paragraph{Online Round 2.}
\begin{itemize}
    \item \underline{Party $P_i$ for $i \in [n]$.} Each party $P_i$ decrypts the state $\qmfhe.\Enc\left(\sk_\test,\by_{T,i}\right)$ received from $P_1$ using $\sk_\test$ received from $\MPC$. It computes $C_i^{T}\left(\by_{T,i}\right)$, and then performs the binary projective measurement that projects the first $\secp$ qubits onto $\mathbf{T}^{\secp}$ and the last $\secp$ qubits onto $\mathbf{0}^{\secp}$. If this measurement returns $0$, party $P_i$ sends $\abort$ to $\MPC$. If no parties send $\abort$, then they each receive $\left(C_i^\out,\sk_\QGC\right)$ from $\MPC$.
    \item \underline{Party $P_1$.} $P_1$ first decrypts $\qmfhe.\Enc(\sk_\test, \by)$ using $\sk_\test$. Then, it homomorphically evaluates $\widetilde{f}_\QGC$ using encryption of the labels corresponding to the $\left\{x_i^\inp,z_i^\inp\right\}_{i \in [n]}$ that were broadcast last round. This results in $\qmfhe.\Enc(\sk_\QGC,(U_\mathsf{garble},D_0,\widetilde{g}_1,\dots,\widetilde{g}_d))$, where recall from \proref{fig:classical-4} that $\sk_\QGC$ is a set of $2m$ secret keys. Now, $P_1$ homomorphically evaluates the quantum garbled circuit on $\by$ to obtain encryptions $\qmfhe.\Enc(\sk_\QGC,\by_{\out,1}),\dots,\qmfhe.\Enc(\sk_\QGC,\by_{\out,n})$. For each $i \in [2,\dots,n]$, apply $\qmfhe.\Rerand$ to $\qmfhe.\Enc(\sk_\QGC,\by_{\out,i})$ and send the resulting ciphertext to $P_i$.
\end{itemize}

\paragraph{Output Reconstruction}
\begin{itemize}
    \item \underline{Party $P_i$ for $i \in [n]$.} Each party $P_i$ decrypts their state $\qmfhe.\Enc(\sk_\QGC,\by_{\out,i})$ using $\sk_\QGC$. Then, it computes $C_i^{\out}(\by_{\out,i})$ and measures whether the last $\secp$ trap qubits are all $0$. If not, it outputs $\bot$. Otherwise, its output is the remaining $\ell_i$ qubits. 
\end{itemize}

\subsection{Security}

\begin{theorem}
Assuming post-quantum maliciously-secure two-message oblivious transfer and (levelled) multi-key quantum fully-homomorphic encryption with sub-exponential security, there exists four-round multi-party quantum computation. Both of the above assumptions are known from the sub-exponential hardness of QLWE.
\end{theorem}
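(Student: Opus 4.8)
The plan is to establish security against a malicious $P_1$ (Case 1) and against an adversary corrupting all parties except $P_1$ (Case 2), following the template of the five-round proof in \cref{sec:five-round} but now accounting for the additional \emph{delayed computation} mechanism provided by classical garbled circuits and $\qmfhe$. The overall strategy in Case 1 is a hybrid argument whose steps mirror those of the five-round Case-1 proof: re-define the circle-re-randomization Clifford $C_k^\Circle$ held by an honest party $k \neq 1$ so as to absorb $U_\test$, $U_\enc$ (hence $U_{\mathsf{garble}}$ via $E_0$), and fresh Pauli errors $x_\Sim, z_\Sim$; introduce fresh EPR pairs $(\be_{\Sim,1}, \be_{\Sim,2})$ and split the computation of $C_k^\Circle$ into the pre-$E_0$ part, the $E_0$ part (delayed to Online Round 1), and the post-$E_0$ part; invoke $\QGSim$ to simulate the quantum garbled circuit; switch honest-party inputs to $\mathbf 0$ using the perfect hiding of the Clifford code; and finally route the computation through the ideal functionality, using \cref{lemma:Ztest} for the $0$-state check, \cref{lemma:distillation} plus Clifford authentication for the $T$-state check, and authentication for honest-party output integrity.

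The new ingredient relative to \cref{sec:five-round} is that in the four-round protocol $P_1$ does not learn $U_\test$ (resp.\ $(U_{\mathsf{garble}},D_0,\widetilde g_1,\dots,\widetilde g_d)$) in the clear, but only as a $\qmfhe$ ciphertext obtained by homomorphically evaluating the classical garbled circuit $\widetilde f_\test$ (resp.\ $\widetilde f_\QGC$) on $\qmfhe$-encrypted labels. Accordingly, I would insert, before the ``simulate the quantum garbled circuit'' step, two further blocks of hybrids: (i) for each non-selected label index, switch $\ct^\test_{i,1-t^\Circle_i}$ (and later $\ct^\QGC_{i,1-t^\inp_i}$) from an encryption of the real label to an encryption of $0$, relying on $\qmfhe$ semantic security — this is where sub-exponential security and the choice $\secplev = (2nv + 2m)^c$ with $c > 1/\epsilon$ matter, exactly as in the two-round 2PQC proof (\cref{lemma:distribution-bot}), since we must union-bound over the $2^{2nv+2m}$ possible values of the broadcast teleportation errors before the simulator knows them; (ii) invoke $\GSim$ on $\widetilde f_\test$ (resp.\ $\widetilde f_\QGC$) with output $U_\test$ (resp.\ $U_{\mathsf{garble}}$), then switch the selected-label ciphertexts back via semantic security, and use statistical ciphertext re-randomization of $\qmfhe$ to argue that the re-randomized output ciphertexts $P_1$ forwards are distributed independently of the homomorphic evaluation path. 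As in \cref{sec:two-round-preprocess}, one also partitions $\Real$ and $\Ideal$ by the adversary's round-1 broadcast of $\{x_i^\inp,z_i^\inp\}$ (and round-2 of $\{x_i^\Circle,z_i^\Circle\}$), proves the per-branch statement with advantage $\le \mu(\secp)/2^{2nv+2m}$, and sums.

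For Case 2 ($P_1$ the only honest party), the argument is shorter and mirrors the five-round Case-2 proof: first replace garbled-circuit evaluation by direct evaluation of $Q_\dst[\{C_i^\inp,C_i^\out\}]$ (statistical correctness of the QGC), then replace $P_1$'s input with $\mathbf 0$ inside the Clifford code (perfect hiding), then route through the ideal functionality using \cref{lemma:Ztest} for the $0$-state check and \cref{lemma:distillation} for $P_1$'s own $T$-state check; the $\qmfhe$/garbled-circuit layer is transparent here because $P_1$ is honest and performs the homomorphic evaluations itself, so the only extra observation needed is statistical re-randomization before $P_1$ sends encrypted outputs to the corrupted parties. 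I expect the main obstacle to be bookkeeping the interaction between the three ``moving'' objects — the circle teleportation errors $\{x_i^\Circle,z_i^\Circle\}$, the input teleportation errors $\{x_i^\inp,z_i^\inp\}$, and the Pauli corrections that must be pushed through $U_\test$, $U_\enc$, $E_0$, and the per-party $T$-check Cliffords $C_i^T$ — while simultaneously keeping the $\qmfhe$-encrypted classical garbled circuits consistent; in particular one must verify that the Pauli-correction rewriting done inside $f_\test$ and $f_\QGC$ (which are now garbled and evaluated homomorphically) still commutes correctly with the label-selection indexing, and that the sub-exponential parameter $\secplev$ is large enough to absorb the union bound over \emph{both} sets of teleportation errors. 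Once these are in place, the telescoping of advantages across all hybrids and both cases yields the claimed four-round, two-online-round maliciously secure MPQC from sub-exponential QLWE.
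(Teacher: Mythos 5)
Your proposal matches the paper's own proof in essentially every respect: the same two-case split (corrupted $P_1$ vs.\ $P_1$ the only honest party), the same simulator acting as an honest party $k \neq 1$ with the five-round hybrid skeleton (re-defining $C_k^\Circle$ to absorb $U_\test$ and $U_\enc$/$E_0$, fresh EPR pairs and Pauli errors, $\QGSim$, switching honest inputs to $\mathbf{0}$, then the ideal functionality via \cref{lemma:Ztest} and \cref{lemma:distillation}), the same added hybrids switching non-selected $\qmfhe$ label ciphertexts and simulating $\widetilde f_\test$, $\widetilde f_\QGC$ via $\GSim$, and the same per-branch conditioning on teleportation errors with advantage $\mu(\secp)/2^{2nv+2m}$ absorbed by the sub-exponential parameter $\secplev$. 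The only deviations (hybrid ordering, and phrasing of which label ciphertexts get switched) are immaterial bookkeeping.
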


We split the security proof into two cases based on the set of honest parties $\cH \subset [n]$. In the first case, $P_1$ is corrupted and in the second case, the only honest party is $P_1$.

\subsubsection{Case 1: $P_1$ is corrupted}

\paragraph{Simulator.}

Let $\cH \subset [n]$ be the set of honest parties and $k \neq 1$ be arbitrary such that $k \in \cH$. The simulator will act as party $k$, altering its actions as described below. All actions of parties $i \in \cH \setminus \{k\}$ will be honest except for those explicitly mentioned in the simulation (essentially just switching out their inputs for 0).

\begin{itemize}
    \item \textbf{Offline Round 1.} 
    \begin{itemize}
        \item Following honest party $k$'s behavior, prepare $v$ EPR pairs $\left\{\left(\be^{(i)}_{R},\be^{(i)}_{S}\right)\right\}_{i \in [v]}$, let $$\be_{R} \coloneqq \left(\be_{R}^{(1)},\dots,\be_{R}^{(v)}\right), \be_{S} \coloneqq \left(\be_{S}^{(1)},\dots,\be_{S}^{(v)}\right),$$ and send $\be_S$ to party $k-1$. Receive a state $\be_{k+1}$ of $v$ qubits from $P_{k+1}$
        \item For each $i \in \cH$, receive a state $\be_{\inp,i}$ of $m_i + \secp$ qubits from $P_1$.
        \item  Obtain $\{C_i^\Circle,C_i^\inp\}_{i \in [n] \setminus \cH}$ from the adversary's query to $\MPC$, and let $\{C_i^\Circle,C_i^\inp\}_{i \in \cH}$ be the values sampled by honest parties.
    \end{itemize}
    \item \textbf{Offline Round 2.} 
    \begin{itemize}
    
        \item Sample $r,s \gets \{0,1\}^{k_0+\secp n}$, $\{C_i^T \gets \mathscr{C}_{2\secp}\}_{i \in [n]}$, $M \gets \mathsf{GL}(2(k_0 + \secp n),\bbF_2)$, and a permutation $\pi$ on $k_T + \secp n$ elements. Use these values to compute $U_\chck$ as in the computation in Online Round 1 of \proref{fig:classical}.

        \item Compute $\left(\bn^{\gray{N}},\bz^{\gray{Z_\inp}},\bt^{\gray{T_\inp}},\bz_\test^{\gray{Z_\test}},\bt_{\test,1}^{\gray{T_{\test,1}}},\dots,\bt_{\test,n}^{\gray{T_{\test,n}}}\right) \coloneqq (\bbI \ \otimes \ U_\chck){C_1^\Circle}^\dagger \cdots {C_{k-1}^\Circle}^\dagger(\be_S)$, and discard $\bz^{\gray{Z_\inp}},\bt^{\gray{T_\inp}}$.

        \item Prepare $\ell + \secp n$ EPR pairs $\left\{\left(\be^{(i)}_{\Sim,1},\be^{(i)}_{\Sim,2}\right)\right\}_{i \in [\ell + \secp n]}$, and let $$\be_{\Sim,1} \coloneqq \left(\be_{\Sim,1}^{(1)},\dots,\be_{\Sim,1}^{(\ell + \secp n)}\right), \be_{\Sim,2} \coloneqq \left(\be_{\Sim,2}^{(1)},\dots,\be_{\Sim,2}^{(\ell + \secp n)}\right).$$ 
        
        \item Sample $U_\test \gets \mathscr{C}_v$ and $U_{\mathsf{garble}} \gets \mathscr{C}_{m + n \secp + k_0 + k_T}$.
        \item Compute $(\widetilde{\bx}_\inp,D_0,\widetilde{g}_1,\dots,\widetilde{g}_d) \gets \QGSim\left(1^\secplev,Q_\dst,\be_{\Sim,2}\right)$, and let $$\bw \coloneqq {C_{k+1}^\Circle}^\dagger(\cdots {C_{n}^\Circle}^\dagger(U_\test^\dagger(U_{\mathsf{garble}}^\dagger(\widetilde{\bx}_\inp),\bz_\test,\bt_{\test,1},\dots,\bt_{\test,n}))\cdots).$$
        \item Compute $\left(\{\widetilde{\lab^\test_i}\}_{i \in [2nv]},\widetilde{f}_{\test}\right) \gets \GSim\left(1^\secplev,1^{2nv},1^{|f_{\test}|},U_{\test}\right)$.
        \item For $i \in [2nv], b \in \{0,1\}$, sample $$(\pk_{i,b}^\test,\sk_{i,b}^\test) \gets \qmfhe.\Gen(1^\secplev), \ct_{i,b}^\test \gets \qmfhe.\CEnc(\pk_{i,b}^\test,\widetilde{\lab^\test_i}).$$
        \item Teleport $\bw$ into $\be_{k+1}$, and let $x_k^\Circle,z_k^\Circle$ be the teleportation errors. Let $\{x_i^\Circle,z_i^\Circle\}_{i \in \cH \setminus \{k\}}$ be the teleportation errors obtained by the other honest parties.
        \item Send $\{x_i^\Circle,z_i^\Circle\}_{i \in \cH},\{\pk_{i,b}^\test,\ct_{i,b}^\test\}_{i \in [2nv], b \in \{0,1\}},\widetilde{f}_{\test}$ to the adversary.
        \item Receive $\{x_i^\Circle,z_i^\Circle\}_{i \in [n] \setminus \cH}$ from the adversary.

    \end{itemize}
    \item \textbf{Online Round 1.}
    \begin{itemize}
    
        \item Let $\widehat{x}_1, \widehat{z}_1$ be such that $$(\bbI \ \otimes \ U_\chck){C_1^\Circle}^\dagger X^{x_1^\Circle}Z^{z_1^\Circle} \cdots {C_{k-1}^\Circle}^\dagger X^{x_{k-1}^\Circle}Z^{z_{k-1}^\Circle} = X^{\widehat{x}_1}Z^{\widehat{z}_1}(\bbI \ \otimes \ U_\chck){C_1^\Circle}^\dagger \cdots {C_{k-1}^\Circle}^\dagger.$$ Write $\widehat{x}_1$ as $\widehat{x}_{\inp,1},\dots,\widehat{x}_{\inp,n},\widehat{x}_z,\widehat{x}_t,\widehat{x}_{\test}$, and same for $\widehat{z}_1$. Here, each $\widehat{x}_{\inp,i} \in \{0,1\}^{m_i + \secp}$, $\widehat{x}_z \in \{0,1\}^{k_0}$, $\widehat{x}_t \in \{0,1\}^{k_T}$, and $\widehat{x}_{\test} \in \{0,1\}^{k_0 + 3 \secp n}$.

        \item Let $\widehat{x}_2, \widehat{z}_2$ be such that $$\left(\bbI \otimes X^{\widehat{x}_\test} Z^{\widehat{z}_\test}\right)U_\test X^{x_{n}^\Circle}Z^{z_{n}^\Circle}C^{\Circle}_{n}\cdots X^{x_{k+1}^\Circle}Z^{z_{k+1}^\Circle}C^{\Circle}_{k+1}X^{x_k^\Circle}Z^{z_k^\Circle} =  X^{\widehat{x}_2}Z^{\widehat{z}_2} U_\test C^{\Circle}_n\cdots C^{\Circle}_{k+1}.$$ Write $\widehat{x}_2$ as $\widehat{x}_{\mathsf{garble}},\widehat{x}_{\test,z},\widehat{x}_{\test,T,1},\dots,\widehat{x}_{\test,T,n}$, and same for $\widehat{z}_2$. Here, $\widehat{x}_{\mathsf{garble}} \in \{0,1\}^{m + \secp n + k_0 + k_T}$, $\widehat{x}_{\test,Z} \in \{0,1\}^{k_0 + \secp n}$, and each $\widehat{x}_{\test,T,i} \in \{0,1\}^{2\secp}$.



        
        \item Let $\widehat{r} \coloneqq r \oplus \widehat{x}_{\test,Z}$.
        \item For each $i \in [n]$, let $\widehat{C}_i^T \coloneqq C_i^T X^{\widehat{x}_{\test,T,i}}Z^{\widehat{z}_{\test,T,i}}$.
        \item Compute $(\{\widetilde{\lab^\QGC_i}\}_{i \in [2nv]},\widetilde{f}_{\QGC}) \gets \GSim(1^\secplev,1^{2m},1^{|f_{\QGC}|},(U_{\mathsf{garble}}X^{\widehat{x}_{\mathsf{garble}}}Z^{\widehat{z}_{\mathsf{garble}}},D_0,\widetilde{g}_1,\dots,\widetilde{g}_d))$.
        \item For $i \in [2m],b \in \{0,1\}$, sample $$(\pk_{i,b}^\QGC,\sk_{i,b}^\QGC) \gets \qmfhe.\Gen(1^{\secplev}), \ct_{i,b}^\QGC \gets \qmfhe.\CEnc(\pk_{i,b}^\QGC,\widetilde{\lab^\QGC_i}).$$
        \item For $i \in \cH$, teleport $C_i^\inp(\mathbf{0}^{m_i + \secp})$ into $\be_{\inp,i}$. Let $x_i^\inp,z_i^\inp$ be the teleportation errors.
        \item Send $\{x_i^\inp,z_i^\inp\}_{i \in \cH},\{\widehat{C}_i^T\}_{i \in [n] \setminus \cH},\sk_\test,\{\pk_{i,b}^\QGC,\ct_{i,b}^\QGC\}_{i \in [2m], b \in \{0,1\}},\widetilde{f}_\QGC$ to the adversary (where $\sk_\test$, defined in \proref{fig:classical-4}, is the set of secret keys corresponding to $\{x_i^\Circle,z_i^\Circle\}_{i \in [n]}$).
        \item Receive $\ct_Z,\{\ct_{T,i}\}_{i \in \cH}, \{x_i^\inp,z_i^\inp\}_{i \in [n] \setminus \cH}$ from the adversary (where $\sk_\QGC$, defined in \proref{fig:classical-4}, is the set of secret keys corresponding to $\{x_i^\inp,z_i^\inp\}_{i \in [n]}$).
    \end{itemize}
    \item \textbf{Online Round 2.}
    \begin{itemize}
        \item For each $i \in \cH$, decrypt $\ct_{T,i}$ with $\sk_\test$ to obtain $\by_{T,i}$. Compute $\widehat{C}_i^{T}(\by_{T,i})$, and then perform the binary projection that projects the first $\secp$ qubits onto $\mathbf{T}^{\secp}$ and the last $\secp$ qubits onto $\mathbf{0}^{\secp}$. If this projection fails, then abort (i.e. send $\{\abort_i\}_{i \in \cH}$ to the ideal functionality, and don't send a final round message to the adversary).
        \item Decrypt $\ct_Z$ with $\sk_\test$ to obtain $r'$. Check if $r' = \widehat{r}$. If not, then send $\bot$ to the adversary, send $\{\abort_i\}_{i \in \cH}$ to the ideal functionality, and exit. Otherwise, continue.
        \item Parse $\bn \coloneqq (\bn_1,\dots,\bn_n)$. For each $i \in [n]$, compute $(\bx_i,\bz_i) \coloneqq C_i^\inp X^{x_i^\inp} Z^{z_i^\inp} X^{\widehat{x}_{\inp,i}} Z^{\widehat{z}_{\inp,i}} (\bn_i)$ and measure $\bz_i$. If any measurements are not all 0, then set $(\by_1,\dots,\by_n) \coloneqq (\bot,\cdots,\bot)$. Otherwise, query the ideal functionality with $\{\bx_i\}_{i \in [n] \setminus \{\cH\}}$ and let $\{\by_i\}_{i \in [n] \setminus \{\cH\}}$ be the output. Set $\by_i \coloneqq \mathbf{0}^{\ell_i}$ for each $i \in \cH$.
        \item Sample $\{C_i^\out \gets \mathscr{C}_{\ell_i + \secp}\}_{i \in [n]}$.
        \item Teleport $(C_1^\out(\by_1,\mathbf{0}^\secp),\cdots,C_n^\out(\by_n,\mathbf{0}^\secp))$ into $\be_{\Sim,1}$, and let $(x_1^\out,z_1^\out,\cdots,x_n^\out,z_n^\out)$ be the teleportation errors. For each $i \in [n]$, let $\widehat{C}_i^\out \coloneqq C_i^\out X^{x_i^\out}Z^{z_i^\out}$.
        \item Send $\sk_\QGC,\{\widehat{C}_i^\out\}_{i \in [n] \setminus \cH}$ to the adversary. 
    \end{itemize}
    \item \textbf{Output Reconstruction}
    \begin{itemize}
        \item Receive $\{\ct_{y,i}\}_{i \in \cH}$ from the adversary. 
        \item For each $i \in \cH$, decrypt $\ct_{y,i}$ using $\sk_\QGC$ to obtain $\by_{\out,i}$. Then, compute $\widehat{C}_i^\out(\by_{\out,i})$ and measure whether the last $\secp$ trap qubits are all 0. If not, then send $\abort_i$ to the ideal functionality and otherwise send $\mathsf{ok}_i$.
    \end{itemize}
\end{itemize}

\paragraph{Notation.} For any adversary $\{\cA_\secp\}_{\secp \in \bbN}$ and inputs $(\bx_1,\dots,\bx_n,\baux_\cA,\baux_\cD)$, we partition the distributions $\Real_{\Pi,\Q}(\cA_\secp,\{\bx_i\}_{i \in [n]},\baux_\cA)$ and  $\Ideal_{\Pi,\Q}(\Sim,\{\bx_i\}_{i  \in [n]},\baux_\cA)$ by the set of teleportation errors $\{x_i^\Circle,z_i^\Circle,x_i^\inp,z_i^\inp\}_{i \in [n]}$ broadcast by all parties throughout the protocol. That is, we define the distribution $\Real_{\Pi,\Q}^{\{x_i^\Circle,z_i^\Circle,x_i^\inp,z_i^\inp\}_{i \in [n]}}(\cA_\secp,\{\bx_i\}_{i \in [n]},\baux_\cA)$ to be $\Real_{\Pi,\Q}(\cA_\secp,\{\bx_i\}_{i \in [n]},\baux_\cA)$ except that the output of the distribution (which is a state $\bz$) is replaced with $\bot$ if the set of teleportation errors broadcast during execution of the protocol were not exactly $\{x_i^\Circle,z_i^\Circle,x_i^\inp,z_i^\inp\}_{i \in [n]}$. We define the distribution $\Ideal_{\Pi,\Q}^{\{x_i^\Circle,z_i^\Circle,x_i^\inp,z_i^\inp\}_{i \in [n]}}(\Sim,\{\bx_i\}_{i \in [n]},\baux_\cA)$ analogously. 


We now prove the following lemma, which is the main part of the proof of security for this case. For notational convenience, we drop the indexing of inputs and teleportation errors by $\secp$.

\begin{lemma}
\label{lemma:distribution-bot-mpc}
For any QPT $\cA = \{\cA_\secp\}_{\secp \in \bbN}$, QPT distinguisher $\cD = \{\cD_\secp\}_{\secp \in \bbN}$, inputs $(\bx_1,\dots,\bx_n,\baux_\cA,\baux_\cD)$, and teleportation errors $\{x_i^\Circle,z_i^\Circle,x_i^\inp,z_i^\inp\}_{i \in [n]}$, there exists a negligible function $\mu$ such that

\begin{align*}
&\bigg|\Pr\left[\cD_\secp\left(\baux_\cD,\Real_{\Pi,\Q}^{\{x_i^\Circle,z_i^\Circle,x_i^\inp,z_i^\inp\}_{i \in [n]}}\left(\cA_\secp,\{\bx_i\}_{i \in [n]},\baux_\cA\right)\right) = 1\right]\\ &- \Pr\left[\cD_\secp\left(\baux_\cD,\Ideal_{\Pi,\Q}^{\{x_i^\Circle,z_i^\Circle,x_i^\inp,z_i^\inp\}_{i \in [n]}}\left(\Sim_\secp,\{\bx_i\}_{i \in [n]},\baux_\cA\right) \right) = 1  \right] \bigg| \leq \frac{\mu(\secp)}{2^{2nv+2m}}.
\end{align*}
\end{lemma}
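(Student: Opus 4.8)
\textbf{Proof proposal for \cref{lemma:distribution-bot-mpc}.}

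The plan is to mimic the structure of the proof of \cref{lemma:distribution-bot} from the two-party pre-processing protocol, but now threading the extra layer of indirection introduced by the classical garbled circuits $\widetilde f_\test,\widetilde f_\QGC$ and the $\qmfhe$ encryptions of their labels. First I would fix the collection $\cD,\cA,\{\bx_i\}_{i\in[n]},\baux_\cA,\baux_\cD$ and the teleportation errors $\{x_i^\Circle,z_i^\Circle,x_i^\inp,z_i^\inp\}_{i\in[n]}$, and note (exactly as in \cref{lemma:distribution-bot}) that by the definition of $\secplev=(2nv+2m)^c$ with $c>1/\epsilon$, a distinguisher achieving advantage better than $\mu(\secp)/2^{2nv+2m}$ would distinguish one of the sub-exponentially-secure primitives ($\twopc$, the classical garbled circuit, $\qmfhe$, or the quantum garbled circuit) with advantage at least $2^{-\secplev^{\epsilon}}$, contradicting their security. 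This is the bookkeeping that licenses the $2^{-(2nv+2m)}$ loss: each hybrid step below is itself only $2^{-\secplev^{\epsilon}}$-indistinguishable, but there are only $\poly(\secp)+O(nv+m)$ steps, and $\poly(\secp)\cdot 2^{-\secplev^\epsilon}\le \mu(\secp)/2^{2nv+2m}$.

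The hybrid sequence I would write, starting from $\cH_0=\Real^{\{\cdots\}}_{\Pi,\Q}$, proceeds as follows. (1) Simulate the classical $\MPC$ oracle via its straight-line simulator, extracting the adversary's inputs $\{C_i^\Circle,C_i^\inp\}_{i\in\cM}$ (and the later-round inputs) and feeding in freshly sampled honest shared randomness $r,s,U_\enc,\{C_i^T\},\{C_i^\out\}$; indistinguishability is by security of the underlying round-optimal reactive MPC of \cite{EC:GarSri18a}. (2)--(3) Re-define the circle Clifford $C_k^\Circle$ of the designated honest party $k$ so that it absorbs $U_\test^\dagger$, then $U_{\mathsf{garble}}^\dagger E_0$, pushing the various teleportation-error Paulis through using the group structure of the Cliffords exactly as in $\cH_1,\cH_2$ of the five-round proof — these are perfectly indistinguishable since $C_k^\Circle$, $U_\test$, $U_{\mathsf{garble}}$ are each uniformly random Cliffords. (4) For each $i\in[2nv]$, switch the ``off-path'' ciphertext $\ct^\test_{i,1-t^\Circle_i}$ to an encryption of $0$ (by $\qmfhe$ semantic security); then (5) simulate $\widetilde f_\test$ via $\GSim$ on output $U_\test$ (classical GC security), and (6) switch the off-path $\ct^\test$ ciphertexts back to encryptions of the simulated labels $\widetilde{\lab^\test_i}$. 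Repeat (4)--(6) for the $\QGC$ garbled circuit and its $2m$ label-ciphertexts $\ct^\QGC$, simulating $\widetilde f_\QGC$ on output $(U_{\mathsf{garble}}X^{\widehat x_{\mathsf{garble}}}Z^{\widehat z_{\mathsf{garble}}},D_0,\widetilde g_1,\dots,\widetilde g_d)$. (7) Introduce fresh EPR pairs $(\be_{\Sim,1},\be_{\Sim,2})$ and simulate the quantum garbled circuit via $\QGSim$, teleporting the true garbled input into $\be_{\Sim,1}$ — indistinguishable by the security guarantee of \cref{defn:QGC}. (8) Replace honest-party inputs with $\mathbf 0$ inside the Clifford encodings (perfect hiding of the Clifford code), and argue via authentication that conditioned on the in-circuit Clifford checks passing, the effective honest inputs are statistically $\{\bx_i\}_{i\in\cH}$. (9) Route the computation of $Q$ through the ideal functionality: invoke \cref{lemma:Ztest} to conclude $\bz_\inp\approx_s\mathbf 0^{k_0}$ conditioned on the adversary submitting the correct $r'$ (else all honest parties abort via the classical MPC), and \cref{lemma:distillation} together with Clifford authentication to conclude the distilled $T$-states are $\approx_s\mathbf T^{k_T/\secp}$ conditioned on the honest-party $T$-checks passing; use perfect hiding (via $\{C_i^\out\}_{i\in\cH}$) to hide that honest outputs were set to $\mathbf 0$, and authentication to argue the adversary cannot make an honest party accept a wrong output. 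The endpoint $\cH_{\mathrm{final}}$ is exactly $\Ideal^{\{\cdots\}}_{\Pi,\Q}(\Sim,\{\bx_i\},\baux_\cA)$.

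The main obstacle, and the place I would be most careful, is step (9) — showing that the $\mathbf 0$-state and $T$-state checks genuinely force the adversary (here the evaluator $P_1$, who is corrupt) into using well-formed auxiliary states, \emph{in the homomorphically-evaluated setting}. Unlike the five-round protocol, $P_1$ does not apply $U_\test$ and the quantum garbled circuit in the clear; it applies them under $\qmfhe$, and the $0$-check outcome $r'$ and the encoded $T$-check registers $\{\by_{T,i}\}$ only become meaningful after the honest parties decrypt with $\sk_\test$. I need to argue that the ciphertext re-randomization property of $\qmfhe$, plus the fact that $\sk_\test$ is only released \emph{after} all honest parties perform their $T$-checks and after $r'$ is checked inside the classical MPC, collapses this back to the information-theoretic situation covered by \cref{lemma:Ztest} and \cref{lemma:distillation}. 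A second, more routine-but-fiddly point is the Pauli-pushing bookkeeping (the $\widehat x_1,\widehat z_1,\widehat x_2,\widehat z_2$ decompositions): one must verify that the teleportation errors fixed at the top of the lemma commute through $U_\chck$, $U_\test$, and $E_0$ in exactly the way the simulator's $\widehat C_i^T$, $\widehat r$, $\widehat U_{\mathsf{garble}}$, $\widehat C_i^\out$ definitions require, so that conditioning on the fixed error pattern makes the real and simulated transcripts identically distributed at each perfectly-indistinguishable step. I would handle this by the same ``absorb into a uniformly random Clifford'' argument used repeatedly in the five-round proof, checking only that the register partitions line up.
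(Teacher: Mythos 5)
Your hybrid plan is essentially the paper's proof of \cref{lemma:distribution-bot-mpc}: the same conditioning-on-teleportation-errors bookkeeping with $\secplev=(2nv+2m)^c$, the same ingredients (switch the off-path $\qmfhe$ label-ciphertexts via semantic security, simulate $\widetilde f_\test$ and $\widetilde f_\QGC$ via $\GSim$, absorb $U_\test$ and $U_{\mathsf{garble}}^\dagger E_0$ into party $k$'s uniformly random circle Clifford, simulate the QGC with fresh EPR pairs, switch honest inputs/outputs to $\mathbf 0$ under the Clifford code, and finish with \cref{lemma:Ztest}, \cref{lemma:distillation} and Clifford authentication). The ordering differs slightly (the paper does the $\qmfhe$/classical-GC switches before the Clifford re-definitions), and the paper has no analogue of your step (1): the classical reactive MPC is modeled as an ideal oracle $\MPC$ throughout, with the simulator intercepting the adversary's oracle queries, so no hybrid invoking \cite{EC:GarSri18a} security appears inside this lemma. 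Neither difference is substantive.

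The place where you diverge is the step you flag as the main obstacle, and there the concern is misdirected. In this lemma $P_1$ is corrupted, so nobody on the honest side ever performs a homomorphic evaluation whose fidelity needs to be ``collapsed back'' to the plain setting; the adversary may compute arbitrary ciphertexts, and the paper's argument only uses what the honest side does with them. Concretely: the state on which $U_\chck$, $Q_\dst$ and the (simulated) QGC act lives on the simulator's side, namely the EPR halves party $k$ retained, so \cref{lemma:Ztest} and \cref{lemma:distillation} apply to it directly exactly as in the five-round proof; the $r'=\widehat r$ comparison happens inside the classical MPC after the functionality decrypts $\ct_r$ with $\sk_\test$; the $T$-check states $\by_{T,i}$ are decrypted and projected by the honest parties themselves using $\widehat C_i^T$; and whatever the adversary returns as output ciphertexts is caught by the authentication of the Clifford code under $\widehat C_i^\out$. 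No appeal to ciphertext re-randomization is needed here (it is used only in the other case, where $P_1$ is honest, to argue that directly computing $Q_\dst$ versus homomorphically evaluating the QGC yields statistically indistinguishable re-randomized ciphertexts), and your premise that $\sk_\test$ is withheld until after the $T$-checks is inaccurate — the protocol hands $\sk_\test$ to every party already in Online Round~1; only $\sk_\QGC$ and the $C_i^\out$ are gated on the checks. The quantities that must remain hidden are the off-path secret keys, and those are never released precisely because you have conditioned on the fixed error pattern, which is what licenses your steps (4)--(6). With that correction, your step (9) reduces to the five-round argument verbatim and the proof goes through.
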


\begin{proof}
First note that by the definition of $\secplev$, a $\cD$ violating the lemma distinguishes with probability at least $\left(\frac{1}{\poly(\secp)}\right)2^{-\secplev^{(1/c)})} \geq \frac{1}{2^{\secplev^{\epsilon}}}.$

Now fix any collection $\cD,\cA,\{\bx_i\}_{i \in [n]},\baux_\cA,\baux_\cD,\{x_i^\Circle,z_i^\Circle,x_i^\inp,z_i^\inp\}_{i \in [n]}$.  We show the indistinguishability via a sequence of hybrids, where $\cH_0$ is the distribution $\Real_{\Pi,\Q}^{\{x_i^\Circle,z_i^\Circle,x_i^\inp,z_i^\inp\}_{i \in [n]}}(\cA_\secp,\{\bx_i\}_{i \in [n]},\baux_\cA)$. In each hybrid, we describe the differences from the previous hybrid, and why each is indistinguishable.

\begin{itemize}
    \item $\cH_1$: \underline{Switch half of QMFHE ciphertexts to encryptions of 0}. Let $t^\Circle \coloneqq (x_1^\Circle,z_1^\Circle,\dots,x_n^\Circle,z_n^\Circle)$ and let $t^\inp \coloneqq (x_1^\inp,z_i^\inp,\dots,x_n^\inp,z_n^\inp)$. For each $i \in [2nv]$, compute $$\ct^\Circle_{i,1-t^\Circle_i} \gets \qmfhe.\CEnc(\pk^\Circle_{i,1-t^\Circle_i},0),$$ and for each $i \in [2m]$, compute $$\ct^\inp_{i,1-t^\inp_i} \gets \qmfhe.\CEnc(\pk^\inp_{i,1-t^\inp_i}, 0).$$ Note that the corresponding secret keys are not needed to produce the distribution conditioned on the set of teleportation errors being $\{x_i^\Circle,z_i^\Circle,x_i^\inp,z_i^\inp\}_{i \in [n]}$, so indistinguishability of $\cH_0$ and $\cH_1$ reduces to semantic security of $\qmfhe$.
    \item $\cH_2$: \underline{Simulate the two classical garbled circuits}. 
    \begin{itemize}
        \item During Offline Round 2, the classical functionality computes $$U_\test \coloneqq f_{\test}[\{C_i^\Circle\}_{i \in [n]},r,U_\enc,\{C_i^T\}_{i \in [n]},M,\pi](\{x_i^\Circle,z_i^\Circle\}_{i \in [n]})$$ and then $$\left(\{\widetilde{\lab^\test_i}\}_{i \in [2nv]},\widetilde{f}_{\test}\right) \gets \GSim\left(1^\secplev,1^{2nv},1^{|f_{\test}|},U_{\test}\right).$$
        \item During Online Round 1, the classical functionality computes $$(U_{\mathsf{garble}},D_0,\widetilde{g}_1,\dots,\widetilde{g}_d) \coloneqq f_\QGC[U_\enc,\{C_i^\inp,C_i^\out\}_{i \in [n]}](\{x_i^\inp,z_i^\inp\}_{i \in [n]})$$ and then $$(\{\widetilde{\lab^\QGC_i}\}_{i \in [2nv]},\widetilde{f}_{\QGC}) \gets \GSim(1^\secplev,1^{2m},1^{|f_{\QGC}|},(U_{\mathsf{garble}},D_0,\widetilde{g}_1,\dots,\widetilde{g}_d)).$$
    \end{itemize}
    Indistinguishability of $\cH_1$ and $\cH_2$ reduces to the security of the garbling scheme.
    \item $\cH_3$: \underline{Switch half of QMFHE ciphertexts to encryptions of simulated labels}. For each $i \in [2nv]$, compute $$\ct^\Circle_{i,1-t^\Circle_i} \gets \qmfhe.\CEnc(\pk^\Circle_{i,1-t^\Circle_i},\widetilde{\lab^\test_i}),$$ and for each $i \in [2m]$, compute $$\ct^\inp_{i,1-t^\inp_i} \gets \qmfhe.\CEnc(\pk^\inp_{i,1-t^\inp_i}, \widetilde{\lab^\QGC_i}).$$ Indistinguishability of $\cH_2$ and $\cH_3$ reduces to semantic security of $\qmfhe$.
    \item $\cH_4$: \underline{Re-define $C^{\Circle}_k$}. First, sample $U_\test \gets \mathscr{C}_v$ and then define \begin{align*}C^{\Circle}_k \coloneqq &\left(X^{x_n^\Circle} Z^{z_n^\Circle}C^{\Circle}_n\dots X^{x_{k+1}^\Circle} Z^{z_{k+1}^\Circle} C^{\Circle}_{k+1}X^{x^\Circle_k}Z^{z^\Circle_k}\right)^\dagger U_\test^\dagger (U_\enc \ \otimes \ \bbI)(\bbI \ \otimes \ U_\chck)\\&\left(X^{x_{k-1}^\Circle}Z^{z_{k-1}^\Circle}C^{\Circle}_{k-1}\dots X^{x_{1}^\Circle}Z^{z_{1}^\Circle}C^{\Circle}_1\right)^\dagger.\end{align*}
    Note that when $C^{\Circle}_k$ is plugged into $U_\mathsf{dec}$ (see Online Round 1 of \proref{fig:classical}), we have that $$\left(U_{\mathsf{enc}} \otimes \bbI\right)\left(\bbI \otimes U_{\mathsf{check}}\right)U_{\mathsf{dec}} \coloneqq U_\test.$$ This switch is perfectly indistinguishable given the fact that in $\cH_3$, $C_k^\Circle$ is a uniformly random Clifford, and in $\cH_4$, $U_\test$ is a uniformly random Clifford. 
    \item $\cH_5$: \underline{Re-define $U_\enc$}. First, sample $U_{\mathsf{garble}} \gets \mathscr{C}_{m+\secp n + k_0 + k_T}$, then compute $(E_0,D_0,\widetilde{g}_1,\dots,\widetilde{g}_d) \gets \QGarble(1^\secp,Q[\dst,\{C_i^\inp,C_i^\out\}_{i \in [n]}])$, and set $U_\enc \coloneqq U_{\mathsf{garble}}^\dagger E_0\left(X^{x_1^\inp}Z^{z_1^\inp} \otimes \dots \otimes X^{x_n^\inp}Z^{z_n^\inp} \otimes \bbI\right)$. Note that (see Online Round 2 of \proref{fig:classical}) $E_0\left(X^{x_1^\inp}Z^{z_1^\inp} \otimes \dots \otimes X^{x_n^\inp}Z^{z_n^\inp} \otimes \bbI\right)U_\enc^\dagger \coloneqq U_{\mathsf{garble}}$. Moreover, we can now write $C^{\Circle}_k$ as \begin{align*}C^{\Circle}_k \coloneqq &\left(X^{x^\Circle_k}Z^{z^\Circle_k}{C^{\Circle}_{k+1}}^\dagger X^{x_{k+1}^\Circle} Z^{z_{k+1}^\Circle} \dots {C^{\Circle}_n}^\dagger X^{x_n^\Circle} Z^{z_n^\Circle} \right) U_\test^\dagger \left(U_{\mathsf{garble}}^\dagger \otimes \bbI \right)\left(E_0 \otimes \bbI \right)\\&\left(X^{x_1^\inp}Z^{z_1^\inp} \otimes \dots \otimes X^{x_n^\inp}Z^{z_n^\inp} \otimes \bbI\right)(\bbI \ \otimes \ U_\chck)\left({C_1^{\Circle}}^\dagger X^{x_{1}^\Circle}Z^{z_{1}^\Circle} \dots {C_{k-1}^{\Circle}}^\dagger X^{x_{k-1}^\Circle}Z^{z_{k-1}^\Circle} \right).\end{align*} This switch is perfectly indistinguishable given the fact that in $\cH_4$, $U_\enc$ is a uniformly random Clifford and in $\cH_5$, $U_{\mathsf{garble}}$ is a uniformly random Clifford.
    \item $\cH_6$: \underline{Simulate the quantum garbled circuit}. Rather than directly apply $C^{\Circle}_k$ as described above during Offline Round 2, the simulator will do the following.
    \begin{itemize}
        \item Apply the first part $$\left(X^{x_1^\inp}Z^{z_1^\inp} \otimes \dots \otimes X^{x_n^\inp}Z^{z_n^\inp} \otimes \bbI\right)(\bbI \ \otimes \ U_\chck)\left({C_1^{\Circle}}^\dagger X^{x_{1}^\Circle}Z^{z_{1}^\Circle} \dots {C_{k-1}^{\Circle}}^\dagger X^{x_{k-1}^\Circle}Z^{z_{k-1}^\Circle} \right)$$ to obtain state $\left(\bx^{\gray{N},\gray{Z_\inp},\gray{T_\inp}},\bz^{\gray{Z_\test},\gray{T_{\test,1}},\dots,\gray{T_{\test,n}}}\right)$.
        \item Compute the circuit $Q_\dst[\{C_i^\inp,C_i^\out\}_{i \in [n]}]$ on $\bx$ to obtain output $\by$.
        \item Compute $(\widetilde{\bx}^{\gray{N},\gray{Z_\inp},\gray{T_\inp}},D_0,\widetilde{g}_1,\dots,\widetilde{g}_d) \gets \QGSim\left(1^\secplev,Q_\dst,\by\right)$.
        \item Apply the final part $$\left(X^{x^\Circle_k}Z^{z^\Circle_k}{C_{k+1}^{\Circle}}^\dagger X^{x_{k+1}^\Circle} Z^{z_{k+1}^\Circle} \dots {C_n^{\Circle}}^\dagger X^{x_n^\Circle} Z^{z_n^\Circle} \right) U_\test^\dagger \left(U_{\mathsf{garble}}^\dagger \otimes \bbI \right)$$ to state $\left(\widetilde{\bx},\bz\right)$.
    \end{itemize}
    Note that $(D_0,\widetilde{g}_1,\dots,\widetilde{g}_d)$ are later used to simulate $\widetilde{f}_\QGC$. Indistinguishability of $\cH_5$ and $\cH_6$ follows from the security of the quantum garbled circuit.
    
    \item $\cH_7$: \underline{Move the teleportation errors $\{x_i^\Circle,z_i^\Circle\}_{i \in [n]}$}. 
    
    Let $\widehat{x}_1, \widehat{z}_1$ be such that $$(\bbI \ \otimes \ U_\chck){C_1^\Circle}^\dagger X^{x_1^\Circle}Z^{z_1^\Circle} \cdots {C_{k-1}^\Circle}^\dagger X^{x_{k-1}^\Circle}Z^{z_{k-1}^\Circle} = X^{\widehat{x}_1}Z^{\widehat{z}_1}(\bbI \ \otimes \ U_\chck){C_1^\Circle}^\dagger \cdots {C_{k-1}^\Circle}^\dagger.$$ Write $\widehat{x}_1$ as $\widehat{x}_{\inp,1},\dots,\widehat{x}_{\inp,n},\widehat{x}_z,\widehat{x}_t,\widehat{x}_{\test}$, and same for $\widehat{z}_1$. Here, each $\widehat{x}_{\inp,i} \in \{0,1\}^{m_i + \secp}$, $\widehat{x}_z \in \{0,1\}^{k_0}$, $\widehat{x}_t \in \{0,1\}^{k_T}$, and $\widehat{x}_{\test} \in \{0,1\}^{k_0 + 3 \secp n}$. 
    
    Let $\widehat{x}_2, \widehat{z}_2$ be such that $$\left(\bbI \otimes X^{\widehat{x}_\test} Z^{\widehat{z}_\test}\right)U_\test X^{x_{n}^\Circle}Z^{z_{n}^\Circle}C^{\Circle}_{n}\cdots X^{x_{k+1}^\Circle}Z^{z_{k+1}^\Circle}C^{\Circle}_{k+1}X^{x_k^\Circle}Z^{z_k^\Circle} =  X^{\widehat{x}_2}Z^{\widehat{z}_2} U_\test C^{\Circle}_n\cdots C^{\Circle}_{k+1}.$$ Write $\widehat{x}_2$ as $\widehat{x}_{\mathsf{garble}},\widehat{x}_{\test,z},\widehat{x}_{\test,T,1},\dots,\widehat{x}_{\test,T,n}$, and same for $\widehat{z}_2$. Here, $\widehat{x}_{\mathsf{garble}} \in \{0,1\}^{m + \secp n + k_0 + k_T}$, $\widehat{x}_{\test,Z} \in \{0,1\}^{k_0 + \secp n}$, and each $\widehat{x}_{\test,T,i} \in \{0,1\}^{2\secp}$.

    Now, the first part of $C^{\Circle}_k$ will be computed as $$\left(X^{x_1^\inp}Z^{z_1^\inp}X^{\widehat{x}_{\inp,1}} Z^{\widehat{z}_{\inp,1}} \otimes \dots \otimes X^{x_n^\inp}Z^{z_n^\inp}X^{\widehat{x}_{\inp,n}} Z^{\widehat{z}_{\inp,n}} \otimes X^{\widehat{x}_t}Z^{\widehat{z}_t}\otimes X^{\widehat{x}_z}Z^{\widehat{z}_z} \otimes \bbI\right)(\bbI \ \otimes \ U_\chck){C_1^\Circle}^\dagger \cdots {C_{k-1}^\Circle}^\dagger,$$
    
    and the final part of $C_k$ will be computed as $${C_{k+1}^{\Circle}}^\dagger \dots {C_n^{\Circle}}^\dagger U_\test^\dagger \left(U_{\mathsf{garble}}^\dagger \otimes \bbI \right).$$
    Moreover, define 
    \begin{itemize}
        \item $\widehat{U}_{\mathsf{garble}} \coloneqq U_{\mathsf{garble}}X^{\widehat{x}_{\mathsf{garble}}}Z^{\widehat{z}_{\mathsf{garble}}}$, and use $\widehat{U}_{\mathsf{garble}}$ in place of $U_{\mathsf{garble}}$ when simulating $\widetilde{f}_\QGC$,
        \item $\widehat{r} \coloneqq r \oplus \widehat{x}_{\test,z}$, and use $\widehat{r}$ in place of $r$ in Online Round 2,
        \item for each $i \in [n]$, $\widehat{C}_i^T \coloneqq C_i^T X^{\widehat{x}_{\test,T,i}}Z^{\widehat{z}_{\test,T,i}}$ and use $\widehat{C}_i^T$ in place of $C_i^T$ in Online Round 1.
    \end{itemize}
    This switch is perfectly indistinguishable. In particular, $U_{\mathsf{garble}}$ and the $C_i^T$ are uniformly random Cliffords, so adversary will not notice the switch to $\widehat{U}_{\mathsf{garble}}$ and $\widehat{C}_i^T$. Also the $Z$-test positions are randomized with $X^r Z^s$ during computation of $U_{\chck}$, where $r$ and $s$ are uniformly random, so adversary will not notice the switch to $\widehat{r}$.
    \item $\cH_8$: \underline{Introduce new Pauli errors.} For each $i \in [n]$, sample $x_i^\out,z_i^\out \gets \{0,1\}^{\ell_i + \secp}$. Now, rather than running $\QGSim$ directly on $\by$, run it on $X^{x_1^\out}Z^{z_1^\out}\dots X^{x_n^\out}Z^{z_n^\out}(\by)$. Then, for each $i \in [n]$, define $\widehat{C}_i^\out \coloneqq C_i^\out X^{x_i^\out}Z^{z_i^\out}$, and use $\widehat{C}_i^\out$ in place of $C_i^\out$ in Online Round 2.
    This switch is perfectly indistinguishable given that $\{C_i^\out\}_{i \in [n]}$ are uniformly random.
    \item $\cH_9$: \underline{Introduce new EPR pairs.} Prepare $\be_{\Sim,1},\be_{\Sim,2}$ as described in the simulator. Let $\{x_i^\out,z_i^\out\}_{i \in [n]}$ now be the result of teleportation errors obtained across these EPR pairs, as described in the simulator. Now, we can delay the computation of $$\left(X^{x_1^\inp}Z^{z_1^\inp}X^{\widehat{x}_{\inp,1}} Z^{\widehat{z}_{\inp,1}} \otimes \dots \otimes X^{x_n^\inp}Z^{z_n^\inp}X^{\widehat{x}_{\inp,n}} Z^{\widehat{z}_{\inp,n}} \otimes X^{\widehat{x}_t}Z^{\widehat{z}_t}\otimes X^{\widehat{x}_z}Z^{\widehat{z}_z} \otimes \bbI\right)$$ and $Q_\dst[\{C_i^\inp,C_i^\out\}_{i \in [n]}]$ on state $(\bn^{\gray{N}},\bz^{\gray{Z_\inp}},\bt^{\gray{T_\inp}})$ to Online Round 2. This distribution is identical to the previous one.
    \item $\cH_{10}$: \underline{Switch honest party inputs to $\mathbf{0}$}. For each $i \in \cH$, in Online Round 1, teleport $C_i^\inp(\mathbf{0}^{m_i + \secp})$ to Party 1, rather than $C_i^\inp(\bx_i,\mathbf{0}^{\secp})$. Now, during Online Round 2, instead of directly applying $Q_\dst[\{C_i^\inp,C_i^\out\}_{i \in [n]}]$ to $(\bn,\bz,\bt)$, this hybrid will do the following. First apply ${C_1^\inp}^\dagger \otimes \dots \otimes {C_n^\inp}^\dagger$ to $\bn$. Then, swap out the honest party input registers for $\{\bx_i\}_{i \in \cH}$, and continue with the computation of $Q_\dst[\{C_i^\inp,C_i^\out\}_{i \in [n]}]$ on $(\bn,\bz,\bt)$. $\cH_9$ is statistically indistinguishable from $\cH_{10}$ due to properties of the Clifford authentication code. In particular, since the code is perfectly hiding, the adversary cannot tell that the inputs where switched to $\mathbf{0}$. Thus the adversary can only distinguish if the output of $Q_\dst[\{C_i^\inp,C_i^\out\}_{i \in [n]}]$ differs between $\cH_9$ and $\cH_{10}$. However, if any Clifford authentication test that happens within $Q_\dst[\{C_i^\inp,C_i^\out\}_{i \in [n]}]$ fails, then the output is $(\bot \dots \bot)$. In both $\cH_9$ and $\cH_{10}$, conditioned on these tests passing, the honest party inputs to $Q_\dst$ are statistically close to $\{\bx_i\}_{i \in \cH}$, due to the authentication property of the Clifford code.
    \item $\cH_{11}$: \underline{Query ideal functionality}. Consider the EPR pairs halves $\be_S$ sent from party $k$ to party $k-1$ in Offline Round 1. In this hybrid, we alter the computation that is performed on $\be_S$ in Offline Round 2 and Online Round 2. First, in Offline Round 2, this hybrid computes $$(\bn,\bz,\bt,\bz_\test,\bt_{\test,2},\dots,\bt_{\test,n}) \coloneqq (\bbI \ \otimes \ U_\chck){C_1^\Circle}^\dagger \cdots {C_{k-1}^\Circle}^\dagger (\be_S).$$ Then, $\bz, \bt$ are discarded, and $\bz_\test,\bt_{\test,2},\dots,\bt_{\test,n}$ are operated on as in $\cH_{10}$. Then, in Online Round 2, this hybrid does the following. 
    \begin{itemize}
        \item Parse $\bn \coloneqq (\bn_1,\dots,\bn_n)$. For each $i \in [n]$, compute $(\bx_i,\bz_i) \coloneqq C_i^\inp X_i^\inp Z_i^\inp X^{\widehat{x}_{\inp,i}} Z^{\widehat{z}_{\inp,i}}(\bn_i)$ and measure $\bz_i$. If any measurements are not all 0, then set $(\by_1,\dots,\by_n) \coloneqq (\bot,\cdots,\bot)$. Otherwise, query the ideal functionality with $\{\bx_i\}_{i \in [n] \setminus \{\cH\}}$ and let $\{\by_i\}_{i \in [n] \setminus \{\cH\}}$ be the output. Set $\by_i \coloneqq \mathbf{0}^{\ell_i}$ for each $i \in \cH$.
        \item Sample $\{C_i^\out \gets \mathscr{C}_{\ell_i + \secp}\}_{i \in [n]}$.
        \item Teleport $(C_1^\out(\by_1,\mathbf{0}^\secp),\cdots,C_n^\out(\by_n,\mathbf{0}^\secp))$ into $\be_{\Sim,1}$.
    \end{itemize}
    We also add the following behavior to Output Reconstruction.
    \begin{itemize}
        \item For each $i \in \cH$, decrypt $\ct_{y,i}$ using $\sk_\QGC$ to obtain $\by_i^\out$. Then, compute $\widehat{C}_i^\out(\by_i^\out)$ and measure whether the last $\secp$ trap qubits are all 0. If not, then send $\abort_i$ to the ideal functionality and otherwise send $\mathsf{ok}_i$.
    \end{itemize}
    Observe that one difference between $\cH_{10}$ and $\cH_{11}$ is that $\bz$ and $\bt$ are not used in the computation of $\CM$ circuit $Q_\dst$. Instead, the ideal functionality directly computes $Q$ on the inputs. This will result in statistically close outputs if i) the QGC satisfies statistical correctness, ii) $\bz$ is statistically close to $\mathbf{0}^{\otimes k_0}$, and iii) the result of applying the distillation circuit to $\bt$ is statistically close to $\mathbf{T}^{\otimes k_T / \secp}$. \cref{lemma:Ztest} implies that the second requirement holds conditioned on the adversary submitting the correct $r$ to the classical $\MPC$ in Online Round 1 (and otherwise, all honest parties abort). \cref{lemma:distillation} plus Clifford authentication implies that the third requirement holds conditioned on the honest party $T$-state checks in Online Round 2 all passing (and if any one of them fails, all honest parties abort).
    
    The other difference is that honest party outputs are determined by the ideal functionality's computation. First, the adversary cannot tell that the honest party outputs are switched to $\mathbf{0}$ within the quantum garbled circuit, by perfect hiding of the Clifford authentication code (using Cliffords $\{C_i^\out\}_{i \in \cH}$). Next, in $\cH_{10}$, the adversary cannot make an honest party accept a state noticeably far from their real output $\by_i$, by authentication of the Clifford code. Thus, $\cH_{11}$ is statistically close to $\cH_{10}$. This completes the proof, as $\cH_{11}$ is the simulator described above.
\end{itemize}

\end{proof}

\begin{lemma}
Let $\Pi$ be the protocol described in \cref{subsec:4-round-protocol} computing some quantum circuit $Q$. Then $\Pi$ satisfies \cref{def:mpqc} for any $\cA$ corrupting parties $M \subset [n]$ where $1 \in M$.

\end{lemma}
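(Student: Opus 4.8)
The statement to prove is the standard real/ideal indistinguishability (\cref{def:mpqc}) for the four-round MPQC protocol of \cref{subsec:4-round-protocol}, in the case where the adversary corrupts a set $M \subset [n]$ containing the designated party $P_1$. The plan is to reduce this to \cref{lemma:distribution-bot-mpc}, exactly mirroring the way the two-party statement (\cref{thm: garbling}'s companion theorem, i.e. the malicious-$B$ case of \proref{fig:two-online}) was reduced to \cref{lemma:distribution-bot}. Concretely, I would first fix an arbitrary QPT distinguisher $\cD$, QPT adversary $\cA$, and input tuple $(\bx_1,\dots,\bx_n,\baux_\cA,\baux_\cD)$, and assume toward contradiction that $\cD$ distinguishes $\Real_{\Pi,\Q}(\cA_\secp,\{\bx_i\}_{i\in[n]},\baux_\cA)$ from $\Ideal_{\Pi,\Q}(\Sim,\{\bx_i\}_{i\in[n]},\baux_\cA)$ with advantage at least $1/\poly(\secp)$, where $\Sim$ is the simulator defined just above the lemma. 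The key combinatorial step is to partition both the real and ideal distributions according to the full transcript of broadcast teleportation errors $\{x_i^\Circle,z_i^\Circle,x_i^\inp,z_i^\inp\}_{i\in[n]}$, plus an extra ``abort'' bucket $\mathsf{E}^{(\abort)}$ capturing the event that a corrupted party fails to report (some of) its teleportation errors so that every honest party aborts.

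The second step is an averaging argument: since there are at most $2^{2nv+2m}+1$ buckets (the $+1$ for the abort event), a $1/\poly(\secp)$ overall advantage forces $\cD$ to distinguish the conditional distributions within at least one bucket with advantage at least $\frac{1}{\poly(\secp)(2^{2nv+2m}+1)}$. The abort bucket is trivially simulatable — conditioned on an unanimous honest abort, both the real and ideal views are produced identically by a straight-line procedure that never queries the ideal functionality — so that case is immediately ruled out. The remaining case, a per-transcript bucket $\{x_i^\Circle,z_i^\Circle,x_i^\inp,z_i^\inp\}_{i\in[n]}$, directly contradicts \cref{lemma:distribution-bot-mpc}, whose bound $\mu(\secp)/2^{2nv+2m}$ is (asymptotically) smaller than the per-bucket advantage above; formally one notes $\frac{1}{\poly(\secp)(2^{2nv+2m}+1)} > \frac{\mu(\secp)}{2^{2nv+2m}}$ for all sufficiently large $\secp$ and every negligible $\mu$. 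This yields the contradiction and completes the proof.

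The only subtlety to handle carefully — and the place where I would spend most of the writing effort — is making sure the per-bucket conditioning is set up so that \cref{lemma:distribution-bot-mpc} applies verbatim. One must check that (i) the definitions of $\Real_{\Pi,\Q}^{\{\cdots\}}$ and $\Ideal_{\Pi,\Q}^{\{\cdots\}}$ used in the lemma (replacing the output state by $\bot$ when the broadcast errors differ) are exactly the conditional-times-probability quantities appearing in the averaging bound, i.e. that $\Pr[\cD(\baux_\cD,\Real)=1 \mid \mathsf{E}_\Real^{\{\cdots\}}]\Pr[\mathsf{E}_\Real^{\{\cdots\}}] = \Pr[\cD(\baux_\cD,\Real^{\{\cdots\}})=1]$ up to the handling of the $\bot$ symbol (which contributes equally on both sides); and (ii) that the sum of these ``masked'' advantages over all buckets plus the abort bucket upper-bounds the unconditional advantage (a triangle-inequality over a partition of the sample space). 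This is exactly the bookkeeping carried out in the proof of the analogous two-party theorem, so I would state it in the same style: introduce $\mathsf{E}^{(\ldots)}_\Real, \mathsf{E}^{(\ldots)}_\Ideal$, write the two displayed inequalities (the per-transcript one and the abort one), observe the abort one is impossible, and invoke \cref{lemma:distribution-bot-mpc} for the other. No new technical machinery is needed beyond what \cref{lemma:distribution-bot-mpc} already delivers; the heavy lifting (the long hybrid sequence $\cH_0,\dots,\cH_{11}$ with the quantum garbled circuit simulation, the QMFHE ciphertext switches, the classical garbled circuit simulation, the $Z$-test and magic-state-distillation soundness, and the Clifford-authentication arguments) is entirely inside that lemma.

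I expect the main obstacle to be purely expository rather than mathematical: ensuring the partition/averaging argument is airtight when the number of buckets is exponential, so that the super-polynomial security loss absorbed into $\secplev$ (via the constant $c > 1/\epsilon$ chosen in \proref{fig:classical-4}) genuinely compensates. This is why \cref{lemma:distribution-bot-mpc} was stated with the $2^{-(2nv+2m)}$ factor baked into its conclusion and why $\secplev$ is set to $(2nv+2m)^c$ — the proof here just has to invoke these quantitatively, mirroring the two-party case line for line.
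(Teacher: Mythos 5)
Your proposal is correct and matches the paper's proof essentially line for line: both assume a distinguishing adversary, partition the real and ideal distributions by the broadcast teleportation errors $\{x_i^\Circle,z_i^\Circle,x_i^\inp,z_i^\inp\}_{i \in [n]}$ together with an abort event, apply an averaging argument over the $2^{2nv+2m}+1$ buckets, dismiss the abort bucket as trivially simulatable, and derive a contradiction with \cref{lemma:distribution-bot-mpc}, whose $2^{-(2nv+2m)}$ bound (enabled by the choice of $\secplev$) absorbs the exponential loss. No gaps; the bookkeeping you flag about the masked distributions is exactly how the paper handles it.
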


\begin{proof}
Assume towards contradiction the existence of a QPT $\cD = \{\cD_\secp\}_{\secp \in \bbN}$, a QPT $\cA = \{\cA_\secp\}_{\secp \in \bbN}$, and $(\bx_1,\dots,\bx_n,\baux_\cA,\baux_\cD)$ such that 
\begin{align*}
&\bigg|\Pr\left[\cD_\secp\left(\baux_\cD,\Real_{\Pi,\Q}\left(\cA_\secp,\{\bx_i\}_{i \in [n]},\baux_\cA\right)\right) = 1\right]\\ &- \Pr\left[\cD_\secp\left(\baux_\cD,\Ideal_{\Pi,\Q}\left(\Sim_\secp,\{\bx_i\}_{i \in [n]},\baux_\cA\right)\right) = 1 \right] \bigg| \geq 1/\poly(\secp).
\end{align*}
Define \ifsubmission the distribution \else\fi $\Real \coloneqq \Real_{\Pi,\Q}(\cA_\secp,\{\bx_i\}_{i \in [n]},\baux_\cA)$ and \ifsubmission the distribution \else\fi $\Ideal \coloneqq \Ideal_{\Pi,\Q}(\Sim_\secp,\{\bx_i\}_{i \in [n]},\baux_\cA)$. Furthermore, let $\mathbf{E}^{\{x_i^\Circle,z_i^\Circle,x_i^\inp,z_i^\inp\}_{i \in [n]}}_\Real$ be the event that the parties report teleportation errors $\{x_i^\Circle,z_i^\Circle,x_i^\inp,z_i^\inp\}_{i \in [n]}$ in $\Real$ and define $\mathbf{E}^{\{x_i^\Circle,z_i^\Circle,x_i^\inp,z_i^\inp\}_{i \in [n]}}_\Ideal$ analogously. Let $\mathbf{E}^{(\abort)}_\Real$ and $\mathbf{E}^{(\abort)}_\Ideal$ be the event that the adversary fails to report some of its teleporation errors, causing the honest parties to abort. The above implies that either there exists some $\{x_i^\Circle,z_i^\Circle,x_i^\inp,z_i^\inp\}_{i \in [n]}$ such that 
\begin{align*}
&\bigg|\Pr\left[\cD_\secp(\baux_\cD,\Real) = 1 \big| \mathsf{E}_\Real^{\{x_i^\Circle,z_i^\Circle,x_i^\inp,z_i^\inp\}_{i \in [n]}}\right]\Pr\left[\mathsf{E}_\Real^{(x_\inp,z_\inp)}\right] \\&- \Pr\left[\cD_\secp(\baux_\cD,\Ideal) = 1 \big| \mathsf{E}_\Ideal^{\{x_i^\Circle,z_i^\Circle,x_i^\inp,z_i^\inp\}_{i \in [n]}} \right]\Pr\left[\mathsf{E}_\Ideal^{(x_\inp,z_\inp)}\right] \bigg| \geq \frac{1}{\poly(\secp)(2^{2nv + 2m}+1)}
\end{align*}
or that 
\begin{align*}
&\bigg|\Pr\left[\cD_\secp(\baux_\cD,\Real) = 1 \big| \mathsf{E}_\Real^{(\abort)}\right]\Pr\left[\mathsf{E}_\Real^{(\abort)}\right] \\&- \Pr\left[\cD_\secp(\baux_\cD,\Ideal) = 1 \big| \mathsf{E}_\Ideal^{(\abort)} \right]\Pr\left[\mathsf{E}_\Ideal^{(\abort)}\right] \bigg| \geq \frac{1}{\poly(\secp)(2^{2nv + 2m}+1)}.
\end{align*}

Simulating the distribution conditioned on an abort is trivial, so the second case cannot occur, and the first case immediately contradicts~\cref{lemma:distribution-bot-mpc}, completing the proof.
\end{proof}

\subsubsection{Case 2: $P_1$ is the only honest party}

\paragraph{Simulator.} The simulator will act as party 1 and maintain the classical $\MPC$ oracle. It will compute $\MPC$ honestly throughout, and will compute honest party 1 actions throughout except for what is described below.

\begin{itemize}
    \item \textbf{Online Round 1.} Rather than Clifford-encoding and teleporting in $P_1$'s input $\bx_1$, the simulator will teleport $C_1^\inp(\mathbf{0}^{\ell_1 + \secp})$.
    \item \textbf{Online Round 2.} Rather than homomorphically evaluating the quantum garbled circuit $(U_{\mathsf{garble}},D_0,\widetilde{g}_1,\dots,\widetilde{g}_d)$ on its encrypted state $\by$, the simulator will run the following computation (homomorphically) on $\by$.
    \begin{itemize}
        \item Compute $(\bn_1,\dots,\bn_n,\bz,\bt) \coloneqq E_0^\dagger U_{\mathsf{garble}}(\by)$, where $(\bn_1,\dots,\bn_n)$ are the parties' Clifford-encoded inputs. 
        \item For each $i \in [n]$, compute $(\bx_i,\bz_i) \coloneqq {C_i^\inp}^\dagger (\bn_i)$ and measure $\bz_i$. If any measurements are not all 0, then set $(\by_1,\dots,\by_n) \coloneqq (\bot,\cdots,\bot)$. Otherwise, query the ideal functionality with $\{\bx_i\}_{i \in [2,\dots,n]}$ and let $\{\by_i\}_{i \in [2,\dots,n]}$ be the output.
        \item For each $i \in [2,\dots,n]$ set $\by_i^\out \coloneqq C_i^\out(\by_i,\mathbf{0}^\secp)$, set $\by_1^\out \coloneqq C_1^\out(\mathbf{0}^{\ell_1+\secp})$, and continue as the honest party 1.
    \end{itemize}
\end{itemize}

\begin{lemma}
Let $\Pi$ be the protocol described in \cref{subsec:4-round-protocol} computing some quantum circuit $Q$. Then $\Pi$ satisfies \cref{def:mpqc} for any $\cA$ corrupting parties $[2,\dots,n]$.

\end{lemma}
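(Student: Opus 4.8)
The plan is to proceed exactly as in the ``Case 2'' analysis for the five-round protocol (\cref{sec:five-round}), adapting it to the homomorphic-evaluation structure of the four-round protocol. The key observation is that when $P_1$ is the only honest party, there is no need to extract $P_1$'s input into the classical $\MPC$ or to worry about the teleportation errors being revealed prematurely: the simulator \emph{is} the honest $P_1$, and it also controls the $\MPC$ oracle, which it runs honestly. The only changes are (i) $P_1$ teleports in an encoding of $\mathbf{0}$ in place of $\bx_1$, and (ii) instead of running the (homomorphically encrypted) quantum garbled circuit on $\by$, the simulator runs a circuit that Clifford-decodes each party's input, invokes the ideal functionality on the corrupted parties' decoded inputs, and re-encodes. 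So the plan is to give a short hybrid argument between $\cH_0 = \Real_{\Pi,\Q}(\cA_\secp,\{\bx_i\}_{i \in [n]},\baux_\cA)$ and the simulator.

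I would use three hybrids, mirroring the five-round Case 2 proof. First, $\cH_1$: replace the homomorphic evaluation of the quantum garbled circuit $(U_{\mathsf{garble}},D_0,\widetilde{g}_1,\dots,\widetilde{g}_d)$ on $\by$ with a direct homomorphic computation of $\bigl(\bn_1,\dots,\bn_n,\bz,\bt\bigr) \coloneqq E_0^\dagger U_{\mathsf{garble}}(\by)$ followed by $Q_\dst[\{C_i^\inp,C_i^\out\}_{i \in [n]}]$ applied to the decoded inputs, auxiliary $\bz$, and $\bt$ states; statistical indistinguishability follows from the statistical correctness of the $\CM$-garbling scheme (\cref{defn:QGC}) and the statistical ciphertext re-randomization of $\qmfhe$ (\cref{def:qmfhe}). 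Second, $\cH_2$: in Online Round 1 have $P_1$ teleport $C_1^\inp(\mathbf{0}^{m_1+\secp})$ instead of $C_1^\inp(\bx_1,\mathbf{0}^\secp)$, and insert $\bx_1$ back in just before the application of $Q$ inside the direct computation; this switch is \emph{perfectly} indistinguishable by the perfect hiding of the Clifford authentication code (using $C_1^\inp$). Third, $\cH_3$: replace the direct computation of $Q$ inside $Q_\dst$ by a query to the ideal functionality $\cI[\bx_1]$ on $\{\bx_i\}_{i \in \{2,\dots,n\}}$ (after Clifford-decoding and checking each $\bn_i$, outputting $(\bot,\dots,\bot)$ if any trap check fails), and re-encode the outputs with the $C_i^\out$; this is statistically indistinguishable provided that (a) $\bz$ is statistically close to $\mathbf{0}^{\otimes k_0}$, which holds conditioned on the $Z$-check passing by \cref{lemma:Ztest} (since $P_1$ itself submits $r'$ to $\MPC$, and if $r' \neq r$ all parties abort anyway), and (b) the state obtained by running the distillation circuit on $\bt$ is statistically close to $\mathbf{T}^{\otimes k_T/\secp}$, which holds conditioned on $P_1$'s own $T$-state check (and the Clifford authentication of the $T$-check registers) passing, by \cref{lemma:distillation}. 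Hybrid $\cH_3$ is exactly the simulator, completing the proof.

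The main technical care is bookkeeping: one must verify that every step of the simulated computation in $\cH_3$ — the Clifford decoding ${C_i^\inp}^\dagger(\bn_i)$, the trap measurements, the ideal query, and the re-encoding with $C_i^\out(\by_i,\mathbf{0}^\secp)$ — can be performed \emph{under the $\qmfhe$ encryption}, i.e. that it is an efficient quantum circuit, and that the re-randomization before $P_1$ sends out $\qmfhe.\Enc(\sk_\QGC,\by_{\out,i})$ makes the marginal ciphertext distributions statistically independent of the internal state (this is where the $\qmfhe$ circuit-privacy/re-randomization property is invoked). I do not expect any genuine obstacle here — the argument is essentially identical to the five-round Case 2 proof with ``homomorphically'' inserted in the appropriate places — so the ``hard'' part is merely writing the hybrids cleanly and citing the right lemmas (\cref{defn:QGC}, \cref{def:qmfhe}, \cref{lemma:Ztest}, \cref{lemma:distillation}) for each transition.
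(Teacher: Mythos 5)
Your proposal is correct and follows essentially the same route as the paper's proof: the same three hybrids in the same order (direct computation of $Q_\dst$ in place of the homomorphically evaluated garbled circuit via QGC correctness and $\qmfhe$ re-randomization, then swapping $P_1$'s input for $\mathbf{0}$ via perfect hiding of the Clifford code, then replacing the computation of $Q$ with an ideal-functionality query via \cref{lemma:Ztest} and \cref{lemma:distillation}), ending at the simulator. The additional bookkeeping you flag about performing the decode/check/re-encode steps under the $\qmfhe$ encryption is consistent with how the paper's simulator is specified and raises no new issues.
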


\begin{proof}
We consider a sequence of hybrid distributions, where $\cH_0$ is $\Real_{\Pi,\Q}(\cA_\secp,\{\bx_i\}_{i \in [n]},\baux_\cA)$, i.e. the real interaction between $\cA_\secp(\{\bx_i\}_{i \in [2,\dots,n]},\baux_\cA)$ and an honest party $P_1(1^\secp,\bx_1)$. In each hybrid, we describe the differences from the previous hybrids.

\begin{itemize}
    \item $\cH_1:$ \underline{Directly compute $Q_\dst[\{C_i^\inp,C_i^\out\}]$ in place of garbled circuit evaluation} During Online Round 2, this hybrid computes $(\bn_1,\dots,\bn_n,\bz,\bt) \coloneqq E_0^\dagger U_{\mathsf{garble}}(\by)$ and then applies $Q_\dst[\{C_i^\inp,C_i^\out\}]$ to produce outputs $(\by_1^\out,\dots,\by_n^\out)$. Statistical indistinguishability follows from the statistical correctness of the QGC, and statistical ciphertext re-randomization of $\qmfhe$.
    \item $\cH_2:$ \underline{Replace $P_1$'s input with $\mathbf{0}$} During Online Round 1, this hybrid teleports in $C_1^\inp(\mathbf{0}^{m_i + \secp})$. Then, during Online Round 2, this hybrid inserts $P_1$'s input $\bx_1$ before the computation of $Q$. This switch is perfectly indistinguishable due to the perfect hiding of the Clifford code.
    \item $\cH_3:$ \underline{Query ideal functionality} During Online Round 2, this hybrid computes $Q_\dst[\{C_i^\inp,C_i^\out\}]$ as described in the simulator, by using the ideal functionality to compute $Q$. This switch is statistically indistinguishable as long as i) $\bz$ is statistically close to $\mathbf{0}^{\otimes k_0}$ (which follows from \cref{lemma:Ztest}), and ii) the result of applying the distillation circuit to $\bt$ to statistically close to $\mathbf{T}^{\otimes k_T / \secp}$ (which follows from \cref{lemma:distillation}, as $P_1$ checks its own subset of T state). This hybrid is the simulator, completing the proof.
\end{itemize}

\end{proof}

\section{Two Rounds Without Pre-Processing: Challenges and Possibilities}

\subsection{An Oblivious Simulation Barrier for Two Round Protocols}\label{subsec:two-round-impossibility}

We begin with our negative result showing that any two-round 2PQC protocol with an \emph{oblivious simulator} supporting general quantum functionalities would imply new protocols for the setting of \emph{instantaneous non-local quantum computation}~\cite{PhysRevLett.90.010402,Beigi_2011,DBLP:conf/tqc/Speelman16,DBLP:journals/tit/GonzalesC20}.

\paragraph{Instantaneous Non-local Quantum Computation.} Instantaneous non-local quantum computation  of a unitary $U$ on $n_A + n_B$ qubits is an information-theoretic task where parties $A$ and $B$, who may share some initial entangled quantum state, receive as input quantum states $\bx_A,\bx_B$ and wish to compute the functionality $U(\bx_A,\bx_B) = (\by_A,\by_B)$ with only one round of simultaneous communication. For a family of unitaries $\{U_\secp\}_{\secp \in \bbN}$ on $\{n_{A,\secp} + n_{B,\secp}\}_{\secp \in \bbN}$ qubits, we say that an instantaneous non-local quantum computation protocol must satisfy the following properties:

\begin{itemize}
    \item \textbf{Correctness.} For all input states $(\bx_{A,\secp},\bx_{B,\secp})$, the joint outputs $(\by_{A,\secp}',\by_{B,\secp}')$ obtained by $A$ and $B$ after engaging in the protocol are such that $(\by_{A,\secp}',\by_{B,\secp}') \approx_s (\by_{A,\secp},\by_{B,\secp})$, where $(\by_{A,\secp},\by_{B,\secp}) \coloneqq U_\secp(\bx_{A,\secp},\bx_{B,\secp})$.
    \item \textbf{Efficiency.} The size of the entangled quantum state initially shared by $A$ and $B$ in the protocol for computing $U_\secp$ is bounded by some polynomial in $\secp$ (note that the running time of $A$ or $B$ in the protocol does not need to be polynomial).
\end{itemize}

\begin{conjecture}
\label{conjecture: instantaneous}
There exists a family of efficiently-computable unitaries $\{U_\secp\}_{\secp \in \bbN}$ for which no \emph{correct} and \emph{efficient} instantaneous non-local quantum computation protocol exists.
\end{conjecture}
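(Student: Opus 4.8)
The plan is, first, to be honest about what kind of statement this is: it is a \emph{conjecture}, invoked by the paper only as a hypothesis under which the oblivious-simulation barrier holds, and to the best of current knowledge no unconditional proof is known. So rather than a proof, what follows is a proposal for how one might \emph{attempt} to establish it, together with the reasons it is stated as a conjecture in the first place. The one unambiguous starting observation is that a lower bound of this type cannot be about \emph{feasibility}: the Vaidman/teleportation template shows that \emph{every} unitary admits some instantaneous non-local protocol (possibly with astronomically large pre-shared entanglement), so the conjecture is entirely a statement about the \emph{size} of the pre-processing, and any proof must control that size.

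The natural route is a dimension/counting argument fused with a communication-complexity-style reduction. One would fix an explicit family $\{U_\secp\}_{\secp \in \bbN}$ of efficiently-computable unitaries --- for instance a generic family of Clifford$+T$ circuits, or a reversible embedding of a communication problem believed to require large resources --- and try to show that a one-round two-party protocol computing $U_\secp$ with pre-shared entanglement of dimension $2^{o(\poly(\secp))}$ would contradict a known information-theoretic bound. The ingredients one would reach for are: (i) the best known \emph{upper} bound, namely the Beigi--K\"onig port-based-teleportation protocol, which realizes an arbitrary $n$-qubit unitary with $2^{O(n)}$ EPR pairs --- beating or matching this for a specific family is precisely the target; (ii) the existing \emph{lower} bounds for related tasks (specific non-local measurements, $\Omega(n)$-scale bounds), which one would hope to amplify to super-polynomial via a hardness-amplification or direct-product argument; and (iii) monogamy-of-entanglement / discretization-of-strategies techniques, used to reduce an arbitrary protocol to a structured one on which the counting bound can bite.

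The hard part --- and the reason the statement is a conjecture rather than a lemma --- is that proving \emph{any} super-polynomial lower bound on the entanglement cost of instantaneous non-local quantum computation is a well-known open problem, essentially equivalent in difficulty to proving that natural quantum position-verification schemes cannot be broken with polynomial-size entanglement, which has resisted all attacks. The obstruction is structural: a single round of \emph{simultaneous} communication combined with \emph{unbounded} pre-shared entanglement and \emph{unbounded} local computation is an extremely powerful model, and the only quantity we can hope to bound is the dimension of the shared state; all current techniques either yield only polynomial or small-constant bounds, or apply only after artificially restricting the protocol's form. A realistic proposal therefore reduces to one of two things neither of which is presently available: developing a genuinely new lower-bound technique tailored to one-round two-party quantum protocols, or replacing the conjecture with a reduction from a cleaner complexity-theoretic assumption (e.g.\ showing efficient instantaneous non-local computation for all unitaries would collapse some complexity class or break a standard cryptographic primitive) --- but the latter also appears out of reach, since the unbounded local computation allowed in the model blocks the obvious simulation-based arguments. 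This is exactly why the authors isolate it as a standalone conjecture and use it only as a hypothesis.
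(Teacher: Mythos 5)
You are right that this statement is a conjecture, not a theorem: the paper offers no proof of it, explicitly notes that known results give no indication either way, and uses it solely as a hypothesis for the oblivious-simulation barrier (Theorem~\ref{thm:lower-bound}). Your assessment — that the question reduces to super-polynomial lower bounds on pre-shared entanglement for one-round protocols, with Beigi--K\"onig as the exponential upper bound and only linear lower bounds known — matches the paper's own discussion, so there is nothing to correct.
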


As noted in the introduction, the best known instantaneous non-local quantum computation protocols for general functionalities on $n$-qubit inputs for $n>2$ require exponentially many EPR pairs in both $n$ and in $\log(1/\epsilon)$, where $\epsilon$ is the desired correctness error \cite{Beigi_2011}. Moreover, there has been recent progress on proving lower bounds for particular classes of unitaries~\cite{DBLP:journals/tit/GonzalesC20}. While current lower bounds on the size of input-independent pre-processing are linear in the number of input qubits, the current state of the art leaves open the possibility that exponentially-many EPR pairs are necessary for general functionalities. To the best of the authors' knowledge, known results give no indication as to whether \cref{conjecture: instantaneous} is more likely to be true or false. Nevertheless, the fact that it remains unresolved provides some indication that positive progress on two-round 2PQC with oblivious simulation will require new ideas.






\paragraph{Two-Round 2PQC in the CRS Model.} Consider a generic two-round two-party protocol for computing an arbitrary functionality $U$ in the (classical) CRS model assuming simultaneous messages. Such a protocol is described by the algorithms $(A_1,A_2,A_3,B_1,B_2,B_3)$ where $A_1,A_2,A_3$ are (respectively) Alice's first message algorithm, second message algorithm, and output reconstruction algorithm (and likewise for Bob with $B_1,B_2,B_3$). As usual, Alice's input is $\bx_A$ and Bob's input is $\bx_B$. They compute a unitary $U$ and obtain $U(\bx_A,\bx_B) = (\by_A,\by_B)$ where $\by_A$ and $\by_B$ are their respective outputs. We stress that since this model does not allow pre-processing, Alice and Bob \emph{may not share entanglement} before receiving their inputs.

An execution of such a two-round protocol proceeds as follows:
\begin{enumerate}
    \item \textbf{Setup.} Run $\crs \gets \Gen$.
    \item \textbf{Round 1.} Alice and Bob generate their first round messages and leftover states as $(\bm_1^{(A)},\bst_1^{(A)}) \gets A_1(\crs,\bx_A)$ and $(\bm_1^{(B)},\bst_1^{(B)}) \gets  B_1(\crs,\bx_B)$. They send their messages to each other, which has the effect of interchanging/swapping $\bm_1^{(A)}$ and $\bm_1^{(B)}$.
    \item \textbf{Round 2.} Alice and Bob generate their second round message and leftover states as $(\bm_2^{(A)},\bst_2^{(A)}) \gets  A_2(\bst_1^{(A)},\bm_1^{(B)})$ and $(\bm_2^{(B)},\bst_2^{(B)}) \gets  B_2(\bst_1^{(B)},\bm_1^{(A)})$.
    They send their messages to each other, which swaps $\bm_2^{(A)}$ and $\bm_2^{(B)}$.
    \item \textbf{Output.} $\by_A \gets A_3(\bst_2^{(A)},\bm_2^{(B)})$
    and $\by_B \gets B_3(\bst_2^{(B)},\bm_2^{(A)})$.
\end{enumerate}

\paragraph{Oblivious Simulation.} We now define a natural class of black-box, straight-line simulators that we call \emph{oblivious} simulators. Recall that a simulator is \emph{black-box} if it only makes query access to the attacker (and does not need the code/state of the attacker), and is \emph{straight-line} if it only runs a single time in the forward direction. The defining property of an oblivious simulator is that it learns which player (out of $A$ or $B$) is corrupted only {\em after} it has generated (and committed to) a simulated CRS. No matter which party is corrupted, such a simulator must use its committed CRS to generate a view for the corrupt party that is computationally indistinguishable from the party's view in the real world.

As discussed in~\cref{subsec:tech-overview-two-round}, a negative result for oblivious simulation demonstrates that a natural strategy for constructing two-round two-party computation in the \emph{classical} setting does not extend to the quantum setting.

The following definition specifies the \emph{additional requirements} for a simulator to be ``oblivious''; an oblivious simulator must still satisfy the standard real/ideal indistinguishability notion in~\cref{def:mpqc}, which we will not repeat here.

\begin{definition}[Syntactic Requirements for Oblivious Simulation]
A simulator for a two-round two-party quantum computation protocol in the classical CRS model is \emph{oblivious} if it can be described by a tuple of algorithms $(\Sim_0,\Sim^{(A)},\Sim^{(B)})$ where $\Sim^{(A)} = (\Sim_1^{(A)},\Sim_2^{(A)},\Sim_3^{(A)})$ and $\Sim^{(B)} = (\Sim_1^{(B)},\Sim_2^{(B)},\Sim_3^{(B)})$, simulation proceeds as follows.
\begin{enumerate}
\item The simulator runs $(\crs,\bst_0^{(\Sim)}) \gets \Sim_0$ to generate the CRS and leftover simulator state $\bst_0^{(\Sim)}$.
\end{enumerate}
Next, the simulator ``learns'' whether it should simulate the view of party $A$ or party $B$. If the simulator is simulating the view of party $A$, it proceeds using $\Sim^{(A)} = (\Sim_1^{(A)},\Sim_2^{(A)},\Sim_3^{(A)})$, and if it is simulating the view of party $B$, it proceeds with $\Sim^{(B)} = (\Sim_1^{(B)},\Sim_2^{(B)},\Sim_3^{(B)})$. We  write out the case for simulating the view of party $A$ below (the case for party $B$ is identical).
\begin{enumerate}
\setcounter{enumi}{1}
\item $(\bm_1^{(B)},\bst_1^{(\Sim)}) \gets \Sim_1^{(A)}(\bst_0^{(\Sim)})$\\ Then query $A_1$ on $(\crs,\bm_1^{(B)})$ and receive $\bm_1^{(A)}$.
\item $(\bx_A,\bst_2^{(\Sim)}) \gets \Sim_2^{(A)}(\bst_1^{(\Sim)},\bm_1^{(A)})$
\\ Then query the ideal functionality on $\bx_A$ and receive $\by_A$
\item $\bm_2^{(B)} \gets \Sim_3^{(A)}(\bst_2^{(\Sim)},\by_A)$
\\ Then query $A_2$ on $\bm_2^{(B)}$.
\end{enumerate}

\end{definition}

In short, for an oblivious simulator, the distribution of the simulated CRS is completely independent of whether $A$ is corrupt or $B$ is corrupt. Moreover, because the simulator is straight-line, it is possible to define a (possibly inefficient) algorithm $\Sim_{\comb}$ that computes $\left(\crs,\st_1^{(\Sim,A)},\st_1^{(\Sim,B)}\right),$ where each of the two simulator states computed is with respect to the {\em same} classical $\crs$. This can be done for example by running many iterations of $\Sim_0$ until two of them output the same classical $\crs$. Thus, one would obtain a ``first-round-only'' simulator with the following syntax for the first round:

\begin{enumerate}
    \item $\left(\crs,\st_1^{(\Sim,A)},\st_1^{(\Sim,B)}\right) \gets \Sim_{\comb}$.
    \\ {\em (send $\crs$ to $A_1$ and receive $\bm_1^{(A)}$ and send $\crs$ to $B_1$ and receive $\bm_1^{(B)}$)}
    \item $(\bx_A,\st_2^{(\Sim,A)}) \gets \Sim_2^{(A)}(\st_1^{(\Sim,A)},\bm_1^{(A)})$.
        \\ Then send $\bx_A$ to the ideal functionality and receive $\by_A$
    \item $(\bx_B,\st_2^{(\Sim,B)}) \gets \Sim_2^{(B)}(\st_1^{(\Sim,B)},\bm_1^{(B)})$.
        \\ Then send $\bx_B$ to the ideal functionality and receive $\by_B$
\end{enumerate}

\paragraph{Non-Local Computation from Two-Round 2PQC with Oblivious Simulation.} We now describe how to turn two-round 2PQC for general functionalities with oblivious simulation (in the CRS model) into an instantaneous non-local quantum communication protocol. In the following theorem, we will only make use of the ``first-round-only'' simulator discussed above.

\begin{theorem}
\label{thm:lower-bound}
Assuming \cref{conjecture: instantaneous}, there does not exist a two-round two-party quantum computation protocol for general functionalities in the classical CRS model, with an oblivious simulator.\footnote{A previous version of this work incorrectly claimed an unconditional version of this theorem. We thank Mi-Ying Huang as well as anonymous reviewers for pointing this issue out to us.}
\end{theorem}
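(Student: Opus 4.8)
The plan is to show the contrapositive: a two-round 2PQC protocol for general functionalities in the classical CRS model with an oblivious simulator yields, for \emph{every} efficiently-computable unitary family $\{U_\secp\}$, a correct and efficient instantaneous non-local quantum computation protocol, thereby refuting \cref{conjecture: instantaneous}. Fix such a 2PQC protocol $(A_1,A_2,A_3,B_1,B_2,B_3)$ computing $U$, together with its oblivious simulator. First I would invoke the ``first-round-only'' simulator $\Sim_{\comb}$ constructed in the excerpt, which outputs a single classical string $\crs$ together with two simulator states $\st_1^{(\Sim,A)}$ and $\st_1^{(\Sim,B)}$ consistent with that \emph{same} $\crs$, such that $\Sim_2^{(A)}(\st_1^{(\Sim,A)},\cdot)$ extracts Alice's input from her honest first message and $\Sim_2^{(B)}(\st_1^{(\Sim,B)},\cdot)$ extracts Bob's. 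The non-local protocol will use this (possibly inefficient, but crucially \emph{input-independent}) object as pre-processing: both parties are handed $\crs$ and the corresponding trapdoor states $\st_1^{(\Sim,A)},\st_1^{(\Sim,B)}$, whose total size is polynomial because $\crs$ is a polynomial-length classical string and the simulator is a polynomial-size quantum circuit, so its state registers are polynomial.

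Next I would describe the one round of simultaneous communication. On input $\bx_A$, Alice runs $A_1(\crs,\bx_A)$ to get $(\bm_1^{(A)},\bst_1^{(A)})$ and sends $\bm_1^{(A)}$; symmetrically Bob runs $B_1(\crs,\bx_B)$ and sends $\bm_1^{(B)}$. After the exchange, Alice holds $\bm_1^{(B)}$ (Bob's honest first message \emph{on Bob's real input}) together with her trapdoor state $\st_1^{(\Sim,B)}$; she runs the extractor $\Sim_2^{(B)}(\st_1^{(\Sim,B)},\bm_1^{(B)})$ to recover $\bx_B$. Symmetrically Bob recovers $\bx_A$ using $\Sim_2^{(A)}(\st_1^{(\Sim,A)},\bm_1^{(A)})$. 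Now Alice locally holds $\bx_B$, Bob locally holds $\bx_A$, and there has been exactly one round of communication. At this point each party simply computes the appropriate half of $U$ locally — Alice computes $(\by_A,\by_B):=U(\bx_A,\bx_B)$ entirely in her head (she has learned $\bx_B$ and knows... wait, she does \emph{not} know her own $\bx_A$ any more after swapping — actually she does, she never gave it away; swapping happened at the message level, not the input level) — so Alice, holding both $\bx_A$ (her own input, retained) and $\bx_B$ (extracted), computes $U(\bx_A,\bx_B)$ and outputs $\by_A$; Bob, holding $\bx_B$ and the extracted $\bx_A$, computes $U(\bx_A,\bx_B)$ and outputs $\by_B$. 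This uses one simultaneous round and polynomial-size pre-processing, giving an instantaneous non-local computation protocol for $U$.

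The main obstacle is \textbf{correctness}, i.e.\ arguing that the states $\bx_A$ and $\bx_B$ recovered by the extractors are (statistically) close to the true inputs, even though the inputs may be arbitrary, possibly entangled, quantum states and the simulator only satisfies a \emph{computational} indistinguishability guarantee against QPT distinguishers. The argument I would run is: for a \emph{corrupted} Alice who behaves honestly, real/ideal indistinguishability (\cref{def:mpqc}) says the joint distribution of Bob's honest output $\by_B$ together with Alice's view in the real world is computationally close to the ideal world, where Bob's output is $U(\bx_A',\bx_B)$ for the \emph{extracted} $\bx_A'=\Sim_2^{(A)}(\st_1^{(\Sim,A)},\bm_1^{(A)})$; since this holds for all efficient environments (including ones holding a purification of $\bx_A,\bx_B$ as auxiliary input $\baux_\cD$), and since the honest-Alice real execution produces exactly $\by_B = U(\bx_A,\bx_B)$ on Bob's side, we get that $U(\bx_A',\bx_B)$ is computationally indistinguishable from $U(\bx_A,\bx_B)$ jointly with a purification — hence $\bx_A' \approx_c \bx_A$ on the relevant registers, and symmetrically $\bx_B' \approx_c \bx_B$. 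This gives only \emph{computational} closeness of the non-local protocol's output to $U(\bx_A,\bx_B)$, which is why the theorem is stated relative to \cref{conjecture: instantaneous} about \emph{efficiently-computable} unitaries and why "correctness" here should be read as holding against efficient tests (the conjecture's correctness notion is $\approx_s$, so one extra step is needed: either note that for the purpose of refuting the conjecture, computational correctness against all QPT distinguishers on an efficiently-preparable/efficiently-checkable family already suffices, or observe that since $U_\secp$ is efficiently computable the closeness can be tested efficiently and hence computational indistinguishability forces statistical closeness on those instances). The remaining steps — bounding the pre-processing size, checking the single-round message pattern, and verifying that no entanglement was pre-shared (all shared state is the simulator trapdoor, which is part of the allowed pre-processing) — are routine. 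I expect essentially all the subtlety to live in this correctness-via-security reduction and in correctly tracking that the extractor is applied to honest first messages (so that the security guarantee for ``corrupted party running the honest code'' applies verbatim).
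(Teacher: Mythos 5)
There is a fatal gap in your proposed non-local protocol: the step where Alice ``retains'' $\bx_A$ while also sending $A_1(\crs,\bx_A)$'s message is impossible for quantum inputs. By your own construction the extractor $\Sim_2^{(A)}$ recovers $\bx_A$ from $\bm_1^{(A)}$ alone, so essentially all of the quantum information of $\bx_A$ travels to Bob; by no-cloning, Alice's residual state $\bst_1^{(A)}$ cannot also contain $\bx_A$. Thus after your single round the parties have \emph{swapped} inputs --- Alice holds (an approximation of) $\bx_B$, Bob holds $\bx_A$, and neither holds both --- so neither can locally compute $U(\bx_A,\bx_B)$. Your mid-sentence self-correction (``actually she does, she never gave it away'') is exactly where the argument breaks; it would be fine for classical inputs, but the theorem and \cref{conjecture: instantaneous} concern quantum inputs. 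A secondary gap is correctness of extraction: the simulator only gives computational indistinguishability, and your purification/aux-input argument at best yields per-input computational closeness for product pure states; \cref{conjecture: instantaneous} demands statistical correctness as a channel, i.e.\ simultaneously for all (possibly entangled, not efficiently preparable) inputs, which does not follow from your sketch.

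The paper's proof is built precisely to route around these two obstacles, and its structure is genuinely different from yours. It does not apply the 2PQC to $U$ itself but to an engineered functionality $\text{C-SWAP-U}$ that composes $U$ with a swap of the outputs (and degenerates to a pure swap on certain classical control bits); the pure-swap branch is what lets one argue, using real/ideal indistinguishability against an efficiently implementable test, that the extracted state equals the input --- but only for deterministically efficiently preparable pure states (basis states and two-term superpositions). \cref{lemma:no-cloning} then upgrades this to every input, including entangled ones, and simultaneously shows the residual states $(\bst_1^{(A)},\bst_1^{(B)},\bst_2^{(\Sim,A)},\bst_2^{(\Sim,B)})$ are statistically independent of the inputs. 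This independence is what makes the construction work: the residual states are generated in input-independent pre-processing (by running the extraction unitary on $\Zstate$ inputs), each party locally \emph{re-synthesizes} the honest first message of its own side by applying ${\Sim_2}^\dagger$ to its real input together with the pre-processed simulator residue, then runs the \emph{other} party's second-round algorithm and sends that message; after the single simultaneous exchange each party runs the other party's output-reconstruction algorithm, and the swap built into $\text{C-SWAP-U}$ ensures everyone ends with the correct half of $U(\bx_A,\bx_B)$. Your proposal uses only the first-round algorithms plus the extractor and never invokes $A_2,B_2,A_3,B_3$, the swap-composed functionality, or anything like \cref{lemma:no-cloning}; without those ingredients the reduction cannot be repaired along the lines you sketch.
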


\begin{proof}

Given any family of unitaries $U = \{U_\secp\}_{\secp \in \bbN}$ on $\{n_{A,\secp} + n_{B,\secp}\}_{\secp \in \bbN}$ qubits promised by \cref{conjecture: instantaneous}, we define the functionality $\text{C-SWAP-U} = \{\text{C-Swap-U}_\secp\}_{\secp \in \bbN}$ as follows. $\text{C-SWAP-U}_\secp$ takes a $(n_{A,\secp} + n_{B,\secp})$-qubit state $(\bx_A,\bx_B)$ as input along with an additional two classical bits of input $z_A,z_B$. If $z_A \oplus z_B = 0$, it applies $U_\secp$ to $(\bx_A,\bx_B)$ to produce $(\by_A,\by_B)$, and then swaps the output states, outputting $(\by_B,\by_A)$. If $z_A \oplus z_B = 1$, it simply swaps the input states, outputting $(\bx_B,\bx_A)$. In what follows, we will show that any two-round two-party quantum computation protocol for $\text{C-SWAP-U}$ implies a correct and efficient instantaneous non-local quantum computation protocol for $U$, violating \cref{conjecture: instantaneous}.

Consider the oblivious simulator for the protocol computing $\text{C-SWAP-U}$. We will only be interested in the simulated first round and subsequent input extraction. We will not be concerned with simulating the second round at all. Furthermore, we will only care about simulating the view of a specific type of adversary: one that simply runs the honest $A$ (resp. $B$) algorithm. Such an ``adversary'' does not rush, i.e. the first message algorithm of the (honestly behaving) adversary is independent of $B$'s first round message $\bm_1^{(B)}$. Therefore for simplicity we will suppress mention of this message being generated by the simulator (since we are also not concerned with simulation of the second round).

Now, we will combine the ``first-round-only'' simulator discussed above with the first-message algorithms of parties $A$ and $B$ to produce the following algorithm $U_{\extract}$ (which can be written as a unitary), which will be applied to $(\bx_A,\bx_B)$ (tensored with sufficiently many $\Zstate$ states, which we write as $\Zstate^*$). Technically, $U_{\extract}$ is a family of unitaries parameterized by the security parameter $\secp$, and the inputs $(\bx_A,\bx_B)$ are families of input states, though for simplicity we will drop the explicit indexing by $\secp$.

$U_{\extract}$, on input $(\bx_A,\bx_B,\Zstate^*)$ works as follows:
\begin{enumerate}
    \item Compute $\left(\crs,\st_1^{(\Sim,A)},\st_1^{(\Sim,B)}\right) \gets \Sim_\comb$.
    \item Compute $(\bm_1^{(A)},\bst_1^{(A)}) \gets A_1(\bx_A,\crs)$.
    \item Compute $(\bm_1^{(B)},\bst_1^{(B)}) \gets B_1(\bx_B,\crs)$.
    \item Compute $(\bx_A',\bst_2^{(\Sim,A)}) \gets \Sim_2^{(A)}(\st_1^{(\Sim,A)},\bm_1^{(A)})$.
    \item Compute $(\bx_B',\bst_2^{(\Sim,B)}) \gets \Sim_2^{(B)}(\st_1^{(\Sim,B)},\bm_1^{(B)})$.
    \item Output $(\bx_A',\bx_B',\bst_1^{(A)},\bst_1^{(B)},\bst_2^{(\Sim,A)},\bst_2^{(\Sim,B)})$.
\end{enumerate}

Now, we show that for any pair of pure states $(\bx_A,\bx_B)$ that can be deterministically efficiently generated (i.e. can be generated by applying an efficient unitary to $\mathbf{0}$ states), the (traced out) portion of $U_\extract(\bx_A,\bx_B,\Zstate^*)$ consisting of $(\bx_A',\bx_B')$ is statistically close to $(\bx_A,\bx_B)$. First, we argue that $\bx_A' \approx_s \bx_A$. Recall that regardless of $A$'s classical input bit $z_A$, there is always a possibility that, depending on $B$'s classical input bit $z_B$, the functionality computed will simply be swapping $\bx_A$ and $\bx_B$ (from the definition of our C-SWAP-U unitary). In this case, the value $\bx_A'$ queried by $\Sim$ to the ideal functionality will be forwarded to $B$ as its output in the simulated world. $B$'s output in the real world is $\bx_A$, and thus $\bx_A' \approx_s \bx_A$, since otherwise the real and ideal worlds would be distinguishable by the measurement $\{\bx_A, \bbI - \bx_A\}$; note that projecting onto $\bx_A$ can be performed efficiently since $\bx_A$ is a (deterministically) efficiently generated pure state. An identical argument shows that $\bx_B' \approx_s \bx_B$.



Now, we can apply \cref{lemma:no-cloning} below to $U_{\extract}$; since the above argument applies to the case where $\bx_A,\bx_B$ are deterministically efficiently generated pure states, it in particular applies to the states required by \cref{lemma:no-cloning} (i.e. all computational basis states and all uniform superpositions of two computational basis states). \cref{lemma:no-cloning} applied to $U_{\extract}$ allows us to conclude that for \emph{any} input state $(\bx_A,\bx_B)$, the states $(\bst_1^{(A)},\bst_1^{(B)},\bst_2^{(\Sim,A)},\bst_2^{(\Sim,B)})$ are (statistically) independent of $(\bx_A,\bx_B)$. This fact can be used to design a correct and efficient instantaneous non-local quantum computation protocol for $U$, as described below. 


\begin{itemize}
    \item Setup: Execute $U_{\extract}$ on all $\Zstate$ states to produce $(\Zstate',\Zstate',\bst_1^{(A)},\bst_1^{(B)},\bst_2^{(\Sim,A)},\bst_2^{(\Sim,B)})$, where $\Zstate'$ denotes a state that is statistically indistinguishable from $\Zstate$. Discard $(\Zstate',\Zstate')$, send $(\bst_1^{(B)},\bst_2^{(\Sim,A)})$ to party $A$, and send $(\bst_1^{(A)},\bst_2^{(\Sim,B)})$ to party $B$.
    \item Party $A$, on input $\bx_A$, does the following.
    \begin{enumerate}
        \item Compute $(\st_1^{(\Sim,A)},\bm_1^{(A)}) \coloneqq {\Sim_2^{(A)}}^\dagger(\bx_A,\bst_2^{(\Sim,A)})$.
        \item Compute $(\bm_2^{(B)},\bst_2^{(B)}) \gets B_2(\bst_1^{(B)},\bm_1^{(A)})$.
        \item Send $\bm_2^{(B)}$.
    \end{enumerate}
    \item Party $B$, on input $\bx_B$, does the following.
    \begin{enumerate}
        \item Compute $(\st_1^{(\Sim,B)},\bm_1^{(B)}) \coloneqq {\Sim_2^{(B)}}^\dagger(\bx_A,\bst_2^{(\Sim,B)})$.
        \item Compute $(\bm_2^{(A)},\bst_2^{(A)}) \gets A_2(\bst_1^{(A)},\bm_1^{(B)})$.
        \item Send $\bm_2^{(A)}$.
    \end{enumerate}
    \item Party $A$ computes and outputs $\by_B \gets B_3(\bst_2^{(B)},\bm_2^{(A)})$.
    \item Party $B$ computes and outputs $\by_A \gets A_3(\bst_2^{(A)},\bm_2^{(B)})$.
\end{itemize}

Observe that the above protocol produces a transcript that is statistically close to the transcript between an honest $A$ and $B$. Fix $A$ and $B$'s classical inputs $z_A,z_B$ to be such that $z_A \oplus z_B = 0$. Since $A$ is receiving $B$'s output and $B$ is receiving $A$'s output, the parties are then computing a statistically close approximation to $U(\bx_A,\bx_B)$ in one round of online communication. The size of the initial entangled state held by $A$ and $B$ is bounded by the size of the honest $A$ and $B$ algorithms and the size of the simulator algorithms, which are all polynomial-size. Thus, the above is a correct and efficient instantaneous non-local quantum computation protocol for computing $U$.

\end{proof}

\begin{lemma}\label{lemma:no-cloning}
Let $U^{\gray{AB}}$ be a unitary over registers $\mathsf{A},\mathsf{B}$, and suppose that there exists $\epsilon$ between $0$ and $1$ such that for any $\ket{x}^{\gray{A}}$ that is (the density matrix of) either a computational basis state $\ket{i}$, or a uniform superposition of two computational basis states $\frac{1}{\sqrt{2}}(\ket{0} + \ket{i})$,
\[ \left|\trace_{\gray{B}}(U^{\gray{AB}} (\ket{x}^{\gray{A}} \otimes \ket{0}^{\gray{B}})) - \ket{x}^{\gray{A}}\right|_1 \leq \epsilon.\]
Then there exists a constant $\delta > 0$ and a pure state $\ket{y}^{\gray{B}}$ such that for \emph{every} state $\ket{x}^{\gray{A}}$,
\[ \left|U^{\gray{AB}}(\ket{x}^{\gray{A}} \otimes \ket{0}^{\gray{B}}) - (\ket{x}^{\gray{A}} \otimes \ket{y}^{\gray{B}})\right|_1 \leq \epsilon^\delta. \]
\end{lemma}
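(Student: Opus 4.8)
The plan is to use a "no-cloning" style argument: the hypothesis says that $U$ approximately leaves a large class of single-system states unchanged in register $\gray{A}$, and the intuition is that anything correlating with $\ket{x}$ in register $\gray{B}$ would allow cloning information about $\ket{x}$, which is impossible for the full set of states. First I would fix attention on the computational basis states $\ket{i}$. For each $i$, the hypothesis gives $\left|\trace_{\gray{B}}(U(\ket{i}\otimes\ket{0})) - \ket{i}\right|_1 \leq \epsilon$, so the output state on $\gray{AB}$ is close to a state whose $\gray{A}$-marginal is $\ket{i}\bra{i}$. A purity/fidelity argument then shows $U(\ket{i}^{\gray{A}}\otimes\ket{0}^{\gray{B}})$ is close (in trace distance $O(\sqrt{\epsilon})$, using the Fuchs--van de Graaf inequalities and gentle measurement) to a product state $\ket{i}^{\gray{A}}\otimes\ket{y_i}^{\gray{B}}$ for some pure state $\ket{y_i}$. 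The crux is then to show all the $\ket{y_i}$ are (approximately) equal to a single $\ket{y}$.

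The key step is to exploit the superposition hypotheses to force $\ket{y_i}\approx\ket{y_0}$. Consider the input $\frac{1}{\sqrt 2}(\ket 0 + \ket i)^{\gray A}\otimes\ket 0^{\gray B}$. By linearity, $U$ maps this to $\frac{1}{\sqrt 2}\left(U(\ket0\otimes\ket0) + U(\ket i\otimes\ket0)\right) \approx \frac{1}{\sqrt2}\left(\ket0^{\gray A}\ket{y_0}^{\gray B} + \ket i^{\gray A}\ket{y_i}^{\gray B}\right)$. The hypothesis says the $\gray A$-marginal of this output is within $\epsilon$ of $\frac12(\ket0+\ket i)(\bra0+\bra i) = \frac12(\ket0\bra0 + \ket0\bra i + \ket i\bra0 + \ket i\bra i)$. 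But the $\gray A$-marginal of $\frac{1}{\sqrt2}(\ket0\ket{y_0} + \ket i\ket{y_i})$ is $\frac12(\ket0\bra0 + \ket i\bra i + \braket{y_i|y_0}\ket0\bra i + \braket{y_0|y_i}\ket i\bra0)$; comparing the off-diagonal terms forces $|\braket{y_0|y_i} - 1| = O(\epsilon)$ (after accounting for the $O(\sqrt\epsilon)$ approximations from the first paragraph, so really $O(\epsilon^{c})$ for some constant $c$), hence $\ket{y_i}\approx\ket{y_0} =: \ket y$ up to global phase, with the phase pinned down by the same computation.

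Once all $\ket{y_i}$ agree with a common $\ket y$ to within $\epsilon^{c}$, I would extend from basis states to arbitrary $\ket x = \sum_i \alpha_i\ket i$ by linearity: $U(\ket x^{\gray A}\otimes\ket0^{\gray B}) = \sum_i \alpha_i U(\ket i\otimes\ket0)$, and each term is within $\epsilon^{c}$ (in the vector/trace-norm sense) of $\alpha_i\ket i^{\gray A}\ket y^{\gray B}$, so by the triangle inequality over a basis of dimension $d$ we get $\left|U(\ket x\otimes\ket0) - \ket x\otimes\ket y\right|_1 \leq d\cdot\epsilon^{c}$. This dimension factor is the one inelegance; I would absorb it by setting $\delta$ slightly smaller than $c$ and noting the statement is only meaningful when $\epsilon$ is small relative to $1/d$ (which is the regime of interest in the application, where $\epsilon$ is negligible and $d$ is at most exponential — actually one should double-check the application tolerates $d\,\epsilon^c$; if $\epsilon$ is truly negligible this is fine). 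The main obstacle is bookkeeping the error propagation carefully: each passage (marginal-close $\Rightarrow$ state-close via purification, off-diagonal comparison $\Rightarrow$ inner-product bound, basis-state $\Rightarrow$ general-state via triangle inequality) loses a polynomial or constant power of $\epsilon$, and I need to make sure the chain of exponents composes to a fixed constant $\delta>0$ independent of everything. A clean way to handle the purification step is: if the $\gray A$-marginal of a state $\brho$ on $\gray{AB}$ is $\epsilon$-close to a pure state $\ket x\bra x$, then $\bra x \trace_{\gray B}(\brho)\ket x \geq 1-\epsilon$, so $\brho$ has overlap $\geq 1-\epsilon$ with the subspace $\ket x^{\gray A}\otimes\mathcal H^{\gray B}$, hence $\brho$ is $O(\sqrt\epsilon)$-close in trace distance to its normalized projection onto that subspace, which is of the form $\ket x^{\gray A}\otimes\bsigma^{\gray B}$; and since $\brho$ is pure (it's $U$ applied to a pure state), $\bsigma$ is close to pure as well. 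Assembling these pieces in order — (1) basis states map near product states, (2) superposition hypothesis equates the $\gray B$-parts, (3) linearity extends to all inputs — gives the lemma.
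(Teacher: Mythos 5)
Your proposal is correct and follows essentially the same route as the paper's proof: show that each basis input $\ket{i}\otimes\ket{0}$ is mapped to (approximately) a product state $\ket{i}\otimes\ket{y_i}$, use the two-term superpositions $\frac{1}{\sqrt{2}}(\ket{0}+\ket{i})$ together with linearity to force every $\ket{y_i}$ to coincide with $\ket{y_0}$, and extend to arbitrary inputs by linearity — your variant of the middle step (reading $\langle y_0 | y_i\rangle \approx 1$ off the coherence terms of the $\gray{A}$-marginal) is interchangeable with the paper's, which instead applies the near-product argument a second time to the superposition inputs and compares. The dimension-dependent loss you flag in the final triangle-inequality step is present but unaddressed in the paper's own write-up, which simply records every error as $\poly(\epsilon)$, so your explicit bookkeeping is if anything more careful than the published argument.
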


\begin{proof}
If $U^{\gray{AB}}$ satisfies the conditions of the lemma statement, then for any computational basis state $\ket{i}^{\gray{A}}$ on the $\gray{A}$ registers, there exists a pure state $\ket{y_i}^{\gray{B}}$ and a polynomial $\poly(\cdot)$ such that
\[\left|U^{\gray{AB}}(\ket{i}^{\gray{A}} \otimes \ket{0}^{\gray{B}}) - (\ket{i}^{\gray{A}} \otimes \ket{y_i}^{\gray{B}})\right|_1 \leq \poly(\epsilon). \]


Moreover, for each $i$ there must exist a state $\ket{y_{0,i}}^{\gray{B}}$ such that

$$\left|U^{\gray{AB}}\left(\frac{1}{\sqrt{2}}(\ket{0}^{\gray{A}} + \ket{i}^{\gray{A}}) \otimes \ket{0}^{\gray{B}}\right) - \left(\frac{1}{\sqrt{2}}(\ket{0}^{\gray{A}} + \ket{i}^{\gray{A}}) \otimes \ket{y_{0,i}}^{\gray{B}}\right)\right|_1 \leq \poly(\epsilon).$$


By linearity, we also have that $$\left|U^{\gray{AB}}\left(\frac{1}{\sqrt{2}}(\ket{0}^{\gray{A}} + \ket{i}^{\gray{A}}) \otimes \ket{0}^{\gray{B}}\right) - \left(\frac{1}{\sqrt{2}}(\ket{0}^{\gray{A}} \otimes \ket{y_0}^{\gray{B}}) +\frac{1}{\sqrt{2}}(\ket{i}^{\gray{A}} \otimes \ket{y_i}^{\gray{B}}) \right)\right|_1 \leq \poly(\epsilon).$$ 

This implies that for each $i$, $\ket{y_{0,i}}^{\gray{B}}$ is within $\poly(\epsilon)$ trace distance of both $\ket{y_0}^{\gray{B}}$ and $\ket{y_i}^{\gray{B}}$, which means that $\ket{y_0}^{\gray{B}}$ is within $\poly(\epsilon)$ trace distance of $\ket{y_i}^{\gray{B}}$. Thus, $\ket{y_0}^{\gray{B}}$ satisfies the condition in the lemma statement, since for an arbitrary state $\ket{x}^{\gray{A}} = \sum_i \alpha_i\ket{i}^{\gray{A}}$, we have that $$\left|U^{\gray{AB}}\left(\sum_i \alpha_i\ket{i}^{\gray{A}} \otimes \ket{0}^{\gray{B}}\right) - \sum_i \alpha_i(\ket{i}^{\gray{A}} \otimes \ket{y_i}^{\gray{B}})\right|_1 \leq \poly(\epsilon),$$ which implies that 
 
$$\left|U^{\gray{AB}}\left(\sum_i \alpha_i\ket{i}^{\gray{A}} \otimes \ket{0}^{\gray{B}}\right) - (\ket{x}^{\gray{A}} \otimes \ket{y_0}^{\gray{B}})\right|_1 \leq \poly(\epsilon).$$

\end{proof}


\subsection{A Two-Round Protocol from Quantum VBB Obfuscation}\label{subsec:vbb-protocol}
In what follows, we describe a two-round two-party protocol in the common reference string model, assuming the existence of a (strong form of) VBB obfuscation of quantum circuits. We note that it is straightforward to adapt this protocol to the multi-party setting. However, the main idea is already present in the two-party case, so for simplicity we describe only this case.

\subsubsection{VBB Obfuscation of Quantum Circuits}

We consider virtual black-box obfuscation of quantum circuits, which was defined (and shown to be impossible in general) by~\cite{Alagic2016OnQO}. In fact, we consider a potentially stronger version than that given by~\cite{Alagic2016OnQO}, who only consider VBB obfuscation of unitaries. We consider quantum functionalities $Q$ from $n$ qubits to $n$ qubits that include not just unitary gates, but also \emph{measurement} gates, and unitary gates that may be \emph{classically controlled} on the outcome of the measurement gates. While one can always push any measurement to the end of the computation so that the circuit becomes unitary, doing so would not necessarily preserve the security of obfuscation, as it would introduce new auxiliary input registers that a malicious evaluator may initialize in a non-zero state. Thus, obfuscation for unitary + measurement circuits is potentially stronger than obfuscation for unitaries.

We model black-box access to a unitary+measurement circuit as an oracle that accepts a quantum state on $n$ registers, manipulates it according to $Q$, and returns those same $n$ registers. We allow the obfuscation itself to be either a quantum circuit with a purely classical description, or a quantum circuit along with some quantum state. We refer to this obfuscation as $\cO(Q)$, and write $\cO(Q)(\bx)$ to indicate evaluation of the obfuscation on an $n$-qubit input $\bx$, with the understanding that this operation may either be directly applying a quantum circuit to $\bx$, or first augmenting $\bx$ with additional registers, applying a circuit to the expanded system, and then discarding the extra registers. 

\begin{definition}[Quantum VBB Obfuscation]\label{defn:vbb}
Let $\{\cQ_n\}_{n \in \bbN}$ be a family of polynomial-size quantum circuits, where each $Q \in \cQ_n$ maps $n$ qubits to $n$ qubits.
A quantum black-box obfuscator $\cO$ is a quantum algorithm that takes as input an input length $n \in \bbN$, a security parameter $\secp \in \bbN$, and a quantum circuit $Q$, and outputs an obfuscated quantum circuit. $\cO$ should satisfy the following properties.
\begin{itemize}
    \item Polynomial expansion: for every $n,\secp \in \bbN$ and $Q \in \cQ_n$, the size of $\cO(1^n,1^\secp,Q)$ is at most $\poly(n,\secp)$.
    \item Functional equivalence: for every $n,\secp \in \bbN$, $Q \in \cQ_n$, and $\bx$ on $n$ qubits, $\cO(1^n,1^\secp,Q)(\bx) \approx_s Q(\bx)$.
    \item Virtual black-box: for every (non-uniform) QPT $\cA$, there exists a (non-uniform) QPT $\cS$ such that for each $n \in \bbN$ and $Q \in \cQ_n$, $$\left|\Pr[\cA(\cO(1^n,1^\secp,Q)) = 1] - \Pr[\cS^Q(1^n,1^\secp) = 1]\right| = \negl(\secp).$$
\end{itemize}
\end{definition}

We now make a few remarks on the definition that will allow us to simplify the constructions given in the next section.

\begin{itemize}
    \item We will consider functionalities that discard, or trace out, some subset of registers. In order to implement this with a circuit from $n$ qubits to $n$ qubits, we can have the functionality measure the subset of qubits to be traced out and then ``randomize'' the outcomes (since we don't want the evaluator to know these measurement results) by applying Hadamard to each register and measuring again. Thus, we will consider obfuscation of functionalities from $n$ qubits to $m \leq n$ qubits.
    \item We will consider functionalities represented by quantum circuits that require the use of auxiliary $\Zstate$ states. We do not want the evaluator to be able to run such a functionality using non-zero auxiliary states, so we'll have the functionality first measure any auxiliary states input by the evaluator. If all measurements are 0, then the circuit will be run on the all registers, otherwise the functionality can ``abort'' by discarding all registers as explained above. Thus, we will suppress mention of auxiliary input registers, and assume the functionality has access to any auxiliary $\Zstate$ states that it needs.
    \item We will consider functionalities that can sample classical bits uniformly at random. This can be accomplished by applying Hadamard to a $\Zstate$ state and measuring. Note that this is a uniquely quantum phenomenon - one cannot obfuscate classical circuits that produce their own randomness. 
\end{itemize}

\subsubsection{The Protocol}

We present a two-round protocol for two-party quantum computation in the common reference string model. Let $Q$ be the two-party quantum functionality to be computed, and assume for simplicity that it takes $n$ qubits from each party and outputs $n$ qubits to each party. The common reference string will consist of obfuscations of six quantum functionalities $$\cF_{A,\inp}^{(b)},\cF_{B,\inp}^{(b)},\cF_{A,\cmp},\cF_{B,\cmp},\cF_{A,\out}^{(b)},\cF_{B,\out}^{(b)},$$ three to be used by each party. Each functionality has hard-coded some subset of 8 PRF keys $$k_{\inp}^{(A,A)},k_{\inp}^{(A,B)},k_{\inp}^{(B,A)},k_{\inp}^{(B,B)},k_{\out}^{(A,A)},k_{\out}^{(A,B)},k_{\out}^{(B,A)},k_{\out}^{(B,B)}.$$ We take each $\PRF(k_{\inp}^{(\cdot,\cdot)},\cdot)$ to be a mapping from a $\secp$-bit string to a classical description of a Clifford $C \in \mathscr{C}_{n+\secp}$. Each PRF key is used in two of the six obfuscated circuits, and the pair of letters in the superscript refers to the identity of the party associated with the first obfuscation it is used in, followed by the identify of the party associated with the second obfuscation it is used in.

Below we describe only  $\cF_{A,\inp}^{(b)},\cF_{A,\cmp}$, and $\cF_{A,\out}^{(b)}$ since $\cF_{B,\inp}^{(b)},\cF_{B,\cmp}$, and $\cF_{B,\out}^{(b)}$ are defined exactly the same with $A$ and $B$ switched.

\begin{itemize}
    \item $\cF_{A,\inp}^{(b)}\left[k_{\inp}^{(A,A)},k_{\inp}^{(A,B)}\right]$:
    \begin{enumerate}
        \item Take as input $(\bx_A,\bd_A)$ which consists of $A$'s input $\bx_A$ on $n$ qubits and a ``dummy'' input $\bd_A$ on $n$ qubits.
        \item Sample classical strings $r_{\inp}^{(A,A)},r_{\inp}^{(A,B)} \gets \{0,1\}^\secp$.
        \item Compute $C_{\inp}^{(A,A)} \coloneqq \PRF(k_{\inp}^{(A,A)},r_{\inp}^{(A,A)}),C_{\inp}^{(A,B)} \coloneqq \PRF(k_{\inp}^{(A,B)},r_{\inp}^{(A,B)})$.
        \item Output $$\begin{cases}\left(r_{\inp}^{(A,A)},C_{\inp}^{(A,A)}(\bx_A,\Zstate^\secp),r_{\inp}^{(A,B)},C_{\inp}^{(A,B)}(\bd_A,\Zstate^\secp)\right) \text{ if } b=0 \\ \left(r_{\inp}^{(A,A)},C_{\inp}^{(A,A)}(\bd_A,\Zstate^\secp),r_{\inp}^{(A,B)},C_{\inp}^{(A,B)}(\bx_A,\Zstate^\secp)\right) \text{ if } b=1 \end{cases}.$$
    \end{enumerate} 
    \item $\cF_{A,\cmp}\left[Q,k_{\inp}^{(A,A)},k_{\inp}^{(B,A)},k_{\out}^{(A,A)},k_{\out}^{(A,B)}\right]$:
    \begin{enumerate}
        \item Take as input $\left(r_{\inp}^{(A,A)},\widehat{\bx}_A,r_{\inp}^{(B,A)},\widehat{\bx}_B\right)$, where $\widehat{\bx}_A$ and $\widehat{\bx}_B$ are $(n+\secp)$-qubit states.
        \item Compute $C_{\inp}^{(A,A)} \coloneqq \PRF(k_{\inp}^{(A,A)},r_{\inp}^{(A,A)}),C_{\inp}^{(B,A)} \coloneqq \PRF(k_{\inp}^{(B,A)},r_{\inp}^{(B,A)})$.
        \item Compute $C_{\inp}^{(A,A)}(\widehat{\bx}_A)$ and measure the final $\secp$ qubits. If each is zero, let $\bx_A$ be the remaining $n$-qubit state. Otherwise, abort.
        \item Compute $C_{\inp}^{(B,A)}(\widehat{\bx}_B)$ and measure the final $\secp$ qubits. If each is zero, let $\bx_B$ be the remaining $n$-qubit state. Otherwise, abort.
        \item Compute $(\by_A,\by_B) \coloneqq Q(\bx_A,\bx_B)$.
        \item Sample classical strings $r_{\out}^{(A,A)},r_{\out}^{(A,B)} \gets \{0,1\}^\secp$.
        \item Compute $C_{\out}^{(A,A)} \coloneqq \PRF(k_{\out}^{(A,A)},r_{\out}^{(A,A)}),C_{\out}^{(A,B)} \coloneqq \PRF(k_{\out}^{(A,B)},r_{\out}^{(A,B)})$.
        \item Output $$\left(r_{\out}^{(A,A)},C_{\out}^{(A,A)}(\by_A,\Zstate^\secp),r_{\out}^{(A,B)},C_{\out}^{(A,B)}(\by_B,\Zstate^\secp)\right).$$
    \end{enumerate}
    \item $\cF^{(b)}_{A,\out}\left[k_{\out}^{(A,A)},k_{\out}^{(B,A)}\right]:$
    \begin{enumerate}
        \item Take as input $\left(r_{\out}^{(A,A)},\widehat{\by}_A^{(0)},r_{\out}^{(B,A)},\widehat{\by}_A^{(1)}\right)$, where $\widehat{\by}_A^{(0)}$ and $\widehat{\by}_A^{(1)}$ are $(n+\secp)$-qubit states.
        \item Compute $C_{\out}^{(A,A)} \coloneqq \PRF(k_{\out}^{(A,A)},r_{\out}^{(A,A)}),C_{\out}^{(B,A)} \coloneqq \PRF(k_{\out}^{(B,A)},r_{\out}^{(B,A)})$.
        \item Compute $C_{\out}^{(A,A)}(\widehat{\by}_A^{(0)})$ and measure the final $\secp$ qubits. If each is zero, let $\by_A^{(0)}$ be the remaining $n$-qubit state. Otherwise, abort.
        \item Compute $C_{\out}^{(B,A)}(\widehat{\by}_A^{(1)})$ and measure the final $\secp$ qubits. If each is zero, let $\by_A^{(1)}$ be the remaining $n$-qubit state. Otherwise, abort.
        \item Output $\by_A^{(b)}$.
    \end{enumerate}
\end{itemize}

Now we are ready to describe the protocol.

\protocol
{\proref{fig:vbb-protocol}}
{Two-round two-party quantum computation.}
{fig:vbb-protocol}
{
\textbf{Common Information:} Quantum circuit $Q$ to be computed with $2n$ input qubits and $2n$ output qubits.\\

\textbf{Party A Input:} $\bx_A$\\
\textbf{Party B Input:} $\bx_B$\\

\underline{\textbf{The Protocol:}}

\textbf{Setup.}
\begin{enumerate}
    \item Sample 8 PRF keys $k_{\inp}^{(A,A)},k_{\inp}^{(A,B)},k_{\inp}^{(B,A)},k_{\inp}^{(B,B)},k_{\out}^{(A,A)},k_{\out}^{(A,B)},k_{\out}^{(B,A)},k_{\out}^{(B,B)}.$
    \item Publish the following obfuscations:
    \begin{gather*}
        \cO_{A,\inp} \coloneqq \cO\left(\cF^{(1)}_{A,\inp}\left[k_{\inp}^{(A,A)},k_{\inp}^{(A,B)}\right]\right), \cO_{B,\inp} \coloneqq \cO\left(\cF^{(0)}_{B,\inp}\left[k_{\inp}^{(B,B)},k_{\inp}^{(B,A)}\right]\right) \\
        \cO_{A,\cmp} \coloneqq \cO\left(\cF_{A,\cmp}\left[Q,k_{\inp}^{(A,A)},k_{\inp}^{(B,A)},k_{\out}^{(A,A)},k_{\out}^{(A,B)}\right]\right), \\ \cO_{B,\cmp} \coloneqq \cO\left(\cF_{B,\cmp}\left[Q,k_{\inp}^{(B,B)},k_{\inp}^{(A,B)},k_{\out}^{(B,B)},k_{\out}^{(B,A)}\right]\right), \\
        \cO_{A,\out} \coloneqq \cO\left(\cF^{(1)}_{A,\out}\left[k_{\out}^{(A,A)},k_{\out}^{(B,A)}\right]\right), \cO_{B,\out} \coloneqq \cO\left(\cF^{(0)}_{B,\out}\left[k_{\out}^{(B,B)},k_{\out}^{(A,B)}\right]\right).
    \end{gather*}
\end{enumerate}

\textbf{Round 1.}

\emph{Party $A$:}
\begin{enumerate}
\item Compute $(st_{A,1},\bst_{A,1},m_{A,1},\bm_{A,1}) \gets \cO_{A,\inp}(\bx_A,\Zstate^n)$, where $\Zstate^n$ is the ``dummy input''. 
\item Send to Party $B$: $(m_{A,1},\bm_{A,1})$.
\end{enumerate}

\emph{Party $B$:}
\begin{enumerate}
\item Compute $(st_{B,1},\bst_{B,1},m_{B,1},\bm_{B,1}) \gets \cO_{B,\inp}(\bx_B,\Zstate^n)$, where $\Zstate^n$ is the ``dummy input''. 
\item Send to Party $A$: $(m_{B,1},\bm_{B,1})$.
\end{enumerate}

\textbf{Round 2.}

\emph{Party $A$:}
\begin{enumerate}
\item Compute $(st_{A,2},\bst_{A,2},m_{A,2},\bm_{A,2}) \gets \cO_{A,\cmp}(st_{A,1},\bst_{A,1},m_{B,1},\bm_{B,1})$.
\item Send to Party $B$: $(m_{A,2},\bm_{A,2})$.
\end{enumerate}

\emph{Party $B$:}
\begin{enumerate}
\item Compute $(st_{B,2},\bst_{B,2},m_{B,2},\bm_{B,2}) \gets \cO_{B,\cmp}(st_{B,1},\bst_{B,1},m_{A,1},\bm_{A,1})$. 
\item Send to Party $A$: $(m_{B,2},\bm_{B,2})$.
\end{enumerate}

\textbf{Output Reconstruction.}
\begin{itemize}
\item \emph{Party $A$:} Compute $\by_A \gets \cO_{A,\out}(st_{A,2},\bst_{A,2},m_{B,2},\bm_{B,2})$.
\item \emph{Party $B$:} Compute $\by_B \gets \cO_{B,\out}(st_{B,2},\bst_{B,2},m_{A,2},\bm_{A,2})$.
\end{itemize}
}
\clearpage

\begin{theorem}
Assuming quantum VBB obfuscation (\cref{defn:vbb}), the protocol described in \proref{fig:vbb-protocol} satisfies security against a malicious $A$ and malicious $B$.
\end{theorem}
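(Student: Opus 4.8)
# Proof Proposal for Security of the VBB-Based Two-Round Protocol

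The plan is to prove security against a malicious $A$ (the malicious $B$ case is symmetric, swapping the roles of the two parties and the control bits on the input/output obfuscations). The core idea is to exploit the asymmetry built into the control bits: in the honestly-sampled CRS, $\cF_{A,\inp}^{(1)}$ sends $A$'s \emph{real} input encoded under $C_\inp^{(A,B)}$ (the key shared with $B$'s \emph{compute} circuit) and keeps the dummy under $C_\inp^{(A,A)}$, while $\cF_{B,\inp}^{(0)}$ keeps $B$'s real input under $C_\inp^{(B,B)}$ and sends the dummy under $C_\inp^{(B,A)}$. Hence in the real execution, $\cO_{B,\cmp}$ evaluates $Q$ on $(\bx_A, \bx_B)$ and produces $B$'s output plus an encoding of $A$'s output to be sent back, while $\cO_{A,\cmp}$ only ever evaluates $Q$ on dummy inputs. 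The first step is to formalize, via the functional-equivalence property of $\cO$, exactly which encoded states travel in each direction, and to observe that all messages $A$ ever sees (namely $\bm_{B,1}, \bm_{B,2}$) are Clifford-authenticated states under keys $k_\inp^{(B,A)}$ and $k_\out^{(B,A)}$ that are used \emph{only} inside $\cO_{B,\inp}$ and $\cO_{A,\out}$ (not inside any circuit $A$ runs).

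Next I would describe the simulator $\Sim^{\cI[\bx_B]}(\bx_A, \baux_\cA)$. It samples the 8 PRF keys itself, but programs the CRS \emph{differently}: it publishes $\cO_{A,\inp}$ with control bit $0$ and $\cO_{B,\inp}$ with control bit $1$, thereby reversing the flow of computation so that now $\cO_{A,\cmp}$ is the circuit that actually computes $Q$. Concretely: $\Sim$ runs $\cA$ to obtain $A$'s first message $(m_{A,1},\bm_{A,1})$, uses the PRF keys to decode $\widehat{\bx}_A$ and extract $\bx_A'$, queries the ideal functionality on $\bx_A'$ to get $\by_A$, and forms $A$'s second-round message by Clifford-encoding $\mathbf{0}$-dummies in place of $B$'s real encoded input — since $\cO_{B,\cmp}$ in the simulated CRS computes on dummies, the simulated $\bm_{B,1}$ can be an encoding of $\Zstate^n$ without $\cA$ noticing, by the perfect hiding of the Clifford code. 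Finally $\Sim$ feeds $\cA$ a second-round message $(m_{B,2},\bm_{B,2})$ that carries $C_\out^{(B,A)}(\by_A,\Zstate^\secp)$, using the honestly-sampled key; $\cA$'s output-reconstruction step $\cO_{A,\out}$ (control bit $1$) then picks out exactly this $\by_A$. $\Sim$ also checks the trap qubits at each decoding step to decide whether to send $\mathsf{ok}$ or $\abort$ to the ideal functionality.

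The hybrid argument proceeds through: $(\cH_0)$ the real execution; $(\cH_1)$ replace each of the six obfuscations by calls to the VBB \emph{simulator} $\cS$ for the corresponding adversary-accessible circuit, so that $\cA$'s view is now generated by oracle access to the underlying functionalities — this is where \cref{defn:vbb} (virtual black-box) is invoked, and it is the step that lets us ``open up'' the Clifford encodings and reason about the plaintext flow; $(\cH_2)$ using the extracted Clifford keys and the oracle interface, re-route the computation so that $Q$ is evaluated by the simulator directly on $(\bx_A', \bx_B)$ rather than inside $\cO_{B,\cmp}$ — perfectly indistinguishable because, since $\mathscr{C}_{n+\secp}$ is a group, a uniformly random PRF-sampled Clifford encoding is distributed identically whether it wraps the real or a re-routed state; $(\cH_3)$ flip the control bits in the CRS from $(1,0)$ to $(0,1)$ — indistinguishable because in both worlds the states $\cA$ receives are freshly-keyed Clifford encodings and (crucially) the keys $k_\inp^{(B,A)}, k_\out^{(B,A)}$ governing those states are never exposed to any circuit $\cA$ evaluates, so no distinguishing PRF input is ever queried; $(\cH_4)$ replace $B$'s real input $\bx_B$ inside the $Q$-evaluation by $\Zstate^n$, indistinguishable by perfect hiding of the Clifford code plus authentication (an $\cA$ that tampers with $\bm_{A,1}$ causes $\cO_{A,\cmp}$ to abort except with negligible probability); $(\cH_5)$ replace the direct evaluation of $Q$ by a call to $\cI[\bx_B]$, which is exactly $\Sim$. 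The main obstacle is $\cH_3$: I will need to argue carefully that reversing the control bits is undetectable, which amounts to a reduction to the pseudorandomness of the PRF keys that are ``private'' to the direction-setting obfuscations, combined with the observation that in \emph{both} the $(1,0)$ and $(0,1)$ configurations every intermediate message and state crossing the wire or held by $\cA$ is a Clifford authentication code under an independently fresh key — so the joint distribution of $\cA$'s view is identical up to the single PRF invocation that selects which slot carries the real payload. One subtlety to handle with care here is that $\cA$ can feed adversarial inputs into $\cO_{A,\cmp}$ and $\cO_{A,\out}$; the trap-qubit checks inside those functionalities, together with the authentication property of the Clifford code (the definition in the Preliminaries), ensure these do not give $\cA$ any leverage beyond choosing its effective input $\bx_A'$ or forcing an abort.
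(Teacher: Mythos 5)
There is a genuine gap: you have attached the CRS orientations to the wrong corruption cases, and as a result your simulator for malicious $A$ does not work. In the honestly-sampled CRS (control bits $(1,0)$ on the input obfuscations and $(1,0)$ on the output obfuscations), $A$'s round-1 message already carries $A$'s real input under $k_{\inp}^{(A,B)}$, $B$'s round-1 message is the input-independent dummy, and $A$'s output is taken from $B$'s round-2 message under $k_{\out}^{(B,A)}$. So malicious $A$ is simulated with the \emph{real} orientation: the simulator samples the CRS itself, decodes $A$'s first message to extract $\bx_A'$, queries $\cI[\bx_B]$, and programs $A$'s output by encoding $\by_A$ under $k_{\out}^{(B,A)}$; no flip and no computational argument is needed (this is how the paper handles it, essentially perfectly). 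Your simulator instead flips the bits to $(0,1)$. But in that orientation $A$'s round-1 message is the dummy slot, so decoding it does not extract $\bx_A'$; $A$'s effective input flows only into $A$'s locally evaluated $\cO_{A,\cmp}$, and $A$'s output is reconstructed (via $\cO_{A,\out}$ with bit $0$) from $A$'s own kept state, so the simulator can neither extract it nor program it. Worse, correctness in the flipped orientation requires $B$'s round-1 message to decode (under $k_{\inp}^{(B,A)}$) to $B$'s real input, which the simulator does not know; sending $C_{\inp}^{(B,A)}(\Zstate^n,\Zstate^\secp)$ is not excused by hiding of the Clifford code, because the obfuscated circuits $A$ runs hold the keys and decode it, so $A$ would end up holding the $A$-part of $Q(\bx_A',\Zstate^n)$ instead of $Q(\bx_A',\bx_B)$ --- a detectable deviation.

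Relatedly, the two cases are not symmetric as you claim. It is malicious $B$ that needs the flipped CRS (so that $B$'s real input crosses the wire in round 1 and $B$'s output arrives in $A$'s round-2 message), and only there does one need the paper's main lemma: that the $(1,0)$ and $(0,1)$ samplings of the CRS are computationally indistinguishable. You did correctly identify that lemma and the right tools for it (VBB security to pass to oracle access, PRFs to random functions, then perfect hiding plus authentication of the Clifford code to argue that the only informative oracle queries are honest-execution-style ones whose answers equal $Q(\bx_A,\bx_B)$ for either bit), so the crux of the paper's argument is present in your hybrid $\cH_3$; but it is embedded in a hybrid chain whose endpoint simulator is the broken one above. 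Reassign the orientations --- real CRS for malicious $A$, flipped CRS together with the CRS-indistinguishability lemma for malicious $B$ --- and the remainder of your outline matches the paper's proof.
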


\begin{proof}
(Sketch) First observe where the computation is taking place in an honest execution of the protocol. In the first round, $A$ executes $\cO_{A,\inp}$ and sends a Clifford encoding of its input $\bx_A$ to $B$, while $B$ executes $\cO_{B,\inp}$ and sends a Clifford encoding of its dummy input $\Zstate^n$ to $A$ while keeping a Clifford encoding of its input $\bx_B$ in its state. In the second round, $B$ executes $\cO_{B,\cmp}$ to produce a Clifford encoding of the output $(\by_A,\by_B)$, while $A$ executes $\cO_{A,\cmp}$ to produce a Clifford encoding of a dummy output. $B$ sends the encoding of $\by_A$, which is decrypted by $A$ using $\cO_{A,\out}$ and $B$ decrypts its output $\by_B$ using $\cO_{B,\out}$.

Now note that it is straightforward to perfectly simulate a malicious $A$. The simulator can sample the CRS (so in particular it knows all the PRF keys) and then emulate an honest $B$, receiving $A$'s encoded input, decrypting it, evaluating the circuit, and finally encoding $A$'s output and sending it back.

Now consider sampling the CRS to be obfuscations of the functionalities $$\cF_{A,\inp}^{(0)},\cF_{B,\inp}^{(1)},\cF_{A,\cmp},\cF_{B,\cmp},\cF_{A,\out}^{(0)},\cF_{B,\out}^{(1)}.$$ This only differs from the real protocol in the superscript $b$ values of the $\inp$ and $\out$ functionality. However, this completely reverses the flow of computation in an honest execution of the protocol. Now, $A$ is computing the functionality on the actual inputs, while $B$ is computing on the dummy inputs. Thus with this sampling of the CRS, it is straightforward to perfectly simulate a malicious $B$. It remains to show that these two methods of sampling the CRS are computationally indistinguishable.

Consider an adversary that can distinguish obfuscations of 

$$\cF_{A,\inp}^{(1)},\cF_{B,\inp}^{(0)},\cF_{A,\cmp},\cF_{B,\cmp},\cF_{A,\out}^{(1)},\cF_{B,\out}^{(0)}$$ from obfuscations of $$\cF_{A,\inp}^{(0)},\cF_{B,\inp}^{(1)},\cF_{A,\cmp},\cF_{B,\cmp},\cF_{A,\out}^{(0)},\cF_{B,\out}^{(1)}.$$ By the security of (our strong form of) VBB obfuscation, such an adversary implies a distinguisher that is only given oracle access to these functionalities. Now, since this distinguisher only has oracle access to the PRFs, one can replace them with truly random functions. At this point, due to the perfect hiding of the Clifford code, an adversary cannot obtain any information from the outputs of $(\cF_{A,\inp}^{(0)},\cF_{B,\inp}^{(1)},\cF_{A,\cmp},\cF_{B,\cmp})$, except with negligible probability (this non-zero probability is due to the possibility of collision in sampling the $r$ values used as inputs to the PRFs / random functions). Furthermore, due to the statistical authentication of the Clifford code, an adversary can only obtain an output from $(\cF_{A,\out}^{(b)},\cF_{B,\out}^{(1-b)})$ with non-negligible probability if it emulates an honest execution of the protocol, starting with arbitrary inputs $(\bx_A,\bd_A)$ to $\cF_{A,\inp}^{(b)}$ and $(\bx_B,\bd_B)$ to $\cF_{B,\inp}^{(1-b)}$. But regardless of the value of $b$, the outputs of $(\cF_{A,\out}^{(b)},\cF_{B,\out}^{(1-b)})$ will be $(\by_A,\by_B) \coloneqq Q(\bx_A,\bx_B)$. Thus switching the value of $b$ at this point will be statistically indistinguishable, completing the proof.

\end{proof}

\section*{Acknowledgements}

We thank anonymous reviewers for pointing out an issue with~\cref{thm:lower-bound} in an earlier draft of this work.

This material is based on work supported in part by DARPA under Contract No. HR001120C0024 (for DK). Any opinions, findings and conclusions or recommendations expressed in this material are those of the author(s) and do not necessarily reflect the views of the United States Government or DARPA. A.C. is supported by DoE under grant DOD ONR Federal. Part of this work was done while the authors were visiting the Simons Institute for the Theory of Computing in Spring 2020.

\ifsubmission
\bibliographystyle{splncs04}
\else
\bibliographystyle{alpha}
\fi

\bibliography{abbrev3,custom,crypto,main}

\appendix

\ifsubmission

\else
\fi

\newpage

\clearpage

\end{document}